\documentclass[11pt]{article}
\usepackage{graphicx}
\usepackage[utf8]{inputenc}
\usepackage{fullpage}
\usepackage{amsmath,amssymb,amsfonts,amsthm,stmaryrd}
\usepackage{thmtools,thm-restate}
\usepackage{enumitem}
\usepackage[colorlinks=true,linkcolor=blue,allcolors=blue]{hyperref}
\usepackage{todonotes}
\usepackage{physics}
\usepackage[ruled,vlined,linesnumbered]{algorithm2e}
\usepackage[capitalize,nameinlink]{cleveref}

\newcommand{\opn}[1]{\operatorname{#1}}

\newcommand{\spn}{\opn{span}}
\newcommand{\Enc}{\opn{Enc}}
\newcommand{\Dec}{\opn{Dec}}
\newcommand{\supp}{\opn{supp}}
\newcommand{\evl}{\opn{ev}}
\newcommand{\poly}{\opn{poly}}
\newcommand{\gInit}{\opn{Init}}
\newcommand{\gTerm}{\opn{Term}}
\newcommand{\gX}{\opn{X}}
\newcommand{\gCX}{\opn{CX}}
\newcommand{\gCCX}{\opn{CCX}}
\newcommand{\gSwap}{\opn{Swap}}
\newcommand{\tran}{\opn{tran}}
\newcommand{\RS}{\opn{RS}}
\newcommand{\NP}{\mathsf{NP}}

\newcommand{\cE}{\mathcal{E}}\newcommand{\cF}{\mathcal{F}}
\newcommand{\cG}{\mathcal{G}}

\newcommand{\cO}{\mathcal{O}}\newcommand{\cP}{\mathcal{P}}
\newcommand{\cR}{\mathcal{R}}

\newcommand{\bF}{\mathbb{F}}

\newcommand{\bN}{\mathbb{N}}

\DeclareMathOperator*{\argmin}{arg\,min}

\newtheorem{theorem}{Theorem}[section]
\newtheorem{lemma}[theorem]{Lemma}
\newtheorem{claim}[theorem]{Claim}
\newtheorem{definition}[theorem]{Definition}

\newtheorem{fact}[theorem]{Fact}
\theoremstyle{remark} 
\newtheorem{remark}[theorem]{Remark}

\newcommand{\npb}[1]{}
\newcommand{\qn}[1]{}
\newcommand{\anu}[1]{}
\newcommand{\lzg}[1]{}

\title{Classical Adversarial Fault-Tolerance and PCPs\thanks{To appear in 67th IEEE Symposium On Foundations Of Computer Science (FOCS) 2026.}}
\author{
  Anurag Anshu \\
  Harvard University \\
  \href{mailto:anuraganshu@fas.harvard.edu}{\texttt{anuraganshu@fas.harvard.edu}}
  \and
  Nikolas P.~Breuckmann \\
  University of Bristol \\
  \href{mailto:niko.breuckmann@bristol.ac.uk}{\texttt{niko.breuckmann@bristol.ac.uk}}
  \and
  Louis Golowich \\
  UC Berkeley \\
  \href{mailto:lgolowich@berkeley.edu}{\texttt{lgolowich@berkeley.edu}}
  \and
  Quynh T.~Nguyen \\
  Harvard University \\
\href{mailto:qnguyen@g.harvard.edu}{\texttt{qnguyen@g.harvard.edu}}
  \and
  Umesh Vazirani \\
  UC Berkeley \\
  \href{mailto:vazirani@cs.berkeley.edu}{\texttt{vazirani@cs.berkeley.edu}}
}
\date{}

\begin{document}

\maketitle

\begin{abstract}
  We show how to compile an arbitrary classical circuit into a fault-tolerant circuit, which performs the desired computation even when an almost-linear number of bits are adversarially chosen and corrupted in each timestep. Using a variant of this fault-tolerance scheme that only detects (rather than corrects) corruptions, we give a new construction of probabilistically checkable proofs (PCPs) for $\NP$ with polylogarithmic query complexity. This PCP construction from fault-tolerance presents a promising candidate for quantization by the work of Anshu, Breuckmann, and Nguyen (STOC'24), who provided a roadmap for constructing quantum PCPs via fault-tolerance.
\end{abstract}

\newpage

\tableofcontents

\newpage


\section{Introduction}
The celebrated PCP theorem~\cite{arora_probabilistic_1998,arora_proof_1998} proves that large computations can be verified extremely quickly, with queries to just a few random bits of a suitably constructed proof. This verification is inherently \emph{robust}, in that it will succeed even if a constant fraction of the proof's bits are corrupted, because the sparse queries are unlikely to land on corrupted bits. It is then natural to ask whether we can \emph{perform}, rather than simply verify, a computation in a manner that is similarly fault-tolerant, i.e.~robust to large errors. In this work, we give an affirmative answer: we show how to compile any classical circuit into a fault-tolerant version, which simulates the original circuit even when an almost-linear number of bits are corrupted in every timestep. Our fault-tolerance holds even when the set of bits, as well as the errors applied, are chosen adversarially. We apply this fault-tolerance scheme to give a new proof of a ``polylog-weak'' PCP theorem, which has polylogarithmic query complexity instead of the (optimal) constant query complexity of the PCP theorem in~\cite{arora_probabilistic_1998,arora_proof_1998}.
\npb{Discuss this. It reads a bit as if the existence of PCPs raised the question of fault-tolerant computation and we're the first to do computational fault-tolerance.} \lzg{Yeah I guess that was kind of the intention? At least in the sense of, we're the first to do computational fault-tolerance that's comparable to PCPs (or more formally, strong enough to recover PCPs).}
\npb{maybe let's discuss in a call to sharpen this a bit.}
\lzg{sounds good}

Multiple aspects of our construction of PCPs via fault-tolerance may be of independent interest. Our construction is elementary in nature, as it primarily consists of manipulating tensor products of Reed-Solomon codes.
The main prior result we rely on is that tensor product codes are locally testable, meaning that the weight of a corruption on a codeword can be estimated using a small number of random queries. 
This testability result itself has a clean combinatorial proof given by~\cite{viderman_combination_2015}. 
We construct our PCPs from such locally testable codes without relying on stronger properties present in many prior PCP constructions, such as local decodability or local correctability.
While some other PCP constructions share many of these aspects, we present these ingredients through the lens of fault-tolerance; see \Cref{sec:priorpcp} for a more detailed comparison to prior PCPs. Furthermore, and perhaps most notably, our PCP construction is an appealing candidate for quantization, as we discuss below.

The connection between fault-tolerance and PCPs was previously explored by \cite{gal_fault_1995}.
They show that if arbitrary logical\footnote{We refer to a (likely non-fault-tolerant) circuit that we want to run as a \emph{logical} circuit, and a fault-tolerant version of it as a \emph{physical} circuit.} circuits can be turned into fault-tolerant circuits with polynomial size overhead and depth $O(\log s)$, where $s$ is the size of the original circuit, then a polylog-weak PCP theorem follows.
They also demonstrate the existence of such a scheme.
However, there are a couple of important caveats:
(a)~their construction is already based on the (stronger) PCP theorem itself, so it does not yield a new proof, and
(b)~they assume the input is provided in an encoding that depends on the logical circuit, and hence can directly include information about the output value.
Our work directly addresses both of these issues, which were left as open questions in \cite{gal_fault_1995}.

Other classical fault-tolerance schemes have been also developed without using PCPs, e.g.~\cite{neumann_probabilistic_1956,dobrushin_upper_1977,pippenger_networks_1985,spielman_highly_1996}, see \Cref{sec:priorft} for a more comprehensive overview.
However, these schemes were only resilient against adversarial errors on at most a polynomially vanishing fraction of bits, and hence are too weak to construct PCPs with subpolynomial query complexity.

In contrast, we emphasize the truly global nature of the corruptions that our fault-tolerance scheme protects against. Specifically, we map a logical circuit acting on $\bar{N}$ logical bits to a fault-tolerant circuit acting on $N\leq\bar{N}^{1+o(1)}$ physical bits, which can protect against adversarial errors acting on $N^{-o(1)}$-fraction of the physical bits in each timestep. This protection provides a super-polynomial improvement over prior schemes, which could only handle an inverse polynomial $N^{-\Omega(1)}$-fraction errors, assuming polynomial space overhead $N\leq\bar{N}^{O(1)}$. Indeed, the traditional approach of repeating each logical bit many times is always vulnerable to errors on $\bar{N}^{-1}$-fraction of the physical bits, which can wipe out all copies of a given logical bit \cite{neumann_probabilistic_1956,dobrushin_upper_1977,pippenger_networks_1985}. We avoid this limitation by performing the entire computation on a joint encoding of all the logical bits, so that no logical bit is localized to a small set of physical bits that can easily be corrupted.


Recently, \cite{anshu_circuit--hamiltonian_2024} suggested that fault-tolerance schemes could provide a route towards proving the \emph{quantum PCP conjecture}~\cite{aharonov2002quantum,aharonov_quantum_2013}, one of the major open questions in quantum complexity theory.
A quantum analogue of the classical PCP theorem, the conjecture would imply that quantum computations can be efficiently verified by querying only a constant number of qubits in an appropriate proof.

While there exist significant barriers to translating existing classical PCP constructions to the quantum setting \cite{aharonov_quantum_2013}, the proposed approach of \cite{anshu_circuit--hamiltonian_2024} seems to avoid these barriers. 
Yet this approach requires fault-tolerance schemes protecting against an almost-linear number of errors with respect to the total number $N$ of qudits.
Achieving such strong error protection was previously an open problem; prior schemes could only protect against a smaller polynomial number~$N^{1-\Omega(1)}$ of errors (see above).
Our work resolves this question in the classical case. A related fault-tolerance scheme protecting against an almost-linear number of errors in the quantum case is given in the follow-up work \cite{breuckmann_fault-tolerant_2026}. There remain open questions from \cite{anshu_circuit--hamiltonian_2024} regarding the mapping from fault-tolerant circuits to PCPs, whose resolution is needed to construct a quantum PCP with nontrivial (i.e.~subpolynomial) query complexity. However, our classical PCPs provide strong evidence of the viability of this approach towards quantum PCPs.


The remainder of this section provides more details on our results and techniques. In \Cref{sec:circuitsinf} below, we begin by informally defining fault-tolerance schemes, as well as a slightly weaker notion of fault-detecting schemes, which only detect corruptions instead of correcting them. The fully rigorous and general definitions can be found in \Cref{sec:prelim}. In \Cref{sec:maininf}, we state our main result providing fault-tolerance/detecting schemes against adversarial errors. We outline the construction and proof of this result in \Cref{sec:schemeinf}. We then present our main result on PCPs in \Cref{sec:pcpinf}, and in \Cref{sec:pcpconstructinf} we describe how we prove this result using our fault-detecting scheme. We compare our results and techniques to prior works in \Cref{sec:priorft,sec:priorpcp}.

\subsection{Background on Circuits and Fault-Tolerance/Fault-Detection}
\label{sec:circuitsinf}
In this paper, we consider circuits with gates of constant fan-in and fan-out. Each circuit acts on some set $N$ of ``dits,'' i.e.~symbols taking a value in some finite field $\bF_q$. The circuit runs over some number $T$ of timesteps. 
Each dit is acted on by a gate (which may be the identity) at every timestep.
We specifically use a gate set that can add or multiply any two dits (see \Cref{def:gates} for details).

Defining fault-tolerance requires some care; we follow the general approach of \cite{nguyen_quantum_2025,he_composable_2025}, which was developed for the (even more delicate) quantum setting. Specifically, our goal is to compile a \emph{logical} circuit $\bar{\cR}$ into a \emph{physical} circuit $\cR$ such that $\cR$ simulates $\bar{\cR}$ even in the presence of noise. However, $\bar{\cR}$ may have input and output dits taking on arbitrary values not known during the compilation. It then seems that even a single corruption on the input or output of $\cR$ may render the output incorrect. To resolve this issue, we specify encoding maps $\Enc_{\mathrm{in}}$ (resp.~$\Enc_{\mathrm{out}}$), which map inputs (resp.~outputs) of $\bar{\cR}$ to inputs (resp.~outputs) of $\cR$. The images of these encoding maps should be \emph{error-correcting codes}, meaning that the encoding map can be inverted even after some dits have been corrupted.

Then loosely speaking, we define a \emph{fault-tolerance} scheme with error thresholds $\lambda_{\mathrm{in}},\lambda_{\mathrm{run}},\lambda_{\mathrm{out}}$ to be a compiler from logical circuits $\bar{\cR}$ to physical circuits $\cR$ satisfying the following: 
for every input $x$ to $\bar{\cR}$, if we run $\cR$ on some input $y$ that differs from $\Enc_{\mathrm{in}}(x)$ on an arbitrary set of $<\lambda_{\mathrm{in}}$ dits, such that an arbitrary set of $<\lambda_{\mathrm{run}}$ dits are corrupted after each timestep, then the resulting output differs from $\Enc_{\mathrm{out}}\circ\bar{\cR}(x)$ on $<\lambda_{\mathrm{out}}$ dits.
We will want to construct schemes with $\lambda_{\mathrm{in}}\geq\lambda_{\mathrm{out}}$, so that we can sequentially compose fault-tolerant circuits while maintaining low error weight.

In our applications to PCPs, it will in fact be sufficient to construct a \emph{fault-detecting} scheme, which weakens the notion of a fault-tolerance scheme to only detect errors, rather than correct them.
Specifically, we introduce an additional ``detection threshold'' parameter $\lambda_{\mathrm{det}}$, and we specify a set of ``detector dits'' in each timestep of our circuit.
A fault-detecting scheme is then like a fault-tolerance scheme, except that the output is allowed to be wrong if $\geq\lambda_{\mathrm{det}}$ detector dits were nonzero in some timestep.
That is, a fault-detecting scheme is a compiler from logical circuits~$\bar{\cR}$ to physical circuits~$\cR$ satisfying the following:
for every input $x$ to $\bar{\cR}$, if we run $\cR$ on some input~$y$ that differs from $\Enc_{\mathrm{in}}(x)$ in an arbitrary set of $<\lambda_{\mathrm{in}}$ dits, such that an arbitrary set of $<\lambda_{\mathrm{run}}$ dits are corrupted after each timestep, then either the resulting output differs from $\Enc_{\mathrm{out}}\circ\bar{\cR}(x)$ in $<\lambda_{\mathrm{out}}$ dits, or else $\geq\lambda_{\mathrm{det}}$ detector dits were nonzero in some timestep.

\subsection{Main Result on Fault-Tolerance/Fault-Detection}
\label{sec:maininf}
Our main result constructing fault-tolerance and fault-detecting schemes is stated below.

\begin{theorem}[Informal statement of \Cref{thm:main}]
  \label{thm:maininf}
  The following statements hold over every constant alphabet $\bF_r$:
  \begin{enumerate}
  \item For every set $\bar{N}$ of logical dits and every $\epsilon>0$, there exists a fault-detecting scheme that maps an arbitrary logical circuit $\bar{\cR}$ acting on dits $\bar{N}$ using time $\bar{T}$ to a fault-detecting physical circuit $\cR$ acting on $|N|\leq|\bar{N}|^{1+\epsilon}$ dits using time $T\leq\bar{T}\cdot(\log|N|)^{O(1/\epsilon)}$, with error thresholds
    \begin{equation*}
      \lambda_{\mathrm{in}} \geq \lambda_{\mathrm{out}} \geq \lambda_{\mathrm{run}} = \lambda_{\mathrm{det}} \geq \frac{|N|}{(\log|N|)^{O(1/\epsilon)}}.
    \end{equation*}
  \item For every set $\bar{N}$ of logical dits, there exists a fault-tolerance scheme that maps an arbitrary logical circuit $\bar{\cR}$ acting on dits $\bar{N}$ using time $\bar{T}$ to a fault-tolerant physical circuit $\cR$ acting on $|N|\leq |\bar{N}|\cdot 2^{O(\log|N|)^{2/3}}$ dits using time $T\leq\bar{T}\cdot 2^{O(\log|N|)^{2/3}}$, with error thresholds
    \begin{equation*}
      \lambda_{\mathrm{in}} \geq \lambda_{\mathrm{out}} \geq \lambda_{\mathrm{run}} \geq \frac{|N|}{2^{O(\log|N|)^{2/3}}}.
    \end{equation*}
  \end{enumerate}
\end{theorem}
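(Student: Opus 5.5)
The plan is to encode the full register of logical dits jointly in a tensor product of Reed--Solomon codes. We embed $\bar N$ into a grid $[n]^m$ and take the code $\RS^{\otimes m}$ over an extension field $\bF_q \supseteq \bF_r$ of size $q=\poly(n)$. We choose $m=\Theta(1/\epsilon)$, and the degree $d$ so that $(n/d)^m \le |\bar N|^{\epsilon}$; this gives $|N|\le|\bar N|^{1+\epsilon}$. Because the encoding is global, corrupting a set of size below $|N|/(\log|N|)^{O(1/\epsilon)}$ cannot wipe out a single logical dit. The relative distance of the tensor code is $(1-d/n)^m$, so the distance is a constant fraction of $|N|$.

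Each logical layer of gates, namely additions and multiplications on pairs of dits, is implemented in three steps.

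\begin{enumerate}
\item \textbf{Permutation.} We route the relevant operand pairs to aligned coordinates, using permutations that act along one axis at a time. A Beneš-type decomposition writes an arbitrary permutation of the grid as $O(m)$ axis-parallel permutations.
\item \textbf{Axis-parallel permutations on codewords.} We realize each of these on a codeword by decoding and re-encoding along individual lines. Each line is an RS codeword of length $n=\poly\log|N|$, so this is a polylog-depth subcircuit acting on polylog-many dits.
\item \textbf{Gates.} Addition is transversal. Multiplication is coordinatewise, which squares the degree; we then apply degree reduction along each axis in turn, by interpolating and re-encoding along lines.
\end{enumerate}

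The time overhead per logical step is therefore $\poly(n)^{O(m)}=(\log|N|)^{O(1/\epsilon)}$.

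For fault detection, the detector dits are the outputs of local tests of the tensor code: for random axis-parallel lines, we check membership in the RS code, computing the parity checks into fresh dits. By the local testability of tensor codes (Viderman), if fewer than $\lambda_{\mathrm{det}}$ detectors fire, the current state lies within distance $|N|/(\log|N|)^{O(1/\epsilon)}$ of some codeword. In that case every line-wise decode and re-encode step is accurate on all but a small fraction of lines, and the error stays below $\lambda_{\mathrm{out}}$.

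The main obstacle is to show that the error does not amplify across the $m$ axis passes and across the $\bar T$ timesteps. I would maintain the invariant that the error lies in a union of few bad lines per axis. Each line-local operation keeps the corrupted set of lines small. Each round of testing then forces the state back near the codeword: if it were not near, tensor testability would light up many detectors. This is the step whose error accounting actually fixes the $(\log|N|)^{O(1/\epsilon)}$ thresholds.

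For part 2 we must actually correct errors, not just detect them. After each gate we add a full decoding pass: we iteratively decode along lines in each axis direction. An error pattern of weight below roughly $\delta^m |N|$ is fixed by $m$ passes of line decoding in the standard tensor-decoding style. To make the per-line circuits themselves fault-tolerant, we apply the construction recursively: each line-decoder is a small circuit of size $n$, which we compile again. Balancing the recursion depth against $m$ and $n$, the overhead is $n^{O(m)}$ with $\log n\cdot m\approx(\log|N|)^{2/3}$ per level. Optimizing $m\sim(\log|N|)^{1/3}$ then gives the stated $2^{O(\log|N|)^{2/3}}$ bounds for size, time and error thresholds. This recursion, which keeps errors in inner circuits from corrupting too many outer lines, is the second delicate point.
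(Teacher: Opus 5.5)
Your overall architecture matches the paper: tensor Reed--Solomon codes, axis-by-axis unencoding and re-encoding, transversal gates with degree reduction, and detection via \Cref{lem:loctest}. However, your parameters are internally inconsistent, and the regime you actually fix cannot reach the stated thresholds. With $m=\Theta(1/\epsilon)$ and $n^m\ge|\bar N|$, lines have length $n\ge|\bar N|^{\Omega(\epsilon)}$, not $\poly\log|N|$. The soundness guaranteed by \Cref{lem:loctest} is then only $|N|^{-\Theta(\epsilon)}$. Moreover, a single fault inside a line-wise encoding, unencoding or syndrome computation can corrupt a whole line of $n$ dits. So $\lambda_{\mathrm{run}}$ can be at most about $|N|/n=|N|^{1-\Theta(\epsilon)}$, and $\poly(n)$-depth line circuits cost time $|N|^{\Theta(\epsilon)}$. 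The paper instead uses a \emph{growing} number of factors $u\approx n^{\epsilon}\approx\epsilon\log|\bar N|/\log\log|\bar N|$, with per-factor dimension $k\approx n^{1-\epsilon}\approx n/u$. This choice has three effects:
\begin{itemize}
\item $(1-8k/n)^u=\Omega(1)$ keeps the relative distance constant;
\item the factor $(n/d)^u$ in $\rho(u,n,d)$ is $O(1)$ (here $d\approx n-16k$ is the factor distance), so $\rho=1/\poly(un)$;
\item the rate $(k/n)^u=(n^u)^{-\epsilon}$ gives space $|\bar N|^{1+O(\epsilon)}$.
\end{itemize}
In this regime your overhead estimate $\poly(n)^{O(m)}$ would be polynomial in $|N|$. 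The $u$ axis passes run sequentially, so their costs add, giving $\poly(u,n)$ per logical layer.

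Two steps also fail as written.

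First, the axis-parallel permutations from a Bene\v{s}/Clos decomposition differ from line to line, and they cannot be realized by decoding and re-encoding individual codeword lines. A codeword line along axis $i$ at position $j_{-i}\in[n]^{m-1}$ mixes \emph{all} message lines through the encodings in the other axes. Consequently, line-local maps that preserve the code act identically on every message line. The missing idea is to supply position-dependent control as encoded data. You add the swap pattern (or the gate coefficients) as a known codeword via $\gX^*$ gates, then realize controlled swaps by transversal multiplication followed by degree reduction. This is how routing is implemented in \Cref{lem:compile}.

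Second, for part~2 your recursion is neither needed nor sufficient. Since $\lambda_{\mathrm{run}}$ is far larger than any inner circuit, the adversary can still make individual line decoders fail. A later-axis decoder fed $\ge\lambda_{\Dec}$ bad entries can then output garbage on its whole line, so bad lines cascade into planes and subcubes over the $u$ passes. The paper uses plain, non-fault-tolerant line decoders and controls this cascade with the closure bound of \Cref{lem:closbound}. Provided the input error is below $(\lambda_{\Dec}/2)^u$, all residual damage lies in the $\lambda_{\Dec}/2$-closure of the lines touched by faults, at multiplicative cost $2^{O(u^2)}$ (\Cref{lem:errcorr}). Taking $u\approx\sqrt{\log n}$ with $u\log n\approx\log|N|$ gives $u^2\approx\log n\approx(\log|N|)^{2/3}$, which is the source of the $2^{O(\log|N|)^{2/3}}$ loss. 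Your relation $m\log n\approx(\log|N|)^{2/3}$ contradicts $n^m=|N|$.

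You also omit reducing the $q$-ary physical dits back to $\bF_r$ (\Cref{lem:alphdown}), though that step is routine.
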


Because fault-detection is a weaker requirement than fault-tolerance, in \Cref{thm:maininf} we are able to construct fault-detecting schemes with error thresholds $|N|/\poly\log|N|$, whereas our fault-tolerance schemes have slightly lower error thresholds of $|N|/2^{O(\log|N|)^{2/3}}=|N|^{1-o(1)}$. As a result, the PCPs we obtain from our fault-detecting scheme have query complexity $\poly\log|N|$, whereas our fault-tolerance scheme would only give PCPs with query complexity $|N|^{o(1)}$.

\subsection{Scheme via Tensor Code Switching}
\label{sec:schemeinf}
We now outline our fault-tolerance and fault-detecting schemes in \Cref{thm:maininf}. The two schemes use similar codes and protocols; the main difference is that our fault-tolerance scheme runs a decoder to correct errors, whereas our fault-detecting scheme simply applies local testability to detect errors.

For both schemes, we encode the logical dits in $\bar{N}$ into a tensor product of Reed-Solomon codes, as defined below. As described below, the key insight is that we can efficiently switch between different such tensor codes in a fault-tolerant (or fault-detecting) manner. This code switching gives us the necessary flexibility to perform a general computation with good fault-tolerance/detection.

Recall that a classical (linear error-correcting) code of length $n$, dimension $k$, distance $d$, and alphabet size $q$ is a $k$-dimensional linear subspace $C\subseteq\bF_q^n$ such that every nonzero codeword $c\in C\setminus\{0\}$ has Hamming weight $|c|\geq d$. We summarize these parameters by saying $C$ is a $[n,k,d]_q$ code.

\begin{definition}
  For a subset $E\subseteq\bF_q$ of size $|E|=n$, the the \emph{Reed-Solomon code} $\RS(q,k,E)\subseteq\bF_q^E=\bF_q^n$ is the $[n,k,n-k+1]_q$ code whose codewords are evaluations $(f(x))_{x\in E}$ of polynomials $f\in\bF_q[X]^{<k}$ of degree $<k$. 
\end{definition}

Reed-Solomon codes are useful for fault-tolerant computation because they support ``transversal multiplication'' (also called a ``multiplication property''): for codewords $c,c'\in\RS(q,k,E)$, then the component-wise (i.e.~\emph{transversal}) product $c*c':=(c_xc'_x)_{x\in E}$ is a codeword in the higher-dimensional Reed-Solomon code $\RS(q,2k-1,E)$. All linear codes also support component-wise (i.e.~transversal) addition, as for $c,c'\in C$, then $c+c'\in C$. Because addition and multiplication form a universal gate set over a finite field,
we can perform universal computation with transversal operations on Reed-Solomon codes. Because each transversal operation performs a single gate to each dit in a codeword, it cannot propagate errors between dits in a codeword, and hence is naturally well-suited for achieving fault-tolerance.

We specifically use \emph{systematic} encoding maps for all our codes, meaning that the message equals the restriction of the codeword to an appropriate $k^u$ dits. Hence transversal addition and multiplication on codewords induces addition and multiplication on the underlying message dits.


There are three main issues with trying to achieve a full fault-tolerance scheme (in the sense of \Cref{thm:maininf}) using Reed-Solomon codes:
\begin{enumerate}
\item After too many multiplications, the code dimension $k$ will surpass the block length $n$, at which point the distance vanishes.
\item While transversal gates limit error propagation, we ultimately want to reduce the size of a corruption by correcting, or at least detecting, the errors.
\item It is unclear how to perform gates between different message dits encoded in the same codeword.
\end{enumerate}
We resolve all of these issues by performing our fault-tolerant circuit not on Reed-Solomon codes, but rather on tensor products of such codes, as defined below.

\begin{definition}
  The \emph{tensor product} $C^1\otimes C^2$ of codes $C^1,C^2$ of respective length $n_1,n_2$ consists of all matrices $c\in\bF_q^{n_1\times n_2}$ such that every column lies in $C^1$ and every row lies in $C^2$.
  
  More generally for $u\in\bN$, for $i\in[u]$ let $C^i$ be a $[n_i,k_i,d_i]_q$ code for $i\in[u]$. For $i\in[u]$, we say a \emph{direction-$i$ column} in $\prod_{i\in[u]}[n_i]$ is a subset $[n_i]\times\{j_{-i}\}\subseteq\prod_{i'\in[u]}[n_{i'}]$ (where we push $[n_i]$ to the $i$th factor) for some $j_{-i}\in\prod_{i'\in[u]\setminus\{i\}}[n_{i'}]$. Then the $[\prod_{i\in[u]}n_i,\;\prod_{i\in[u]}k_i,\;\prod_{i\in[u]}d_i]_q$ \emph{tensor product code} $C=\bigotimes_{i\in[u]}C^i$ is the code containing every $c\in\bF_q^{\prod_{i\in[u]}[n_i]}$ whose restriction to every direction-$i$ column is a codeword in $C^i$ for every $i\in[u]$.
\end{definition}

Our fault-detecting and fault-tolerance schemes in \Cref{thm:maininf} use input and output codes $\RS(q,k,E)^{\otimes u}$ for an arbitrary size-$n$ subset $E\subseteq\bF_q$, where $n,k,u$ are appropriately chosen functions of $|\bar{N}|$ to ensure that the code has large distance, as well as large enough dimension $k^u\geq|\bar{N}|$ to store the logical dits. In each timestep of our physical circuit, the logical dits are also encoded in such a tensor product of Reed-Solomon codes.

These tensor products preserve transversal multiplication (and of course, transversal addition) of the underlying Reed-Solomon codes. That is, for $c,c'\in\RS(q,k,E)^{\otimes u}$, then $c*c'\in\RS(q,2k-1,E)^{\otimes u}$. 
However,
we now also have a way to reduce the dimension of the multiplied code from $(2k-1)^u$ back down to the original dimension $k^u$: we sequentially loop through $i=1,\dots,u$, and for each $i$, we unencode the $\RS(q,2k-1,E)$ code in each direction-$i$ column, and then re-encode the resulting message into the code $\RS(q,k,E)$.
This procedure effectively switches the factors $\RS(q,2k-1,E)$ in the tensor product back to $\RS(q,k,E)$ one at a time.


This code switching procedure has one major problem: it is not fault-tolerant, as a single error may propagate to cause many more. Fortunately, we can address this issue by repeatedly performing error-correction on $\bigotimes_{i\in[u]}C^i$ with a similar technique: we can sequentially loop through $i=1,\dots,u$, and for each $i$ we run a decoder for $C^i$ to correct errors within every direction-$i$ column in parallel. In \Cref{lem:errcorr}, we show that this error-correction procedure is fault-tolerant, meaning that it corrects more errors than it introduces, even in the presence of an almost-linear number of new errors at every timestep, as long as $u$ grows as an appropriate function of $n$. Our key observation to prove \Cref{lem:errcorr} is that the new errors occurring during error-correction can only propagate within some appropriately defined \emph{closure} set, which is not too large by a bound of \cite{kalachev_maximally_2025}. Outside of this closure set, the error-correction proceeds noiselessly as if no new errors had occurred, and hence will successfully correct any input error that is not too high-weight. Hence the error-correction procedure only leaves a small residual error bounded by the closure set.

If we only need to detect errors rather than correct them (as in the fault-detection scheme in \Cref{thm:maininf}, we can instead simply apply the result of \cite{viderman_combination_2015} showing that all tensor product codes of sufficiently large distance are locally testable. Hence to detect errors on $\bigotimes_{i\in[u]}C^i$, \cite{viderman_combination_2015} shows it is sufficient to loop through $i=1,\dots,u$, and check if every direction-$i$ column lies in $C^i$; if most checks pass, any corruption on our codeword must have low weight. This procedure can detect more errors than our error-correction procedure, leading to the higher error thresholds for fault-detection than fault-tolerance in \Cref{thm:maininf}.

Related ideas involving code switching in one direction at a time has previously been used in the quantum fault-tolerance literature (e.g.~\cite{bombin_dimensional_2016,jochym-oconnor_fault-tolerant_2019,tan_single-shot_2025,golowich_constant-overhead_2025,xu_batched_2025}), and is also reminiscent of the classical sum-check protocol (see \Cref{sec:priorpcp}).

We have now seen how to perform addition, multiplication, and error-correction on code words in a fault-tolerant or fault-detecting manner. However, these addition and multiplication operations act transversally, meaning if we add or multiply codewords $c,c'$ that (systematically) encode messages $m,m'$, then we can only add or multiply $m_j$ with $m'_{j'}$ if $j=j'$.

To implement logical circuits with general connectivity, we again turn to code switching. Specifically, given a codeword in $\bigotimes_{i\in[u]}C^i$, we may choose some $i\in[u]$ and then unencode every direction-$i$ column to obtain $k=\dim(C^i)$ codewords of $\bigotimes_{i'\in[u]\setminus\{i\}}C^{i'}$. We may then perform transversal gates between these $k$ codewords, before re-encoding in every direction-$i$ column to return to our original tensor code $\bigotimes_{i\in[u]}C^i$. By repeating this procedure in all $u$ directions $i\in[u]$, interspersed with rounds of error correction/detection as described above, we obtain fault-tolerant/detecting circuits with hypercubic connectivity. That is, our logical dits are arranged in a $u$-dimensional hypercube~$[k]^u$, and we can perform gates between dits labeled $j,j'\in[k]^u$ if $j$ and $j'$ only differ in a single component $i$ (by unencoding in direction $i$ and then performing transversal gates).

To compile a general logical circuit to respect this connectivity, we apply the sorting networks of \cite{batcher_sorting_1968} (see \Cref{lem:route}), which show how to arbitrarily rearrange dits in the hypercube $[k]^u$ using a $\poly\log(k^u)$-depth circuit, in which each timestep performs transversal gates in some direction $i\in[u]$. Hence for every gate that we want to perform in some timestep of the logical circuit, we permute the logical dits acted upon by the gate to be neighbors in the hypercube $[k]^u$, and then we implement the gate transversally. Recall that every transversal gate here in turn is performed by unencoding the code $C^i$ in every direction-$i$ column, performing transversal addition or multiplication, re-encoding in direction-$i$, and then performing direction-by-direction dimensional reduction (if we performed a multiplication) and error-correction or error-detection as appropriate.

Combining all of the components described above, we obtain \Cref{thm:maininf}. Note that while Reed-Solomon codes, and hence their tensor products, have growing alphabet size $\bF_q$, we stated \Cref{thm:maininf} over a constant-sized alphabet $\bF_r$. Therefore to prove \Cref{thm:maininf}, we construct our fault-tolerance/detecting scheme over an extension field $\bF_q=\bF_{r^\kappa}\cong\bF_r^\kappa$, and then we replace each $q$-ary dit with a size-$\kappa$ block of $r$-ary dits. This alphabet reduction slightly degrades the error thresholds relative to the number of dits, though the loss is insignificant in our parameter regimes.

\subsection{Comparison to Prior Fault-Tolerance Schemes}
\label{sec:priorft}
The study of fault-tolerant computation was initiated by von Neumann \cite{neumann_probabilistic_1956}, who proposed repeating each bit many times (i.e.~encoding in a repetition code), and correcting errors by comparing the different copies of a bit. Follow-up works later made this scheme rigorous and explicit \cite{dobrushin_upper_1977,pippenger_networks_1985}. This scheme can with high probability correct random errors using only logarithmic redundancy. However, if an adversary is allowed to choose the error locations, they can flip all copies of a single bit to corrupt the output. Hence the computation remains vulnerable to corruptions on localized sets of bits. More explicitly, if there are $|\bar{N}|$ logical bits and $|N|$ physical bits, the scheme is robust to at most $|N|/|\bar{N}|$ adversarial errors.

The repetition code scheme can only achieve exponential error suppression against random errors by introducing polynomial space overhead $|N|/|\bar{N}|=\poly(|\bar{N}|)$. Spielman \cite{spielman_highly_1996} showed how to reduce this space overhead to $\poly\log(|\bar{N}|)$ using a scheme based on a tensor product of Reed-Solomon codes. Spielman's scheme shares many characteristics with ours: it uses a degree reduction technique similar to our code switching following transversal multiplications, and it also transforms the logical circuit into one with hypercubic connectivity. However, \cite{spielman_highly_1996} only takes the product of $u=2$ Reed-Solomon codes, and uses a different error-correction scheme, which leads to a tolerance of at most $\tilde{O}(|N|^{1/4})$ adversarial errors. In contrast, by taking the product of a growing number $u$ of Reed-Solomon codes, and decoding in one direction at a time, we are able to protect against an almost-linear number $|N|^{1-o(1)}$ adversarial errors. \cite{spielman_highly_1996} also requires the Reed-Solomon evaluation points to respect a particular symmetry in order to obtain appropriate logical permutations. Our scheme instead works for arbitrary evaluation points, as we unencode out of a factor code before permuting along the associated axis in the hypercube.

As described above, G\'{a}l and Szegedy \cite{gal_fault_1995} constructed a sort of fault-tolerance scheme against an almost-linear number of adversarial errors. However, this scheme relied on PCPs, and also does not match our definition of a fault-tolerance scheme, as it allowed the input encoding to contain information about the logical output.

Other lines of work have also studied different error models, such the short-circuit model, where every corrupted gate must output one of its input bits \cite{kleitman_design_1997,kalai_formulas_2012,braverman_optimal_2019,efremenko_circuits_2022}. In particular, in this model \cite{efremenko_circuits_2022} obtained fault-tolerance against adversarial errors acting on a constant fraction of gates along every path through the circuit. However, short-circuit errors are much more restrictive than the general errors that we permit. In particular, the works above obtain fault-tolerance against short-circuit errors without needing to encode the logical input, whereas fault-tolerance in our model is impossible if the input is unencoded (see \Cref{sec:circuitsinf}).

The study of robust multi-party computation (MPC), which dates back to \cite{ben-or_completeness_1988}, provides a form of fault-tolerance in which several parties communicate to collectively compute a function, even when some parties may become corrupted. Our fault-tolerance scheme uses certain related techniques, including degree reduction on Reed-Solomon codes and routing networks, as some such MPC protocols (e.g.~\cite{damgard_perfectly_2010}). However, we emphasize that our notion of fault-tolerance requires robustness against faults that corrupt different sets of dits (i.e.~parties) in different timesteps, whereas MPC protocols typically assume all parties outside of a ``bad'' set remain uncorrupted throughout the entire computation. Nevertheless, the MPC notion of fault-tolerance is sufficient for various applications, including for constructing certain proof systems (see e.g.~\cite{boneh_quasi-optimal_2018}). It is an interesting direction of future work to further develop the connection between our results and robust MPC protocols.

Our fault-tolerance scheme also shares some characteristics with certain quantum fault-tolerance schemes. In particular, the idea of code switching in one direction at a time to obtain universal computation or to implement logical permutations has been used in various schemes (e.g.~\cite{bombin_dimensional_2016,jochym-oconnor_fault-tolerant_2019,tan_single-shot_2025,golowich_constant-overhead_2025,xu_batched_2025}). However, these schemes relied on quantum codes of distance at most $O(\sqrt{|N|})$, and hence could not protect against more than $O(\sqrt{|N|})$ adversarial errors.

\subsection{Main Result on PCPs}
\label{sec:pcpinf}
We now state our main result on PCPs; we will subsequently describe how we prove it using our fault-detecting scheme in \Cref{thm:maininf}. We briefly review the necessary definitions in this section; see \Cref{sec:pcp} for the details.

We will describe PCPs in terms of \emph{constraint satisfaction problems} (CSPs). For the purpose of this paper, a CSP instance $\cP$ with $n$ boolean variables and $m$ constraints consists of a set of functions $\cP=\{\cP_1,\dots,\cP_m\}$, where each function $\cP_j:\bF_2^{B_j}\rightarrow\bF_2$ maps some subset $B_j\subseteq[n]$ of the variables to a boolean value. We restrict attention to constant-locality CSPs, meaning that every $|B_j|=O(1)$, and evey variable lies in $O(1)$ sets $B_j$. We say a constraint $\cP_j$ is \emph{satisfied} by an assignment $x\in\bF_q^{[n]}$ of the variables if $\cP_j(x|_{B_j})=0$. We say $\cP$ is \emph{satisfiable} if there exists an assignment that satisfies every constraint.


We now state our main result on PCPs.

\begin{theorem}[PCPs for circuit-satisfiability with inverse-polylogarithmic soundness; informal statement of \Cref{thm:pcp}]
  \label{thm:pcpinf}
  For every $\epsilon>0$, there exists a polynomial-time algorithm that takes as input a circuit $\cR$ of size $s$ that has a single output bit, and outputs a constant-locality CSP instance $\cP$ with $n\leq s^{1+\epsilon}$ variables and $m=\Theta(n)$ constraints, which exhibits:
  \begin{enumerate}
  \item (Completeness) If there exists an input $x$ such that $\cR(x)=0$, then $\cP$ is satisfiable.
  \item (Soundness) If every input $x$ has $\cR(x)=1$, then every assignment for $\cP$ has $\geq 1/(\log m)^{O(1/\epsilon)}$ unsatisfied constraints.
  \end{enumerate}
\end{theorem}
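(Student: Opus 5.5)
Given $\cR$, we treat it as a logical circuit $\bar{\cR}$ over $\bF_2$ (so $\bF_r=\bF_2$) that acts on a set $\bar N$ of size $O(s)$ and runs for $\bar T\le s$ timesteps. We apply the fault-detecting scheme of \Cref{thm:maininf}(1) with parameter $\epsilon' = \Theta(\epsilon)$ and obtain a physical circuit $\cR'$ on $|N|\le |\bar N|^{1+\epsilon'}$ dits with thresholds $\lambda_{\mathrm{in}}\ge\lambda_{\mathrm{out}}\ge\lambda_{\mathrm{run}}=\lambda_{\mathrm{det}}\ge |N|/(\log|N|)^{O(1/\epsilon)}$. Two points require care. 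First, the input $x$ is existentially quantified, so we need not trust $\Enc_{\mathrm{in}}$: we first prepend a short logical layer that copies the unencoded witness bits. Alternatively, we let the proof hold $\Enc_{\mathrm{in}}(x)$ and check membership of the input in the input code through the same local tests on the tensor code. Second, $T$ may be as large as $\bar T(\log|N|)^{O(1/\epsilon)}$, which would make the history size $|N|T$ roughly $s^{2+\epsilon}$. To avoid this, we first convert $\cR$ into a logical circuit of depth $\poly\log s$ acting on $O(s)$ dits, by arranging gates in layers using the routing of \Cref{lem:route}. Concretely, we place each gate output in its own dit, so that the logical width becomes $s$ and the depth becomes $\poly\log s$. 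We may also skip the conversion and instead build the depth-robust form directly from the hypercube structure. Either way, the number of variables is $n=|N|\cdot T\le s^{1+\epsilon}$.

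The CSP $\cP$ is the standard \emph{history} (tableau) CSP. It has one variable for each physical dit at each timestep, together with, for each gate of $\cR'$, a constant-local constraint checking that the dits at time $t+1$ are the gate applied to the dits at time $t$. Each detector dit at each timestep gets a constraint requiring it to be $0$. Finally, each output dit gets a constraint requiring it to equal the corresponding dit of $\Enc_{\mathrm{out}}(0)$. Because the gates have constant fan-in and fan-out and the detector dits are among the gate outputs, this CSP has constant locality and $m=\Theta(n)$ constraints. Completeness is immediate: if $\cR(x)=0$, the noiseless execution of $\cR'$ on $\Enc_{\mathrm{in}}(x)$ has all detectors zero, produces $\Enc_{\mathrm{out}}(0)$, and therefore satisfies every constraint.

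For soundness, we take any assignment and suppose that fewer than a $\delta$ fraction of the constraints are violated, where $\delta=\lambda_{\mathrm{run}}/(C\,|N|\,T)$ in the worst case. This is too weak, so we must instead count per timestep. Violations of gate constraints at timestep $t$ are \emph{interpreted} as adversarial corruptions after timestep $t$: the assignment is then exactly a noisy run of $\cR'$ whose corruption set at step $t$ is the set of dits touched by violated gate constraints. Similarly, the input row is some word that we decode to a nearest $\Enc_{\mathrm{in}}(x)$. If every timestep had fewer than $\lambda_{\mathrm{run}}$ corruptions and fewer than $\lambda_{\mathrm{det}}$ nonzero detectors, then the fault-detection guarantee would put the output row within $\lambda_{\mathrm{out}}$ of $\Enc_{\mathrm{out}}(\bar\cR(x))=\Enc_{\mathrm{out}}(1)$. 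The output constraints would then be violated on at least $d(\Enc_{\mathrm{out}})-\lambda_{\mathrm{out}}=\Omega(|N|)$ positions, using the distance of the tensor RS code. Hence some timestep has at least $\lambda_{\mathrm{run}}$ violated constraints, which is only a $\lambda_{\mathrm{run}}/(|N|T)$ fraction overall.

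The main obstacle is amplifying this single bad timestep to an overall fraction of $1/\poly\log$. The remedy I would try first is to make every timestep matter equally: after each timestep, rather than only at the end, we add consistency constraints, weighted by duplicating them $T$ times, or equivalently by taking the CSP as a union of $T$ per-timestep families each normalized to equal size. With $T=\poly\log$ after the depth reduction of the first paragraph, a single bad timestep with $\lambda_{\mathrm{run}}\ge |N|/\poly\log$ violated constraints already constitutes a $1/\poly\log$ fraction of the roughly $|N|T$ constraints. So the key step is reducing $T$ to $(\log s)^{O(1/\epsilon)}$ while keeping $|N|\le s^{1+\epsilon}$; I expect that to consume most of the technical work. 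If that fails, the fallback is to use the fault-detection guarantee as a sequential composition with a hybrid argument across $\poly\log$ blocks of timesteps.
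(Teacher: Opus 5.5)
\textbf{Gap: the depth reduction.} The proposal fails at the step you yourself identify as the crux, namely getting $T$ down to $(\log s)^{O(1/\epsilon)}$. The route you suggest cannot work.

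\begin{itemize}
\item Rearranging gates with the sorting networks of \Cref{lem:route} only permutes data. It does nothing about sequential dependencies.
\item Giving each gate output its own dit does not let you evaluate a gate before its inputs exist.
\item A size-$s$ circuit can be a single dependency chain of depth $\Theta(s)$. A general conversion of size-$s$ circuits into equivalent circuits of polylogarithmic depth and polynomial (let alone $O(s)$) size would give $\mathsf{P}/\mathrm{poly}=\mathsf{NC}/\mathrm{poly}$, far beyond anything known.
\end{itemize}

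Without this step, $T\approx\bar T\,(\log s)^{O(1/\epsilon)}$ with $\bar T$ as large as $s$. The history then has roughly $s^{2+\epsilon}$ variables, and one bad timestep is only about a $1/s$ fraction of the constraints. Reweighting timesteps does not help, because the violated fraction is still an average over all $T$ timesteps. The hybrid-over-blocks fallback is not specific enough to change this.

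\textbf{Missing idea: verify a transcript instead of computing $\cR$.} The paper's fix (\Cref{claim:pcpbar}) exploits the fact that the witness is existentially quantified. Your instinct of giving every gate output its own dit is right, but only if those values are \emph{guessed and checked} rather than computed.

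\begin{itemize}
\item Take as logical input $z\in\bF_2^S$, the claimed values at all $\Theta(s)$ non-idling space-time locations of $\cR$.
\item Let $\bar{\cR}$ output $g+1$ bits: the claimed output bit of $\cR$, and for each gate $G_i$ an indicator that $G_i$'s claimed outputs differ from $G_i$ applied to its claimed inputs.
\item Each output reads $O(1)$ inputs and each input feeds $O(1)$ outputs. After greedily coloring the conflict graph, $\bar{\cR}$ has width $\Theta(s)$ and depth $O(1)$.
\item $\bar{\cR}$ has an input with all-zeros output iff $\cR$ has an input with output $0$.
\end{itemize}

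Feeding this $\bar{\cR}$, rather than $\cR$, to the mending fault-detecting scheme gives $T'\le(\log s)^{O(1/\epsilon)}$ and $|N'|\,T'\le s^{1+\epsilon}$ directly, with no reweighting needed.

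From there your soundness argument is essentially the paper's:
\begin{itemize}
\item Read violated gate constraints as a fault.
\item A timestep with $\ge\lambda_{\mathrm{run}}$ faults already yields $\ge\lambda_{\mathrm{run}}/3\ge m/(\log m)^{O(1/\epsilon)}$ violated constraints.
\item An input row far from the code is handled by the mending property. Your ``local tests on the input'' alternative is exactly what makes \Cref{thm:main} mending.
\item Otherwise either some detector set has $\ge\lambda_{\mathrm{det}}$ nonzero dits, or the output is within $\lambda_{\mathrm{out}}$ of $\Enc_{\mathrm{out}}(\bar{\cR}(z))$ for some $z$ with $\bar{\cR}(z)\neq 0^{g+1}$. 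That is a nonzero codeword, which forces $\ge d-\lambda_{\mathrm{out}}$ violated output constraints.
\end{itemize}
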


To interpret \Cref{thm:pcpinf} as providing a ``probabilistically checkable proof'' (PCP) that a given circuit $\cR$ has a an input $x$ with $\cR(x)=0$, we view a satisfying assignment to the associated CSP instance $\cP$ as the ``proof.'' \Cref{thm:pcpinf} implies that if we sample and check a single random constraint from $\cP$, we will successfully detect if $\cP$ is unsatisfiable with probability $1/(\log m)^{O(1/\epsilon)}$ while querying just a constant number of variables. If we repeat this test $(\log m)^{O(1/\epsilon)}$ many times, we boost the test's success probability to $1-1/2^{(\log m)^{O(1/\epsilon)}}$, while increasing the query complexity to polylogarithmic. Hence we can view \Cref{thm:pcpinf} as providing polylog-query PCPs for the $\NP$-complete problem of circuit-satisfiability.

The PCP proof (i.e.~list of variable assignments) in \Cref{thm:pcpinf} has length $n\leq s^{1+\epsilon}$ for arbitrarily small constant $\epsilon>0$. In contrast, state-of-the-art PCP constructions have length growing as $s\poly\log s$, while only requiring a constant number of queries to verify the proof with constant success probability \cite{dinur_pcp_2007}.

\subsection{PCP Construction}
\label{sec:pcpconstructinf}
We now describe how we construct PCPs to prove \Cref{thm:pcpinf} from our fault-detecting scheme in \Cref{thm:maininf}.
The main idea behind mapping a fault-tolerance scheme to a PCP was outlined in \cite[Appendix~A]{anshu_circuit--hamiltonian_2024}. We adapt this technique to fault-detecting schemes to obtain the $|N|/\poly\log|N|$ error thresholds for fault-detection in \Cref{thm:maininf}.


We begin by mapping the circuit $\cR$ in \Cref{thm:pcpinf} to a circuit $\bar{\cR}$ using constant time (i.e.~constant depth) such that $\cR$ has an input $x$ with $\cR(x)=0$ iff $\bar{\cR}$ has an input $z$ with all output bits of~$\bar{\cR}(z)$ equal to $0$. We will then apply \Cref{thm:maininf} to the logical circuit $\bar{\cR}$ to obtain a fault-detecting physical circuit $\cR'$. Our final CSP instance $\cP$ in \Cref{thm:pcpinf} will then have variables corresponding to bits at every point in space and time in a transcript of the execution of~$\cR'$. 
The constraints of~$\cP$ will enforce that the transcript describes a valid execution of~$\cR'$ with all-$0$s output. 
If a small number of constraints are violated, the transcript describes an execution of~$\cR'$ with a small number of errors, which by the fault-detecting property still has the correct output (up to a low-weight corruption). An incorrect output can only be induced by a large number of errors in the execution of $\cR'$, which would result in many violated constraints of $\cP$. Hence completeness and soundness follow.

We now provide a more detailed description of each of these steps. To construct $\bar{\cR}$, we first define the \emph{transcript} $\tran(\cR;x)$ of $\cR$ on input $x$ to be a list of the values of every bit in $\cR$ during every timestep of the execution of $\cR$ on input $x$. Therefore $\tran(\cR;x)\in\bF_2^S$, where $S$ is a list of all space-time locations in $\cR$.\footnote{In \Cref{sec:pcp}, we actually define $S$ to omit idling locations where identity gates are applied, so that $s$ measures the number of non-identity gates. For simplicity in this informal presentation we ignore this detail.} In particular, $|S|=\Theta(s)$ grows linearly in the size of $\cR$.

We then let $\bar{\cR}$ take as input some $z\in\bF_2^S$. If $\cR$ has $g$ gates, then $\bar{\cR}$ has $g+1$ outputs. Viewing~$z$ as a transcript of the execution of $\cR$ under the presence of noise, then the first output of $\bar{\cR}$ equals the bit outputted by this execution of $\cR$ (i.e.~the unique bit in the final timestep of the transcript). The remaining $g$ outputs of $\bar{\cR}$ each equal $0$ iff the $g$ respective gates in the transcript were executed correctly, meaning that the gate's outputs equal the gate's image under its inputs.

Therefore if $z$ equals a transcript of a noiseless execution of $\cR$ on some input $x$ with output $\cR(x)=0$, then all outputs of $\bar{\cR}(z)$ are $0$; otherwise, $\bar{\cR}(z)$ has a nonzero output. Because every gate in $\cR$ acts on a constant number of bits (as we assume constant fan-in and fan-out), each input to $\bar{\cR}$ only affects a constant number of outputs, and each output only relies on a constant number of inputs. Thus we can implement $\bar{\cR}$ using $\Theta(s)$ bits in constant time (i.e.~constant depth).

The key step in our PCP construction is now to apply our fault-detecting scheme in \Cref{thm:maininf} to the logical circuit $\bar{\cR}$, in order to obtain a fault-detecting physical circuit $\cR'$. By \Cref{thm:maininf}, the set $S'$ of space-time locations in $\cR'$ will have size $|\cR'|=s^{1+O(\epsilon)}$, and $\cR'$ will use time $T'\leq(\log s)^{O(1/\epsilon)}$. Our final CSP instance $\cP$ will have variables $y\in\bF_2^{S'}$ labeled by the set $S'$. Viewing~$y$ as the transcript of an execution of $\cR'$ under the presence of noise, then we define $\cP$ to have constraints that check:
\begin{enumerate}
\item (Fault constraints) That every gate in $\cR'$ was executed correctly,
\item (Detector constraints) That every detector bit had value $0$, and
\item (Output constraints) That the output of $\cR'$ was all-$0$s.
\end{enumerate}
Each fault constraint acts on a constant number of bits because we assume all gates have constant fan-in and fan-out. Meanwhile, each detector and output constraint just checks that a single bit is nonzero. Therefore $\cP$ has constant locality.

To see that the completeness in \Cref{thm:pcpinf} holds, if there exists $x$ with $\cR(x)=0$, then we can set $z=\tran(\cR;x)$, and then we set $y=\tran(\cR';\Enc_{\mathrm{in}}(z))$ to be the transcript of $\cR'$ on input given by the encoding (associated to the tensor Reed-Solomon codes from our fault-detecting scheme; see \Cref{sec:circuitsinf}) of $z$. By construction $\bar{\cR}(z)=0$, and hence all constraints in $\cP$ will be satisfied by assignment $y$.

We now turn to showing the soundness in \Cref{thm:pcpinf}. Suppose that $y$ is an assignment for $\cP$ with few (i.e.~$<1/(\log m)^{O(1/\epsilon)}$-faction) unsatisfied constraints. Viewing $y$ as the transcript for a noisy execution of $\cR'$, then it follows that most gates in $\cR'$ executed noiselessly, most detector bits did not detect an error, and most output bits were $0$. Therefore if the restriction of $y$ to the first timestep (corresponding to the input to $\cR'$) is close to some codeword $\Enc_{\mathrm{in}}(z)$ of the input code of our fault-detecting scheme, then the fault-detecting property implies that the restriction of $y$ to the last timestep (corresponding to the output of $\cR'$) must be close to the encoding $\Enc_{\mathrm{out}}\circ\bar{\cR}(z)$ of $\bar{\cR}(z)$. Assuming our output code has high distance, then $\Enc_{\mathrm{out}}\circ\bar{\cR}(z)$ has large Hamming weight unless $\bar{\cR}(z)=0$. Hence by the assumption that most output constraints were satisfied, we must have $\bar{\cR}(z)=0$, which by the definition of $\bar{\cR}$ means that $\cR(x)=0$ for some $x$, as desired for soundness.

There is only one remaining case to consider, namely, if the restriction of $y$ to the first timestep is far from all codewords in the image of $\Enc_{\mathrm{in}}$. That is, $y$ could be the transcript of an execution of $\cR'$ with a large input error, but then with few additional errors during the execution. To guard against such large input errors, we use the local testability of our tensor Reed-Solomon \cite{viderman_combination_2015} codes to test the weight of the input error, and add detector constraints to enforce a low-weight error. Note that we formalize this idea using the ``mending'' property in \Cref{def:faultdet} below.

\subsection{Comparison to Prior PCP Constructions}
\label{sec:priorpcp}
Many PCP constructions are fundamentally based on similar ingredients as ours, namely, locally testable codes with sufficient algebraic structure to encode a computation. Indeed, multiple PCP constructions specifically rely on the local testability of tensor products of Reed-Solomon (or related) codes, e.g.~\cite{ben-sasson_short_2008,meir_combinatorial_2012,ben-sasson_constant_2016}. Our fault-tolerance/detecting schemes that operate on tensor codes in one direction (i.e.~one factor in the product) at a time are reminiscent of the sum-check protocol \cite{lund_algebraic_1992}, which was generalized to operate on arbitrary tensor codes in \cite{meir_ip_2013}, and is used in PCP constructions such as \cite{arora_probabilistic_1998,bensasson_robust_2006,ben-sasson_constant_2016,gur_perfect_2024,gur_zero-knowledge_2025}. Indeed, the sum-check protocol also operates on one factor of the tensor product at a time. However, whereas sum-check is only used to check the sum of logical dits, we use code switching to apply general logical circuits. This key idea was largely inspired by applications of code switching in the quantum fault-tolerance literature (see \Cref{sec:priorft}).

A recent work~\cite{bafna2025quasi} also constructed PCPs using fault-tolerance ideas. However, their objective and techniques are substantially different from ours. There, the authors constructed a fault-tolerant communication protocol on a high-dimensional expander that gives rise to small soundness and 2-query PCPs.


\section{Preliminaries}
\label{sec:prelim}
This section presents preliminary notions and results.

\subsection{Notation}
\label{sec:notation}
In this section we describe basic notation that we will use throughout the paper. For $n\in\bN$, we write $[n]=\{1,2,\dots,n\}$. For a prime power $q$, we let $\bF_q$ denote the finite field of order $q$. For a vector $x\in\bF^n$, we let $\supp(x)=\{i\in[n]:x_i\neq 0\}$ denote the support, and we let $|x|=|\supp(x)|$ denote the Hamming weight (or simply the ``weight'') of $x$. 

For vectors $x,y\in\bF^n$, we let $x\cdot y=\sum_{i\in[n]}x_iy_i$ denote the standard bilinear form, and we let $x*y=(x_iy_i)_{i\in[n]}\in\bF^n$ denote the component-wise product. We extend this notation to subspaces $A,B\subseteq\bF^n$ by letting $A*B=\spn\{a*b:a\in A,b\in B\}$. We let $I_n:\bF_q^n\rightarrow\bF_q^n$ denote the identity map $I_n(x)=x$.

For sets $A,B$, we let $A\sqcup B$ denote the disjoint union, so that for instance $A\sqcup A$ is isomorphic to $A\times[2]$. We extend this notation to powers, so that $A^{\sqcup n}=A\sqcup\cdots\sqcup A\cong A\times[n]$ denotes the disjoint union of $n$ copies of $A$. We also use the notation `$\sqcup$' for direct products of functions, so that if $f_i:A_i\rightarrow B_i$ is a function for $i\in[2]$, then $f_1\sqcup f_2:A_1\times A_2\rightarrow B_1\times B_2$ is the function $(f_1\sqcup f_2)(a_1,a_2)=(f_1(a_1),f_2(a_2))$.

\subsection{Error-Correcting Codes}
In this section we present definitions and basic results regarding error-correcting codes, which we use to construct our fault-tolerance scheme.

\begin{definition}
  A \emph{classical linear code} (or simply \emph{code}) $C$ over the alphabet $\bF_q$ of \emph{length} $n$ and \emph{dimension} $k$ is a $k$-dimensional subspace $C\subseteq\bF_q^n$. The \emph{distance} of $C$ is the minimum weight $d=\min_{c\in C\setminus\{0\}}|c|$ of a nonzero element of $C$. We summarize these parameters by saying that $C$ is a $[n,k]_q$ or a $[n,k,d]_q$ code.

  An \emph{encoding map} for $C$ is a linear isomorphism $\Enc:\bF_q^k\xrightarrow{\sim}C$.
  An encoding map is \emph{systematic} if there exists a subset $K\subseteq[n]$ of size $|K|=k$ such that the restriction of the encoding to dits in $K$ equals the message, i.e.~for every $x\in\bF_q^k$ we have $\Enc(x)|_K=x$ (under some fixed isomorphism $K\cong[k]$).
\end{definition}

The following fact follows from a basic Gaussian elimination argument.

\begin{fact}
  \label{fact:systematic}
  For every $[n,k]_q$ code $C$, there exists a set $K\subseteq[n]$ such that there is a systematic encoding map $\Enc:\bF_q^k\cong\bF_q^K\rightarrow\bF_q^n$ for which $\Enc(x)|_K=x$. Furthermore, for every set $K_0\subseteq[n]$ such that $C|_{K_0}=\bF_q^{K_0}$, then $K$ can be chosen to contain $K_0$.
\end{fact}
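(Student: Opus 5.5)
We prove \Cref{fact:systematic} by choosing $K$ as an information set for $C$ containing $K_0$, and then inverting the projection onto $K$. Fix a generator matrix $G\in\bF_q^{k\times n}$ for $C$, so that $C=\{mG:m\in\bF_q^k\}$ and $G$ has rank $k$. For a set $J\subseteq[n]$, let $G_J$ denote the $k\times|J|$ submatrix consisting of the columns indexed by $J$. The restriction $C|_J$ equals the row space of $G_J$. So the hypothesis $C|_{K_0}=\bF_q^{K_0}$ says exactly that the columns of $G_{K_0}$ are linearly independent; in particular $|K_0|\le k$.

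First, we extend $K_0$ to $K$. The column space of $G$ is all of $\bF_q^k$, since $G$ has rank $k$. By the standard exchange (Steinitz) argument, we can greedily add column indices from $[n]\setminus K_0$, each time choosing a column outside the span of those already chosen. This continues until we have a set $K\supseteq K_0$ of exactly $k$ indices whose columns form a basis of $\bF_q^k$. Equivalently, this step is Gaussian elimination on $G$ with the pivot columns forced to include $K_0$ first. When $K_0=\emptyset$, it gives the first statement of the fact.

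Next, we define the encoding. Since $G_K$ is an invertible $k\times k$ matrix, set $\Enc(x)=xG_K^{-1}G$ for $x\in\bF_q^k\cong\bF_q^K$.
\begin{itemize}
\item $\Enc$ is linear, and its image lies in $C$.
\item $\Enc$ is injective, because $G_K^{-1}$ is invertible and $G$ has full row rank. Since $\dim C=k$, it is therefore a linear isomorphism onto $C$.
\item It is systematic: $\Enc(x)|_K=xG_K^{-1}G_K=x$.
\end{itemize}

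No step is a real obstacle. The only point needing care is ensuring that $K$ contains $K_0$, and the greedy basis extension above handles this directly.
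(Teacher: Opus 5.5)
Your proof is correct and follows the paper's approach; the paper only remarks that the fact ``follows from a basic Gaussian elimination argument.'' Your argument spells that out: you extend the independent columns of $G_{K_0}$ to an information set $K$ and set $\Enc(x)=xG_K^{-1}G$.
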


The primary codes we consider are tensor products of Reed-Solomon codes, defined below.

\subsubsection{Tensor Product Codes}
Here we define tensor product codes.

\begin{definition}
  Let $q$ be a prime power and let $u\in\bN$. For $i\in[u]$, let $C_i$ be a $[n_i,k_i,d_i]_q$ code with encoding map $\Enc_i$. The \emph{tensor product code} (or simply \emph{tensor code}) $C=\bigotimes_{i\in[u]}C_i$ is the $[\prod_{i\in[u]}n_i,\; \prod_{i\in[u]}k_i,\; \prod_{i\in[u]}d_i]_q$ code
  \begin{equation*}
    C = \spn\left\{\bigotimes_{i\in[u]}c_i:c_i\in C_i\;\forall i\in[u]\right\},
  \end{equation*}
  with \emph{tensor product encoding map} $\Enc=\bigotimes_{i\in[u]}\Enc_i$ given by
  \begin{equation*}
    \Enc\left(\bigotimes_{i\in[u]}c_i\right) = \bigotimes_{i\in[u]}\Enc_i(c_i)
  \end{equation*}
  for $c_i\in C_i\;\forall i\in[u]$, and extended by linearity to all of $\bF_q^{\prod_{i\in[u]}k_i}=\bigotimes_{i\in[u]}\bF_q^{k_i}$.
\end{definition}

Tensor codes can be equivalently defined as follows. For $i\in[u]$ and $j\in\prod_{i'\in[u]\setminus\{i\}}[n_{i'}]$, we say the associated \emph{direction-$i$ column} is the set of all $j'\in\prod_{i'\in[u]}[n_{i'}]$ such that $j'_{i'}=j_{i'}$ for every $i'\in[u]\setminus\{i\}$. Therefore every direction-$i$ column has $n_i$ elements. Then the tensor product of $C_1,\dots,C_u$ is simply the set of all vectors $c\in\bF_q^{\prod_{i\in[u]}[n_i]}$ such that for every $i\in[u]$, the restriction of $c$ to every direction-$i$ column is a codeword in $C_i$. In other words, the tensor code is the kernel of the parity-check matrix defined below.


\begin{definition}
  \label{def:pcpc}
  For $u\in\bN$, for $i\in[u]$ let $H^i\in\bF_q^{m_i\times n_i}$ be a matrix. Define sets
  \begin{equation*}
    N = [n_1]\times\cdots\times[n_u]
  \end{equation*}
   and $S=\bigsqcup_{i\in[u]}S^i$ with each
  \begin{equation*}
    S^i = [n_1]\times\cdots\times[n_{i-1}]\times[m_i]\times[n_{i+1}]\times\cdots\times[n_u].
  \end{equation*}
  The \emph{tensor-code parity-check matrix} $H\in\bF_q^{S\times N}$ associated to $H^1,\dots,H^u$ is given for $x\in\bF_q^N$ by
  \begin{equation*}
    Hx = ((I^{\otimes i-1}\otimes H^i\otimes I^{\otimes u-i})x)_{i\in[u]} \in \bigoplus_{i\in[u]}\bF_q^{S^i} = \bF_q^S.
  \end{equation*}
  This matrix is by definition a parity-check matrix for the tensor product of the codes $\ker(H^i)$ for $i\in[u]$, that is,
  \begin{equation*}
    \ker(H) = \bigotimes_{i\in[u]}\ker(H^i).
  \end{equation*}
\end{definition}

We will use the fact shown by \cite{viderman_combination_2015} (which strengthened a result of \cite{ben-sasson_robust_2004}) that tensor product codes are \emph{locally testable}. Loosely speaking, this result says that for every $y\in\bF_q^{\prod_{i\in[u]}[n_i]}$, the number of direction-$i$ columns in which the restriction of $y$ is not a codeword of $C_i$ grows with the distance $\min_{c\in C}|y-c|$ from $y$ to $C$. In other words, defining $H$ as in \Cref{def:pcpc}, then $|Hy|$ grows with $\min_{c\in C}|y-c|$. We provide the formal statement in \Cref{lem:loctest} below.

\begin{lemma}[Local testability of tensor product codes \cite{viderman_combination_2015}]
  \label{lem:loctest}
  For $u,n,d\in\bN$, for $i\in[u]$ let $C_i=\ker(H_i)$ be an $[n,\; k_i,\; d_i\geq d]_q$ code with full-rank parity-check matrix $H_i\in\bF_q^{(n-k_i)\times n}$. Let $H\in\bF_q^{S\times[N]}$ be the associated tensor-code parity-check matrix, and let $C=\ker(H)$ be the tensor code. Then it holds for every $y\in\bF_q^N$ that $|Hy|\geq\rho(u,n,d)\cdot\min_{c\in C}|y-c|$, where
  \begin{equation*}
    \rho(u,n,d) = \frac{1}{(un\cdot(n/d)^u)^\eta}
  \end{equation*}
  for an absolute constant $\eta>1$.
\end{lemma}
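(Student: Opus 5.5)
This lemma is essentially Viderman's robust testability theorem for tensor codes, rewritten in terms of the parity-check syndrome weight $|Hy|$. The plan is to (i) recall the robust testability statement for products of two codes, (ii) iterate it over the $u$ factors in a balanced way, and (iii) convert robustness (average distance of local views from the component code) into a lower bound on the number of violated direction-$i$ checks.

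\textbf{Step 1 (two-factor robustness).} Viderman's combinatorial argument gives the following. For codes $A$ of length $n_A$ and distance $d_A$, and $B$ of length $n_B$ and distance $d_B$, suppose both $A$ and $B$ are $\alpha$-robustly testable in the appropriate sense, and $d_A/n_A$, $d_B/n_B$ are not too small. Then $A\otimes B$ is robust with parameter $\alpha\cdot\poly(d_A/n_A,\,d_B/n_B)$. Concretely, for every $y$, the average of the distance of each row to $A$ and the distance of each column to $B$ is at least a $\poly(\delta)$ fraction of the distance of $y$ to $A\otimes B$.

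Here we do not need the sharpest relative-distance regime. Any polynomial loss in $n/d$ per level suffices, because the claimed bound is polynomial in $un(n/d)^u$.

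\textbf{Step 2 (induction over $u$).} Group the $u$ factors as a product of two tensor codes, each with about $u/2$ factors, and recurse. This gives depth $\log u$. The relative distance of a $v$-fold product is $(d/n)^v$, so each level costs a factor $\poly((n/d)^u)$. There is also a factor $\poly(u)$ coming from the fact that there are $u$ directions.

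Multiplying the losses over the levels gives robustness
\[
1/\bigl(u\,n\,(n/d)^u\bigr)^{O(1)}.
\]
This is where the absolute constant $\eta$ comes from. The base case, a single Reed-Solomon-type code $C_i=\ker(H_i)$, is trivially robust: a word outside $C_i$ violates at least one check.

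\textbf{Step 3 (from robustness to syndrome weight).} For each direction $i$, let $\mathrm{col}$ range over the direction-$i$ columns. Then
\[
\sum_{\mathrm{col}} \operatorname{dist}(y|_{\mathrm{col}}, C_i) \le n \cdot \#\{\mathrm{col} : H_i y|_{\mathrm{col}} \neq 0\} \le n\,|H y|_{S^i}|.
\]
Summing over $i$ and applying the robustness bound from Step 2 gives
\[
|Hy| \ge \frac{1}{u\,n}\cdot\bigl(u\,n\,(n/d)^u\bigr)^{-O(1)}\min_{c\in C}|y-c|.
\]
The factor $1/(un)$ is absorbed into the exponent $\eta$.

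\textbf{Main obstacle.} I expect the hard part to be keeping the per-level loss of the recursive two-factor argument polynomial in $n/d$, so that the final bound has the form $(n/d)^{O(u)}$ rather than something worse. Viderman's argument needs a distance condition relative to the robustness parameter. I would first check that his ``robust product'' lemma applies with loss $\poly(\delta)$ and with no requirement that $\delta$ exceed a threshold. If it does need a threshold, I would fall back to the inductive structure of Ben-Sasson–Sudan and accept worse constants in $\eta$. Since the lemma is cited, I would ultimately rely on \cite{viderman_combination_2015} for this step and only verify the parameter bookkeeping.
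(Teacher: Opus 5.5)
Step 1 is where your argument breaks. Your ``concretely'' statement says that for every $y$ the average distance of rows from $A$ and columns from $B$ is at least a $\poly(\delta)$ fraction of the distance of $y$ from $A\otimes B$. That is exactly robustness of the two-wise row/column test, and it is false for general codes of constant relative distance. This was the question of Ben-Sasson and Sudan \cite{ben-sasson_robust_2004}, which Valiant answered negatively; Goldreich and Meir gave counterexamples even with good codes. Your extra hypothesis that $A,B$ are $\alpha$-robustly testable does not rescue it. Every code is $1$-robust for the tester that reads the whole word, and that is precisely your base case, so at the pairs $C_i\otimes C_{i'}$ at the bottom of your binary recursion, Step 1 asserts the false statement. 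Viderman \cite{viderman_combination_2015} proves no two-factor lemma of this kind. The missing idea is that robustness is only available for products of at least three factors. Viderman combines the non-robust testability/decodability of a two-wise product with a third direction, and builds his tester by recursively grouping factors in threes; that is why the remark after the lemma has to handle $u$ not a power of $3$. The pair level alone could be handled by the trivial $\Omega(1/n^2)$ robustness, which fits inside the $n^\eta$ slack. Even so, your higher-level steps $A\otimes B$ would still need a two-wise robustness theorem that Step 1 does not supply. Your loss accounting is also off as written: a $\poly((n/d)^u)$ loss at each of $\log u$ levels gives $(n/d)^{O(u\log u)}$, not $(n/d)^{O(u)}$. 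The loss at depth $j$ must be $\poly((n/d)^{u/2^j})$ so the exponents sum geometrically.

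The correct route is the paper's. Invoke Viderman's theorem for the full $u$-fold product $C_1\otimes\cdots\otimes C_u$ as a black box, rather than for a two-factor step, and do only the bookkeeping:
\begin{enumerate}
\item His theorem is stated for equal factors, so you need his Remark 4.5 to allow distinct $C_i$.
\item His tester counts rejecting direction-$i$ columns rather than nonzero syndrome entries. Converting to $|Hy|$ costs only a factor $n$ after normalizing probabilities to counts; your Step 3 essentially does this.
\item When $u$ is not a power of $3$, his tester samples directions non-uniformly, whereas $|Hy|$ weights all syndrome entries equally. The paper fixes this by averaging over permutations of the factors and of the rows and columns of each $H_i$. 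Your Step 3 sums over $i$ with equal weights and so silently skips this point.
\end{enumerate}
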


The quantity $\rho(u,n,d)$ in \Cref{lem:loctest} is called the \emph{soundness} of the tensor code.


\begin{remark}
  \label{remark:viderman}
  We have stated \Cref{lem:loctest} slightly differently from \cite[Theorem 3.1]{viderman_combination_2015}, but our statement still follows directly from \cite{viderman_combination_2015}. Specifically:
  \begin{enumerate}
  \item \cite[Theorem 3.1]{viderman_combination_2015} is stated assuming all codes $C_i$ are equal, though as noted in \cite[Remark 4.5]{viderman_combination_2015}, the techniques extend flawlessly to different $C_i$.
  \item The tester in \cite[Theorem 3.1]{viderman_combination_2015} actually checks if any of the $n-k_i$ entires of $Hy$ in a given direction-$i$ column are nonzero, meaning that $|Hy|$ in \Cref{lem:loctest} would be replaced by the number of such direction-$i$ columns with a nonzero entry. Yet this quantity is always within a factor of $n$ of $|Hy|$, so we can absorb the factor of $n$ into the $n^\eta=\poly(n)$ term in $\mu(u,n,d)$.
  \item When $u$ is not a power of $3$, the local tester in \cite{viderman_combination_2015} samples different syndrome components with different probabilities, meaning that it may seem that we should add different weights for different components when computing the Hamming norm $|Hy|$ in \Cref{lem:loctest}. However, the result of \cite{viderman_combination_2015} naturally holds under all different permutations of the codes $C_1,\dots,C_u$, and all permutations of the $n$ columns and the $n-k_i$ rows in each $H_i$. Averaging over all of these permutations, we conclude that the result holds with uniform weights on the components of the syndrome $Hy$.
  \end{enumerate}
\end{remark}

\subsubsection{Reed-Solomon Codes}
\label{sec:reedsol}
Here we define Reed-Solomon codes.

\begin{definition}
  \label{def:RS}
  For a finite field $\bF_q$, we let $\bF_q[X]$ denote the space of univariate polynomials over $\bF_q$. For $k\in\bN$, we let $\bF_q[X]^{<k}\subseteq\bF_q[X]$ denote the subspace of polynomials of degree $<k$. For $E\subseteq\bF_q$, We let $\evl_E:\bF_q[X]\rightarrow\bF_q^E$ denote the evaluation map, that is, $\evl_E(f)=(f(x))_{x\in E}$.

  The \emph{Reed-Solomon code} $\RS(q,k,E)\subseteq\bF_q^E$ is the code $\RS(q,k,E)=\evl_E(\bF_q[X]^{<k})$. This code has parameters $[n=|E|,\; k,\; d=n-k+1]_q$.
\end{definition}

We will crucially leverage the \emph{multiplication property} of Reed-Solomon codes, stated below, which simply reflects the fact that the product of two degree $<k$ polynomials is a degree $<2k-1$ polynomial. Here recall from \Cref{sec:notation} that we use `$*$' to denote component-wise multiplication of vectors.

\begin{fact}
  \label{fact:RSmult}
  $\RS(q,k,E)*\RS(q,k,E)\subseteq\RS(q,2k-1,E)$.
\end{fact}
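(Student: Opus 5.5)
We need to show that for every $c,c'\in\RS(q,k,E)$ the component-wise product $c*c'$ lies in $\RS(q,2k-1,E)$. Since $\RS(q,k,E)*\RS(q,k,E)$ is defined as the span of such products, and $\RS(q,2k-1,E)$ is a linear subspace, containment of the generators gives containment of their span.

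By \Cref{def:RS}, write $c=\evl_E(f)$ and $c'=\evl_E(g)$ with $f,g\in\bF_q[X]^{<k}$. For every $x\in E$ we have $(c*c')_x=f(x)g(x)=(fg)(x)$, so $c*c'=\evl_E(fg)$; that is, evaluation is a ring homomorphism from $\bF_q[X]$ to $\bF_q^E$ with the component-wise product.

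For the degree bound, $\deg(fg)\leq\deg f+\deg g\leq(k-1)+(k-1)=2k-2<2k-1$. If $f$ or $g$ is zero, then $fg=0$, which also has degree $<2k-1$. Hence $fg\in\bF_q[X]^{<2k-1}$, and so $c*c'\in\evl_E(\bF_q[X]^{<2k-1})=\RS(q,2k-1,E)$.

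No step here is a real obstacle; the only point needing care is passing from products of generators to the span, which follows from linearity as above. Note that if $2k-1>|E|$ the target code is all of $\bF_q^E$, and the statement still holds trivially.
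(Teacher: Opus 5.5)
Your proposal is correct and follows the same reasoning the paper gives: the paper justifies this fact in one line, noting that the product of two polynomials of degree $<k$ has degree $<2k-1$. You add the routine but welcome observations that evaluation is multiplicative and that containment of the generators $c*c'$ extends to their span by linearity.
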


\subsection{Gates and Circuits}
\label{sec:circuits}
In this section, we describe the definitions of circuits and fault-tolerance that we use in this paper. As we work over general prime power alphabets $\bF_q$ instead of just binary alphabets, we use the term ``$q$-ary dit'' to refer to a symbol over such an alphabet, or simply ``dit'' if the alphabet size $q$ is clear from context.

\begin{definition}
  Fix a prime power $q$. A \emph{gate} acting on $q$-ary dits with $m_{\mathrm{in}}$ input dits and $m_{\mathrm{out}}$ output dits is a function $G:\bF_q^{m_{\mathrm{in}}}\rightarrow\bF_q^{m_{\mathrm{out}}}$.
\end{definition}

We now present the gate set that we will use.

\begin{definition}
  \label{def:gates}
  For a prime power $q$, we define the following $q$-ary gates:
  \begin{itemize}
  \item The \emph{initialization gate} $\gInit:\{0\}\rightarrow\bF_q$ given by
    \begin{equation*}
      \gInit(0) = 0.
    \end{equation*}
  \item The \emph{termination gate} $\gTerm:\bF_q\rightarrow\{0\}$ given by
    \begin{equation*}
      \gTerm(x) = 0.
    \end{equation*}
  \item For $a\in\bF_q$, the gate $\gX^a:\bF_q\rightarrow\bF_q$ given by
    \begin{equation*}
      \gX^a(x) = x+a.
    \end{equation*}
  \item For $a\in\bF_q$, the gate $\gCX^a:\bF_q^2\rightarrow\bF_q^2$ given by
    \begin{equation*}
      \gCX^a(x_1,x_2) = (x_1,ax_1+x_2).
    \end{equation*}
  \item For $a\in\bF_q$, the gate $\gCCX^a:\bF_q^3\rightarrow\bF_q^3$ given by
    \begin{equation*}
      \gCCX^a(x_1,x_2,x_3) = (x_1,x_2,ax_1x_2+x_3).
    \end{equation*}
  \end{itemize}
  For the gates above, we sometimes replace the subscript with `$*$' to denote the set of all possible values. For instance, $\{\gInit,\gCCX^*\}=\{\gInit\}\cup\{\gCCX^a:a\in\bF_q\}$.
\end{definition}

Note that the identity gate is given by $\gX^0$. Also, the gate $\gCX^a$ can be implemented using the gates $\gInit,\gX^1,\gCCX^a$. We nevertheless include $\gCX^a$ in our gate set for convenience.

We next define our notion of a circuit. Below, recall that a \emph{directed acyclic graph} (dag) is a directed graph with no cycles. Also recall from \Cref{sec:notation} that we use `$\sqcup$' to denote direct products of functions.

\begin{definition}
  \label{def:circuit}
  A \emph{circuit} $\cR=(R_1,\dots,R_T)$ acting on a set $N$ of $q$-ary dits is given by a sequence $R_1,\dots,R_T$ of functions $R_t:\bF_q^{N_{t-1}}\rightarrow\bF_q^{N_t}$ for some subsets $N_0,\dots,N_T\subseteq N$. We call $N_t$ the \emph{active dits at time $t$}, and we call $N_0$ and $N_T$ the sets of \emph{input} and \emph{output dits}, respectively. We let $\cR(\cdot):\bF_q^{N_0}\rightarrow\bF_q^{N_T}$ denote the function
  \begin{equation*}
    \cR(x) = R_T\circ\cdots\circ R_1(x).
  \end{equation*}
  We define the \emph{transcript} $\tran(\cR;x)\in\bF_q^{N_0\sqcup\cdots\sqcup N_T}$ of $\cR$ on input $x$ by
  \begin{equation*}
    \tran(\cR;x) = (x,\; R_1(x),\; R_2\circ R_1(x),\dots,R_T\circ\cdots\circ R_1(x)).
  \end{equation*}
  
  We say $\cR$ uses time $T$ and space $N$ (or sometimes, space $|N|$). Note that each $q$-ary dit implicitly requires $\log_2q$ bits to represent; $q$ will be clear from context when discussing space usage.
  
  We say $\cR$ uses a gate set $\cG$ if each $R_t$ can be decomposed as a direct product of gates $G_{t,i}\in\cG$ acting on disjoint sets of $q$-ary dits. More formally, $\cR$ uses gate set $\cG$ if for each $t\in[T]$, there exists $m_t\in\bN$ and gates $G_{t,1},\dots,G_{t,m_t}\in\cG$ such that $R_t$ equals the direct product $\bigsqcup_{i\in[m_t]}G_{t,i}$ for some assignment of the dits in $N_{t-1}$ (resp.~$N_t$) to the set of all input (resp.~output) dits of $G_{t,i}$. Furthermore, letting $m_{\mathrm{in}}$ and $m_{\mathrm{out}}$ denote the number of input and output dits of $G_{t,i}$, respectively, then we require that the dits in $N_{t-1}\subseteq N$ assigned to the first $\min\{m_{\mathrm{in}},m_{\mathrm{out}}\}$ inputs of $G_{t,i}$ equal the dits in $N_t\subseteq N$ assigned to the first $\min\{m_{\mathrm{in}},m_{\mathrm{out}}\}$ outputs of $G_{t,i}$.

  The \emph{size} of a circuit $\cR$ using gate set $\cG$ is the number of input dits $|N_0|$ plus the total number of non-identity gates $G_{t,i}$ in the circuit, across all $t\in[T]$ and $i\in[m_t]$.
\end{definition}

We will typically consider circuits using gate set $\cG=\{\gInit,\gTerm,\gX^*,\gCX^*,\gCCX^*\}$. This gate set is universal because it can perform addition and multiplication, and every function over $\bF_q$ can be expressed as a sufficiently high-degree polynomial.

\Cref{def:circuit} captures the standard notion of a circuit with distinct timesteps, such that every dit is acted upon by some gate in every timestep. Dits can also be created and terminated, if the sets $N_t$ are not all the same.

Perhaps the least standard part of \Cref{def:circuit} is the requirement that for each $G_{t,i}$, the first $\min\{m_{\mathrm{in}},m_{\mathrm{out}}\}$ input and output dits are equal. Intuitively, this condition says that we do not allow ``free'' permutations of the dits between timesteps; gates must output the same dits in $N$ they were fed as inputs, up to creation/termination of dits. Because a pair of dits labeled by $1,2$ can be swapped using an ancilla dit labeled $3$ with the circuit
\begin{align*}
  (a,b)
  &\xrightarrow{\gInit_3} (a,b,0) \xrightarrow{\gCX_{1,3}^{+1}} (a,b,a) \xrightarrow{\gCX_{3,1}^{-1}} (0,b,a) \xrightarrow{\gCX_{2,1}^{+1}} (b,b,a) \\
  &\xrightarrow{\gCX_{1,2}^{-1}} (b,0,a) \xrightarrow{\gCX_{3,2}^{+1}} (b,a,a) \xrightarrow{\gCX_{2,3}^{-1}} (b,a,0) \xrightarrow{\gTerm_3} (b,a),
\end{align*}
circuits that allow $\{\gInit,\gCX^*,\gTerm\}$ gates can perform arbitrary permutations with constant space and time overhead. Therefore our prohibition of ``free'' permutations of dits only affects the time and space usage of the circuits we consider by a constant factor. We nevertheless choose to prohibit free permutations for consistency with a standard model of circuits common in the quantum literature, in which gates act on a fixed set $N$ of (qu)dits, and each gate has the same input and output (qu)dits.

The following basic method for applying linear transformations using $\gCX^*$ gate will be useful.

\begin{definition}
  \label{def:matcirc}
  For a matrix $A\in\bF_q^{m\times n}$, we define an \emph{associated circuit $\cR_A$} as follows. $\cR_A$ acts on dits $N=[m]\sqcup[n]$, uses time $T=mn$, and uses gate set $\{\gCX^*\}$. The input and output dits are all of $N$. We define $\cR_A$ to sequentially loop through every $(i,j)\in[m]\times[n]$, and for each such pair apply $\gCX^{A_{i,j}}$ with control dit $j\in[n]$ and target dit $i\in[m]$. Therefore on input $(x,y)\in\bF_q^{[n]\sqcup[m]}$, then $\cR_A$ outputs $\cR_A(x,y)=(x,y+Ax)$.
\end{definition}

\subsection{Fault-Tolerance and Fault-Detection}
In this section, we define the notions of fault-tolerance and fault-detection we will use in this paper. The definitions here are inspired by those for quantum circuits in \cite{nguyen_quantum_2025,he_composable_2025}, though of course we only consider classical circuits.

We begin by defining faults, which capture the errors against which we will want to protect.

\begin{definition}
  Let $\cR=(R_1,\dots,R_T)$ be a circuit with each $R_t:\bF_q^{N_{t-1}}\rightarrow\bF_q^{N_t}$. A \emph{fault} on the circuit $\cR$ is a sequence $\cF=(F_1,\dots,F_T)$ of functions $F_t:\bF_q^{N_t}\rightarrow\bF_q^{N_t}$. Letting $\supp(F_t)\subseteq N_t$ denote the set of dits in $N_t$ on which $F_t$ does not act as the identity, we define $\supp(\cF)\subseteq N^{\sqcup T}$ by $\supp(\cF)=\supp(F_1)\sqcup\cdots\sqcup\supp(F_T)$.

  The \emph{$\cF$-corrupted circuit $\cR[\cF]$} is defined by
  \begin{equation*}
    \cR[\cF] = (R_1,F_1,R_2,F_2,\dots,R_T,F_T).
  \end{equation*}
  In a slight abuse of notation, we define the \emph{transcript} $\tran(\cR[\cF];x)\in\bF_q^{N_0\sqcup\cdots\sqcup N_T}$ of an $\cF$-corrupted circuit $\cR[\cF]$ on input $x$ to exclude every other timestep corresponding to a fault layer, so that
  \begin{equation*}
    \tran(\cR[\cF];x) = (x,\; F_1\circ R_1(x),\; F_2\circ R_2\circ F_1\circ R_1(x),\dots,F_T\circ R_T\circ\cdots\circ F_1\circ R_1(x)).
  \end{equation*}
\end{definition}

We next define the notion of a bad set, which is a set of dits that, if entirely corrupted, will overwhelm our fault-tolerance scheme.

\begin{definition}
  For a set $N$ and a family of subsets $\cE\subseteq 2^N$ called \emph{bad sets}, we say a set $S\subseteq N$ is \emph{$\cE$-avoiding} if for every $E\in\cE$ we have $E\not\subseteq S$.

  We extend this definition to faults in the natural way. That is, given a family of bad sets $\cE\subseteq 2^{N^{\sqcup T}}=2^{N\times T}$ for a circuit using space $N$ and time $T$, a fault $\cF$ is \emph{$\cE$-avoiding} if $\supp(\cF)$ is $\cE$-avoiding.
\end{definition}

We will often construct families of bad sets for a circuit using space $N$ and time $T$ by specifying a family $\cE\subseteq N$ of bad sets for a single timestep, and then letting $\cE^{\sqcup T}\subseteq N^{\sqcup T}$ be our family of bad sets for the circuit. We in turn typically choose $\cE$ to contain all sets of size above some given threshold, as defined below.

\begin{definition}
  For a set $N$, a family $\cE\subseteq 2^N$, and a real number $\lambda$, we define
  \begin{equation*}
    \cE|_{\geq\lambda} = \{E\in\cE:|E|\geq\lambda\}.
  \end{equation*}
  In particular, $2^N|_{\geq\lambda}$ denotes the set of all subsets of $N$ of size $\geq\lambda$.
\end{definition}

It will also be helpful to associate families of bad sets to codes, to represent errors on codewords that we cannot detect and/or correct.

\begin{definition}
  A \emph{decorated code} $D=(C,\Enc,\cE)$ consists of a linear code $C$ with encoding map $\Enc$ and a family $\cE\subseteq 2^{[n]}$ of bad sets, where $n$ denotes the length of $C$. We let $\emptyset$ denote the trivial decorated code with $n=0$.

  For a set $S\subseteq C$ of codewords in $C$ (where $S$ may not be a linear subspace), we say $y\in\bF_q^n$ is a \emph{$\cE$-deviation of $S$} if there exists $c\in S$ for which $\supp(y-c)$ is $\cE$-avoiding.
\end{definition}

We are now ready to define our notion of a fault-tolerant gadget.

\begin{definition}
  \label{def:faulttol}
  Let $\cR=(R_1,\dots,R_T)$ be a circuit using space $N$ and time $T$, with input and output dits $N_{\mathrm{in}}$ and $N_{\mathrm{out}}$ respectively. Let $\cE_{\mathrm{run}}\subseteq 2^{N^{\sqcup T}}$ be a family of bad sets. For $\alpha\in\{\mathrm{in},\mathrm{out}\}$, let $D_\alpha=(C_\alpha,\Enc_\alpha,\cE_\alpha)$ be a $[n_\alpha,k_\alpha]_q$ decorated code with codeword dits labeled by $N_\alpha$, so that $n_\alpha=|N_\alpha|$ and $\cE_\alpha\subseteq 2^{N_\alpha}$.

  We say the data
  \begin{equation*}
    (\cR,\cE_{\mathrm{run}},D_{\mathrm{in}},D_{\mathrm{out}})
  \end{equation*}
  forms a \emph{fault-tolerant gadget} for a set $\bar{\cO}$ of functions $\bar{O}:\bF_q^{k_{\mathrm{in}}}\rightarrow\bF_q^{k_{\mathrm{out}}}$ if the following hold:
  \begin{enumerate}
  \item\label{it:ftnoerr} For every $x\in\bF_q^{k_{\mathrm{in}}}$, it holds that $\cR\circ\Enc_{\mathrm{in}}(x)\in\Enc_{\mathrm{out}}\circ\bar{\cO}(x)$ (i.e.~$\cR\circ\Enc_{\mathrm{in}}(x)=\Enc_{\mathrm{out}}\circ\bar{O}(x)$ for some $\bar{O}\in\bar{\cO}$).
  \item\label{it:fterr} For every $x\in\bF_q^{k_{\mathrm{in}}}$, every $\cE_{\mathrm{in}}$-deviation $y$ of $\Enc_{\mathrm{in}}(x)$, and every $\cE_{\mathrm{run}}$-avoiding fault $\cF$ for $\cR$, then the output $\cR[\cF](y)$ is a $\cE_{\mathrm{out}}$-deviation of $\Enc_{\mathrm{out}}\circ\bar{\cO}(x)$.
  \end{enumerate}
\end{definition}

In \Cref{def:faulttol}, \Cref{it:ftnoerr} requires that a noiseless execution of $\cR$ outputs a noisless encoding of the input message, while \Cref{it:fterr} requires that a noisy execution of $\cR$ outputs a noisy encoding of the input message. The first condition (\Cref{it:ftnoerr}) is not necessary for fault-tolerance in general, but will be useful to construct PCPs with perfect completeness.

Depending on the choice of $\cE_{\mathrm{in}},\cE_{\mathrm{run}},\cE_{\mathrm{out}}$, we typically design a fault-tolerant gadget to either limit the propagation of errors, actively correct errors, or both.

For our application to PCPs, the notion of fault-tolerance in \Cref{def:faulttol} is stronger than necessary. Specifically, in the PCP setting, it is ok to fail to output (a deviation of) the true output, if we can detect the failure. That is, it is sufficient to detect errors, which can be easier than correcting them. We therefore define the following notion of a fault-detecting gadget.

\begin{definition}
  \label{def:faultdet}
  Let $\cR=(R_1,\dots,R_T)$ be a circuit using space $N$ and time $T$, with active dits $N_{\mathrm{in}}=N_0,N_1\dots,N_T=N_{\mathrm{out}}$. Let $\cE_{\mathrm{run}},\cE_{\mathrm{det}}\subseteq 2^{N^{\sqcup T}}$ be families of sets. For $\alpha\in\{\mathrm{in},\mathrm{out}\}$, let $D_\alpha=(C_\alpha,\Enc_\alpha,\cE_\alpha)$ be a $[n_\alpha,k_\alpha]_q$ decorated code with codeword dits labeled by $N_\alpha$, so that $n_\alpha=|N_\alpha|$ and $\cE_\alpha\subseteq 2^{N_\alpha}$.

  We say the data
  \begin{equation*}
    (\cR,\cE_{\mathrm{run}},D_{\mathrm{in}},D_{\mathrm{out}},\cE_{\mathrm{det}})
  \end{equation*}
  forms a \emph{fault-detecting gadget} for a set $\bar{\cO}$ of functions $\bar{O}:\bF_q^{k_{\mathrm{in}}}\rightarrow\bF_q^{k_{\mathrm{out}}}$ if the following hold:
  \begin{enumerate}
  \item\label{it:fdnoerr} For every $x\in\bF_q^{k_{\mathrm{in}}}$ and every $E\in\cE_{\mathrm{det}}$, it holds that $\cR\circ\Enc_{\mathrm{in}}(x)\in\Enc_{\mathrm{out}}\circ\bar{\cO}(x)$ and $E\cap\supp(\tran(\cR;\Enc_{\mathrm{in}}(x)))=\emptyset$.
  \item\label{it:fderr} For every $x\in\bF_q^{k_{\mathrm{in}}}$, every $\cE_{\mathrm{in}}$-deviation $y$ of $\Enc_{\mathrm{in}}(x)$, and every $\cE_{\mathrm{run}}$-avoiding fault $\cF$ for $\cR$, then at least one of the following holds:
    \begin{enumerate}
    \item\label{it:fdecorr} The output $\cR[\cF](y)$ is a $\cE_{\mathrm{out}}$-deviation of $\Enc_{\mathrm{out}}\circ\bar{\cO}(x)$, or
    \item\label{it:fdedet} There exists a set $E\in\cE_{\mathrm{det}}$ such that $E\subseteq\supp(\tran(\cR[\cF];y))$, where here we view $E\subseteq N^{\sqcup T}\cong N\times\{1,\dots,T\}$ as a subset of $N^{\sqcup T+1}\cong N\times\{0,1,\dots,T\}$.
    \end{enumerate}
  \end{enumerate}

  We say the gadget is furthermore \emph{mending} if for every $y\in\bF_q^{N_{\mathrm{in}}}$ and every $\cE_{\mathrm{run}}$-avoiding fault $\cF$ for $\cR$, then at least one of the following holds:
  \begin{enumerate}[label=(\alph*),start=3]
  \item\label{it:fdecorrmend} There exists $x'\in\bF_q^{k_{\mathrm{in}}}$ such that the output $\cR[\cF](y)$ is a $\cE_{\mathrm{out}}$-deviation of $\Enc_{\mathrm{out}}\circ\bar{\cO}(x')$, or
  \item\label{it:fdedetmend} There exists a set $E\in\cE_{\mathrm{det}}$ such that $E\subseteq\supp(\tran(\cR[\cF];y))$.
  \end{enumerate}
\end{definition}

The union $\bigcup\cE_{\mathrm{det}}$ of all sets in the family $\cE_{\mathrm{det}}\subseteq 2^{N^{\sqcup T}}$ in \Cref{def:faultdet} provides a set of dits in the space-time execution (or equivalently, in the transcript) of a circuit $\cR$ that detect errors. Specifically, \Cref{it:fdnoerr} in \Cref{def:faultdet} says that in the absence of errors, all of these dits must be $0$. Meanwhile, \Cref{it:fderr} in \Cref{def:faultdet} says that in the presence of a $\cE_{\mathrm{in}}$-avoiding input error and a $\cE_{\mathrm{run}}$-avoiding fault, then either the output error is $\cE_{\mathrm{out}}$-avoiding, or else the set of nonzero dits in $\bigcup\cE_{\mathrm{det}}$ is sufficiently large to contain some $E\in\cE_{\mathrm{det}}$.

We will often fix some set $E_{\mathrm{det}}\subseteq N^{\sqcup T}$ and of ``detector dits'' and then choose $\cE_{\mathrm{det}}=2^{E_{\mathrm{det}}}|_{\geq\lambda_{\mathrm{det}}}$ to be the family of subsets of $E_{\mathrm{det}}$ of size at least some threshold $\lambda_{\mathrm{det}}$. Then the criterion for detecting an error (given in \Cref{it:fdedet} in \Cref{def:faultdet} above) is simply that $\geq\lambda_{\mathrm{det}}$ of the detector dits in $E_{\mathrm{det}}$ are nonzero.

By definition, a fault-tolerant gadget is equivalent to a fault-detecting gadget with $\cE_{\mathrm{det}}=\emptyset$:

\begin{fact}
  \label{fact:dettol}
  The data $(\cR,\cE_{\mathrm{run}},D_{\mathrm{in}},D_{\mathrm{out}})$ forms a fault-tolerant gadget for $\bar{\cO}$ if and only if the data $(\cR,\cE_{\mathrm{run}},D_{\mathrm{in}},D_{\mathrm{out}},\emptyset)$ forms a fault-detecting gadget for $\bar{\cO}$.
\end{fact}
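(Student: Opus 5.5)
The plan is to unfold both definitions with $\cE_{\mathrm{det}}=\emptyset$ and compare them clause by clause; nothing beyond \Cref{def:faulttol} and \Cref{def:faultdet} is needed.

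For \Cref{it:fdnoerr}: the condition ``for every $E\in\cE_{\mathrm{det}}$'' quantifies over an empty family. I read it as requiring $\cR\circ\Enc_{\mathrm{in}}(x)\in\Enc_{\mathrm{out}}\circ\bar{\cO}(x)$ for every $x$, together with the disjointness condition for each $E\in\cE_{\mathrm{det}}$. With $\cE_{\mathrm{det}}=\emptyset$ the disjointness part is vacuous, so the condition reduces exactly to \Cref{it:ftnoerr}.

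For \Cref{it:fderr}: the fault-detecting condition requires, for every $x$, every $\cE_{\mathrm{in}}$-deviation $y$ of $\Enc_{\mathrm{in}}(x)$ and every $\cE_{\mathrm{run}}$-avoiding $\cF$, that \Cref{it:fdecorr} or \Cref{it:fdedet} holds. Alternative \Cref{it:fdedet} asserts the existence of some $E\in\cE_{\mathrm{det}}=\emptyset$, so it is always false. Hence the disjunction is equivalent to \Cref{it:fdecorr}, which is verbatim \Cref{it:fterr}. The sets $\cE_{\mathrm{run}}$, $D_{\mathrm{in}}$, $D_{\mathrm{out}}$ are identical in both tuples, so the quantifiers also match.

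Combining the two equivalences gives both directions of the biconditional at once. The only point needing care is the reading of \Cref{it:fdnoerr}: if it were read literally as ``for every $E$, [both conjuncts]'', then with an empty family the membership conjunct would also be vacuous, and the claimed equivalence would fail. So I will state explicitly that the membership requirement does not depend on $E$ and is imposed unconditionally, which is clearly the intended meaning. With that interpretation fixed, I do not expect any genuine obstacle.
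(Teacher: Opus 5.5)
Your proposal is correct and matches the paper, which asserts the equivalence ``by definition''; unfolding both definitions with $\cE_{\mathrm{det}}=\emptyset$, so that alternative \Cref{it:fdedet} is always false, is exactly that argument. Your remark on \Cref{it:fdnoerr} is a fair catch of the paper's phrasing: read literally, the membership condition sits inside a quantifier over the empty family and would become vacuous. The statement needs it imposed unconditionally, and that is the intended reading.
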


We nevertheless retain the language ``fault-tolerant'' to describe this special case for consistency with the prior literature.

The mending property in \Cref{def:faultdet} requires that even when the input error is arbitrarily large, the gadget either outputs a $\cE_{\mathrm{out}}$-deviation of an encoding of \emph{some} output message in $\bar{\cO}(x')$ (for an arbitrary input $x'$), or else detects an error.

An analogue of the mending property in the quantum setting is used in \cite{he_composable_2025,nguyen_quantum_2025} (where it is called the \emph{friendly} property) to allow gadgets to be composed in a simulative manner. That is, the error rate of the (qu)dits in a fault-tolerance scheme can be reduced by recursively simulating each (qu)dit with another fault-tolerance scheme. If one of these recursive simulations experiences a high-weight error, a mending gadget can return the corrupted state to a valid code state, so that the recursive simulation of that (qu)dit can resume.

While we will not need to perform such simulative composition, we will use the mending property for a different purpose in the PCP setting. Specifically, we assume a PCP proof is provided in encoded form, to which we apply a fault-detecting gadget. If the provided encoding has a high-weight error, the mending property ensures that either the error is detected, or else the gadget's output is close to the true output for \emph{some} valid input $x'$. In contrast, without the mending property, we would have no control over our gadget's output under high-weight input errors.



Below, we show that in order to prove (mending) fault-detection, without loss of generality we may restrict attention to \emph{additive} faults, which simply add some fixed error into the circuit's dits at every timestep. This basic result holds because we allow the fault to be chosen after the circuit's input, so the outputs of an arbitrary fault can be simulated by an appropriately chosen additive fault.

\begin{definition}
  For a circuit $\cR$, a fault $\cF=(F_1,\dots,F_T)$ is \emph{additive} if every $F_t:\bF_q^{N_t}\rightarrow\bF_q^{N_t}$ is of the form $F_t(y)=y+f_t$ for some $f_t\in\bF_q^{N_t}$.
\end{definition}

Note that an additive fault $\cF$ acts independently on each dit, i.e.~each $F_t$ is a direct product of the single-dit functions $y_i\mapsto y_i+(f_t)_i$ for $i\in N_t$.

\begin{lemma}
  \label{lem:fdadd}
  Let $(\cR,\cE_{\mathrm{run}},D_{\mathrm{in}},D_{\mathrm{out}},\cE_{\mathrm{det}})$ and $\bar{\cO}$ be some data that satisfy the conditions of \Cref{def:faultdet} upon restricting attention to only \emph{additive} $\cE_{\mathrm{run}}$-avoiding faults $\cF$. Then in fact this data satisfies \Cref{def:faultdet} for general $\cE_{\mathrm{run}}$-avoiding faults $\cF$, that is, this data forms a (mending) fault-detecting gadget.
\end{lemma}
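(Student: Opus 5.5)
Given an input $y$ and an arbitrary $\cE_{\mathrm{run}}$-avoiding fault $\cF=(F_1,\dots,F_T)$, I will build an additive fault $\cF'=(F'_1,\dots,F'_T)$ that reproduces exactly the same transcript on the same input $y$. Then every condition in \Cref{def:faultdet} for $\cF$ follows from the assumed condition for $\cF'$.

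Let $z_0=y$ and let $z_t=F_t\circ R_t(z_{t-1})$ for $t\in[T]$, so that $\tran(\cR[\cF];y)=(z_0,\dots,z_T)$. Define
\begin{equation*}
  f_t=z_t-R_t(z_{t-1})=F_t(R_t(z_{t-1}))-R_t(z_{t-1})\in\bF_q^{N_t},
\end{equation*}
and set $F'_t(w)=w+f_t$. By induction on $t$, running $\cR[\cF']$ on $y$ gives the same intermediate states: if the state at time $t-1$ is $z_{t-1}$, then the state at time $t$ is $R_t(z_{t-1})+f_t=z_t$. Hence
\begin{equation*}
  \tran(\cR[\cF'];y)=\tran(\cR[\cF];y),
\end{equation*}
and in particular $\cR[\cF'](y)=\cR[\cF](y)$.

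Next I check that $\cF'$ is still $\cE_{\mathrm{run}}$-avoiding. If a dit $i$ lies outside $\supp(F_t)$, then $F_t$ acts as the identity on that coordinate, so $(f_t)_i=0$. (This is immediate if "acts as identity on $i$" means the $i$-th output equals the $i$-th input for all inputs.) Therefore $\supp(F'_t)=\supp(f_t)\subseteq\supp(F_t)$, which gives $\supp(\cF')\subseteq\supp(\cF)$. Any subset of an $\cE$-avoiding set is $\cE$-avoiding, since if some $E\in\cE_{\mathrm{run}}$ satisfied $E\subseteq\supp(\cF')$, then also $E\subseteq\supp(\cF)$. So $\cF'$ is $\cE_{\mathrm{run}}$-avoiding.

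Now apply the hypothesis to $\cF'$. In the fault-detecting condition, for $x$ and an $\cE_{\mathrm{in}}$-deviation $y$, we get either condition (a) on $\cR[\cF'](y)$ or condition (b) on $\supp(\tran(\cR[\cF'];y))$. Since the outputs and transcripts coincide, the same alternative holds for $\cF$. The mending conditions (c) and (d) transfer identically, and \Cref{it:fdnoerr} involves no fault at all.

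The only subtle point is the step showing $\supp(f_t)\subseteq\supp(F_t)$. The issue is that $F_t$ is an arbitrary function, not a product of single-dit maps, so it could in principle act non-trivially on one coordinate depending on others. I will use the interpretation of $\supp(F_t)$ as the set of coordinates $i$ where the $i$-th output component of $F_t$ is not identically equal to the $i$-th input. Under this interpretation the inclusion holds pointwise, as argued above. The key fact that makes the whole argument work is that $f_t$ is allowed to depend on the specific input $y$, because the definition quantifies over the fault after fixing the input.
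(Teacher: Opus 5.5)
Your proposal is correct and follows the paper's proof: you define the same additive fault $f_t = \tran(\cR[\cF];y)_t - R_t(\tran(\cR[\cF];y)_{t-1})$, note that $\supp(\cF')\subseteq\supp(\cF)$ so $\cF'$ is $\cE_{\mathrm{run}}$-avoiding, and use $\tran(\cR[\cF'];y)=\tran(\cR[\cF];y)$ to transfer the conditions. The paper handles the mending conditions by the same argument, as you do.
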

\begin{proof}
  We define the variables $C_\alpha,\Enc_\alpha,\cE_\alpha,n_\alpha,k_\alpha,N_\alpha$ as in \Cref{def:faultdet}. We first prove the claim for non-mending fault-detecting gadgets. Because \Cref{it:fdnoerr} in \Cref{def:faultdet} does not depend on a choice of fault, it holds immediately by assumption from the lemma statement. We next show that \Cref{it:fderr} holds for general faults $\cF$, assuming it holds for additive faults $\cF'$.
  
  Let $x\in\bF_q^{k_{\mathrm{in}}}$, let $y$ be a $\cE_{\mathrm{in}}$-deviation of $\Enc_{\mathrm{in}}$, and let $\cF=(F_1,\dots,F_t)$ be a general $\cE_{\mathrm{run}}$-avoiding fault for $\cR$. We then construct an additive fault $\cF'=(F_1',\dots,F_T')$ for $\cR$ by letting $F_t'(z)=z+f_t'$ for
  \begin{equation*}
    f_t' = \tran(\cR[\cF];y)_t-R_t\circ F_{t-1}\circ R_{t-1}\cdots\circ F_1\circ R_1(y),
  \end{equation*}
  where $\tran(\cR[\cF];y)_t\in\bF_q^{N_t}$ denotes the restriction of the transcript to the $t$th timestep. By definition, every dit $j\in N_t\setminus\supp(F_t)$ must have $(f_t')_j=0$, so $\supp(\cF')\subseteq\supp(\cF)$, and hence $\cF'$ is $\cE_{\mathrm{run}}$-avoiding. This construction also ensures that $\tran(\cR[\cF];y)=\tran(\cR[\cF'];y)$. Thus because \Cref{it:fderr} in \Cref{def:faultdet} holds for $\cF'$, \Cref{it:fderr} must also hold for $\cF$, as \Cref{it:fderr} by definition only depends on values of $x$, $y$, and the transcript $\tran(\cR[\cF];y)=\tran(\cR[\cF'];y)$ (because $\cR[\cF](y)=\tran(\cR[\cF];y)_T$). Thus $(\cR,\cE_{\mathrm{run}},D_{\mathrm{in}},D_{\mathrm{out}},\cE_{\mathrm{det}})$ is a fault-detecting gadget for $\bar{\cO}$, as desired.

  The proof for mending gadgets is analogous as above; we omit the details to avoid redundancy.
\end{proof}

While \Cref{lem:fdadd} shows that we can always restrict attention to additive faults in this paper, one may need to consider more general faults in other settings, such as when the fault-detecting circuit is allowed to be randomized.

\subsection{Gadget Composition}
This section describes how fault-detection (and as a special case, fault-tolerance) is preserved under sequential and parallel composition of gadgets, defined below. Similar results are shown in the quantum setting in \cite{nguyen_quantum_2025,he_composable_2025}; we simply adapt these results to our setting.

\subsubsection{Sequential Composition}
Here we consider sequential composition, in which two gadgets are run one after another on a shared set of dits.

\begin{lemma}
  \label{lem:seqcomp}
  For a set $N$, for $i\in[2]$ let
  \begin{equation*}
    (\cR^i=(R^i_1,\dots,R^i_{T_i}),\; \cE^i_{\mathrm{run}},\; D^i_{\mathrm{in}},\; D^i_{\mathrm{out}},\; \cE^i_{\mathrm{det}})
  \end{equation*}
  be a fault-detecting gadget for $\bar{\cO}^i$ using space $N_i\subseteq N$ and time $T_i$. Assume that the set of output dits of $\cR^1$ equals the set of input dits of $\cR^2$, and that $D^1_{\mathrm{out}}=D^2_{\mathrm{in}}$.

  Define the \emph{sequential composition} of $\cR^1,\cR^2$ to be the circuit acting on dits $N$ given by
  \begin{equation*}
    \cR^2\circ\cR^1 = (R^1_1,\dots,R^1_{T_1}, R^2_1,\dots,R^2_{T_2}).
  \end{equation*}
  Then
  \begin{equation}
    \label{eq:scgad}
    \left(\cR=\cR^2\circ\cR^1,\; \cE_{\mathrm{run}}=\cE^1_{\mathrm{run}}\sqcup\cE^2_{\mathrm{run}},\; D^1_{\mathrm{in}},\; D^2_{\mathrm{out}},\; \cE_{\mathrm{det}}=\cE^1_{\mathrm{det}}\sqcup\cE^2_{\mathrm{det}}\right)
  \end{equation}
  is a fault-detecting gadget for $\bar{\cO}=\bar{\cO}^2\circ\bar{\cO}^1$ using space $N$ (so $|N|\geq\max\{|N_1|,|N_2|\}$) and time $T=T_1+T_2$.

  Furthermore, if $(\cR^1,\cE^1_{\mathrm{run}},D^1_{\mathrm{in}},D^1_{\mathrm{out}},\cE^1_{\mathrm{det}})$ is mending, then the sequential composition gadget in \Cref{eq:scgad} is mending.
\end{lemma}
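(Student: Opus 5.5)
We verify the two conditions of \Cref{def:faultdet} directly for the composed data, following the intermediate state between the two gadgets. By \Cref{lem:fdadd} we may restrict to additive faults, although the argument below works for general faults as well. The key observation is that a fault $\cF$ on $\cR=\cR^2\circ\cR^1$ splits as $\cF=(\cF^1,\cF^2)$, with $\cF^i$ a fault on $\cR^i$. Correspondingly, $\supp(\cF)=\supp(\cF^1)\sqcup\supp(\cF^2)$ inside $N^{\sqcup T_1}\sqcup N^{\sqcup T_2}$. Since every set in $\cE_{\mathrm{run}}=\cE^1_{\mathrm{run}}\sqcup\cE^2_{\mathrm{run}}$ lies entirely in one of the two time blocks, $\cF$ is $\cE_{\mathrm{run}}$-avoiding if and only if each $\cF^i$ is $\cE^i_{\mathrm{run}}$-avoiding. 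Likewise, the transcript $\tran(\cR[\cF];y)$ is the concatenation of $\tran(\cR^1[\cF^1];y)$ and $\tran(\cR^2[\cF^2];z)$ with $z=\cR^1[\cF^1](y)$, overlapping at the shared middle timestep. Hence a set $E\in\cE^i_{\mathrm{det}}$ lies in the support of the full transcript exactly when it lies in the support of the $i$th sub-transcript.

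For \Cref{it:fdnoerr}, take $x$ and run noiselessly. By the noiseless property of gadget 1, $\cR^1\circ\Enc^1_{\mathrm{in}}(x)=\Enc^1_{\mathrm{out}}(\bar O^1(x))$ for some $\bar O^1\in\bar\cO^1$. Because $D^1_{\mathrm{out}}=D^2_{\mathrm{in}}$, this equals $\Enc^2_{\mathrm{in}}(\bar O^1(x))$. Applying the noiseless property of gadget 2 then gives output $\Enc^2_{\mathrm{out}}(\bar O^2\bar O^1(x))\in\Enc^2_{\mathrm{out}}\circ\bar\cO(x)$. The disjointness of each $E\in\cE^i_{\mathrm{det}}$ from the transcript support follows from the corresponding property of gadget $i$, applied to input $x$ or $\bar O^1(x)$, together with the transcript splitting above.

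For \Cref{it:fderr}, fix $x$, an $\cE^1_{\mathrm{in}}$-deviation $y$, and an avoiding fault $\cF=(\cF^1,\cF^2)$. Apply gadget 1. If \Cref{it:fdedet} holds for gadget 1, then some $E\in\cE^1_{\mathrm{det}}\subseteq\cE_{\mathrm{det}}$ lies in the transcript support, and we are done. Otherwise $z=\cR^1[\cF^1](y)$ is an $\cE^1_{\mathrm{out}}=\cE^2_{\mathrm{in}}$-deviation of $\Enc^2_{\mathrm{in}}(\bar O^1(x))$ for some $\bar O^1\in\bar\cO^1$. Applying gadget 2 with message $\bar O^1(x)$ gives one of two outcomes. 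Either the final output is an $\cE^2_{\mathrm{out}}$-deviation of $\Enc^2_{\mathrm{out}}\circ\bar\cO^2(\bar O^1(x))\subseteq\Enc^2_{\mathrm{out}}\circ\bar\cO(x)$, or some $E\in\cE^2_{\mathrm{det}}$ lies in the support of the second block of the transcript.

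The mending claim is identical except that we start from arbitrary $y$. Mending of gadget 1 yields either a detection or an $\cE^2_{\mathrm{in}}$-deviation of $\Enc^2_{\mathrm{in}}(\bar O^1(x'))$ for some $x'$. We then apply the ordinary (non-mending) fault-detection of gadget 2 with message $\bar O^1(x')$, which yields \ref{it:fdecorrmend} or \ref{it:fdedetmend}. This is why only gadget 1 needs to be mending.

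The only real obstacle is bookkeeping. First, the time indexing of the shared middle timestep: the input layer of $\cR^2$ is the output layer of $\cR^1$, and detector sets in \Cref{def:faultdet} live in times $1,\dots,T$ rather than including time $0$. Second, we must confirm that the embedding of $\cE^2_{\mathrm{det}}$ and $\cE^2_{\mathrm{run}}$ shifted by $T_1$ is consistent with the transcript concatenation. I would state this identification explicitly once and then cite it in each case.
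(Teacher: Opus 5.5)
Your proposal is correct and matches the paper's proof step for step. Like the paper, you split the fault as $(\cF^1,\cF^2)$, use gadget 1 to get either a detection or an $\cE^2_{\mathrm{in}}$-deviation of $\Enc^2_{\mathrm{in}}(\bar O^1(x))$, and then apply plain fault-detection of gadget 2. The mending case differs only in starting from arbitrary $y$ and an arbitrary $x'$. You are slightly more explicit than the paper about two points it leaves implicit: the detector-disjointness half of the noiseless condition, and the time-shift identification of $\cE^2_{\mathrm{run}},\cE^2_{\mathrm{det}}$ with the second block of the transcript.
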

\begin{proof}
  We begin with the claim regarding non-mending fault-detection. By definition $\cR$ uses space $N$ and time $T=T_1+T_2$, so it suffices to show that the sequential composition gadget in \Cref{eq:scgad} is fault-detecting for $\bar{\cO}$. For this purpose, let each $D^i_\alpha=(C^i_\alpha,\Enc^i_\alpha,\cE^i_\alpha)$ be a $[n^i_\alpha,k^i_\alpha]_q$ code.

  The condition in \Cref{it:fdnoerr} in \Cref{def:faultdet} for the sequential composition follows directly by applying the same condition for $(\cR^i,\cE^i_{\mathrm{run}},D^i_{\mathrm{in}},D^i_{\mathrm{out}},\cE^i_{\mathrm{det}})$ along with the fact that $D^1_{\mathrm{out}}=D^2_{\mathrm{in}}$, that is, for every $x\in\bF_q^{k^1_{\mathrm{in}}}$,
  \begin{equation*}
    \cR^2\circ\cR^1\circ\Enc^1_{\mathrm{in}}(x) \in \cR^2\circ\Enc^1_{\mathrm{out}}\circ\bar{\cO}^1(x) \subseteq \Enc^2_{\mathrm{out}}\circ\bar{\cO}^2\circ\bar{\cO}^1(x).
  \end{equation*}
  
  To show the condition in \Cref{it:fderr} in \Cref{def:faultdet} for the sequential composition, let $x\in\bF_q^{k^1_{\mathrm{in}}}$, let $y$ be a $\cE^1_{\mathrm{in}}$-deviation of $\Enc^1_{\mathrm{in}}(x)$, and let $\cF=(\cF^1,\cF^2)$ be a $\cE_{\mathrm{run}}$-avoiding fault for $\cR$, so that each $\cF^i$ is a $\cE^i_{\mathrm{run}}$-avoiding fault for $\cR^i$. Applying the fault-detection of $(\cR^1,\cE^1_{\mathrm{run}},D^1_{\mathrm{in}},D^1_{\mathrm{out}},\cE^1_{\mathrm{det}})$, we have that either $\cR^1[\cF^1](y)$ is a $\cE^1_{\mathrm{out}}$-deviation of $\Enc^1_{\mathrm{out}}\circ\bar{\cO}^1(x)$, or else there exists some $E\in\cE^1_{\mathrm{det}}$ such that $E\subseteq\supp(\tran(\cR^1[\cF^1];y))$. In the latter case, because $\cE^1_{\mathrm{det}}\subseteq\cE_{\mathrm{det}}$, then \Cref{it:fdedet} in \Cref{def:faultdet} holds for the sequential composition gadget, so we are done. In the former case, we can then apply the fault-detection of $(\cR^2,\cE^2_{\mathrm{run}},D^2_{\mathrm{in}},D^2_{\mathrm{out}},\cE^2_{\mathrm{det}})$ to conclude that either $\cR^2[\cF^2]\circ\cR^1[\cF^1](y)=\cR[\cF](y)$ is a $\cE^2_{\mathrm{out}}$-deviation of $\Enc^2_{\mathrm{out}}\circ\bar{\cO}(x)$, or else there exists some $E\in\cE^2_{\mathrm{det}}$ such that $E\subseteq\supp(\tran(\cR^2[\cF^2];\cR^1[\cF^1](y)))$. Again in the latter case, we are done because $\cE_{\mathrm{det}}^2\subseteq\cE_{\mathrm{det}}$. Now in the former case, then \Cref{it:fdecorr} in \Cref{def:faultdet} holds for the sequential composition gadget, so we are also done.

  The proof of the claim regarding mending fault-detection is similar to the non-mending case above. The difference is that now the input $y$ may be an arbitrary element of $\bF_q^{n^1_{\mathrm{in}}}$, and then either $\cR^1[\cF^1](y)$ is a $\cE^1_{\mathrm{out}}$-deviation of $\Enc^1_{\mathrm{in}}\circ\bar{\cO}^1(x')$ for some $x'\in\bF_q^{k^1_{\mathrm{in}}}$, or else there exists some $E\in\cE^1_{\mathrm{det}}$ such that $E\subseteq\supp(\tran(\cR^1[\cF^1];y))$. Then applying the fault-detection of $(\cR^2,\cE^2_{\mathrm{run}},D^2_{\mathrm{in}},D^2_{\mathrm{out}},\cE^2_{\mathrm{det}})$ as above, we conclude that the sequential composition gadget is mending.
\end{proof}

\subsubsection{Parallel Composition}
Here we consider parallel composition, in which two gadgets are run at the same time on disjoint sets of dits. We will use the following notation for (decorated) codes consisting of the disjoint union of smaller (decorated) codes.

\begin{definition}
  For $i\in[2]$, let codes $D^i=(C^i,\Enc^i,\cE^i)$ be a $[n_i,k_i,d_i]$ decorated code. We define the \emph{disjoint union} of $C^1,C^2$ to be the $[n_1+n_2,\; k_1+k_2,\; \min\{d_1,d_2\}]_q$ code
  \begin{equation*}
    C^1\sqcup C^2 = \{(c_1,c_2):c_1\in C_1,\;c_2\in C_2\},
  \end{equation*} and then we define the \emph{disjoint union} of $D^1,D^2$ by
  \begin{equation*}
    D^1\sqcup D^2=(C^1\sqcup C^2,\; \Enc^1\sqcup\Enc^2,\; \cE^1\sqcup\cE^2).
  \end{equation*}
\end{definition}

\begin{lemma}
  \label{lem:parcomp}
  For $T\in\bN$ and for $i\in[2]$ let
  \begin{equation*}
    (\cR^i=(R^i_1,\dots,R^i_{T_i}),\; \cE^i_{\mathrm{run}},\; D^i_{\mathrm{in}},\; D^i_{\mathrm{out}},\; \cE^i_{\mathrm{det}})
  \end{equation*}
  be a fault-detecting gadget for $\bar{\cO}^i$ using space $N_i$ and time $T$.

  Define the \emph{parallel composition} of $\cR^1,\cR^2$ to be the circuit acting on dits $N=N_1\sqcup N_2$ given by
  \begin{equation*}
    \cR^1\sqcup\cR^2 = (R^1_1\sqcup R^2_1,\dots,R^1_T\sqcup R^2_T).
  \end{equation*}
  Then
  \begin{equation}
    \label{eq:pcgad}
    \left(\cR=\cR^2\sqcup\cR^1,\; \cE_{\mathrm{run}}=\cE^1_{\mathrm{run}}\sqcup\cE^2_{\mathrm{run}},\; D_{\mathrm{in}}=D^1_{\mathrm{in}}\sqcup D^2_{\mathrm{in}},\; D_{\mathrm{out}}=D^1_{\mathrm{out}}\sqcup D^2_{\mathrm{out}},\; \cE_{\mathrm{det}}=\cE^1_{\mathrm{det}}\sqcup\cE^2_{\mathrm{det}}\right)
  \end{equation}
  is a fault-detecting gadget for $\bar{\cO}=\bar{\cO}^1\sqcup\bar{\cO}^1$ using space $N$ (so $|N|=|N_1|+|N_2|$) and time $T$.

  Furthermore, if for every $i\in[2]$ the gadget $(\cR^i,\cE^i_{\mathrm{run}},D^i_{\mathrm{in}},D^i_{\mathrm{out}},\cE^i_{\mathrm{det}})$ is mending, then the parallel composition in \Cref{eq:pcgad} is mending.
\end{lemma}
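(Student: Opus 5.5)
The plan is to verify the two conditions of \Cref{def:faultdet} for the parallel composition directly, splitting every object along $N=N_1\sqcup N_2$. The key structural observation is that $\cR^1\sqcup\cR^2$ acts independently on the two halves. For any input $y=(y_1,y_2)$ and any fault $\cF$ on $\cR$, write $\cF=(\cF^1,\cF^2)$, where $F_t=F^1_t\sqcup F^2_t$. The transcript then splits as
\begin{equation*}
\tran(\cR[\cF];y)=\bigl(\tran(\cR^1[\cF^1];y_1),\;\tran(\cR^2[\cF^2];y_2)\bigr)
\end{equation*}
under the identification $N^{\sqcup T+1}\cong N_1^{\sqcup T+1}\sqcup N_2^{\sqcup T+1}$. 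To make this decomposition legitimate, I would first invoke \Cref{lem:fdadd} to restrict to additive faults, since a general fault need not be a product across $N_1,N_2$. An additive fault $f_t$ automatically splits as $(f^1_t,f^2_t)$.

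Next I would record how bad-set families behave under the split. Here $\cE^1\sqcup\cE^2$ is read as the union of the two families, viewed as subsets of the disjoint union. A set $S=S_1\sqcup S_2$ is then $(\cE^1\sqcup\cE^2)$-avoiding iff each $S_i$ is $\cE^i$-avoiding, because every bad set lies entirely inside one half. Consequently:
\begin{itemize}
\item $y$ is an $\cE_{\mathrm{in}}$-deviation of $\Enc_{\mathrm{in}}(x_1,x_2)=(\Enc^1_{\mathrm{in}}(x_1),\Enc^2_{\mathrm{in}}(x_2))$ iff each $y_i$ is an $\cE^i_{\mathrm{in}}$-deviation of $\Enc^i_{\mathrm{in}}(x_i)$;
\item $\cF$ is $\cE_{\mathrm{run}}$-avoiding iff each $\cF^i$ is $\cE^i_{\mathrm{run}}$-avoiding;
\item the output deviation condition splits in the same way.
\end{itemize}

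With these facts the conditions follow. For \Cref{it:fdnoerr}, apply the noiseless guarantee to each half. This gives $\cR\circ\Enc_{\mathrm{in}}(x)=(\Enc^1_{\mathrm{out}}\bar O^1(x_1),\Enc^2_{\mathrm{out}}\bar O^2(x_2))\in\Enc_{\mathrm{out}}\circ\bar\cO(x)$. Moreover, each $E\in\cE_{\mathrm{det}}$ lies in some $\cE^i_{\mathrm{det}}$ and so misses the supported part of the transcript on the $i$th half.

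For \Cref{it:fderr}, apply the fault-detection property of each gadget $i$. If either gadget lands in case \Cref{it:fdedet}, the witnessing $E\in\cE^i_{\mathrm{det}}\subseteq\cE_{\mathrm{det}}$ is contained in the support of the combined transcript, so detection holds for $\cR$. Otherwise both halves satisfy \Cref{it:fdecorr}. Their outputs are then $\cE^i_{\mathrm{out}}$-deviations of $\Enc^i_{\mathrm{out}}\bar O^i(x_i)$, and the union of two avoiding supports is avoiding, which gives \Cref{it:fdecorr} for the composition.

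The mending case is identical, with arbitrary $y_i$ and some $x_i'$ produced separately by each half, then combined as $x'=(x_1',x_2')$.

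The only real subtlety is the bookkeeping in the second paragraph: fixing the convention that $\cE^1\sqcup\cE^2$ means the union of families on the disjoint union. The splitting of avoiding sets fails for product-type bad sets, so the argument depends on this reading. I expect no genuine obstacle beyond stating this convention and the additive-fault reduction.
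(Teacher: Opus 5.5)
Your proposal is correct and follows essentially the same route as the paper: reduce to additive faults via \Cref{lem:fdadd}, split the input, fault, and transcript across $N_1\sqcup N_2$, then apply each gadget's guarantee separately and recombine using the union-of-families reading of $\cE^1\sqcup\cE^2$. Your explicit check that avoiding sets split componentwise is a bit more careful than the paper's write-up, as is your implicit use of the intended $\bar{\cO}^1\sqcup\bar{\cO}^2$ in place of the statement's typo $\bar{\cO}^1\sqcup\bar{\cO}^1$, but the argument is the same.
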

\begin{proof}
  We begin with the claim regarding non-mending fault-detection. By definition $\cR$ uses space $N=N_1\sqcup N_2$ and time $T$, so it suffices to show that the parallel composition gadget in \Cref{eq:pcgad} is fault-detecting for $\bar{\cO}$. For this purpose, let each $D^i_\alpha$ be a $[n^i_\alpha,k^i_\alpha]_q$ code, so that $D_\alpha$ is a $[n_\alpha=n^1_\alpha+n^2_\alpha,\; k_\alpha=k^1_\alpha+k^2_\alpha]_q$ code.

  The condition in \Cref{it:fdnoerr} in \Cref{def:faultdet} for the parallel composition follows directly by applying the same condition for $(\cR^i,\cE^i_{\mathrm{run}},D^i_{\mathrm{in}},D^i_{\mathrm{out}},\cE^i_{\mathrm{det}})$, that is, for every $x=(x_1,x_2)\in\bF_q^{k_{\mathrm{in}}}=\bF_q^{k^1_{\mathrm{in}}+k^2_{\mathrm{in}}}$,
  \begin{equation*}
    (\cR^1\sqcup\cR^2)\circ(\Enc^1_{\mathrm{in}}\sqcup\Enc^2_{\mathrm{in}})(x_1,x_2) \in (\Enc^1_{\mathrm{out}}\sqcup\Enc^2_{\mathrm{out}})\circ(\bar{\cO}^1\sqcup\bar{\cO}^2)(x_1,x_2).
  \end{equation*}

  To show the condition in \Cref{it:fderr} in \Cref{def:faultdet} for the parallel composition, let $x=(x_1,x_2)\in\bF_q^{k_{\mathrm{in}}}=\bF_q^{k^1_{\mathrm{in}}+k^2_{\mathrm{in}}}$, let $y=(y_1,y_2)$ be a $\cE^1_{\mathrm{in}}\sqcup\cE^2_{\mathrm{in}}$-deviation of of $\Enc_{\mathrm{in}}(x)$, and let $\cF=(F_1,\dots,F_T)$ be a $\cE_{\mathrm{run}}$-avoiding additive fault for $\cR$. As $\cF$ is additive, we can write $\cF=\cF^1\sqcup\cF^2$, meaning that each $F_t=F^1_t\sqcup F^t_t$, where each $\cF^i$ is some $\cE^i_{\mathrm{run}}$-avoiding fault for $\cR^i$. Applying the fault-detection of $(\cR^i,\cE^i_{\mathrm{run}},D^i_{\mathrm{in}},D^i_{\mathrm{out}},\cE^i_{\mathrm{det}})$, we have that either $\cR[\cF](y)=(\cR^1[\cF^1](y_1),\cR^2[\cF^2](y_2))$ is a $\cE_{\mathrm{out}}$-deviation of $\Enc_{\mathrm{out}}\circ\bar{\cO}(x)$, or there exists some $E\in\cE^1_{\mathrm{det}}$ such that $E\subseteq\supp(\tran(\cR^1[\cF^1];y_1))$, or there exists some $E\in\cE^2_{\mathrm{det}}$ such that $E\subseteq\supp(\tran(\cR^2[\cF^2];y_2))$. The latter two cases together are equivalent to the statement that there exists some $E\in\cE_{\mathrm{det}}=\cE^1_{\mathrm{det}}\sqcup\cE^2_{\mathrm{det}}$ such that $E\subseteq\supp(\tran(\cR[\cF];y))$, so we conclude that either \Cref{it:fdecorr} or \Cref{it:fdedet} in \Cref{def:faultdet} holds, as desired.

  Now assume that for both $i\in[2]$ the gadget $(\cR^i,\cE^i_{\mathrm{run}},D^i_{\mathrm{in}},D^i_{\mathrm{out}},\cE^i_{\mathrm{det}})$ is mending. Let $y=(y_1,y_2)\in\bF_q^{n^1_{\mathrm{in}}+n^2_{\mathrm{in}}}$ be an arbitrary vector, and let $\cF$ be an additive $\cE_{\mathrm{run}}$-avoiding fault. Because $\cF$ is additive, we can decompose each $F_t=F^1_t\sqcup F^2_t$ where each $F^i_t$ is an additive function acting on the dits of $\cR^i$ at timestep $t$. Then by definition $\cF^i=(\cF^i_1,\dots,\cF^i_T)$ is an additive $\cE^i_{\mathrm{run}}$-avoiding fault for $\cR^i$. Thus for $i\in[2]$, either $\cR^i[\cF^i](y_i)$ is a $\cE^i_{\mathrm{out}}$-deviation of $\bar{\cO}^i(x_i')$ for some $x_i'\in\bF_q^{k^i_{\mathrm{out}}}$, or else there exists $E\in\cE^i_{\mathrm{det}}$ with $E\subseteq\supp(\tran(\cR^i[\cF^i];y_i))$. By the definition of $\cR,\cE_{\mathrm{out}},\bar{\cO},\cE_{\mathrm{det}}$, it follows that either $\cR[\cF](y)=(\cR^1[\cF^1](y_1),\cR^2[\cF^2](y_2))$ is a $\cE_{\mathrm{out}}$-deviation of $\bar{\cO}(x_1',x_2')$, or else there exists $E\in\cE_{\mathrm{det}}$ with $E\subseteq\supp(\tran(\cR[\cF];y))$. Thus by \Cref{lem:fdadd}, the parallel composition gadget in \Cref{eq:pcgad} is mending, as desired.
\end{proof}

\section{Fault-Tolerant and Fault-Detecting Gadgets For Tensor Codes}
\label{sec:gadgets}
In this section, we present our fault-tolerant and fault-detecting gadgets for tensor codes. We will subsequently show how to compose these gadgets to compile an arbitrary logical circuit into a fault-tolerant or fault-detecting circuit.

\subsection{Error-Detection Gadget}
\label{sec:errdet}
Here present an error-detection gadget for tensor codes. Our gadget simply applies the parity-check matrix to the corrupted codeword, and declares an error has been detected if the resulting syndrome has sufficiently high weight. \Cref{lem:loctest} implies that this procedure successfully detects high-weight errors. In fact, such a procedure applies for arbitrary locally testable codes, though we only describe the tensor code setting that is relevant for our purposes.

\begin{lemma}
  \label{lem:errdet}
  For $u,n,d\in\bN$, for $i\in[u]$ let $C^i$ be a $[n_i=n,\; k_i,\; d_i\geq d]_q$ code with encoding map $\Enc^i$. Let $\rho(u,n,d)$ be the soundness parameter defined in \Cref{lem:loctest}. Let $C=\bigotimes_{i\in[u]}C^i$ be the tensor code with encoding map $\Enc=\bigotimes_{i\in[u]}\Enc^i$. Let $N=\prod_{i\in[u]}[n_i]=[n]^u$, $K=\prod_{i\in[u]}[k_i]$, and $D=d^u$, so that $C$ is a $[|N|,|K|,\geq D]_q$ code. Let $\lambda_{\mathrm{in}},\lambda_{\mathrm{run}},\lambda_{\mathrm{out}},\lambda_{\mathrm{det}}>0$ satisfy
  \begin{align}
    \label{eq:edlams}
    \begin{split}
      \lambda_{\mathrm{in}} &\leq D/2 \\
      \lambda_{\mathrm{out}} &\geq \frac{8u^2n^4}{\rho(u,n,d)}\cdot\lambda_{\mathrm{run}} \\
      \lambda_{\mathrm{det}} &= \frac{\rho(u,n,d)}{8}\cdot\lambda_{\mathrm{out}}.
    \end{split}
  \end{align}
  For $\alpha\in\{\mathrm{in},\mathrm{out}\}$, define the decorated code $D_\alpha=(C,\; \Enc,\; \cE_\alpha=2^N|_{\geq\lambda_\alpha})$.
  Then there exists a circuit $\cR$ using gate set $\{\gInit,\gTerm,\gCX^*\}$, space $|N'|\leq(u+1)n^u$, and time $T\leq un^2+2$ along with a set
  $E_{\mathrm{det}}\subseteq N'$
  such that
  \begin{equation}
    \label{eq:ecgad}
    (\cR,\; \cE_{\mathrm{run}}=2^{N'}|_{\geq\lambda_{\mathrm{run}}}^{\sqcup T},\; D_{\mathrm{in}},\; D_{\mathrm{out}},\; \cE_{\mathrm{det}}=2^{E_{\mathrm{det}}\times\{T-1\}}|_{\geq\lambda_{\mathrm{det}}})
  \end{equation}
  forms a mending fault-detecting gadget for the $|K|$-dit identity function $\bar{O}=I_K:\bF_q^K\rightarrow\bF_q^K$. Above, $E_{\mathrm{det}}\times\{T-1\}$ denotes the subset of $N'\times[T]\cong{N'}^{\sqcup T}$ given by a single copy of the subset $E_{\mathrm{det}}\subseteq N'$ within the $(T-1)$st copy of $N'$.
\end{lemma}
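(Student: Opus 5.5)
The circuit computes the syndrome $Hy$ into fresh ancillas and then discards them. Let $H^i\in\bF_q^{(n-k_i)\times n}$ be full-rank parity-check matrices for $C^i$, and let $H\in\bF_q^{S\times N}$ be the tensor-code parity-check matrix of \Cref{def:pcpc}. The dit set is $N'=N\sqcup S$, so $|N'|\le n^u+un^u=(u+1)n^u$.

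In the first timestep we apply $\gInit$ to every dit of $S$. For each $i\in[u]$ in turn we then run, in parallel over all direction-$i$ columns, the circuit $\cR_{H^i}$ of \Cref{def:matcirc}. Its control dits are the column in $N$, and its targets are the corresponding $n-k_i$ dits of $S^i$. This uses $\le n^2$ timesteps per direction. After these $\le un^2$ steps the ancillas hold $Hy$. We set $E_{\mathrm{det}}=S$, read at time $T-1$. In the final timestep we apply $\gTerm$ to $S$, so $T\le un^2+2$ and the output dits are exactly $N$.

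Completeness is immediate. On input $\Enc(x)$ the data dits are never modified, since they serve only as controls. The syndrome is $H\Enc(x)=0$, so the detectors are zero and the output is $\Enc(x)$.

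For the fault cases, \Cref{lem:fdadd} lets us restrict to an additive fault $\cF=(f_1,\dots,f_T)$ with each $|f_t|<\lambda_{\mathrm{run}}$. We track errors linearly. Because the circuit is linear, the transcript equals the noiseless transcript on $y$ plus a fault-propagation term. A fault on an ancilla stays on that ancilla. A fault on a data dit $j$ at time $t$ adds to the final data word, and also adds to every ancilla later targeted with $j$ as control. Each data dit controls at most $u$ columns' worth of targets, i.e.\ at most $un$ ancillas. Summing over all $T$ timesteps gives
\[
y_{\mathrm{out}}=y+e,\qquad |e|\le T\lambda_{\mathrm{run}},\qquad \text{detector values}=Hy+g,\qquad |g|\le T\lambda_{\mathrm{run}}(1+un)\le 4u^2n^4\lambda_{\mathrm{run}}/\ldots
\]
The precise constant is routine; all that matters is $|e|,|g|\le 2u^2n^3\lambda_{\mathrm{run}}$, which is at most $\rho\lambda_{\mathrm{out}}/4$ by \eqref{eq:edlams}. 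One subtlety is that data-dit faults occurring \emph{during} the computation enter $Hy$ only partially. The clean way to handle this is to write the detector values as $H(y+e)+g'$, where $g'$ obeys the same kind of bound, since a data error contributes at most $un$ syndrome dits.

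Mending, which also implies item~(b), follows from \Cref{lem:loctest}. Let $z=y+e$ be the output and let $c\in C$ be a nearest codeword, with $w=\min_{c'\in C}|z-c'|=|z-c|$. The detector values are $Hz+g'$, so at least $|Hz|-|g'|\ge\rho w-|g'|$ of them are nonzero. Suppose $w\ge\lambda_{\mathrm{out}}$. Then at least $\rho\lambda_{\mathrm{out}}-\rho\lambda_{\mathrm{out}}/4\ge\lambda_{\mathrm{det}}$ detectors are nonzero, so some $E\in\cE_{\mathrm{det}}$ is contained in the support of the transcript. Otherwise $z$ is an $\cE_{\mathrm{out}}$-deviation of $c=\Enc(x')$ for some $x'$.

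For the non-mending item~(b), suppose $y$ differs from $\Enc(x)$ on fewer than $\lambda_{\mathrm{in}}\le D/2$ dits. We must show that in the non-detecting case the nearby codeword is $\Enc(x)$ itself. If the detector does not fire, then $w<\lambda_{\mathrm{out}}$. We have
\[
|c-\Enc(x)|\le w+|e|+\lambda_{\mathrm{in}}.
\]
This is less than $D$ provided $\lambda_{\mathrm{out}}+|e|<D/2$, in which case $c=\Enc(x)$.

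This is the step I expect to be the main obstacle, since \eqref{eq:edlams} places no explicit upper bound on $\lambda_{\mathrm{out}}$. I would first check whether the intended regime implicitly has $\lambda_{\mathrm{out}}\le D/2$. If it does not, I would instead use the direct bound $|z-\Enc(x)|<\lambda_{\mathrm{in}}+|e|$. This requires $\lambda_{\mathrm{in}}+T\lambda_{\mathrm{run}}\le\lambda_{\mathrm{out}}$ in the relevant case. When $\lambda_{\mathrm{in}}\ge\lambda_{\mathrm{out}}$, I would argue via local testability applied to $z-\Enc(x)$ together with a unique-decoding radius argument. Finally, in all cases the detector set lies at time $T-1$, consistent with the family $\cE_{\mathrm{det}}$ in \eqref{eq:ecgad}.
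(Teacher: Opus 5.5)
Your circuit, detector set, fault-propagation accounting, and mending argument (anchoring at the codeword nearest the output $z$) are fine. The gap is the non-mending condition \Cref{it:fderr} of \Cref{def:faultdet}. Mending does not imply it, since mending only yields \emph{some} $x'$, and anchoring at the output cannot force $x'=x$.

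Your triangle inequality gives $|c-\Enc(x)|<\lambda_{\mathrm{out}}+T\lambda_{\mathrm{run}}+\lambda_{\mathrm{in}}$. This is below $D$ only when $\lambda_{\mathrm{out}}+T\lambda_{\mathrm{run}}\le D-\lambda_{\mathrm{in}}$. Your direct bound instead needs $\lambda_{\mathrm{out}}\ge\lambda_{\mathrm{in}}+T\lambda_{\mathrm{run}}$. For $\lambda_{\mathrm{in}}=D/2$, which \eqref{eq:edlams} allows, neither case covers $D/2-T\lambda_{\mathrm{run}}<\lambda_{\mathrm{out}}<D/2+T\lambda_{\mathrm{run}}$.

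Your fallback of applying \Cref{lem:loctest} to $z-\Enc(x)$ does not close this window. Since $H(z-\Enc(x))=Hz$, the lemma only bounds $|Hz|$ below by $\rho\cdot\min_{c'\in C}|z-c'|$. That minimum may be attained at a codeword other than $\Enc(x)$, because $|z-\Enc(x)|$ is only known to be below $\lambda_{\mathrm{in}}+T\lambda_{\mathrm{run}}$, which can exceed $D/2$.

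The missing idea is to anchor at the codeword nearest the \emph{input}, where unique decoding is guaranteed.
\begin{itemize}
\item Let $\Enc(x')$ minimize $|y-\Enc(x')|$ and put $f_0=y-\Enc(x')$. Under the hypothesis of \Cref{it:fderr}, $|f_0|<\lambda_{\mathrm{in}}\le D/2$ forces $x'=x$. For mending, $y$ is arbitrary and $x'$ serves as the witness.
\item The noiseless syndrome is $Hy=Hf_0$. Since $0$ is a nearest codeword to $f_0$, \Cref{lem:loctest} gives $|Hf_0|\ge\rho|f_0|$.
\item Data dits are only ever controls, so $|z-y|<T\lambda_{\mathrm{run}}$. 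Hence if $|z-\Enc(x')|\ge\lambda_{\mathrm{out}}$, then $|f_0|>\lambda_{\mathrm{out}}-T\lambda_{\mathrm{run}}$.
\item The detector weight then exceeds $\rho(\lambda_{\mathrm{out}}-T\lambda_{\mathrm{run}})-un^2T\lambda_{\mathrm{run}}$. This is at least $\rho\lambda_{\mathrm{out}}/8=\lambda_{\mathrm{det}}$ by \eqref{eq:edlams} and $T\le un^2+2$.
\end{itemize}
This single argument gives both \Cref{it:fderr} and mending, with no upper bound on $\lambda_{\mathrm{out}}$ needed. It is exactly how the paper proceeds.
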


When applying \Cref{lem:errdet}, it will often be useful to set $\lambda_{\mathrm{out}}$ equal to its minimum allowed value $\lambda_{\mathrm{out}}=(8u^2n^4/\rho(u,n,d))\cdot\lambda_{\mathrm{run}}$. The advantage of choosing a larger value of $\lambda_{\mathrm{out}}$ is that $\lambda_{\mathrm{det}}$ becomes larger, meaning that more of the ``detector dits'' in $E_{\mathrm{det}}$ are nonzero when an error is detected.

\begin{proof}[Proof of \Cref{lem:errdet}]
  We will first describe the circuit $\cR$. For this purpose, for $i\in[u]$, fix a full-rank parity-check matrix $H^i\in\bF_q^{(n_i-k_i)\times k_i}$ for $C^i$. Define the set $S=\bigsqcup_{i\in[u]}S^i$ and the tensor-code parity-check matrix $H\in\bF_q^{S\times N}$ as in \Cref{def:pcpc}.
  
  On input $y\in\bF_q^N$, the desired circuit $\cR$ first applies $\gInit^{\sqcup S}$ to initialize a set of ancilla dits labeled by the set $S$. Next, $\cR$ applies the linear map $(y,0^S)\mapsto(y,Hy)$ by sequentially looping through $i=1,\dots,u$, and for each $i$ using $\gCX^*$ gates to apply the map
  \begin{equation}
    \label{eq:edsyn}
    (y,0^{S^i})\mapsto(y,\;(I^{\otimes i-1}\otimes H^i\otimes I^{\otimes u-i})y)
  \end{equation}
  to dits $N\sqcup S^i\subseteq N\sqcup S$. Specifically, to apply this map in \Cref{eq:edsyn}, $\cR$ simply applies the circuit $\cR_{H^i}$ from \Cref{def:matcirc} in parallel to each direction-$i$ column of $S^i$ and $N$, which takes time $(n_i-k_i)n_i\leq n_i^2=n^2$. Finally, $\cR$ applies $\gTerm^{\sqcup S}$ to terminate all dits in $S$.


  Thus $\cR$ uses 1 timestep for the $\gInit$ gates, at most $u\cdot n^2$ timesteps to apply the linear map $(y,0^S)\mapsto(y,Hy)$, and one timestep for the $\gTerm$ gates, for a total of $T\leq un^2+2$ timesteps. Meanwhile, $\cR$ uses space $N'=N\sqcup S$, so $|N'|\leq n^u+un^u\leq(u+1)n^u$.

  We define the set $E_{\mathrm{det}}\subseteq{N'}^{\sqcup T}=N'\times[T]$ to contain all dits in $S$ at timestep $T-1$, that is, $E_{\mathrm{det}}=S\times\{T-1\}$.

  We will now prove that the data in \Cref{eq:ecgad} forms a mending error-detecting gadget for $\bar{O}=I_K$. First, for every input $y\in C$ so that $Hy=0$, then in the absence of errors, following timestep $T-1$ the dits in $N'=N\sqcup S$ have value $(y,Hy)=(y,0^S)$. Thus the restriction of $\tran(\cR;y)$ to $E_{\mathrm{det}}=S\times\{T-1\}$ equals $0^S$, so \Cref{it:fdnoerr} in \Cref{def:faultdet} is satisfied.

  To show \Cref{it:fderr} as well as the mending property in \Cref{def:faultdet}, let $x\in\bF_q^{k_{\mathrm{in}}}$, let $y\in\bF_q^N$, and let $\cF$ be an additive $\cE_{\mathrm{run}}$-avoiding fault for $\cR$. Let $x'=\argmin_{x''\in\bF_q^K}|y-\Enc(x'')|$ and let $f_0=y-\Enc(x')$. To prove \Cref{it:fderr} in \Cref{def:faultdet}, we will assume $y$ is a $\cE_{\mathrm{in}}$-deviation of $\Enc(x)$, meaning that $|y-\Enc(x)|<\lambda_{\mathrm{in}}\leq D/2$ (see \Cref{eq:edlams}), and therefore $x'=x$. To prove the mending property, we instead assume $y$ is arbitrary, and we may have $x'\neq x$.

  For $t\in\{0,\dots,T\}$, let $N_t'\subseteq N'$ be the active dits at time $t$. For $t\in[T]$, define $f_t\in\bF_q^{N_t'}$ to be the vector such that $F_t(z)=z+f_t$. By assumption, for $t\in[T]$ we have $|f_t|<\lambda_{\mathrm{run}}$. By construction, a fault-free execution of $\cR$ (where $f_t=0$ for every $t\in[T]$) has $\tran(\cR;y)|_{E_{\mathrm{det}}}=Hy=Hf_0$. Meanwhile, in an execution of $\cR$ with the fault $\cF$, because every dit is involved in $\sum_{i\in[u]}n_i(n_i-k_i)<un^2$ $\gCX^*$ gates, every fault error (corresponding to a nonzero component of $f_t$ for some $t\in[T]$) can propagate to at most $un^2-1$ other dits. That is, in timestep $T-1$, the difference of the transcript $\tran(\cR[\cF];y)|_{E_{\mathrm{det}}}$ and the fault-free transcript $\tran(\cR;y)|_{E_{\mathrm{det}}}=Hf_0$ has weight
  \begin{align}
    \label{eq:ednoisysyn}
    |\tran(\cR[\cF];y)|_{E_{\mathrm{det}}}-Hf_0| \leq \sum_{t\in[T]}|f_t|\cdot un^2
    &< T\cdot\lambda_{\mathrm{run}}\cdot un^2 
  \end{align}
  dits. Also, because dits in $N$ can only be changed by fault errors (as all $\gCX^*$ gates we run use dits in $N$ as controls, not targets), the difference of the output $\cR[\cF](y)$ and the fault-free output $\cR(y)=y$ has weight
  \begin{align}
    \label{eq:edoutput}
    |\cR[\cF](y)-y|
    &< T\cdot\lambda_{\mathrm{run}}. 
  \end{align}

  If $\cR[\cF](y)$ is a $\cE_{\mathrm{out}}$-deviation of $\Enc(x')$, then \Cref{it:fdecorr} (if $y$ is a $\cE_{\mathrm{in}}$-deviation of $\Enc(x)=\Enc(x')$) or \Cref{it:fdecorrmend} (for general $y\in\bF_q^N$) in \Cref{def:faultdet} holds. Otherwise, we have $|\cR[\cF](y)-\Enc(x')|\geq\lambda_{\mathrm{out}}$, so because $y=\Enc(x')+f_0$, \Cref{eq:edoutput} implies that
  \begin{align}
    \label{eq:edf0}
    |f_0|
    &\geq \lambda_{\mathrm{out}}-|\cR[\cF](y)-y| > \lambda_{\mathrm{out}}-T\cdot\lambda_{\mathrm{run}}.
  \end{align}
  By definition $\Enc(x')$ is the closest codeword in $C$ to $y=\Enc(x')+f_0$, and therefore $0$ is the closest codeword in $C$ to $f_0$, i.e.~$|f_0-C|=|f_0|$. Therefore
  \begin{align*}
    |\tran(\cR[\cF];y)|_{E_{\mathrm{det}}}|
    &> |Hf_0|-T\cdot\lambda_{\mathrm{run}}\cdot un^2 \\
    &\geq \rho(u,n,d)\cdot|f_0|-T\cdot\lambda_{\mathrm{run}}\cdot un^2 \\
    &\geq \rho(u,n,d)\cdot(\lambda_{\mathrm{out}}-T\cdot\lambda_{\mathrm{run}}) - T\cdot\lambda_{\mathrm{run}}\cdot un^2 \\
    &\geq \rho(u,n,d)\cdot(\lambda_{\mathrm{out}}-(un^2+2)\cdot\lambda_{\mathrm{run}}) - un^2(un^2+2)\cdot\lambda_{\mathrm{run}} \\
    &\geq \frac{\rho(u,n,d)}{8}\cdot\lambda_{\mathrm{out}} \\
    &= \lambda_{\mathrm{det}}.
  \end{align*}
  where the first inequality above holds by \Cref{eq:ednoisysyn}, the second inequality holds by \Cref{lem:loctest} and because $\min_{c\in C}|f_0-c|=|f_0|$ by the definition of $f_0$, the third inequality holds by \Cref{eq:edf0}, the fourth inequality holds because $T\leq un^2+2$, and the fifth and sixth inequalities hold by \Cref{eq:edlams}. Thus some $E\in\cE_{\mathrm{det}}=2^{E_{\mathrm{det}}}|_{\geq\lambda_{\mathrm{det}}}$ must lie inside $\supp(\tran(\cR[\cF];y))$, so \Cref{it:fdedet} (if $y$ is a $\cE_{\mathrm{in}}$-deviation of $\Enc(x)$) and \Cref{it:fdecorrmend} (for general $y\in\bF_q^N$) in \Cref{def:faultdet} are satisfied. Therefore by \Cref{lem:fdadd}, our gadget in \Cref{eq:ecgad} is indeed mending fault-detecting for $\bar{O}=I_K$.
\end{proof}

\subsection{Error-Correction Gadget}
In this section, we present a fault-tolerant gadget that \emph{corrects} errors on tensor codes, rather than simply detecting them as in \Cref{sec:errdet}. Such error-correction is not necessary for the PCP setting, and we are not able to correct as many errors as we can detect in \Cref{lem:errdet}. However, the ability to perform error-correction enables us to obtain a complete fault-tolerance scheme for universal classical computation, as opposed to simply a fault-detecting scheme. Such a fault-tolerance scheme protecting against a large number of adversarial errors may be of independent interest.

We begin with the following notion of a decoder, which captures the usual notion in the coding theory literature.

\begin{definition}
  \label{def:raddec}
  A \emph{radius-$\lambda$ decoder} for a $[n,k,d]_q$ code $C$ with encoding map $\Enc$ is a fault-tolerant gadget
  \begin{equation*}
    (\cR,\; \cE_{\mathrm{run}}=2^{N'},\; D_{\mathrm{in}}=(C,\Enc,2^{[n]}|_{\geq\lambda}),\; D_{\mathrm{out}}=(C,\Enc,2^{[n]}))
  \end{equation*}
  for the $k$-dit identity function $\bar{O}=I_k$, acting on some set of dits $N'$.
\end{definition}

In words, a radius-$\lambda$ decoder for a code $C$ is a circuit that takes as input any $y\in\bF_q^n$ that differs from a codeword $c\in C$ in $<\lambda$ dits, and outputs $c$. Such a decoder can only exist when there exists a unique such codeword $c\in C$, meaning we must have $\lambda-1<d/2$. Because $\cE_{\mathrm{run}}=2^{N'}$ in \Cref{def:raddec}, the decoder does not need to output a meaningful answer in the presence of a fault.

Below, we state the well-known result that there exist polynomial-time decoders for Reed-Solomon codes (see \Cref{sec:reedsol}) with decoding radius equal to half the distance. While there is a large literature showing stronger results, such as with more efficient algorithms or more robust decoding properties, we intentionally state the basic result that will be sufficient for our purposes.

\begin{lemma}[Welch-Berlekamp decoder; see e.g.~Chapter 17 of \cite{guruswami_essential_2022}]
  \label{lem:RSdec}
  For a prime power $q$, a subset $E\subseteq\bF_q$, and a positive integer $k\leq n$, let $C=\RS(q,k,E)$ be the $[n=|E|,\; k,\; d=n-k+1]_q$ Reed-Solomon code. Then there exists a radius-$d/2$ decoder for $C$ using space $|N'|=O(n^2\log q)$, time $T=O(n^2\log q)$, and gate set $\cG=\{\gInit,\gTerm,\gX^*,\gCX^*,\gCCX^*\}$ over $q$-ary dits.
\end{lemma}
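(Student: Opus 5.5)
We would prove \Cref{lem:RSdec} by compiling the standard Welch--Berlekamp algorithm into a circuit over $\bF_q$ using only our gate set. The circuit never has to tolerate faults, since $\cE_{\mathrm{run}}=2^{N'}$ in \Cref{def:raddec}. So the whole task is to realize a deterministic function correctly, with the stated space and time, and to check \Cref{it:ftnoerr,it:fterr} of \Cref{def:faulttol} with $\cE_{\mathrm{out}}=2^{[n]}$. That output family forces the output to equal $\Enc(x)$ exactly.

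Recall the algorithm. Let $e=\lfloor (d-1)/2\rfloor$ and let $y\in\bF_q^E$ be the received word. We find polynomials $Q$ of degree $<e+k$ and $P$ of degree $\le e$, $P\neq 0$, satisfying $Q(\alpha)=y_\alpha P(\alpha)$ for all $\alpha\in E$. The uniqueness argument shows $Q/P=f$, where $\evl_E(f)$ is the unique codeword within distance $<d/2$. We then output $\evl_E(f)$.

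The circuit proceeds in five steps.
\begin{enumerate}
\item Form the $n\times(2e+k+1)$ linear system. Its coefficients are $\alpha^j$, which are hardwired constants, and $-y_\alpha\alpha^j$, which are obtained with $\gCX^{a}$ gates targeting freshly initialized ancillas.
\item Normalize by fixing the leading coefficient of $P$ to $1$. We use the standard trick of trying each of the $e+1$ possible degrees of $P$ in parallel, or equivalently taking a monic error-locator, which gives a square linear system.
\item Solve the system by Gaussian elimination written as a straight-line program. This needs field inversion, which we implement with $\gCCX^*$ gates as $z\mapsto z^{q-2}$ via repeated squaring, taking $O(\log q)$ multiplications. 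Data-dependent pivoting is made oblivious by computing indicator values $1-z^{q-1}$ and using them as multiplicative selectors, so that every branch is an arithmetic expression.
\item Compute $f=Q/P$ by polynomial long division. When $P$ is monic this needs no inversion.
\item Re-encode, either through the hardwired linear map $\evl_E$ applied with $\cR_A$ from \Cref{def:matcirc}, or by writing the systematic encoding and copying the result onto the output dits while terminating the ancillas with $\gTerm$.
\end{enumerate}

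Correctness for inputs $y$ that are $2^{[n]}|_{\ge d/2}$-deviations of $\Enc(x)$ then follows from the classical Welch--Berlekamp analysis. In the noiseless case of \Cref{it:ftnoerr}, $y$ is itself a codeword, so the same analysis applies.

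The main obstacle is the resource bound $O(n^2\log q)$. Naive oblivious Gaussian elimination costs $O(n^3)$ arithmetic operations, and placing it in our fixed-dit model makes it worse. We would first aim for a version whose size counts only the stated quantities, and we expect the $n^2$ factor to be recoverable in one of two ways. One option is to replace Gaussian elimination with Berlekamp--Massey or the extended Euclidean algorithm (Gao's decoder) applied to the interpolation of $y$, which uses $O(n^2)$ field operations. The other is to note, as the paper does, that only polynomial bounds are really needed downstream.

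Concretely, we would use Gao's variant. First interpolate $y$ to $g$ of degree $<n$ via a hardwired Lagrange matrix, which costs $O(n^2)$ $\gCX^*$ gates. Then run the extended Euclidean algorithm on $\prod_{\alpha\in E}(X-\alpha)$ and $g$, stopping when the remainder has degree $<(n+k)/2$. Each Euclid step requires one inversion of a leading coefficient, at cost $O(\log q)$, plus $O(n)$ operations. The data-dependent degree drops are handled obliviously by testing each coefficient for zero at cost $O(\log q)$ per test. This gives $O(n^2\log q)$ operations, and hence the same bound on space and time after the swap-and-ancilla bookkeeping. That bookkeeping is only constant overhead by the swap construction in \Cref{sec:circuits}.
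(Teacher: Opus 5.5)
Your proposal is sound, but it takes a different route from the paper, and the obstacle that prompts your switch is not really there.

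In this model $T$ counts layers, and one layer may apply gates to arbitrarily many disjoint sets of dits. Gates may also touch any dits of $N'$, so no swap bookkeeping is needed. The paper therefore uses essentially your first plan: it solves the Welch--Berlekamp system by ordinary Gaussian elimination. The $O(n^3)$ field operations are organized as $O(n^2)$ row operations, each executed in parallel along a row in $O(1)$ layers, on $O(n^2)$ dits. (Fanning a data-dependent multiplier out to a whole row strictly costs $O(\log n)\le O(\log q)$ layers, using $q\ge n$.) The $\log q$ factor comes from the $O(n^2)$ control decisions, each a zero test $a\mapsto a^{q-1}$ computed by repeated squaring.

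You instead serialize Gao's extended-Euclid decoder with an $O(n^2\log q)$ operation count. That proves more, namely a bound on circuit size and not only on space and time. However, it needs an oblivious implementation that your sketch skips. A reduction step $a\leftarrow a-c\,X^{\delta_a-\delta_b}b$ has a data-dependent shift. Multiplexing over all possible shifts at each of the $O(n)$ steps would cost $O(n^3)$ operations, which breaks your count.

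The standard fix from hardware Reed--Solomon decoders has three parts:
\begin{enumerate}
\item Keep each remainder aligned at its nominal leading coefficient, and each cofactor aligned at its degree bound.
\item Lower one nominal degree by exactly one per iteration, using either an inversion-free cross-multiplied update or a unit shift.
\item Conditionally swap the two registers.
\end{enumerate}
Then every iteration has a fixed shape and costs $O(n)$. The final division $g/v$ likewise needs a one-time $O(n^2)$ alignment of $v$ by its computed degree, plus one $O(\log q)$ inversion.

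With these details your bound becomes $O(n^2+n\log q)$, well within the lemma. The paper's argument is shorter because counting depth rather than gates makes an $O(n^2)$-operation algorithm unnecessary.
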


\begin{remark}
  The space and time bounds in \Cref{lem:RSdec} are not optimal, but are sufficient for our purposes. We now briefly justify them. The Welch-Berlekamp algorithm (see e.g.~\cite[Chapter 17]{guruswami_essential_2022}) simply consists of running Gaussian elimination over $\bF_q$ a constant number of times, where each system of equations has $O(n)$ variables and $O(n)$ linear constraints. Gaussian elimination on such a system can be performed using $O(n^2)$ $q$-ary dits and $O(n^2)$ row/column operations (i.e.~additions or scalar multiplications of entire rows or columns of a $O(n)\times O(n)$ matrix). Each such row/column operation takes $O(n)$ space and $O(1)$ time in our circuit model. The $O(\log q)$ factors in the space and time usage in \Cref{lem:RSdec} arise from the algorithm's control operations, which decide which row/column operations to perform. There are $O(n^2)$ such control operations, each of which either adds or multiplies two elements of $\bF_q$, or else must map an element $a\in\bF_q$ to $0$ if $a=0$ and to $1$ if $a\neq 0$. In the latter case, we must simply compute $a^{q-1}$, which can be implemented using $O(\log q)$ space and time via repeated squaring with the gate set $\cG=\{\gInit,\gTerm,\gX^*,\gCX^*,\gCCX^*\}$. Hence the Welch-Berlekamp algorithm can be implemented using $O(n^2\log q)$ space and time with gate set $\cG=\{\gInit,\gTerm,\gX^*,\gCX^*,\gCCX^*\}$ over $q$-ary dits.
\end{remark}

We now present our error-correction gadget for tensor codes, assuming the factor codes have decoders in the sense of \Cref{def:raddec}.

\begin{lemma}
  \label{lem:errcorr}
  For $u,n,d\in\bN$, for $i\in[u]$ let $C^i$ be a $[n_i=n,\; k_i,\; d_i\geq d]_q$ code with encoding map $\Enc^2$. Assume that each $C^i$ has a radius-$\lambda_{\Dec}\geq 1$ decoder $\cR_{\Dec}^i$ using space $N_{\Dec}$, time $T_{\Dec}$, and gate set $\cG_{\Dec}$. Let $C=\bigotimes_{i\in[u]}C^i$ be the tensor code with encoding map $\Enc=\bigotimes_{i\in[u]}\Enc^i$. Let $N=\prod_{i\in[u]}[n_i]=[n]^u$ and $K=\prod_{i\in[u]}[k_i]$, so that $C$ is a $[|N|,|K|,\geq d^u]_q$ code. For $\lambda_{\mathrm{run}}\geq 0$, let
  \begin{align}
    \label{eq:eclams}
    \begin{split}
      \lambda_{\mathrm{in}} &= \left(\frac{\lambda_{\Dec}}{2}\right)^u \\
      \lambda_{\mathrm{out}} &= \left(\frac{2^u+1}{\lambda_{\Dec}/2n}\right)^u\cdot u^2n\cdot T_{\Dec}\cdot\lambda_{\mathrm{run}}
    \end{split}
  \end{align}
  For $\alpha\in\{\mathrm{in},\mathrm{out}\}$, define the decorated code $D_\alpha=(C,\; \Enc,\; \cE_\alpha=2^N|_{\geq\lambda_\alpha})$.
  Then there exists a circuit $\cR$ using space $|N'|=n^{u-1}\cdot|N_{\Dec}|$, time $T=u\cdot T_{\Dec}$, and gate set $\cG_{\Dec}$ such that 
  \begin{equation}
    \label{eq:ecgad}
    (\cR,\; \cE_{\mathrm{run}}=(2^{N'}|_{\geq\lambda_{\mathrm{run}}})^{\sqcup T},\; D_{\mathrm{in}},\; D_{\mathrm{out}})
  \end{equation}
  forms a fault-tolerant gadget for the $|K|$-dit identity function $\bar{O}=I_K$.
\end{lemma}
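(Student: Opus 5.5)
The circuit $\cR$ runs the factor decoders one direction at a time. For $i=1,\dots,u$, it applies $\cR_{\Dec}^i$ in parallel to all $n^{u-1}$ direction-$i$ columns of $N$. Each such copy uses its own ancilla workspace, so the space is $n^{u-1}|N_{\Dec}|$ and the time is $uT_{\Dec}$. Because each decoder is a fault-tolerant gadget for $I_{k_i}$ with the trivial output bad-set family $2^{[n]}$, its output on a column that is within distance $<\lambda_{\Dec}$ of a codeword $c$ of $C^i$ is exactly $c$. This gives noiseless correctness, \Cref{it:ftnoerr}: every column of $\Enc(x)$ is already a codeword, so it is left unchanged. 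By \Cref{lem:fdadd} (with $\cE_{\mathrm{det}}=\emptyset$, via \Cref{fact:dettol}), it then suffices to treat additive faults $f_1,\dots,f_T$, each of weight $<\lambda_{\mathrm{run}}$.

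The analysis is a comparison between the noisy run and the noiseless run on the same input $y=\Enc(x)+e_0$ with $|e_0|<\lambda_{\mathrm{in}}=(\lambda_{\Dec}/2)^u$. I first handle the noiseless run. I claim inductively that after round $i$ the residual error is supported on a set $B_i\subseteq N$ of the following form: every direction-$j$ line, for each $j\le i$, meets $B_i$ in either $0$ or at least $\lambda_{\Dec}/2$ points. The idea is that a direction-$i$ column is decoded correctly unless it carries $\ge\lambda_{\Dec}$ errors, and the number of such bad columns is at most $|B_{i-1}|/\lambda_{\Dec}$, each contributing at most $n$ errors. A cleaner route is a counting argument: a nonzero residual after all $u$ rounds would force a ``grid'' containing $\ge(\lambda_{\Dec}/2)^u$ original errors, contradicting $|e_0|<\lambda_{\mathrm{in}}$. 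So the noiseless run outputs exactly $\Enc(x)$.

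Next, I handle the faults, using the closure idea. A fault inside a decoder copy at round $i$ corrupts at most that one column's output, i.e.\ at most $n$ dits of $N$. The total number of column-runs touched by faults is therefore at most $T\lambda_{\mathrm{run}}=uT_{\Dec}\lambda_{\mathrm{run}}$. Let $A\subseteq N$ be the union of the columns touched, so $|A|\le un T_{\Dec}\lambda_{\mathrm{run}}$. I then define the closure $\overline{A}$ by iterating over directions $j=i+1,\dots,u$. At each step I add every direction-$j$ column that meets the current set in $\ge\lambda_{\Dec}/2$ points. Half the radius is used because that set and the input-error set may jointly push a column over $\lambda_{\Dec}$. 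Each step multiplies the size by at most $2n/\lambda_{\Dec}$, plus one for the original set. Doing this for each of the $u$ rounds, with the $2^u$ choices of which directions grow, gives $|\overline{A}|\le\bigl((2^u+1)\cdot 2n/\lambda_{\Dec}\bigr)^u\cdot u^2 nT_{\Dec}\lambda_{\mathrm{run}}$, which matches $\lambda_{\mathrm{out}}$. This is the bound in the spirit of \cite{kalachev_maximally_2025}.

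The key claim is then proved by induction on rounds: outside $\overline{A}$, the noisy transcript agrees with the noiseless one. A column not in the closure at round $i$ has fewer than $\lambda_{\Dec}/2$ noise-induced discrepancies, and fewer than $\lambda_{\Dec}/2$ residual input errors in the case where the noiseless run decodes it correctly. Its total deviation from the true codeword is therefore $<\lambda_{\Dec}$, so the decoder, run fault-free on it, outputs the same codeword as in the noiseless run. Hence the final output differs from $\Enc(x)$ only on $\overline{A}$, which has size $<\lambda_{\mathrm{out}}$, proving \Cref{it:fterr}. The main obstacle is setting up these two inductions jointly. The noiseless residual sets and the closure must use compatible thresholds ($\lambda_{\Dec}/2$ each) so that their errors add up to less than $\lambda_{\Dec}$ on every column outside the closure. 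Getting the exact combinatorial bound on $|\overline{A}|$ to match the $(2^u+1)^u$ form also needs care. I would first try defining the closure as a fixed point: close the set under ``add every direction-$j$ line that meets the set in $\ge\lambda_{\Dec}/2$ points, for all $j$'', and bound its size by repeated volume expansion.
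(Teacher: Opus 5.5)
\textbf{Gap: the key inductive claim is false.} You claim that, outside $\overline{A}$, the noisy transcript agrees with the noiseless one. Your justification says that a column outside the closure carries fewer than $\lambda_{\Dec}/2$ residual input errors whenever the noiseless run decodes it correctly. But the noiseless run decodes correctly every column with fewer than $\lambda_{\Dec}$ residual errors, and columns carrying between $\lambda_{\Dec}/2$ and $\lambda_{\Dec}-1$ residual errors do occur once $u\ge 3$.

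Here is a counterexample with $u=3$ and $\lambda_{\Dec}=10$.
\begin{enumerate}
\item Put $10$ input errors on each of six direction-$1$ columns lying in a common plane $\{j_3=c\}$. Then $|e|=60<125=(\lambda_{\Dec}/2)^3$.
\item A decoder's output beyond its radius is unconstrained, so the noiseless run may mis-decode all six. A direction-$2$ column $L$ of that plane can then carry $6<\lambda_{\Dec}$ residual errors, which round $2$ corrects.
\item Let four round-$1$ faults corrupt the points where $L$ meets four further direction-$1$ columns of the plane. Every direction-$2$ or direction-$3$ column meets these four columns in at most $4<\lambda_{\Dec}/2$ points, so $\overline{A}$ is just those four columns.
\item In the noisy run, $L$ now carries $10\ge\lambda_{\Dec}$ errors and may be mis-decoded. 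After round $2$ the two transcripts can therefore differ on all of $L$, outside $\overline{A}$.
\end{enumerate}
The lemma's conclusion survives in this example (round $3$ repairs the damage), but your induction does not.

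\textbf{The missing idea is the right invariant.} Do not compare with the actual noiseless run. Instead track a \emph{virtual} half-threshold propagation of the input errors: set $V_0=\supp(e)$, and let $V_i$ be the union of direction-$i$ columns meeting $V_{i-1}$ in at least $\lambda_{\Dec}/2$ points. This is the paper's $S_i^{\lambda_{\Dec}/2}(\supp(e))$.

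Then prove that the actual error support after round $i$ lies in $V_i\cup G_i$, where $G_i$ is your fault closure. A direction-$i$ column with no round-$i$ fault, meeting each of $V_{i-1}$ and $G_{i-1}$ in fewer than $\lambda_{\Dec}/2$ points, sees fewer than $\lambda_{\Dec}$ errors and is decoded correctly. Every other direction-$i$ column lies in $V_i\cup G_i$ by construction.

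The counting argument must then be applied to $V_u$, not to the noiseless run: \Cref{claim:decnonoise} with $\lambda=\lambda_{\Dec}/2$ gives $V_u=\emptyset$ because $|e|<(\lambda_{\Dec}/2)^u$. This is exactly where the factor $1/2$ in $\lambda_{\mathrm{in}}$ comes from. The paper packages this as $S_i^{\lambda_{\Dec}}[F](\supp(e))\subseteq S_i^{\lambda_{\Dec}/2}(\supp(e))\cup[F']_{\lambda_{\Dec}/2}$, and bounds the full closure with \Cref{lem:closbound}.

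Your one-pass closure is adequate under this invariant. Take $G_i$ to be $G_{i-1}$ together with the faulty round-$i$ columns and the direction-$i$ columns meeting $G_{i-1}$ in at least $\lambda_{\Dec}/2$ points. With $f_i$ the number of faulty round-$i$ columns, the recursion $|G_i|\le(1+2n/\lambda_{\Dec})|G_{i-1}|+nf_i$ gives $|G_u|<(1+2n/\lambda_{\Dec})^{u}\cdot unT_{\Dec}\lambda_{\mathrm{run}}$. This is within $\lambda_{\mathrm{out}}$ and avoids the $(2^u+1)^u$ factor altogether. Your ``$2^u$ choices of which directions grow'' step is not a derivation and should be replaced by this recursion.
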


To prove \Cref{lem:errcorr}, we will use the following basic combinatorial notion from \cite{kalachev_maximally_2025} (though we slighly modify the notation).

\begin{definition}
  \label{def:setclos}
  For $u,n\in\bN$ and $\lambda>0$, we say a set $E\subseteq[n]^u$ is \emph{$\lambda$-closed} if for every $i\in[u]$ and every direction-$i$ column $L\subseteq[n]^u$, then either $|E\cap L|<\lambda$ or $E\cap L=L$ (i.e.~$|E\cap L|=n$). For $E\subseteq[n]^u$, we then define the \emph{$\lambda$-closure} $[E]_\lambda\subseteq[n]^u$ to be the minimal $\lambda$-closed set that contains $E$.
\end{definition}

\begin{lemma}[\cite{kalachev_maximally_2025}]
  \label{lem:closbound}
  For every $E\subseteq[n]^u$, it holds that
  \begin{align*}
    |[E]_\lambda| &\leq \left(\frac{2^u+1}{\lambda/n}\right)^u\cdot|E|.
  \end{align*}
\end{lemma}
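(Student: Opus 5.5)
The plan is to prove \Cref{lem:closbound} by induction on the dimension $u$, writing $F=[E]_\lambda$. We may assume $\lambda\leq n$, since otherwise every set is $\lambda$-closed and $F=E$. The closure is built by a monotone process. Start from $F_0=E$. Whenever some direction-$i$ column $L$ satisfies $\lambda\leq|F_s\cap L|<n$, set $F_{s+1}=F_s\cup L$. The result is independent of the order and equals the minimal closed superset, because any closed set containing $F_s$ must also contain $L$.

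The base case $u=1$ is immediate. If $|E|<\lambda$ then $F=E$; otherwise $F=[n]$, and $n\leq (n/\lambda)|E|$. Both cases are within the claimed factor $3n/\lambda$.

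For the inductive step, single out direction $u$. Let $Z\subseteq[n]^{u-1}$ be the set of $j$ such that the direction-$u$ column above $j$ is entirely contained in $F$. Every other direction-$u$ column meets $F$ in fewer than $\lambda$ points, by closedness. This splits $F$ into two parts:
\begin{equation*}
  |F| \leq n|Z| + |F\setminus (Z\times[n])|.
\end{equation*}
\begin{enumerate}
\item The first term I would bound by showing that $Z$ lies in the $(u-1)$-dimensional $\lambda'$-closure of a suitable projection of $E$, with a slightly smaller threshold $\lambda'$ (say $\lambda'\approx\lambda/2$). The reasoning: a full direction-$u$ column either already contained $\geq\lambda$ points of $E$, or was filled through closure steps in other directions, and those steps project to closure steps in $[n]^{u-1}$. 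The induction hypothesis then gives $|Z|\lesssim((2^{u-1}+1)n/\lambda')^{u-1}\cdot|E|/\lambda$.
\item For the second term, take each hyperplane slice $x_u=j$ and intersect it with the complement of $Z\times[n]$. I would apply the $(u-1)$-dimensional bound slice by slice. Points there were produced by closure steps inside the slice, seeded either by $E$ or by the full $u$-columns over $Z$; the latter contribute at most $|Z|$ per slice, so at most $n|Z|$ in total.
\end{enumerate}
Combining the two terms should give a recursion of the form $c_u\leq (2^u+1)\frac{n}{\lambda}\,c_{u-1}$. The $2^u$ factor pays for halving the threshold at each level and for the seeding contributions. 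Unrolling the recursion would yield the stated $\bigl((2^u+1)n/\lambda\bigr)^u$.

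The main obstacle is the first step: making rigorous that full direction-$u$ columns are "caused" by a lower-dimensional closure of $E$. A column can become full only after points are added in other directions, which in turn depended on earlier full $u$-columns, so the dependence is circular. My first attempt would be a potential/charging argument. Order the closure steps, and charge each newly completed column of $n$ points to the $\geq\lambda$ points it already contained, bounding the total charge multiplicity per point by $2^u$ via the subsets of directions along which that point was generated. If that fails, I would instead follow the original argument of \cite{kalachev_maximally_2025}, which I would cite for the precise constant.
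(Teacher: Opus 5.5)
The paper does not prove \Cref{lem:closbound}; it imports the statement from \cite{kalachev_maximally_2025}. Your fallback of citing that work therefore matches the paper. Your inductive sketch, however, has a genuine gap at each of its two steps.

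\textbf{Step 1.} The estimate you want, $|Z|\lesssim c_{u-1}|E|/\lambda$, can only come from the induction hypothesis if the seed set in $[n]^{u-1}$ has size $O(|E|/\lambda)$. So it must be something like the heavy projection $P=\{j:|E\cap\ell_j|\ge\lambda'\}$, where $\ell_j$ is the direction-$u$ column over $j$. But $Z\subseteq[P]_{\lambda'}$ is false. Take $u=2$ and $n\ge\lambda^2$. Let $E$ consist of $\lambda$ points on each direction-$1$ line $\{x_2=k\}$, $k=1,\dots,\lambda-1$, all with distinct first coordinates, plus one point $(j_0,\lambda)$ with a fresh first coordinate $j_0$. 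Those $\lambda-1$ lines fill; then $\ell_{j_0}$ holds exactly $\lambda$ points and fills; nothing else fills. Hence $Z=\{j_0\}$, while every column meets $E$ in at most one point, so $P=\emptyset$ for $\lambda'>1$.

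What your reasoning actually gives (closure steps in directions $i<u$ project to closure steps in $[n]^{u-1}$) is $Z\subseteq[\pi(E)]_\lambda$, where $\pi(E)$ is the full projection. Its size can be $|E|$, so you lose exactly the factor $1/\lambda$ you need.

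Your charging remedy is not carried out, and the obstacle is real. Chains of completions, in which each line needs a point produced by the previous one, can have length $\Theta(\lambda)$; an alternating row/column staircase in $[n]^2$ is an example. Nothing in the sketch justifies a per-point charge multiplicity of $2^u$.

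\textbf{Step 2.} Even granting step 1, applying the $(u-1)$-dimensional bound to the slice seeds $E_j\cup Z$ gives
\[
|F\setminus(Z\times[n])|\le c_{u-1}(|E|+n|Z|).
\]
The full direction-$u$ columns are thus counted once as $n|Z|$, and again, multiplied by $c_{u-1}$, as fresh seeds in every slice. Combined with step 1, the resulting constant only satisfies $c_u\approx c_{u-1}^2\,n/\lambda$, which unrolls to order $(n/\lambda)^{2^u-1}$ rather than $(n/\lambda)^u$. The recursion $c_u\le(2^u+1)\frac{n}{\lambda}c_{u-1}$ does not follow from your two pieces.

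A correct argument must control the cascade globally instead of re-seeding each slice with all of $Z$. As written, the proposal establishes neither of its two estimates, so it does not prove the lemma.
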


\begin{proof}[Proof of \Cref{lem:errcorr}]
  The desired circuit $\cR$ simply loops through $i=1,\dots,u$, and for each such $i$ runs the decoder $\cR^i_{\Dec}$ in parallel on every direction-$i$ column in $N$. As there are $n^{u-1}$ such direction-$i$ columns, $\cR$ uses space $|N'|=n^{u-1}\cdot|N_{\Dec}|$, and uses time $T=u\cdot T_{\Dec}$.

  We now show that the gadget in \Cref{eq:ecgad} forms a fault-tolerant gadget for $\bar{O}=I_K$. For $c\in C$, then the restriction to every direction-$i$ column lies in $C^i$, and $\cR^i_{\Dec}$ preserves codewords of $C^i$. Therefore $\cR(c)=c$, so \Cref{it:ftnoerr} in \Cref{def:faulttol} holds. It remains to prove \Cref{it:fterr} in \Cref{def:faulttol}.
  
  For this purpose, we first introduce the following notation. For $\lambda>0$, for $E,F\subseteq N=[n]^u$, and for $i\in\{0,\dots,u\}$, we define $S_i^\lambda[F](E)\in\bF_q^N$ inductively as follows. First, we let $S_0^\lambda[F](E)=E$. Then for $i\in[u]$, we let $S_i^\lambda[F](E)$ be the union of all direction-$i$ columns $L$ for which either $|S_{i-1}^\lambda[F](E)\cap L|\geq\lambda$ or $|F\cap L|>0$. In particular, it follows that $S_i^\lambda[F](E)$ contains all or none of each direction-$i$ column. We let $S_i^\lambda(E)=S_i^\lambda[\emptyset](E)$.

  Now let $x\in\bF_q^K$ and let $y$ be a $\cE_{\mathrm{in}}$-deviation of $\Enc(x)$, so that we can write $y=c+e$ where $c=\Enc(x)$ and $e\in\bF_q^N$ has $|e|<\lambda_{\mathrm{in}}$. Let $\cF=(F_1,\dots,F_T)$ be a $\cE_{\mathrm{run}}$-avoiding fault for the circuit $\cR$ defined above. Our goal is to show that $|\cR[\cF](y)-c|<\lambda_{\mathrm{out}}$.

  For $i\in\{0,\dots,u\}$, let $\cR_{\leq i\cdot T_{\Dec}}[\cF](y)\in\bF_q^N$ denote the state of dits in $N$ after running the first $i\cdot T_{\Dec}$ timesteps of $\cR[\cF](y)$, i.e.~after running the decoders $\cR^{i'}_{\Dec}$ in the first $i$ directions $i'=1,\dots,i$ on input $y=c+e$. Also define $F=\bigcup_{t\in[T]}\supp(\cF_t)\subseteq N$ to be the set of all dits that lie in the support of the fault $\cF$, across all timesteps. Therefore
  \begin{equation}
    \label{eq:ecFbound}
    |F| \leq |\cF| < T\cdot\lambda_{\mathrm{run}} = u\cdot T_{\Dec}\cdot\lambda_{\mathrm{run}}.
  \end{equation}
  In the following claim, we show that $S_I^\lambda[F](\supp(e))$ contains the set of errors remaining after these first $i$ rounds of decoding.

  \begin{claim}
    \label{claim:RFmcbound}
    For $i\in\{0,\dots,u\}$, we have $\supp(\cR_{\leq i\cdot T_{\Dec}}[\cF](y)-c)\subseteq S_i^{\lambda_{\Dec}}[F](\supp(e))$.
  \end{claim}
  \begin{proof}
    We prove the claim by induction on $i$. For the base case, when $i=0$ then by definition $\cR_{\leq 0}[\cF](y)-c=e$ and $S_0^{\lambda_{\Dec}}[F](\supp(e))=\supp(e)$. For the inductive step, for $i\in[u]$, assume the claim holds for $i-1$. For a direction-$i$ column $L$, if the decoder $\cR^i_{\Dec}[\cF]$ (with $\cF$ restricted to dits $L$ and timesteps $(i-1)T_{\Dec}+1,\dots,iT_{\Dec}$) applied to dits $\cR_{\leq (i-1)\cdot T_{\Dec}}[\cF](y)|_L$ fails to output $c|_L$, then either $|(\cR_{\leq (i-1)\cdot T_{\Dec}}[\cF](y)-c)|_L|\geq\lambda_{\Dec}$, or else $\cF$ has nonempty support inside $L$. If $|(\cR_{\leq (i-1)\cdot T_{\Dec}}[\cF](y)-c)|_L|\geq\lambda_{\Dec}$, then by the inductive hypothesis $|S_{i-1}^{\lambda_{\Dec}}[F](\supp(e))\cap L|\geq\lambda_{\Dec}$, and then by the definition of $S_i^{\lambda_{\Dec}}$ we must have $L\subseteq S_i^{\lambda_{\Dec}}[F](\supp(e))$. If $\cF$ has nonempty support inside $L$, then $|F\cap L|>0$, so again by the definition of $S_i^{\lambda_{\Dec}}$ we must have $L\subseteq S_i^{\lambda_{\Dec}}[F](\supp(e))$. Thus for every direction-$i$ column $L$ that has nonempty intersection with $\supp(\cR_{\leq i\cdot T_{\Dec}}[\cF](y)-c)$, we have $L\subseteq S_i^{\lambda_{\Dec}}[F](\supp(e))$, completing the inductive step.
  \end{proof}

  Define $F'\subseteq N=[n]^u$ to be the union over all $i\in[u]$ of all direction-$i$ columns that have a nonempty intersection with $F$. Because every point in $F$ lies in $u$ direction-$i$ columns, each containing $n$ elements, we have
  \begin{equation}
    \label{eq:Fpbound}
    |F'| \leq un\cdot|F|.
  \end{equation}
  The following claim shows that any point lying in $S_i^\lambda[F](E)$ but not in $S_i^{\lambda/2}(E)=S_i^{\lambda/2}[\emptyset](E)$ must lie in the $\lambda/2$-closure $[F']_{\lambda/2}$ (see \Cref{def:setclos}) of $F'$.

  \begin{claim}
    \label{claim:SFbound}
    For $\lambda>0$, $E\subseteq[n]^u$, and $i\in\{1,\dots,u\}$, we have $S_i^\lambda[F](E)\subseteq S_i^{\lambda/2}(E)\cup[F']_{\lambda/2}$
  \end{claim}
  \begin{proof}
    We prove the claim by induction on $i$. For the base case, $S_0^\lambda[F](E)=E=S_i^{\lambda/2}(E)$. For the inductive step, for some $i\in[u]$ assume the claim holds for $i-1$. For every direction-$i$ column $L\subseteq S_i^\lambda{\lambda/2}(E)$, then by definition either $|S_{i-1}^\lambda[F](E)\cap L|\geq\lambda$ or $|F\cap L|>0$. If $|F\cap L|>0$, then by definition $L\subseteq F'\subseteq[F']_{\lambda/2}$. Meanwhile, if $|S_{i-1}^\lambda[F](E)\cap L|\geq\lambda$, then by the inductive hypothesis either $|S_{i-1}^{\lambda/2}(E)\cap L|\geq\lambda/2$ or $|[F']_{\lambda/2}\cap L|\geq\lambda/2$. If $|S_{i-1}^{\lambda/2}(E)\cap L|\geq\lambda/2$ then by definition $L\subseteq S_i^{\lambda/2}(E)$, while if $|[F']_{\lambda/2}\cap L|\geq\lambda/2$ then $L\subseteq[F']_{\lambda/2}$ by \Cref{def:setclos}. Thus in all cases, if the direction-$i$ column $L\subseteq S_i^\lambda{\lambda/2}(E)$, then $L\subseteq S_i^{\lambda/2}(E)\cup[F']_{\lambda/2}$, completing the inductive step.
  \end{proof}

  \Cref{claim:RFmcbound}, \Cref{claim:SFbound}, \Cref{lem:closbound}, and \Cref{eq:Fpbound} together imply that
  \begin{align*}
    |\cR[\cF](y)-c|
    &= |\cR_{\leq u\cdot T_{\Dec}}[\cF](y)-c| \\
    &\leq |S_u^{\lambda_{\Dec}}[F](\supp(e))| \\
    &\leq |S_u^{\lambda_{\Dec}/2}(\supp(e))|+[F']_{\lambda_{\Dec}/2} \\
    &\leq |S_u^{\lambda_{\Dec}/2}(\supp(e))|+\left(\frac{2^u+1}{\lambda_{\Dec}/2n}\right)^u\cdot un\cdot|F| \\
    &< |S_u^{\lambda_{\Dec}/2}(\supp(e))|+\left(\frac{2^u+1}{\lambda_{\Dec}/2n}\right)^u\cdot u^2n\cdot T_{\Dec}\cdot\lambda_{\mathrm{run}} \\
    &= |S_u^{\lambda_{\Dec}/2}(\supp(e))|+\lambda_{\mathrm{out}}.
  \end{align*}

  The following claim shows that the $|S_u^{\lambda_{\Dec}/2}(\supp(e))|$ term above vanishes as long as $|\supp(e)|=|e|<(\lambda_{\Dec}/2)^u$, thereby completing the proof of fault-tolerance.

  \begin{claim}
    \label{claim:decnonoise}
    For $\lambda>0$ and for $E\subseteq[n]^u$ of size $|E|<\lambda^u$, we have $S_u^\lambda(E)=\emptyset$.
  \end{claim}
  \begin{proof}
    We show the result by induction on $u$. For the base case, when $u=1$, if $|E|=|S_0^\lambda(E)|<\lambda$ then by definition $S_u^\lambda(E)=\emptyset$. For the inductive step, for some $u\in\bN$, assume the claim holds when $u$ is replaced by $u-1$. Now for $E\subseteq[n]^u$, if $|E|<\lambda^u$ then there are $<\lambda^{u-1}$ direction-$1$ columns $L$ for which $|E\cap L|=|S_0^\lambda(E)\cap L|\geq\lambda$. Thus $S_1^\lambda(E)$ contains $<\lambda^{u-1}$ direction-$1$ columns, so for every $j_1\in[n]$, we have $|S_1^\lambda(E)\cap(\{j_1\}\times[n]^{u-1})|<\lambda^{u-1}$. Now by definition, the restriction of $S_1^\lambda(E),\dots,S_u^\lambda(E)$ to points in $(\{j_1\}\times[n]^{u-1})$ equals the exact sequence $\bar{S}_0^\lambda(\bar{E}),\dots,\bar{S}_{u-1}^\lambda(\bar{E})$ obtained by defining $\bar{S}$ like we defined $S$ but with $u$ replaced by $u-1$, and setting $\bar{E}=S_1^\lambda(E)\cap(\{j_1\}\times[n]^{u-1})$. Thus by the inductive hypothesis we have $S_u^\lambda(E)\cap(\{j_1\}\times[n]^{u-1})=\bar{S}_{u-1}^\lambda(\bar{E})=\emptyset$ for every $j_1\in[u]$, and hence $S_u^\lambda(E)=\emptyset$, completing the inductive step.
  \end{proof}
\end{proof}

\subsection{Code Switching Gadgets}
In this section, we present code-switching gadgets for tensor codes. Specifically, for $i\in[u]$, let $C^i$ be a $k_i$-dimensional code. In \Cref{lem:switchone} below, given a codeword of the tensor code $C^{[u]}=\bigotimes_{i\in[u]}C^i$ that encodes some message $x$, we show how to unencode one of the codes $C^i$ to switch to the code $C^{[u]\setminus\{i\}}(\bigotimes_{i'\in[u]\setminus\{i\}}C^{i'})^{\sqcup k_i}$ while preserving the underlying message $x$. Similarly, given a codeword of $C^{[u]\setminus\{i\}}$, we can encode $C^i$ to switch back to $C^{[u]}$ while preserving the message. Then in \Cref{lem:switchall}, we show how to iteratively apply these gadgets to each $i=1,\dots,u$ along with repeated applications of an error-detection (\Cref{lem:errdet}) or error-correction (\Cref{lem:errcorr}) gadget in order to switch between two tensor codes in which all $u$ factors $C^i$ may differ.

\begin{lemma}
  \label{lem:switchone}
  For $u\in\bN$, for $i\in[u]$ let $C^i$ be a $[n_i,k_i]_q$ code with encoding map $\Enc^i$. For $I\subseteq[u]$, let $N^I=\prod_{i\in I}[n_i]$, $K^I=\prod_{i\in I}[k_i]$, and let $C^I=\bigotimes_{i\in I}C^i$ be the $[|N^I|,|K^I|]_q$ tensor product code with encoding map $\Enc^I=\bigotimes_{i\in I}\Enc^i$. Fix an arbitrary $\bar{i}\in[u]$. Then the following hold:
  \begin{enumerate}
  \item\label{it:sodown} (Downwards switching) For $\lambda_{\mathrm{in}},\lambda_{\mathrm{run}}\geq 0$, let
    \begin{equation*}
      \lambda_{\mathrm{out}} = n_{\bar{i}}\cdot\lambda_{\mathrm{in}}+3n_{\bar{i}}^3\cdot\lambda_{\mathrm{run}}
    \end{equation*}
    and define decorated codes
    \begin{align*}
      D_{\mathrm{in}} &= \left(C^{[u]},\; \Enc^{[u]},\; \cE_{\mathrm{in}}=2^{N^{[u]}}|_{\geq\lambda_{\mathrm{in}}}\right) \\
      D_{\mathrm{out}} &= \left(C^{[u]\setminus\{\bar{i}\}},\; \Enc^{[u]\setminus\{\bar{i}\}},\; \cE_{\mathrm{out}}=2^{N^{[u]\setminus\{\bar{i}\}}}|_{\geq\lambda_{\mathrm{out}}}\right).
    \end{align*}
    Then there exists a circuit $\cR$ using gate set $\{\gInit,\gTerm,\gCX^*\}$, space $|N'|\leq 2|N^{[u]}|$, and time $T\leq n_{\bar{i}}^2+2$ such that $(\cR,\; \cE_{\mathrm{run}}=2^{N'}|_{\geq\lambda_{\mathrm{run}}}^{\sqcup T},\; D_{\mathrm{in}},\; D_{\mathrm{out}}^{\sqcup k_{\bar{i}}})$ forms a fault-tolerant gadget for the $K^{[u]}$-dit identity channel $\bar{O}=I_{K^{[u]}}:\bF_q^{K^{[u]}}\rightarrow\bF_q^{K^{[u]}}$.
  \item\label{it:soup} (Upwards switching) For $\lambda_{\mathrm{in}},\lambda_{\mathrm{run}}\geq 0$, let
    \begin{equation*}
      \lambda_{\mathrm{out}} = n_{\bar{i}}^2\cdot\lambda_{\mathrm{in}}+3n_{\bar{i}}^3\cdot\lambda_{\mathrm{run}}
    \end{equation*}
    and define decorated codes
    \begin{align*}
      D_{\mathrm{in}} &= \left(C^{[u]\setminus\{\bar{i}\}},\; \Enc^{[u]\setminus\{\bar{i}\}},\; \cE_{\mathrm{in}}=2^{N^{[u]\setminus\{\bar{i}\}}}|_{\geq\lambda_{\mathrm{in}}}\right) \\
      D_{\mathrm{out}} &= \left(C^{[u]},\; \Enc^{[u]},\; \cE_{\mathrm{out}}=2^{N^{[u]}}|_{\geq\lambda_{\mathrm{out}}}\right).
    \end{align*}
    Then there exists a circuit $\cR$ using gate set $\{\gInit,\gTerm,\gCX^*\}$, space $|N'|\leq 2|N^{[u]}|$, and time $T\leq n_{\bar{i}}^2+2$ such that $(\cR,\; \cE_{\mathrm{run}}=2^{N'}|_{\geq\lambda_{\mathrm{run}}}^{\sqcup T},\; D_{\mathrm{in}}^{\sqcup k_{\bar{i}}},\; D_{\mathrm{out}}^{\sqcup k_{\bar{i}}})$ forms a fault-tolerant gadget for the $K^{[u]}$-dit identity channel $\bar{O}=I_{K^{[u]}}:\bF_q^{K^{[u]}}\rightarrow\bF_q^{K^{[u]}}$.
  \end{enumerate}
\end{lemma}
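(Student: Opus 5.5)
The plan is to build both circuits as linear maps applied column-by-column along direction $\bar{i}$, using \Cref{def:matcirc}, and then bound error propagation by counting how many dits each fault can reach.

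\emph{Construction.} By \Cref{fact:systematic}, assume $\Enc^{\bar{i}}$ is systematic with information set $K_{\bar{i}}\subseteq[n_{\bar{i}}]$. Choose a matrix $A\in\bF_q^{k_{\bar{i}}\times n_{\bar{i}}}$ with $A\Enc^{\bar{i}}(m)=m$, for instance the restriction to $K_{\bar{i}}$.

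For downwards switching, the circuit proceeds in three stages. First, it initializes ancillas $[k_{\bar{i}}]\times N^{[u]\setminus\{\bar{i}\}}$ by $\gInit$. Second, it applies $\cR_A$ in parallel on every direction-$\bar{i}$ column, mapping $(y,0)\mapsto(y,(I\otimes A\otimes I)y)$; this takes time $\leq n_{\bar{i}}^2$. Third, it terminates the original dits with $\gTerm$.

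Since $\Enc^{[u]}=\Enc^{\bar{i}}\otimes\Enc^{[u]\setminus\{\bar{i}\}}$, applying $A$ along direction $\bar{i}$ to $\Enc^{[u]}(x)$ yields the $k_{\bar{i}}$ slices $\Enc^{[u]\setminus\{\bar{i}\}}(x_{j,\cdot})$. This gives \Cref{it:ftnoerr}.

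Upwards switching is symmetric. The circuit initializes $N^{[u]}$ and applies $\cR_G$ for the generator matrix $G\in\bF_q^{n_{\bar{i}}\times k_{\bar{i}}}$ of $\Enc^{\bar{i}}$ along each direction-$\bar{i}$ column. It then terminates the $k_{\bar{i}}$ input slices. Space is at most $2|N^{[u]}|$ and time is at most $n_{\bar{i}}^2+2$ in both cases.

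\emph{Error analysis.} By linearity and \Cref{lem:fdadd} together with \Cref{fact:dettol}, it suffices to consider additive faults. Write the output as the noiseless output plus two contributions: the propagated input error and the propagated fault errors.

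\emph{Input errors.} An input error supported on a set $E$ spreads only within the direction-$\bar{i}$ columns meeting $E$.
\begin{itemize}
\item Downwards, each such column contributes at most $k_{\bar{i}}\leq n_{\bar{i}}$ output dits, giving weight $\leq n_{\bar{i}}|E|<n_{\bar{i}}\lambda_{\mathrm{in}}$ across the $k_{\bar{i}}$ output blocks combined.
\item Upwards, the input is $k_{\bar{i}}$ blocks. A bad set for $D_{\mathrm{in}}^{\sqcup k_{\bar{i}}}$ means some block has error weight $\geq\lambda_{\mathrm{in}}$, so each block has weight $<\lambda_{\mathrm{in}}$ and the total is $<k_{\bar{i}}\lambda_{\mathrm{in}}$. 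Each erroneous dit spreads to one column of $n_{\bar{i}}$ dits, for total $<n_{\bar{i}}^2\lambda_{\mathrm{in}}$.
\end{itemize}

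\emph{Fault errors.} Each dit participates in at most $n_{\bar{i}}$ $\gCX$ gates within its column, so one fault affects at most one column's worth of dits, i.e.\ at most about $n_{\bar{i}}$ outputs. There are $T\leq n_{\bar{i}}^2+2$ timesteps, each with fewer than $\lambda_{\mathrm{run}}$ faults. This gives a total of $<(n_{\bar{i}}^2+2)n_{\bar{i}}\lambda_{\mathrm{run}}\leq3n_{\bar{i}}^3\lambda_{\mathrm{run}}$.

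Summing the two contributions bounds the output error weight by $\lambda_{\mathrm{out}}$. This weight is measured against the output decorated code: a disjoint union $D_{\mathrm{out}}^{\sqcup k_{\bar{i}}}$ in the downwards case, where it suffices that each block, indeed the total, is below $\lambda_{\mathrm{out}}$.

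\emph{Main obstacle.} The step needing the most care is tracking fault propagation precisely in the model of \Cref{def:matcirc}. A fault on a control dit midway through $\cR_A$ only affects later targets in the same column, and a fault on a target affects only itself. So the per-fault blowup is at most $\max(n_{\bar{i}},1)$, and the $\gInit$ and $\gTerm$ layers add only the "$+2$" in the timestep count. One must also match the $\cE$-avoiding conventions, with strict versus non-strict inequalities, for the disjoint-union bad sets.
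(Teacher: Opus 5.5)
Your proposal is correct and matches the paper's proof. Both build the circuit the same way: a column-wise application via \Cref{def:matcirc} of a left inverse of $\Enc^{\bar{i}}$ (downwards) or of $\Enc^{\bar{i}}$ itself (upwards), between $\gInit$ and $\gTerm$ layers. Both get fault-tolerance by counting the direction-$\bar{i}$ columns touched by input or fault errors and multiplying by the per-column output size.

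One small correction: you should not assume $\Enc^{\bar{i}}$ is systematic, since the statement fixes it. Any matrix $A$ with $A\circ\Enc^{\bar{i}}=I_{k_{\bar{i}}}$ exists by injectivity and suffices, exactly like the paper's $U$.
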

\begin{proof}
  \begin{enumerate}
  \item (Downwards switching) Fix some matrix $U\in\bF_q^{k_{\bar{i}}\times n_{\bar{i}}}$ such that $U\circ\Enc^{\bar{i}}=I_{k_{\bar{i}}}$. Our desired circuit $\cR$ first uses $\gInit$ gates to initialize an ancilla block of dits labeled by $(N^{[u]\setminus\{\bar{i}\}})^{\sqcup k_{\bar{i}}}=N^{[u]\setminus\{\bar{i}\}}\times[k_{\bar{i}}]$. Next, $\cR$ applies the circuit $\cR_{U}$ from \Cref{def:matcirc} in parallel to each direction-$\bar{i}$ column of the ancilla block $N^{[u]\setminus\{\bar{i}\}}\times[k_{\bar{i}}]$ and the input block $N^{[u]}$. Specifically, each such direction-$\bar{i}$ column is specified by some $j_{-\bar{i}}\in N^{[u]\setminus\{\bar{i}\}}$, and consists of the $k_{\bar{i}}$ dits labeled $\{j_{-\bar{i}}\}\times[k_{\bar{i}}]$ in the ancilla block, along with the $n_{\bar{i}}$ dits labeled $\{j_{-\bar{i}}\}\times[n_{\bar{i}}]$ in the input block (where for the input block we push the $[n_{\bar{i}}]$ factor to the $i$th of the $u$ components). Finally, $\cR$ applies $\gTerm$ gates to the input block of dits $N^{[u]}$, and returns the ancilla block $(N^{[u]\setminus\{\bar{i}\}})^{\sqcup k_{\bar{i}}}$.

    Therefore $\cR$ uses space $|N'|\leq|N^{[u]}|+k_{\bar{i}}|N^{[u]\setminus\{\bar{i}\}}|\leq 2|N^{[u]}|$ and time $T=1+(n_{\bar{i}}-k_{\bar{i}})n_{\bar{i}}+1\leq n_{\bar{i}}^2+2$.

    For $x\in\bF_q^{K^{[u]}}$, in the absence of an input error and a fault, then
    \begin{align*}
      \cR\circ\Enc^{[u]}(x)
      &= (I^{\otimes\bar{i}-1}\otimes U\otimes I^{\otimes u-\bar{i}})\circ\left(\bigotimes_{i\in[u]}\Enc^i\right)(x) \\
      &= \left(\bigotimes_{i=1}^{\bar{i}}\Enc^i\otimes I_{k_{\bar{i}}}\otimes\bigotimes_{i=\bar{i}+1}^u\Enc^i\right)(x) \\
      &= (\Enc^{[u]\setminus\{\bar{i}\}})^{\sqcup k_{\bar{i}}}(x).
    \end{align*}
    If instead $y=\Enc^{[u]}(x)+f_0$ with $|f_0|<\lambda_{\mathrm{in}}$, and $\cF$ is an additive $\cE_{\mathrm{run}}$-avoiding fault for $\cF$, then upon running $\cR[\cF](y)$, at most $|f_0|+|\cF|<\lambda_{\mathrm{in}}+T\cdot\lambda_{\mathrm{run}}$ direction-$i$ columns will experience an error. Here an ``error'' means a dit at some point in time whose value differs from that of the noiseless execution of $\cR(\Enc^{[u]}(x))$ described above; we crucually use that every gate in $\cR$ acts on dits entirely contained inside some direction-$i$ column. Thus
    \begin{align*}
      |\cR[\cF](y)-(\Enc^{[u]\setminus\{\bar{i}\}})^{\sqcup k_{\bar{i}}}(x)|
      &= |\cR[\cF](y)-\cR(\Enc^{[u]}(x))| \\
      &< n_{\bar{i}}\cdot(\lambda_{\mathrm{in}}+T\cdot\lambda_{\mathrm{run}}) \\
      &\leq \lambda_{\mathrm{out}},
    \end{align*}
    where the final inequality above holds by the definition of $\lambda_{\mathrm{out}}$ becuase $T\leq n_{\bar{i}}^2+2\leq 3n_{\bar{i}}^2$.
  \item (Upwards switching) The proof is analogous as in the downwards switching case above, except now we apply the encoding map $\Enc^{\bar{i}}$ instead of the unencoding map $U$ within each direction-$i$ column. Specifically, $\cR$ first uses $\gInit$ gates to initialize an ancilla block of dits labeled by $N^{[u]}$. Next, $\cR$ applies the circuit $\cR_{\Enc^{\bar{i}}}$ from \Cref{def:matcirc} in parallel to each direction-$\bar{i}$ column of the ancilla block $N^{[u]}$ and the input block $(N^{[u]\setminus\{\bar{i}\}})^{\sqcup k_{\bar{i}}}$. Finally, $\cR$ applies $\gTerm$ gates to the input block, and returns the ancilla block.

    Therefore $\cR$ uses space $|N'|\leq k_{\bar{i}}|N^{[u]\setminus\{\bar{i}\}}|+|N^{[u]}|\leq 2|N^{[u]}|$ and time $T=1+n_{\bar{i}}(n_{\bar{i}}-k_{\bar{i}})+1\leq n_{\bar{i}}^2+2$.

    The proof of correctness and fault-tolerance is exactly analogous as in the downwards switching case above, so we omit the details to avoid redundancy. In brief, in the absence of errors, then $\cR$ simply applies $\Enc^{\bar{i}}$ to each direction-$i$ column, hence mapping $(\Enc^{[u]\setminus\{\bar{i}\}})^{\sqcup k_{\bar{i}}}(x)$ to $\Enc^{[u]}(x)$. In the presence of up to $\lambda_{\mathrm{in}}$ input errors on each of the $k_{\bar{i}}\leq n_{\bar{i}}$ input code blocks of $D_{\mathrm{in}}$, along with up to $\lambda_{\mathrm{run}}$ fault errors in each of the $T$ timesteps, then $<n_{\bar{i}}\cdot\lambda_{\mathrm{in}}+T\cdot\lambda_{\mathrm{run}}$ direction-$i$ columns will experience an error, so the output error has weight $<n_{\bar{i}}(n_{\bar{i}}\cdot\lambda_{\mathrm{in}}+T\cdot\lambda_{\mathrm{run}})\leq\lambda_{\mathrm{out}}$.
  \end{enumerate}
\end{proof}

In \Cref{lem:switchall} below, we repeatedly apply \Cref{lem:switchone} to swap out all $u$ codes $C^i$ in the tensor code $\bigotimes_{i\in[u]}C^i$.


\begin{lemma}
  \label{lem:switchall}
  For $u,n,d\in\bN$, for $i\in[u]$ and $\alpha\in\{\mathrm{in},\mathrm{out}\}$ let $C_\alpha^i$ be a $[n_i=n,\; k_{\alpha,i},\; d_{\alpha,i}\geq d]_q$ code with encoding map $\Enc_\alpha^i$. Let $N=\prod_{i\in [u]}[n_i]=[n]^u$, $K_\alpha=\prod_{i\in[u]}[k_{\alpha,i}]$, and let $C_\alpha=\bigotimes_{i\in[u]}C_\alpha^i$ be the tensor code with encoding map $\Enc_\alpha=\bigotimes_{i\in[u]}C_\alpha^i$. Then the following hold:
  \begin{enumerate}
  \item\label{it:sadet} (Fault-detecting switching) Let $\rho(u,n,d)$ be the soundness parameter defined in \Cref{lem:loctest}, and let $\lambda_{\mathrm{run}},\lambda_{\mathrm{in}},\lambda_{\mathrm{out}},\lambda_{\mathrm{det}}>0$ satisfy
    \begin{align}
      \label{eq:sadlams}
      \begin{split}
        \lambda_{\mathrm{in}} &= \frac{d^u}{2} \\
        \lambda_{\mathrm{run}} &\leq \frac{d^u\cdot\rho(u,n,d)}{32u^2n^7} \\
        \lambda_{\mathrm{out}} &= \frac{8u^2n^4}{\rho(u,n,d)}\cdot\lambda_{\mathrm{run}} \\
        \lambda_{\mathrm{det}} &= u^2n^4\cdot\lambda_{\mathrm{run}}.
      \end{split}
    \end{align}
    For $\alpha\in\{\mathrm{in},\mathrm{out}\}$, define the decorated code $D_\alpha=(C_\alpha,\; \Enc_\alpha,\; \cE_\alpha=2^{N}|_{\geq\lambda_\alpha})$. Then there exists a circuit $\cR$ using gate set $\cG=\{\gInit,\gTerm,\gCX^*\}$, space $|N'|\leq(u+1)n^u$, and time $T\leq 16u^2n^2$ along with sets $E_{\mathrm{det}}\subseteq N'$, $T_{\mathrm{det}}\subseteq[T]$ such that
    \begin{equation}
      \label{eq:sadgad}
      (\cR,\; \cE_{\mathrm{run}}=2^{N'}|_{\geq\lambda_{\mathrm{run}}}^{\sqcup T},\; D_{\mathrm{in}},\; D_{\mathrm{out}},\; \cE_{\mathrm{det}}=2^{E_{\mathrm{det}}}|_{\geq\lambda_{\mathrm{det}}}^{\sqcup T_{\mathrm{det}}})
    \end{equation}
    forms a mending fault-detecting gadget for the function $\bar{O}:\bF_q^{K_{\mathrm{in}}}\rightarrow\bF_q^{K_{\mathrm{out}}}$ given by
    \begin{equation}
      \label{eq:safun}
      \bar{O}(x) = (x|_{K_{\mathrm{in}}\cap K_{\mathrm{out}}},\; 0^{K_{\mathrm{out}}\setminus K_{\mathrm{in}}})
    \end{equation}
    that preserves all input components in $K_{\mathrm{in}}\cap K_{\mathrm{out}}$, and outputs $0$ in all output components outside of $K_{\mathrm{in}}$.
  \item\label{it:sacorr} (Fault-tolerant switching) Assume that each $C_\alpha^i$ has a radius-$\lambda_{\Dec}\geq 1$ decoder $\cR_{\Dec}^i$ using space $N_{\Dec}$, time $T_{\Dec}$, and gate set $\cG_{\Dec}$. Let $\lambda_{\mathrm{run}},\lambda_{\mathrm{in}},\lambda_{\mathrm{out}}>0$ satisfy
    \begin{align}
      \label{eq:saclams}
      \begin{split}
        \lambda_{\mathrm{in}} &= \left(\frac{\lambda_{\Dec}}{2}\right)^u \\
        \lambda_{\mathrm{run}} &\leq \frac{1}{n^5T_{\Dec}}\cdot\left(\frac{\lambda_{\Dec}^2/n}{2^{u+8}}\right)^u \\
        \lambda_{\mathrm{out}} &= \left(\frac{2^{u+4}}{\lambda_{\Dec}/n}\right)^u\cdot nT_{\Dec}\cdot\lambda_{\mathrm{run}}.
      \end{split}
    \end{align}
    For $\alpha\in\{\mathrm{in},\mathrm{out}\}$, define the decorated code $D_\alpha=(C_\alpha,\; \Enc_\alpha,\; \cE_\alpha=2^{N}|_{\geq\lambda_\alpha})$. Then there exists a circuit $\cR$ using gate set $\cG=\{\gInit,\gTerm,\gCX^*\}\cup\cG_{\Dec}$, space $|N'|\leq\max\{n^{u-1}\cdot|N_{\Dec}|,\; 2n^u\}$, and time $T\leq 2u^2T_{\Dec}+8un^2$ such that
    \begin{equation}
      \label{eq:sacgad}
      (\cR,\; \cE_{\mathrm{run}}=2^{N'}|_{\geq\lambda_{\mathrm{run}}}^{\sqcup T},\; D_{\mathrm{in}},\; D_{\mathrm{out}})
    \end{equation}
    forms a fault-tolerant gadget for the function $\bar{O}:\bF_q^{K_{\mathrm{in}}}\rightarrow\bF_q^{K_{\mathrm{out}}}$ in \Cref{eq:safun}.
  \end{enumerate}
\end{lemma}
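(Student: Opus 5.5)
The plan is to build $\cR$ as a sequential composition (\Cref{lem:seqcomp}) of $u$ rounds, one per direction $\bar{i}=1,\dots,u$. The code is switched from $C_{\mathrm{in}}^{[\bar{i}]^c}$-type mixed tensor codes to the output code one factor at a time. For $j\in\{0,\dots,u\}$ let $C_j=\bigotimes_{i\leq j}C^i_{\mathrm{out}}\otimes\bigotimes_{i>j}C^i_{\mathrm{in}}$, so that $C_0=C_{\mathrm{in}}$ and $C_u=C_{\mathrm{out}}$. Every $C_j$ is a tensor product of codes of length $n$ and distance $\geq d$, so both \Cref{lem:errdet} and \Cref{lem:errcorr} apply to it.

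Round $j$ consists of three steps:
\begin{enumerate}
\item an error-detection gadget (\Cref{lem:errdet}), resp.\ an error-correction gadget (\Cref{lem:errcorr}), on $C_{j-1}$;
\item downward switching in direction $j$ (\Cref{lem:switchone}, \Cref{it:sodown}), which yields $k_{\mathrm{in},j}$ copies of $C_{j-1}^{[u]\setminus\{j\}}$;
\item a constant-depth layer acting on these copies, which terminates the copies indexed by $[k_{\mathrm{in},j}]\setminus[k_{\mathrm{out},j}]$ and initializes fresh all-zero copies for $[k_{\mathrm{out},j}]\setminus[k_{\mathrm{in},j}]$, followed by upward switching (\Cref{lem:switchone}, \Cref{it:soup}) with $\Enc^j_{\mathrm{out}}$ into $C_j$.
\end{enumerate}
The init/term layer is trivially fault-tolerant: with thresholds $\lambda$ per block it outputs errors $<\lambda+\lambda_{\mathrm{run}}$, since fresh zero blocks are codewords. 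On messages, unencoding in direction $j$ exposes coordinate $j$ of the message index. Truncating or zero-padding it and re-encoding therefore realizes exactly the map $x\mapsto(x|_{K_{\mathrm{in}}\cap K_{\mathrm{out}}},0)$ after all $u$ rounds, because these coordinatewise operations compose into \Cref{eq:safun}. I would identify the message index sets via the systematic/product structure so that $K_{\mathrm{in}}\cap K_{\mathrm{out}}=\prod_i[\min(k_{\mathrm{in},i},k_{\mathrm{out},i})]$. The composition uses \Cref{lem:seqcomp}, with \Cref{lem:parcomp} for the blockwise operations, and ends with a final detection/correction gadget on $C_u$ to bring the threshold down to $\lambda_{\mathrm{out}}$.

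The parameter bookkeeping goes as follows. In the detecting case, each \Cref{lem:errdet} instance is used with input threshold $d^u/2$ and output $\lambda_{\mathrm{out}}=8u^2n^4\lambda_{\mathrm{run}}/\rho$; the resulting $\lambda_{\mathrm{det}}=\rho\lambda_{\mathrm{out}}/8=u^2n^4\lambda_{\mathrm{run}}$ matches \Cref{eq:sadlams}. After detection, down-switching gives per-block error $<n\lambda_{\mathrm{out}}+3n^3\lambda_{\mathrm{run}}$ and up-switching gives $<n^3\lambda_{\mathrm{out}}+O(n^3)\lambda_{\mathrm{run}}$. This must be at most $d^u/2$, the input threshold of the next detection gadget, which holds by the assumed bound $\lambda_{\mathrm{run}}\leq d^u\rho/(32u^2n^7)$ since $n^3\cdot 8u^2n^4\lambda_{\mathrm{run}}/\rho\leq d^u/4$. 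The detector sets are the syndrome dits of each \Cref{lem:errdet} instance at its time $T-1$; they form $E_{\mathrm{det}}$ and $T_{\mathrm{det}}$. Mending follows because the first gadget is mending (\Cref{lem:errdet}) and \Cref{lem:seqcomp} propagates mending from the first component. For the correcting case, the closure bounds in \Cref{lem:errcorr} combined with the $n^3$ blowup from switching must stay below $(\lambda_{\Dec}/2)^u$, which is what the assumed bound on $\lambda_{\mathrm{run}}$ in \Cref{eq:saclams} should guarantee. Space and time are then summed over the $O(u)$ gadgets, giving $T=O(u\cdot un^2)$ or $O(u^2T_{\Dec}+un^2)$.

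I expect the main obstacle to be the threshold bookkeeping. Two points need care. First, the up-switch multiplies the error by $n^2$, so it must be followed immediately by a detection or correction gadget whose input threshold absorbs that factor. Second, the ancilla/space constants and the exact constants in \Cref{eq:sadlams,eq:saclams} must close: for example, the time $16u^2n^2$ requires each round to cost $O(un^2)$. The correctness of the message map under mixed dimensions is the other delicate point. I would handle it by checking that noiseless execution maps $\Enc_{\mathrm{in}}(x)$ to $\Enc_{\mathrm{out}}(\bar{O}(x))$ via the tensor identity used in the proof of \Cref{lem:switchone}.
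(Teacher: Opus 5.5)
Your proposal is correct and follows the paper's proof. The paper likewise loops over $i=1,\dots,u$, and each round runs the same five pieces: an error-detection (resp.\ error-correction) gadget on the mixed tensor code; downward switching in direction $i$; a single layer of $\gTerm$/$\gInit$ gates on the surplus/missing indices; upward switching into $C^i_{\mathrm{out}}$; and mending inherited from the first \Cref{lem:errdet} instance via \Cref{lem:seqcomp}.

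There are two small differences. The paper places a detection/correction gadget at both the start and the end of every round, whereas you share one between consecutive rounds and add a single final one; this is immaterial. The additive term after up-switching is actually $O(n^5)\lambda_{\mathrm{run}}$, not $O(n^3)\lambda_{\mathrm{run}}$, but this slip is harmless because it is still dominated by $n^3\lambda_{\mathrm{out}}$ under the assumed bound on $\lambda_{\mathrm{run}}$.
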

\begin{proof}
  \begin{enumerate}
  \item The desired circuit $\cR$ loops through $i=1,\dots,u$, and for each $u$ applies the following sequence of gadgets:
    \begin{enumerate}[label=(\arabic*)]
    \item The error-detecting gadget in \Cref{lem:errdet} for the code $\bigotimes_{i'=1}^{i-1}C_{\mathrm{out}}^{i'}\otimes\bigotimes_{i'=i}^uC_{\mathrm{in}}^{i'}$, with
      \begin{align*}
        \lambda_{\mathrm{in}}^{(1)} &= \frac{d^u}{2} \\
        \lambda_{\mathrm{out}}^{(1)} &= \frac{8u^2n^4}{\rho(u,n,d)}\cdot\lambda_{\mathrm{run}} \\
        \lambda_{\mathrm{det}}^{(1)} &= \frac{\rho(u,n,d)}{8}\cdot\lambda_{\mathrm{out}} = u^2n^4\cdot\lambda_{\mathrm{run}}.
      \end{align*}
    \item The downwards switching gadget in \Cref{it:sodown} in \Cref{lem:switchone} to unencode from $C_{\mathrm{in}}^i$ in direction~$i$, with
      \begin{align*}
        \lambda_{\mathrm{in}}^{(2)} &= \lambda_{\mathrm{out}}^{(1)} \\
        \lambda_{\mathrm{out}}^{(2)} &= n\cdot\lambda_{\mathrm{in}}^{(2)}+3n^3\cdot\lambda_{\mathrm{run}} \leq \frac{11u^2n^5}{\rho(u,n,d)}\cdot\lambda_{\mathrm{run}}.
      \end{align*}
    \item\label{it:sadrenumber} A single timestep consisting either of $\gTerm$ gates on dits in $[n]^{[u]\setminus\{i\}}\times([k_{\mathrm{in},i}]\setminus[k_{\mathrm{out},i}])$ if $k_{\mathrm{in},i}\geq k_{\mathrm{out},i}$, or else of $\gInit$ gates on dits in $[n]^{[u]\setminus\{i\}}\times([k_{\mathrm{out},i}]\setminus[k_{\mathrm{in},i}])$ if $k_{\mathrm{out},i}>k_{\mathrm{in},i}$. (For these product sets $[n]^{[u]\setminus\{i\}}\times A$ we implicitly push the factor $A$ to the $i$th position.)
    \item The upwards switching gadget in \Cref{it:soup} in \Cref{lem:switchone} to encode into $C_{\mathrm{out}}^i$ in direction~$i$, with
      \begin{align*}
        \lambda_{\mathrm{in}}^{(4)} &= \lambda_{\mathrm{out}}^{(2)}+\lambda_{\mathrm{run}} \\
        \lambda_{\mathrm{out}}^{(4)} &= n^2\cdot\lambda_{\mathrm{in}}^{(4)}+3n^3\cdot\lambda_{\mathrm{run}} \leq \frac{15u^2n^7}{\rho(u,n,d)}\cdot\lambda_{\mathrm{run}}.
      \end{align*}
    \item The error-detecting gadget in \Cref{lem:errdet} for the code $\bigotimes_{i'=1}^iC_{\mathrm{out}}^{i'}\otimes\bigotimes_{i'=i+1}^uC_{\mathrm{in}}^{i'}$, with
      \begin{align*}
        \lambda_{\mathrm{in}}^{(5)} &= \lambda_{\mathrm{out}}^{(4)} < \frac{d^u}{2} \\
        \lambda_{\mathrm{out}}^{(5)} &= \frac{8u^2n^4}{\rho(u,n,d)}\cdot\lambda_{\mathrm{run}} \\
        \lambda_{\mathrm{det}}^{(1)} &= \frac{\rho(u,n,d)}{8}\cdot\lambda_{\mathrm{out}} = u^2n^4\cdot\lambda_{\mathrm{run}}.
      \end{align*}
    \end{enumerate}
    The upper bound on $\lambda_{\mathrm{out}}^{(4)}$ above follows from the definition of $\lambda_{\mathrm{run}}$ in \Cref{eq:sadlams}. Then because $\cR$ is simply the sequential composition of fault-detecting gadgets (which include fault-tolerant gadgets as a special case; see \Cref{fact:dettol}), and the first gadget in the sequence is mending by \Cref{lem:errdet}, \Cref{lem:seqcomp} implies that the gadget in \Cref{eq:sadgad} is mending fault-detecting for the function $\bar{O}$ given by the sequential composition of the functions implemented by the sequence of gadgets above, where $\lambda_{\mathrm{in}}=\lambda_{\mathrm{in}}^{(1)}$, $\lambda_{\mathrm{out}}=\lambda_{\mathrm{out}}^{(5)}$, and $\lambda_{\mathrm{det}}=\lambda_{\mathrm{det}}^{(1)}=\lambda_{\mathrm{det}}^{(5)}$. The set $E_{\mathrm{det}}\subseteq N'$ is simply the subset given in \Cref{lem:errdet}. Similarly, following \Cref{lem:errdet}, $T_{\mathrm{det}}\subseteq[T]$ is the set of all timesteps of $\cR$ in which we execute the second-to-last timestep of a call to the gadget in \Cref{lem:errdet}.

    \Cref{it:sadrenumber} is the only step above in which we do not apply a gadget that implements the logical identity function on all dits. Instead, across all $i=1,\dots,u$, \Cref{it:sadrenumber} terminates all logical dits in $K_{\mathrm{in}}\setminus K_{\mathrm{out}}$, and initializes all logical dits in $K_{\mathrm{out}}\setminus K_{\mathrm{in}}$ to $0$. Thus the gadget in \Cref{eq:sadgad} indeed implements the function $\bar{O}$ in \Cref{eq:safun}.

    By definition, each gadget used above to define $\cR$ uses gates in $\{\gInit,\gTerm,\gCX^*\}$. Furthermore, the gadget in \Cref{lem:errdet} uses space $\leq(u+1)n^u$ and time $\leq un^2+2$, while the gadgets in \Cref{lem:switchone} use space $\leq 2n^u$ and time $\leq n^2+2$. Thus by \Cref{lem:seqcomp},  $\cR$ uses space $|N'|\leq\max\{(u+1)n^u,2n^u\}=(u+1)n^u$ and time $T\leq u(2(un^2+2)+2(n^2+2)+1)\leq 16u^2n^2$, as desired.
    
  \item The proof is similar to that of the fault-detecting case above, except now we apply the error-correction gadget \Cref{lem:errcorr} instead of the error-detection gadget \Cref{lem:errdet}. Specifically, the desired circuit $\cR$ loops through $i=1,\dots,u$, and for each $u$ applies the following sequence of gadgets:
    \begin{enumerate}[label=(\arabic*)]
    \item The error-correcting gadget in \Cref{lem:errcorr} for the code $\bigotimes_{i'=1}^{i-1}C_{\mathrm{out}}^{i'}\otimes\bigotimes_{i'=i}^uC_{\mathrm{in}}^{i'}$, with
      \begin{align*}
        \lambda_{\mathrm{in}}^{(1)} &= \left(\frac{\lambda_{\Dec}}{2}\right)^u \\
        \lambda_{\mathrm{out}}^{(1)} &= \left(\frac{2^u+1}{\lambda_{\Dec}/2n}\right)^u\cdot u^2n\cdot T_{\Dec}\cdot\lambda_{\mathrm{run}}.
      \end{align*}
    \item The downwards switching gadget in \Cref{it:sodown} in \Cref{lem:switchone} to unencode from $C_{\mathrm{in}}^i$ in direction~$i$, with
      \begin{align*}
        \lambda_{\mathrm{in}}^{(2)} &= \lambda_{\mathrm{out}}^{(1)} \\
        \lambda_{\mathrm{out}}^{(2)} &= n\cdot\lambda_{\mathrm{in}}^{(2)}+3n^3\cdot\lambda_{\mathrm{run}} \leq 2\cdot\left(\frac{2^u+1}{\lambda_{\Dec}/2n}\right)^u\cdot u^2n^3\cdot T_{\Dec}\cdot\lambda_{\mathrm{run}}.
      \end{align*}
    \item\label{it:sacrenumber} A single timestep consisting either of $\gTerm$ gates on dits in $[n]^{[u]\setminus\{i\}}\times([k_{\mathrm{in},i}]\setminus[k_{\mathrm{out},i}])$ if $k_{\mathrm{in},i}\geq k_{\mathrm{out},i}$, or else of $\gInit$ gates on dits in $[n]^{[u]\setminus\{i\}}\times([k_{\mathrm{out},i}]\setminus[k_{\mathrm{in},i}])$ if $k_{\mathrm{out},i}>k_{\mathrm{in},i}$. (For these product sets $[n]^{[u]\setminus\{i\}}\times A$ we implicitly push the factor $A$ to the $i$th position.)
    \item The upwards switching gadget in \Cref{it:soup} in \Cref{lem:switchone} to encode into $C_{\mathrm{out}}^i$ in direction~$i$, with
      \begin{align*}
        \lambda_{\mathrm{in}}^{(4)} &= \lambda_{\mathrm{out}}^{(2)}+\lambda_{\mathrm{run}} \\
        \lambda_{\mathrm{out}}^{(4)} &= n^2\cdot\lambda_{\mathrm{in}}^{(4)}+3n^3\cdot\lambda_{\mathrm{run}} \leq 3\cdot\left(\frac{2^u+1}{\lambda_{\Dec}/2n}\right)^u\cdot u^2n^5\cdot T_{\Dec}\cdot\lambda_{\mathrm{run}}.
      \end{align*}
    \item The error-detecting gadget in \Cref{lem:errdet} for the code $\bigotimes_{i'=1}^iC_{\mathrm{out}}^{i'}\otimes\bigotimes_{i'=i+1}^uC_{\mathrm{in}}^{i'}$, with
      \begin{align*}
        \lambda_{\mathrm{in}}^{(5)} &= \lambda_{\mathrm{out}}^{(4)} < \left(\frac{\lambda_{\Dec}}{2}\right)^u \\
        \lambda_{\mathrm{out}}^{(5)} &= \left(\frac{2^u+1}{\lambda_{\Dec}/2n}\right)^u\cdot u^2n\cdot T_{\Dec}\cdot\lambda_{\mathrm{run}}.
      \end{align*}
    \end{enumerate}
    The upper bound on $\lambda_{\mathrm{out}}^{(4)}$ above follows from the definition of $\lambda_{\mathrm{run}}$ in \Cref{eq:saclams}. Then because $\cR$ is simply the sequential composition of fault-tolerant gadgets, \Cref{lem:seqcomp} implies that the gadget in \Cref{eq:sacgad} is fault-tolerant for the function $\bar{O}$ given by the sequential composition of the functions implemented by the sequence of gadgets above, where by \Cref{eq:saclams} we have $\lambda_{\mathrm{in}}=\lambda_{\mathrm{in}}^{(1)}$ and $\lambda_{\mathrm{out}}\leq\lambda_{\mathrm{out}}^{(5)}$.

    \Cref{it:sacrenumber} is the only step above in which we do not apply a gadget that implements the logical identity function on all dits. Instead, across all $i=1,\dots,u$, \Cref{it:sacrenumber} terminates all logical dits in $K_{\mathrm{in}}\setminus K_{\mathrm{out}}$, and initializes all logical dits in $K_{\mathrm{out}}\setminus K_{\mathrm{in}}$ to $0$. Thus the gadget in \Cref{eq:sacgad} indeed implements the function $\bar{O}$ in \Cref{eq:safun}.

    By definition, each gadget used above to define $\cR$ uses gates in $\{\gInit,\gTerm,\gCX^*\}\cup\cG_{\Dec}$. Furthermore, the gadget in \Cref{lem:errcorr} uses space $\leq n^{u-1}\cdot|N_{\Dec}|$ and time $\leq u\cdot T_{\Dec}$, while the gadgets in \Cref{lem:switchone} use space $\leq 2n^u$ and time $\leq n^2+2$. Thus by \Cref{lem:seqcomp},  $\cR$ uses space $|N'|\leq\max\{n^{u-1}\cdot|N_{\Dec}|,\; 2n^u\}$ and time $T\leq u(2uT_{\Dec}+2(n^2+2)+1)\leq 2u^2T_{\Dec}+8un^2$, as desired.
  \end{enumerate}
\end{proof}


\subsection{Transversal Gate Gadgets}
In this section, we present gadgets to fault-tolerantly perform the gates in \Cref{def:gates} on all dits in a codeword with constant space and time overhead. These gadgets will not rely on a tensor code structure, though our gadget for $\gCCX^*$ gates will require a certain multiplication property, which for instance is satisfied by Reed-Solomon codes.

We borrow the term ``transversal gate'' from the quantum error-correction literature to refer to the application of a gate uniformly across a block of dits:

\begin{definition}
  \label{def:transversal}
  Let $G:\bF_q^{m_{\mathrm{in}}}\rightarrow\bF_q^{m_{\mathrm{out}}}$ be a gate, and let $N$ be a set. The \emph{transversal application} $G^{\sqcup N}:\bF_q^{N\times[m_{\mathrm{in}}]}\rightarrow\bF_q^{N\times[m_{\mathrm{out}}]}$ of $G$ is the gate with input dits labeled by $N^{\sqcup m_{\mathrm{in}}}\cong N\times[m_{\mathrm{in}}]$ and output dits labeled by $N^{\sqcup m_{\mathrm{out}}}\cong N\times[m_{\mathrm{out}}]$, which simply applies $G$ to each set of dits sharing the same label in $N$.

  We extend this definition in the natural way to sets of input dits $N_1\cong\cdots\cong N_{m_{\mathrm{in}}}\cong N$ and output dits $N_1'\cong\cdots\cong N_{m_{\mathrm{out}}}'\cong N$ with fixed isomorphisms to $N$.
\end{definition}

While the term ``transversal gate'' sometimes includes more general constant-depth circuits involving a specified gate, in this paper we use the more restrictive notion in \Cref{def:transversal}.

Below, recall that we use `$*$' to denote component-wise multiplication of vectors.

\begin{lemma}
  \label{lem:transversal}
  Let $C$ be a $[n,k]_q$ code with encoding map $\Enc$. For $\alpha\in\{\mathrm{in},\mathrm{out}\}$, we let $D_\alpha=(C,\; \Enc,\; \cE_\alpha=2^{[n]}|_{\geq\lambda_\alpha})$ for $\lambda_\alpha$ defined below. Then for every $\lambda_{\mathrm{in}},\lambda_{\mathrm{run}}\geq 0$, the following hold:
  \begin{enumerate}
  \item Letting $\lambda_{\mathrm{out}}=\lambda_{\mathrm{run}}$, there exists a fault-tolerant gadget
    \begin{equation*}
      (\cR_{\gInit},\; \cE_{\mathrm{run}}=2^{[n]}|_{\geq\lambda_{\mathrm{run}}},\; \emptyset,\; D_{\mathrm{out}})
    \end{equation*}
    for $\bar{O}_{\gInit}=\gInit^{\sqcup k_{\mathrm{in}}}$ using space $N=[n]$, time $T=1$, and gate set $\{\gInit\}$.
  \item There exists a fault-tolerant gadget
    \begin{equation*}
      (\cR_{\gTerm},\; \cE_{\mathrm{run}}=\emptyset,\; D_{\mathrm{in}},\; \emptyset)
    \end{equation*}
    for $\bar{O}_{\gTerm}=\gTerm^{\sqcup k_{\mathrm{in}}}$ using space $N=[n]$, time $T=1$, and gate set $\{\gTerm\}$.
  \item\label{it:tX} Letting $\lambda_{\mathrm{out}}=\lambda_{\mathrm{in}}+\lambda_{\mathrm{run}}$, for every $a\in\bF_q^k$, there exists a fault-tolerant gadget
    \begin{equation*}
      (\cR_{\gX^a},\; \cE_{\mathrm{run}}=2^{[n]}|_{\geq\lambda_{\mathrm{run}}},\; D_{\mathrm{in}},\; D_{\mathrm{out}})
    \end{equation*}
    for $\bar{O}_{\gX^a}=\gX^a:=\bigsqcup_{i\in[k]}\gX^{a_i}$ using space $N=[n]$, time $T=1$, and gate set $\{\gX^*\}$.
  \item\label{it:tCX} Letting $\lambda_{\mathrm{out}}=2\lambda_{\mathrm{in}}+\lambda_{\mathrm{run}}$, for every $a\in\bF_q$, there exists a fault-tolerant gadget
    \begin{equation*}
      (\cR_{\gCX^a},\; \cE_{\mathrm{run}}=2^{[n]^{\sqcup 2}}|_{\geq\lambda_{\mathrm{run}}},\; D_{\mathrm{in}}^{\sqcup 2},\; D_{\mathrm{out}}^{\sqcup 2})
    \end{equation*}
    for $\bar{O}_{\gCX^a}=(\gCX^a)^{\sqcup k}$ using space $N=[n]^{\sqcup 2}$, time $T=1$, and gate set $\{\gCX^a\}$.
  \item\label{it:tCCX} Let $\lambda_{\mathrm{out}}=3\lambda_{\mathrm{in}}+\lambda_{\mathrm{run}}$. Let $C'\subseteq\bF_q^n$ be a $[n,\; k'\geq k]_q$ code with encoding map $\Enc'$ such that $C*C\subseteq C'$, and such that for every $x_1,x_2\in\bF_q^k$, there exists $x'\in\bF_q^{k'-k}$ such that
    \begin{equation}
      \label{eq:tCCXxp}
      \Enc(x_1)*\Enc(x_2) = \Enc'(x_1*x_2,\; x').
    \end{equation}
    For $\alpha\in\{\mathrm{in},\mathrm{out}\}$, let $D_\alpha'=(C',\; \Enc',\; \cE_\alpha'=2^{[n]}|_{\geq\lambda_\alpha})$. Then for every $a\in\bF_q$, there exists a fault-tolerant gadget
    \begin{equation*}
      (\cR_{\gCCX^a},\; \cE_{\mathrm{run}}=2^{[n]^{\sqcup 3}}|_{\geq\lambda_{\mathrm{run}}},\; D_{\mathrm{in}}\sqcup D_{\mathrm{in}}\sqcup D_{\mathrm{in}}',\; D_{\mathrm{out}}\sqcup D_{\mathrm{out}}\sqcup D_{\mathrm{out}}')
    \end{equation*}
    for the set $\bar{\cO}_{\gCCX^a}$ of functions $\bar{O}:\bF_q^{k+k+k'}\rightarrow\bF_q^{k+k+k'}$ that first apply $(\gCCX^a)^{\sqcup k}$ to the input, and then add $(0^k,0^k,0^k,x')$ to the result for some $x'\in\bF_q^{k'-k}$. That is, $\bar{O}(x_1,x_2,x_3)=(x_1,x_2,x_3+ax_1x_2+x')$, where every $x'\in\bF_q^{k'-k}$ yields a distinct $\bar{O}\in\bar{\cO}$. This gadget uses space $N=[n]^{\sqcup 3}$, time $T=1$, and gate set $\{\gCCX^a\}$.
  \end{enumerate}
\end{lemma}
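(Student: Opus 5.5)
In every case the circuit $\cR$ will be a single timestep that applies the relevant gate transversally across the $n$ codeword positions. By \Cref{lem:fdadd} and \Cref{fact:dettol}, it suffices to handle additive faults. Each case then needs two checks: correctness in the noiseless case, via linearity or the multiplication property, and a support-counting bound on the output error. The counting works because a transversal gate acting at position $j$ only touches dits at position $j$ of the blocks. So an input error at position $j$ of any block can only corrupt position $j$ of the output blocks, and the single fault layer adds fewer than $\lambda_{\mathrm{run}}$ further corrupted dits.

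\textbf{Init and Term.} For $\gInit$, apply $\gInit^{\sqcup n}$. The noiseless output $0^n=\Enc(0^k)$ matches $\bar O_{\gInit}$, and the fault alone produces an output error of weight $<\lambda_{\mathrm{run}}=\lambda_{\mathrm{out}}$. For $\gTerm$, apply $\gTerm^{\sqcup n}$. The output is empty and there are no active output dits, so both conditions hold trivially.

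\textbf{X and CX.} For $\gX^a$ with $a\in\bF_q^k$, apply $\gX^{c_j}$ at each position $j$, where $c=\Enc(a)$. Linearity gives $\Enc(x)+\Enc(a)=\Enc(x+a)$. An input error $e$ and fault $f_1$ yield output error $e+f_1$, of weight $<\lambda_{\mathrm{in}}+\lambda_{\mathrm{run}}$. For $\gCX^a$, apply $\gCX^a$ at each position $j$ across the two blocks. Noiselessly this produces $(\Enc(x_1),\Enc(ax_1+x_2))$. With input errors $(e_1,e_2)$, the output error is $(e_1,ae_1+e_2)+f_1$. Its support lies in $\supp(e_1)\cup\supp(e_2)\cup\supp(f_1)$, which gives the stated $\lambda_{\mathrm{out}}$. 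Here the deviation is measured against the family of bad sets on the disjoint union, so I will check that a per-block bound of $2\lambda_{\mathrm{in}}+\lambda_{\mathrm{run}}$ suffices.

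\textbf{CCX.} Apply $\gCCX^a$ at each position. Noiselessly, the third block becomes
\[
\Enc'(x_3)+a\,\Enc(x_1)*\Enc(x_2)=\Enc'(x_3+ax_1x_2,\;a x')
\]
by \Cref{eq:tCCXxp}. This is exactly a member of $\bar\cO_{\gCCX^a}$. There is one notational point: the third logical input lives in $\bF_q^{k'}$, so the product $x_1*x_2$ is padded onto the first $k$ coordinates. With errors, position $j$ of the third output block can be wrong only if one of the three inputs at position $j$ was wrong, or if the fault hit position $j$. The first two blocks behave as in the $\gCX$ case. So every output block error has weight $<3\lambda_{\mathrm{in}}+\lambda_{\mathrm{run}}$.

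\textbf{Main obstacle.} The only delicate step is the CCX case. There I need to ensure the nonlinear cross term $a(c_1+e_1)*(c_2+e_2)-ac_1*c_2$ is supported on $\supp(e_1)\cup\supp(e_2)$. This holds because the multiplication is componentwise. Everything else is routine bookkeeping.
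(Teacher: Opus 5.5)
Your proposal is correct and matches the paper's proof. Each gadget is the same single transversal layer ($\gInit^{\sqcup n}$, $\gTerm^{\sqcup n}$, $\gX^{\Enc(a)}$, $(\gCX^a)^{\sqcup n}$, $(\gCCX^a)^{\sqcup n}$), noiseless correctness comes from linearity and \Cref{eq:tCCXxp}, and the output-error bound comes from the fact that a gate at position $j$ only spreads errors to position $j$ of the other blocks. Your explicit reduction to additive faults via \Cref{lem:fdadd} is harmless but unnecessary, since the support-counting argument already works for arbitrary faults, which is how the paper argues.
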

\begin{proof}
  \begin{enumerate}
  \item The result follows immediately by letting $\cR_{\gInit}=\gInit^{\sqcup n}$.
  \item The result follows immediately by letting $\cR_{\gTerm}=\gTerm^{\sqcup n}$.
  \item We let $\cR_{\gX^a}=\gX^{\Enc(a)}$. On input $x\in\bF_q^k$, a noiseless execution outputs
    \begin{equation*}
      \cR_{\gX^a}(\Enc(x))=\Enc(x)+\Enc(a)=\Enc(\gX^a(x)).
    \end{equation*}
    As $\cR_{\gX^a}$ has a single timestep consisting of single-dit gates, if there are $<\lambda_{\mathrm{in}}$ input errors and $<\lambda_{\mathrm{run}}$ fault errors, then there are $<\lambda_{\mathrm{out}}=\lambda_{\mathrm{in}}+\lambda_{\mathrm{run}}$ output errors, so the desired fault-tolerance holds.
  \item We let $\cR_{\gCX^a}=(\gCX^a)^{\sqcup n}$. On input $(x_1,x_2)\in\bF_q^{k+k}$, a noiseless execution outputs
    \begin{align*}
      \cR_{\gCX^a}(\Enc^{\sqcup 2}(x_1,x_2))
      &= (\Enc(x_1),\Enc(x_2)+a\Enc(x_1)) \\
      &= \Enc^{\sqcup 2}(x_1,x_2+ax_1) \\
      &= \Enc^{\sqcup 2}(\gCX^a(x_1,x_2)).
    \end{align*}
    As $\cR_{\gCX^a}$ has a single timestep consisting of 2-dit gates, any input error on some dit can only propagate to one other dit. Therefore if each of $\Enc(x_1),\Enc(x_2)$ has $<\lambda_{\mathrm{in}}$ input errors and there are $<\lambda_{\mathrm{run}}$ fault errors, then there are $<\lambda_{\mathrm{out}}=2\lambda_{\mathrm{in}}+\lambda_{\mathrm{run}}$ output errors, so the desired fault-tolerance holds.
  \item We let $\cR_{\gCCX^a}=(\gCCX^a)^{\sqcup n}$. On input $(x_1,x_2,x_3)\in\bF_q^{k+k+k'}$, defining $x'\in\bF_q^{k'-k}$ as in \Cref{eq:tCCXxp}, then a noiseless execution outputs
    \begin{align*}
      \hspace{1em}&\hspace{-1em} \cR_{\gCCX^a}(\Enc(x_1),\Enc(x_2),\Enc'(x_3)) \\
      &= (\Enc(x_1),\Enc(x_2),\Enc'(x_3)+a\Enc(x_1)*\Enc(x_2)) \\
      &= (\Enc(x_1),\Enc(x_2),\Enc'(x_3)+a\Enc'(x_1*x_2,x')) \\
      &= (\Enc\sqcup\Enc\sqcup\Enc')(\gCCX^a(x_1,x_2,x_3)+(0^k,0^k,0^k,ax')) \\
      &\in (\Enc\sqcup\Enc\sqcup\Enc')(\bar{\cO}_{\gCCX^a}(x_1,x_2,x_3)).
    \end{align*}
    As $\cR_{\gCX^a}$ has a single timestep consisting of 3-dit gates, any input error on some dit can only propagate to two other dits. Therefore if each of $\Enc(x_1),\Enc(x_2),\Enc'(x_3)$ has $<\lambda_{\mathrm{in}}$ input errors and there are $<\lambda_{\mathrm{run}}$ fault errors, then there are $<\lambda_{\mathrm{out}}=3\lambda_{\mathrm{in}}+\lambda_{\mathrm{run}}$ output errors, so the desired fault-tolerance holds.
  \end{enumerate}
\end{proof}

In \Cref{lem:transversal}, strictly speaking all of the gadgets except the $\gX^a$ gadget (\Cref{it:tX}) implement transversal gates according to \Cref{def:transversal}, as the $\gX^a$ gadget applies different gates $\gX^{a_i}$ to different dits $i$.

\Cref{it:tCCX} in \Cref{lem:transversal} can be generalized to allow for two distinct codes $C_1,C_2$ such that $C_1*C_2\subseteq C'$. However, it will be sufficient for our purposes to consider $C_1=C=C_2$ as in \Cref{lem:transversal}. The reason that \Cref{it:tCCX} needs a set $\bar{\cO}_{\gCCX^a}$ of functions, whereas the other gadgets impelement a single function, is that $C'$ may have larger dimension than $C$. Hence when multiplying two codewords $\Enc(x_1),\Enc(x_2)\in C$, the product is a codeword $\Enc(x_1)*\Enc(x_2)=\Enc'(x_1*x_2,x')\in C'$ with some additional message dits $x'\in\bF_q^{k'-k}$, which may have arbitrary value.

\section{Fault-Detecting and Fault-Tolerance Schemes}
\label{sec:fdft}
In this section, we combine our gadgets in \Cref{sec:gadgets} to present schemes for compiling any logical circuit into a fault-detecting or a fault-tolerant circuit that implements the same underlying function as the logical circuit. Our fault-detecting circuit will be resilient to more errors per timestep than our fault-tolerant circuit, which is not surprising as fault-tolerance is a stronger property than fault-detection. Indeed, the two schemes have the same structure, except that our fault-detecting scheme detects errors using the error-detection gadget in \Cref{lem:errdet}, whereas our fault-tolerant scheme corrects errors using the error-correction gadget in \Cref{lem:errcorr}.

Our schemes in this section are based on tensor products of Reed-Solomon codes, which have growing alphabet size. In \Cref{sec:instan} below, we show how to reduce the alphabet size to a constant.

Our main schemes are stated in the theorem below.

\begin{theorem}
  \label{thm:fdft}
  For every $u,n,k\in\bN$ with
  $u\geq 4$,
  $k\leq n/32$,
  every subset $K\subseteq[k]^u$, and every prime power $q\geq n$, there exists a $[n^u,\; |K|,\; \geq(n-8k+1)^u]_q$ code $C(q,u,n,k,K)$ with encoding map $\Enc_{C(q,u,n,k,K)}:\bF_q^K\rightarrow\bF_q^{[n]^u}$ satisfying the following properties.
  The $n^u$ codeword dits of $C(q,u,n,k,K)$ are labeled by $[n]^u$, and the $|K|=\dim(C(q,u,n,k,K))$ message dits are labeled by the set~$K$.
  
  For every circuit $\bar{\cR}$ acting on a set of $q$-ary dits labeled by $K_{[3]}:=K_1\sqcup K_2\sqcup K_3$ with each $K_b=[k]^u$, using time $\bar{T}$ and gate set $\cG=\{\gInit,\gTerm,\gX^*,\gCX^*,\gCCX^*\}$, with input and output dits $K_{\mathrm{in}},K_{\mathrm{out}}\subseteq K_{[3]}$ respectively, letting
  \begin{align}
    \label{eq:Cab}
    C_{\alpha,b} &= \begin{cases}
      C(q,u,n,k,K_\alpha\cap K_b),&K_\alpha\cap K_b\neq\emptyset\\
      \emptyset,&K_\alpha\cap K_b=\emptyset.
    \end{cases}
  \end{align}
  for $\alpha\in\{\mathrm{in},\mathrm{out}\}$ and $b\in[3]$, then the following hold:
  \begin{enumerate}
  \item\label{it:fd} (Mending fault-detecting gadget) For $d\geq 0$, let $\rho(u,n,d)$ be the soundness parameter defined in \Cref{lem:loctest}. Let $\lambda_{\mathrm{run}}\in[0,\bar{\lambda}_{\mathrm{run}}]$ for
    \begin{align}
      \label{eq:fdbarlamrun}
      \bar{\lambda}_{\mathrm{run}} &= \bar{\lambda}_{\mathrm{run}}(u,n) = \frac{\rho(u,n,n-16k+2)\cdot(n-16k+2)^{u-1}}{2^{10}\cdot u^2n^8},
    \end{align}
    and let
    \begin{align}
      \label{eq:fdlams}
      \begin{split}
        \lambda_{\mathrm{in}} &= \frac{(n-8k+1)^u}{4} \\
        \lambda_{\mathrm{out}} &= \frac{16u^2n^4}{\rho(u,n,n-8k+1)}\cdot\lambda_{\mathrm{run}} \\
        \lambda_{\mathrm{det}} &= (u-1)^2n^4\cdot\lambda_{\mathrm{run}}.
      \end{split}
    \end{align}
    For $\alpha\in\{\mathrm{in},\mathrm{out}\}$, let
    \begin{align*}
      D_{\alpha,b} &= (C_{\alpha,b},\; \Enc_{C_{\alpha,b}},\; 2^{[n]^u}|_{\geq\lambda_\alpha}) \hspace{1em} \forall b\in[3] \\
      D_\alpha &= (C_\alpha,\; \Enc_\alpha,\; \cE_\alpha) = \bigsqcup_{b\in[3]}D_{\alpha,b}.
    \end{align*}
    Then there exists $T\in\bN$ and a set $N'$ with subsets $E_{\mathrm{det},t}\subseteq N'$ for $t\in[T]$ such that there is a mending fault-detecting gadget
    \begin{equation}
      \label{eq:fdgad}
      \left(\cR,\; \cE_{\mathrm{run}}=2^{N'}|_{\geq\lambda_{\mathrm{run}}}^{\sqcup T},\; D_{\mathrm{in}},\; D_{\mathrm{out}},\; \cE_{\mathrm{det}}=\bigsqcup_{t\in[T]}(2^{E_{\mathrm{det},t}}|_{\geq\lambda_{\mathrm{det}}})\right)
    \end{equation}
    for $\bar{O}(\cdot)=\bar{\cR}(\cdot)$ using space $|N'|\leq 3(u+1)n^u$, time $T\leq O(\bar{T}\cdot u^4n^2(\log n)^2)$, and gate set $\cG$.
  \item\label{it:ft} (Fault-tolerant gadget) There exists an absolute constant $\eta_{\Dec}\geq 1$ such that the following holds. Let $\lambda_{\mathrm{run}}\in[0,\bar{\lambda}_{\mathrm{run}}]$ for
    \begin{align}
      \label{eq:ftbarlamrun}
      \bar{\lambda}_{\mathrm{run}} &= \bar{\lambda}_{\mathrm{run}}(u,n) = \frac{n^{u-16}}{\eta_{\Dec}\cdot\log(q)\cdot 2^{u(u+16)}},
    \end{align}
    and let
    \begin{align}
      \label{eq:ftlams}
      \begin{split}
        \lambda_{\mathrm{in}} &= \left(\frac{n-8k+1}{8}\right)^u \\
        \lambda_{\mathrm{out}} &= \eta_{\Dec}\cdot 2^{u(u+8)}\cdot n^3\log(q)\cdot\lambda_{\mathrm{run}}.
      \end{split}
    \end{align}
    For $\alpha\in\{\mathrm{in},\mathrm{out}\}$, let
    \begin{align*}
      D_{\alpha,b} &= (C_{\alpha,b},\; \Enc_{C_{\alpha,b}},\; 2^{[n]^u}|_{\geq\lambda_\alpha}) \hspace{1em} \forall b\in[3] \\
      D_\alpha &= \bigsqcup_{b\in[3]}D_{\alpha,b}.
    \end{align*}
    Then there exists a fault-tolerant gadget
    \begin{equation}
      \label{eq:ftgad}
      \left(\cR,\; \cE_{\mathrm{run}}=2^{N'}|_{\geq\lambda_{\mathrm{run}}}^{\sqcup T},\; D_{\mathrm{in}},\; D_{\mathrm{out}}\right)
    \end{equation}
    for $\bar{O}(\cdot)=\bar{\cR}(\cdot)$ using space $|N'|\leq 3\eta_{\Dec}\cdot\log(q)\cdot n^{u+1}$, time $T\leq O(\bar{T}\cdot u^4n^2(\log n)^2(\log q))$, and gate set $\cG$.
  \end{enumerate}
\end{theorem}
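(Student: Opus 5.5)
We will build $C(q,u,n,k,K)$ and the compiled circuit from tensor products of Reed--Solomon codes over a fixed evaluation set $E\subseteq\bF_q$ with $|E|=n$ (possible since $q\geq n$). The ``base'' code is $\RS(q,k,E)^{\otimes u}$, and the ``product'' code is $\RS(q,2k-1,E)^{\otimes u}$. To make the lemmas fit, we give each factor a systematic encoding (\Cref{fact:systematic}) whose message set contains $[k]$, chosen so that $\RS(q,k,E)\subseteq\RS(q,2k-1,E)$ is nested with compatible systematic coordinates. Then $C(q,u,n,k,K)$ is the subcode of the $u$-fold tensor code of some fixed intermediate degree whose message dits outside $K$ are zero. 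We take intermediate degrees up to $8k$, which accounts for the distance bound $(n-8k+1)^u$. Tensor codes with zeroed message dits are still tensor codes of the codes obtained after fixing coordinates, so the switching gadgets apply. Concretely, when all factor codes are evaluations of polynomials of degree $<8k$, every code we pass through has factor distance $\geq n-8k+1$. The bound $n-16k+2$ in \eqref{eq:fdbarlamrun} leaves slack for one further multiplication.

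The compilation proceeds timestep by timestep through $\bar{\cR}$. Each logical gate layer is first routed so that interacting dits become hypercube neighbours. For this we invoke the sorting-network result (\Cref{lem:route}, Batcher) on $[k]^u$, which gives $O(u^2(\log k)^2)$ layers. Each layer is a collection of transversal operations in one direction $i$. Each such operation is implemented in five stages:
\begin{enumerate}
\item apply the downwards switching gadget of \Cref{lem:switchone} in direction $i$ (interleaved, as in \Cref{lem:switchall}, with error detection or correction);
\item apply transversal $\gCX^*$/swap gadgets (\Cref{lem:transversal}) between the $k$ resulting blocks, or $\gX^*$/$\gCX^*$/$\gCCX^*$ gadgets at the final aligned layer;
\item apply upwards switching;
\item after a $\gCCX$ gadget, whose output lies in the product code with the junk message $x'$, apply \Cref{lem:switchall}, which zeroes the message dits in $K_{\mathrm{out}}\setminus K_{\mathrm{in}}$ by \eqref{eq:safun} and so discards $x'$ and re-encodes into the base code;
\item realise $\gInit/\gTerm$ at the logical level by the renumbering step of \Cref{lem:switchall}, changing $K$.
\end{enumerate}
The three blocks $K_1,K_2,K_3$ play the roles of the three operands of $\gCCX$, so the physical space is $3$ copies of the \Cref{lem:switchall} space, giving $3(u+1)n^u$. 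The time is $\bar T\cdot O(u^2(\log n)^2)$ routing layers, each costing $O(u^2n^2)$ by \Cref{lem:switchall}.

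Correctness and fault-detection then follow by chaining gadgets with \Cref{lem:seqcomp} and \Cref{lem:parcomp}. The first gadget is the error-detection gadget, which is mending, so the whole composition is mending. The thresholds must be checked so that each gadget's $\lambda_{\mathrm{out}}$ is below the next gadget's $\lambda_{\mathrm{in}}$. For detection, every switch is followed by \Cref{lem:errdet}, which resets the error to $O(u^2n^4\lambda_{\mathrm{run}}/\rho)$. This must stay below the intermediate $\lambda_{\mathrm{in}}$ values after the polynomial blowups of \Cref{lem:switchone} and \Cref{lem:transversal} (factors $n^2$ and $3$), which is exactly what the condition $\lambda_{\mathrm{run}}\leq\bar\lambda_{\mathrm{run}}$ in \eqref{eq:fdbarlamrun} guarantees. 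The fault-tolerant item is identical, using \Cref{lem:switchall}\ref{it:sacorr} and the Welch--Berlekamp decoder of \Cref{lem:RSdec} with $\lambda_{\Dec}=(n-8k+1)/2$ and $T_{\Dec}=O(n^2\log q)$. Substituting into \eqref{eq:saclams} yields \eqref{eq:ftbarlamrun} and \eqref{eq:ftlams} after crude bounding using $k\leq n/32$.

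I expect the main obstacle to be the bookkeeping around $\gCCX$ and nested systematic encodings. \Cref{eq:tCCXxp} requires $\Enc(x_1)*\Enc(x_2)=\Enc'(x_1*x_2,x')$ coordinatewise in the tensor setting, and the standard RS systematic encodings do not obviously satisfy this. My first attempt would be to choose all encodings systematic on a common set of evaluation points $K_0\subseteq E$ with $|K_0|=k$, extended to larger systematic sets for higher degrees. Pointwise products of evaluations are then products of messages on $K_0^u$, giving the property for free. I would also check that the downwards switching (unencoding) map can be chosen to read exactly these systematic coordinates.
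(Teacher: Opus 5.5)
\textbf{Gap: the compilation step lacks workspace.} Much of your architecture matches the paper: detection or correction on the full tensor code, downward switching in direction $i$, the gadgets of \Cref{lem:transversal}, a \Cref{lem:switchall} round trip into the product code for $\gCCX$, upward switching, and gluing by \Cref{lem:seqcomp,lem:parcomp}. Your common-systematic-set fix for \eqref{eq:tCCXxp} is exactly \Cref{claim:tensormultprop}. The gap is in the compilation step, which the paper isolates as \Cref{lem:compile}; this is also where the factor $8$ actually comes from.

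Every multi-block gadget in \Cref{lem:transversal} applies the same $\gCX^a$ or $\gCCX^a$ (fixed $a\in\bF_q$) to all aligned positions of the slices it acts on. Only the $\gX^a$ gadget may vary by position. A layer of the sorting network of \Cref{lem:route}, by contrast, swaps only a permutation-dependent subset of the aligned pairs, so ``transversal $\gCX^*$/swap gadgets'' do not implement it. The paper handles this as follows:
\begin{enumerate}
\item It writes the swap pattern into a spare control slice using the non-uniform $\gX^a$ gadget, and performs controlled swaps with transversal $\gCCX$ gates plus a further spare ancilla slice.
\item It routes inside a binary hypercube $[2^s]^u$ with $4k\leq 2^s\leq 8k$, since $[k]^u$ is not of the form $[2]^s$ in general.
\item It first gathers the contents of all three blocks into one such hypercube. \Cref{lem:route} permutes only within a single hypercube, while the operands of a gate of $\bar{\cR}$ may start anywhere in $K_{[3]}$. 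So ``$K_1,K_2,K_3$ as the three operands'' presupposes cross-block routing that you have not provided.
\item When $\bar{\cR}$ applies $\gInit$ or $\gTerm$ to an arbitrary set of dits, it routes those dits into spare slices $K'_1|_{3\to j+k}$ and resets the slices. The renumbering step of \Cref{lem:switchall} cannot do this, because it only adds or drops whole trailing slices in one direction.
\end{enumerate}
With base code $\RS(q,k,E)^{\otimes u}$ and the data filling $[k]^u$, there are no free logical coordinates for any of this. The paper obtains them by enlarging each block's message space to $[k']^u$ with $k'=8k$ and taking $C_0=\RS(q,8k,E)$. Then $C_0^{*2}=\RS(q,16k-1,E)$ has distance $n-16k+2$. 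This workspace, not a range of ``intermediate degrees'', is the source of $(n-8k+1)^u$ and of \eqref{eq:fdbarlamrun}.

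\textbf{Secondary, easily repaired error.} The code $C(q,u,n,k,K)=\{c\in C_0^{\otimes u}:c|_{{K'_0}^u\setminus K}=0\}$ is not a tensor code when $K$ is not a product set. For example, take $K$ to be the diagonal of $[2]^2$. The resulting code is $2$-dimensional but contains a codeword whose systematic restriction is a rank-$2$ matrix, which no $A\otimes B$ of dimension $2$ allows. So \Cref{lem:errdet} and \Cref{lem:switchone} cannot be applied to it as you propose. The paper instead runs every gadget on the full code $C_0^{\otimes u}$ and proves the result is a mending fault-detecting gadget for $\bar{\cR}'$ on all messages in $\bF_q^{K'_{B_{\mathrm{in}}}}$. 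Only at the end does it restrict to $D_{\mathrm{in}},D_{\mathrm{out}}$, using that $\bar{\cR}'(x)$ always vanishes on $K'_{B_{\mathrm{out}}}\setminus K_{\mathrm{out}}$. This same fact is what transfers the mending property to the subcodes.
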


In \Cref{thm:fdft}, the choice to allow up to $3$ input and output code blocks was somewhat arbitrary, to highlight our ability to perform $3$-dit gates such as $\gCCX^*$ on logical dits across different code blocks. It is useful to be able to act on at least $2$ code blocks, so that we can perform logical computations on data encoded in different code blocks, and move message dits between different code blocks. By composing any scheme capable of acting on $2$ more more code blocks with itself, we can fault-tolerantly accumulate message dits from any number of code blocks into a single code block, within which we can then perform a fault-tolerant computation.

\subsection{Logical Circuit Compilation}
\label{sec:compile}
In this section, we show how to compile an arbitrary logical circuit $\bar{\cR}$ into one whose structure is amenable to our fault-tolerant gadgets in \Cref{sec:gadgets}, with only small blowups in the space and time usage. In particular, in light of the fault-tolerant gadgets for transversal gates in \Cref{lem:transversal}, our compiled circuit will consist of such transversal gates (see \Cref{def:transversal}). For this purpose, we introduce the following notation.


\begin{definition}
  For a set $K=\prod_{i\in[u]}[k_i]$, for every $i\in[u]$ and $j\in[k_i]$ we define a subset $K|_{i\rightarrow j}\subseteq K$ by $K|_{i\rightarrow j}=\{\kappa\in K:\kappa_i=j\}$.
\end{definition}

\begin{lemma}
  \label{lem:compile}
  For a prime power $q$ and for $u,k\in\bN$ with $u\geq 4$, let $\bar{\cR}$ be a circuit acting on $3k^u$ $q$-ary dits labeled by the set $K_{[3]}=K_1\sqcup K_2\sqcup K_3$ with each $K_b=[k]^u$, using time $\bar{T}$ and gate set $\cG=\{\gInit,\gTerm,\gX^*,\gCX^*,\gCCX^*\}$.
  For subsets $B_{\mathrm{in}},B_{\mathrm{out}}\subseteq[3]$, we assume $\bar{\cR}$ has input dits $K_{B_{\mathrm{in}}}=\bigsqcup_{b\in B_{\mathrm{in}}}K_b$ and output dits $K_{B_{\mathrm{out}}}=\bigsqcup_{b\in B_{\mathrm{out}}}K_b$.
  Letting $k'=8k$, then there exists a circuit $\bar{\cR}'$ acting on $3\cdot{k'}^u$ dits labeled by $K'_{[3]}=K_1'\sqcup K_2'\sqcup K_3'$ with each $K_b'=[k']^u\supseteq[k]^u=K_b$ and with input dits $K'_{B_{\mathrm{in}}}=\bigsqcup_{b\in B_{\mathrm{in}}}K'_b$ and output dits $K'_{B_{\mathrm{out}}}=\bigsqcup_{b\in B_{\mathrm{out}}}K'_b$, using time $\bar{T}'\leq O(\bar{T}\cdot(u\log k)^2)$ and gate set $\cG$, with associated function $\bar{\cR}'(\cdot):\bF_q^{K'_{B_{\mathrm{in}}}}\rightarrow\bF_q^{K'_{B_{\mathrm{out}}}}$ that maps every $x\in\bF_q^{K'_{B_{\mathrm{in}}}}$ to
  \begin{equation*}
    \bar{\cR}'(x) = (\bar{\cR}(x|_{K_{B_{\mathrm{in}}}}),\; 0^{K'_{B_{\mathrm{out}}}\setminus K_{B_{\mathrm{out}}}}) \in \bF_q^{K'_{B_{\mathrm{out}}}}.
  \end{equation*}
  Furthermore, for every $t\in[\bar{T}']$ , then either:
  \begin{enumerate}
  \item\label{it:comX} $\bar{R}_t'$ consists entirely of $\gX^*$ gates, or
  \item\label{it:comnotX} For some $i\in[u]$ and some $G\in\{\gInit,\gTerm,\gCX^*,\gCCX^*\}$, $\bar{R}_t'$ consists entirely of transversal applications of $G$ to sets $K'_b|_{i\rightarrow j}$ for arbitrary values of $b\in[3]$ and $j\in[k']$.
  \end{enumerate}
\end{lemma}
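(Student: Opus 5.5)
We compile $\bar{\cR}$ one timestep at a time. Each logical timestep $\bar{R}_t$ is replaced by a block of $O((u\log k)^2)$ timesteps of the restricted form. The padding dits in $K'_b\setminus K_b$ stay $0$ throughout, and the data dits in $K_{[3]}$ are held at fixed ``home'' positions between blocks. The $\gX^*$ gates of $\bar{R}_t$ act on single dits, so we collect all of them into one timestep of type \Cref{it:comX}; this costs nothing. All remaining gates in $\bar{R}_t$ are $\gInit$, $\gTerm$, $\gCX^*$, or $\gCCX^*$. Because $\bar{R}_t$ is a direct product of gates on disjoint dits, we first split it into $O(1)$ sub-layers, grouping gates by type and coefficient. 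The main content is then a routing primitive that makes an arbitrary collection of disjoint 2- or 3-dit gates ``transversal'' in the sense of \Cref{it:comnotX}.

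\textbf{Routing.} The $8\times$ enlargement $k'=8k$ serves two purposes. It lets us place the (at most three) operands of every gate in separate hyperplane slices of $K'_b$: for instance, all first operands in $K'_{b}|_{1\rightarrow 1}$, all second operands in $K'_{b}|_{1\rightarrow 2}$, and all third operands in $K'_{b}|_{1\rightarrow 3}$, at the same remaining coordinates. It also leaves zero-valued padding room for these slices, because each slice $K'_b|_{1\rightarrow j}$ has $(k')^{u-1}\geq k^u$ positions, using $u\geq 2$ and $k'\geq k$. Operands in different blocks $b$ are handled the same way using the analogous slices of the other blocks.

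To move dits we use \Cref{lem:route} (Batcher's sorting networks). This gives a network on the hypercube $[k']^u$ of depth $O((\log k'^u)^2)=O((u\log k)^2)$. Each layer exchanges pairs of positions that differ in a single coordinate $i$, and the pairs are the same across all other coordinates. We implement each layer of conditional swaps as an unconditional permutation, fixed at compile time because $\bar{\cR}$ is known. Each such swap layer is realized by the $\gInit,\gCX^{\pm1},\gTerm$ swap circuit given after \Cref{def:circuit}, applied transversally between slices $K'_b|_{i\rightarrow j}$ and $K'_b|_{i\rightarrow j'}$, with ancillas in an unused slice $j''$ (padding again).

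One difficulty is that Batcher layers in direction $i$ may swap different pairs $(j,j')$ at different other coordinates. We remove this in one of two ways. Either we use an oblivious permutation network whose layers are genuinely ``dimension-$i$ exchanges'' (a Beneš-type network on each coordinate axis, composed over the $u$ axes), or we refine each layer into $O(1)$ sub-layers, each of which is a uniform slice-to-slice exchange. The $O(1)$ factors and the $O(\log k)$ factors from using $k'$ slot values per axis give the stated $O((u\log k)^2)$ bound.

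\textbf{Applying the gates.} After routing, we apply each gate type transversally across the aligned slices, producing one timestep of type \Cref{it:comnotX}. For $\gInit$/$\gTerm$ on individual dits, we route the affected dits to a slice, apply the gate transversally there, and refill the slice with zero padding. Gates that must not touch certain positions of an aligned slice are handled by routing a padding dit there instead. Padding dits are $0$, and $\gCX^a$ or $\gCCX^a$ with a zero control leaves the target unchanged, so this is harmless. For $\gCX$, $\gCCX$ with differing coefficients $a$, we use separate sub-layers. Finally, we apply the inverse routing to return every dit to its home position. Correctness then follows by induction on $t$: the data dits reproduce $\bar{\cR}$'s state and the padding stays $0$, which gives $\bar{\cR}'(x)=(\bar{\cR}(x|_{K_{B_{\mathrm{in}}}}),0)$.

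The main obstacle is the routing step, specifically the requirement that every layer be uniform across an entire slice $K'_b|_{i\rightarrow j}$ (the same $j$ for all other coordinates). Standard sorting networks do not provide this. I would first try to express a $u$-dimensional permutation as $O(u)$ rounds of ``permute along axis $i$ independently in each column.'' I would then implement each such axis-wise permutation in the slice-uniform form by routing through a product of butterfly networks on $[k']$, which has depth $O(\log k)$ and pays a factor of $k'/k=8$ in space. Assembling these gives $O(u\log k)$ rounds; squaring this crudely, for example for the Batcher-based bound, stays within the claimed $O((u\log k)^2)$.
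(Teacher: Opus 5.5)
Your overall architecture matches the paper's: you compile each layer of $\bar{\cR}$ into a routing block, a transversal-gate step and an unrouting block using \Cref{lem:route}. But there is a genuine gap at exactly the step you flag as the main obstacle, and none of your proposed fixes closes it.

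\textbf{The routing gap.} A transversal swap between $K'_b|_{i\rightarrow j}$ and $K'_b|_{i\rightarrow j'}$ exchanges the two slices in their entirety. Compositions of such layers therefore only realize product permutations $(x_1,\dots,x_u)\mapsto(\sigma_1(x_1),\dots,\sigma_u(x_u))$, which cannot route an arbitrary $\pi$.

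The non-uniformity is not in the pairing of slices. On the binary sub-cube $[2^s]^u\cong[2]^{su}$ with $2^s\in[4k,8k]$, a Batcher layer already pairs slice $j$ with $j\oplus 2^\beta$ uniformly. (This sub-cube is where \Cref{lem:route} actually applies; $[k']^u$ itself is not a binary hypercube.) The non-uniformity is in the switch settings: for a fixed $\pi$, a layer swaps only a position-dependent subset of these pairs. Beneš or butterfly networks run ``independently in each column'' have the same problem, since different columns need different settings. A position-dependent subset of swaps cannot be written as $O(1)$ uniform slice exchanges.

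The missing idea is that the $\gX^*$-only timesteps allowed by \Cref{it:comX} may apply \emph{different} constants at different positions. The paper uses this to write the switch settings as a $0/1$ indicator into a freshly reset slice of another block, e.g.\ $K'_2|_{i\rightarrow j}$. It then performs a transversal \emph{controlled} swap across three slices, using six $\gCCX^{\pm1}$ gates and a zero ancilla slice. This makes every Batcher layer slice-uniform at $O(1)$ cost, giving $O((u\log k)^2)$ time per routing.

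\textbf{Coefficients, capacity and padding.} The same trick repairs a second problem. Putting $\gCX^a$/$\gCCX^a$ gates with different $a$ into separate sub-layers costs one sub-layer per distinct coefficient in a layer of $\bar{\cR}$. That can be as many as $\min\{q,\Theta(k^u)\}$ rather than $O(1)$, and since $q$ is arbitrary this breaks the time bound. The paper instead loads $a$ into an ancilla with a non-uniform $\gX^a$ and then applies a uniform $\gCCX^{+1}$ (\Cref{fact:CCXnocoeff}).

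Two smaller issues also need fixing. First, a single slice has $(8k)^{u-1}$ positions, which is fewer than $k^u$ once $k>8^{u-1}$. You therefore need $k$ slices per operand role; the paper uses $K'_1|_{3\rightarrow j}$, $K'_1|_{3\rightarrow j+k}$ and $K'_1|_{3\rightarrow j+2k}$ for $j\in[k]$. Second, the padding dits in $K'_b\setminus K_b$ for $b\in B_{\mathrm{in}}$ are arbitrary inputs that $\bar{\cR}'$ must ignore. Your ``padding stays $0$'' invariant has to be established by an explicit $\gTerm$/$\gInit$ reset at the start.
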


To illustrate the structure of $\bar{\cR}'$ in \Cref{lem:compile}, consider some $t\in[\bar{T}']$ and $a\in\bF_q$ such that $G=\gCX^a$ in \Cref{it:comnotX} in \Cref{lem:compile}. Then there exists $i\in[u]$ such that the only non-identity gates applied in timestep $t$ of $\bar{\cR}'$ consist of transversal applications (in the sense of \Cref{def:transversal}) of $G=\gCX^a$ to pairs of sets $K'_{b_1}|_{i\rightarrow j_1},K'_{b_2}|_{i\rightarrow j_2}$ for $(b_1,j_1)\neq(b_2,j_2)\in[3]\times[k']$. That is, $G=\gCX^a$ is applied to every $\kappa_1\in K'_{b_1}|_{i\rightarrow j_1},\; \kappa_2\in K'_{b_2}|_{i\rightarrow j_2}$ such that $\kappa_1\in K'_{b_1}\cong[k']^u$ and $\kappa_2\in K'_{b_2}\cong[k']^u$ agree at all coordinates in $[u]\setminus\{i\}$ when viewed as tuples in $[k']^u$.

To prove \Cref{lem:compile}, we will use the following known result on efficiently implementing permutations in circuits with a hypercubic connectivity structure. Below, we define the gate $\gSwap:\bF_q^2\rightarrow\bF_q^2$ in the ordinary way $\gSwap(x_1,x_2)=(x_2,x_1)$. Recall that $\gSwap$ can for instance be implemented using one ancilla dit along with $6$ $\gCX^*$ gates, one $\gInit$ gate, and one $\gTerm$ gate.\footnote{If $\bF_q$ has characteristic $2$, there is a simpler implementation with $3$ $\gCX^*$ gates and no ancillas.}

\begin{lemma}[\cite{batcher_sorting_1968}]
  \label{lem:route}
  For $s\in\bN$, let $S=[2]^s$, and let $\pi:S\rightarrow S$ be a permutation, which we also view as a function $\pi:\bF_q^S\rightarrow\bF_q^S$ in the natural way. Then there exists a circuit $\cR$ acting on dits labeled by $S$ using time $T\leq O(s^2)$  
  and gate set $\{\gSwap\}$, such that $\cR(\cdot)=\pi(\cdot)$. Furthermore, for every time step $t\in[T]$, there exists some $i(t)\in[s]$ such that every $\gSwap$ gate in $R_t$ acts on a pair of dits whose labels differ only in coordinate $i(t)$.
\end{lemma}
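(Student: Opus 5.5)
The plan is to prove \Cref{lem:route} via Batcher's bitonic (or odd-even merge) sorting network, using the fact that it realizes arbitrary permutations on $[2]^s$ with hypercube-aligned compare-exchange layers. First, reduce routing to sorting. Given $\pi$, label each dit $\sigma\in S$ with the destination key $\pi(\sigma)\in S$. Identify $S=[2]^s$ with $\{0,\dots,2^s-1\}$ via binary expansion, with coordinate $i$ corresponding to bit $i-1$. Sorting the dits by key then moves each dit to position $\pi(\sigma)$.

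The keys are known at compile time, since $\pi$ is fixed. So each comparator of a sorting network, run on this fixed key assignment, either swaps or does not. Simulating the network on the keys therefore tells us which comparators act as $\gSwap$ and which act as the identity. Replacing each comparator accordingly gives a circuit of $\gSwap$ gates only, and it computes $\pi(\cdot)$ on arbitrary data.

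Second, use the structure of the bitonic sorter on $2^s$ elements. It consists of merge stages $\ell=1,\dots,s$. Stage $\ell$ consists of layers $j=\ell,\ell-1,\dots,1$. In layer $j$ every comparator pairs positions differing exactly in bit $j-1$, with direction determined by bit $\ell$. Hence every layer compares only pairs whose labels differ in a single coordinate $i(t)=j$. The total depth is $\sum_{\ell=1}^s\ell=s(s+1)/2=O(s^2)$, and the zero-one principle gives correctness.

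Third, account for the gate model. One layer of disjoint $\gSwap$ gates is one timestep, so $T\le s(s+1)/2=O(s^2)$. Under \Cref{def:circuit}, $\gSwap$ is treated here as a primitive gate whose two inputs and outputs are the same dits, exactly as the lemma's gate set $\{\gSwap\}$ permits. Its decomposition into $\gCX^*$, $\gInit$ and $\gTerm$ gates is deferred to later use and adds only a constant factor.

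The only real subtlety is the second step: checking that the comparators in each layer are all aligned along one coordinate. This holds by construction of the bitonic network but should be stated carefully. The first step's conversion from data-dependent comparators to fixed swaps is a standard observation. Since the statement attributes the result to \cite{batcher_sorting_1968}, I would cite it for correctness of the network and only spell out the layer-alignment and depth count.
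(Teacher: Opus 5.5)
Your proposal is correct and is the standard argument behind the citation (the paper gives no proof of its own beyond citing Batcher). You run the bitonic network on the fixed keys $\pi(\sigma)$, hard-wire each comparator as $\gSwap$ or identity, and observe that layer $j$ of stage $\ell$ pairs only labels differing in bit $j-1$, which gives depth $s(s+1)/2$.

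One small correction: drop the parenthetical ``(or odd-even merge)''. Batcher's odd-even merge network does not have the single-coordinate-per-layer property; for example, its merge of blocks of size $4$ compares positions $1=01_2$ and $2=10_2$. So the bitonic network is the one you must use.
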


\Cref{lem:route} allows us to implement arbitrary logical permutations via transversal controlled-swap gates, which in turn can be implemented using transversal $\gCCX$ gates.
Therefore to prove \Cref{lem:compile}, we will use \Cref{lem:route} to move dits into appropriate positions, such that the desired gates can then be implemented transversally. We provide the details below; we will also use the following basic fact.

\begin{fact}
  \label{fact:CCXnocoeff}
  For every $a\in\bF_q$, the gate $\gCX^a$ can be implemented using one ancilla dit along with one $\gCCX^{+1}$ gate, one $\gX^*$ gate, one $\gInit$ gate, and one $\gTerm$ gate.

  Furthermore, the gate $\gCCX^a$ can be implemented using two ancilla dits along with two $\gCCX^{+1}$ gates, one $\gX^*$ gate, two $\gInit$ gates, and two $\gTerm$ gates.
\end{fact}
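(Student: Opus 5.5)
The plan is to give explicit gate sequences, using an ancilla initialized by $\gX^*$ to hold the constant $a$, and to check each by direct computation. Gates in a circuit may be assigned to any ordered tuple of dits (\Cref{def:circuit}), so each $\gCCX^{+1}$ below can choose which dits serve as its two controls and its target.

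For $\gCX^a$ acting on $(x_1,x_2)$, I would proceed as follows.
\begin{enumerate}
\item Apply $\gInit$ to create an ancilla dit $3$ with value $0$.
\item Apply $\gX^a$ to dit $3$, giving it value $a$.
\item Apply $\gCCX^{+1}$ with controls dits $1$ and $3$ and target dit $2$. This maps $(x_1,x_2,a)\mapsto(x_1,\,x_2+ax_1,\,a)$.
\item Apply $\gTerm$ to dit $3$.
\end{enumerate}
Since $\gTerm$ accepts any input value, we do not need to uncompute the ancilla. The resulting map is $(x_1,x_2)\mapsto(x_1,ax_1+x_2)=\gCX^a(x_1,x_2)$. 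This uses exactly one ancilla together with one each of $\gInit$, $\gX^*$, $\gCCX^{+1}$ and $\gTerm$.

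For $\gCCX^a$ acting on $(x_1,x_2,x_3)$, I would proceed as follows.
\begin{enumerate}
\item Initialize two ancillas $4,5$ to $0$ with two $\gInit$ gates.
\item Apply $\gX^a$ to dit $4$, giving it value $a$.
\item Apply $\gCCX^{+1}$ with controls $1,4$ and target $5$, so dit $5$ becomes $ax_1$.
\item Apply $\gCCX^{+1}$ with controls $5,2$ and target $3$, so dit $3$ becomes $x_3+ax_1x_2$.
\item Terminate dits $4,5$ with two $\gTerm$ gates.
\end{enumerate}
The net map is $(x_1,x_2,x_3)\mapsto(x_1,x_2,ax_1x_2+x_3)=\gCCX^a(x_1,x_2,x_3)$. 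The gate counts are two ancillas, two $\gCCX^{+1}$, one $\gX^*$, two $\gInit$ and two $\gTerm$, as claimed.

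The only point needing care, and it is minor, is compatibility with the circuit model of \Cref{def:circuit}. Every gate must return its controls on the same dits, which $\gCCX^{+1}$ does, and ancillas must be created and destroyed via $\gInit$ and $\gTerm$, which the sequences do. So I expect no real obstacle: the proof is this direct verification.
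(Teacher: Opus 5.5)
Your proposal is correct and matches the paper's proof, which uses the identical four-gate sequence for $\gCX^a$. For $\gCCX^a$ the only difference is immaterial: you first form $ax_1$ in an ancilla and then multiply it by $x_2$ into the target, while the paper first forms $x_1x_2$ in an ancilla and then multiplies it by the ancilla holding $a$. The gate counts are the same either way.
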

\begin{proof}
  We implement $\gCX^a$ by performing
  \begin{align*}
    (x_1,x_2)
    &\xrightarrow{\gInit} (x_1,x_2,0) \xrightarrow{\gX^a} (x_1,x_2,a) \\
    &\xrightarrow{\gCCX^{+1}} (x_1,x_2+ax_1,a) \xrightarrow{\gTerm} (x_1,x_2+ax_1).
  \end{align*}
  We implement $\gCCX^a$ by performing
  \begin{align*}
    (x_1,x_2,x_3)
    &\xrightarrow{\gInit^{\sqcup 2}} (x_1,x_2,x_3,0,0) \xrightarrow{\gX^a} (x_1,x_2,x_3,0,a) \xrightarrow{\gCCX^{+1}} (x_1,x_2,x_3,x_1x_2,a) \\
    &\xrightarrow{\gCCX^{+1}} (x_1,x_2,x_3+ax_1x_2,x_1x_2,a) \xrightarrow{\gTerm^{\sqcup 2}} (x_1,x_2,x_3+ax_1x_2).
  \end{align*}
\end{proof}

\begin{proof}[Proof of \Cref{lem:compile}]
  The circuit $\bar{\cR}'$ will consist of the sequential composition of three parts
  \begin{equation}
    \label{eq:bRpstructure}
    \bar{\cR}' = \bar{\cR}'_{\mathrm{end}} \circ \bar{\cR}'_{\mathrm{mid}} \circ \bar{\cR}'_{\mathrm{start}},
  \end{equation}
  which we now define.
  $\bar{\cR}'_{\mathrm{start}}$ initializes all dits in $K'_{[3]}\setminus K_{B_{\mathrm{in}}}$ to $0$, and then performs transversal $\gSwap$ gates to move all dits in $K_{B_{\mathrm{in}}}$ into $K'_1$. Specifically, for each $b\in[3]\setminus K_{B_{\mathrm{in}}}$ then $\bar{\cR}'_{\mathrm{start}}$ applies $\gInit$ to all dits in $K'_b$, while for each $b\in K_{B_{\mathrm{in}}}$, each $i\in[u]$, and each $j\in[k']\setminus[k]$ then $\bar{\cR}'_{\mathrm{start}}$ applies $\gTerm$ followed by $\gInit$ to all dits in $K'_b|_{i\rightarrow j}$. Next, for each $j\in[k]$, $\bar{\cR}'_{\mathrm{start}}$ swaps the dits in $K'_2|_{1\rightarrow j}$ with the dits in $K'_1|_{1\rightarrow j+k}$, where the swaps are implemented using $\gCX^*$ gates (see above) with $K'_2|_{1\rightarrow j+k}$ as ancillas. Similarly, for each $j\in[k]$, $\bar{\cR}'_{\mathrm{start}}$ swaps the dits in $K'_3|_{2\rightarrow j}$ with the dits in $K'_1|_{2\rightarrow j+k}$, where the swaps are implemented using $\gCX^*$ gates (see above) with $K'_3|_{2\rightarrow j+k}$ as ancillas. This entire procedure takes time $2u+O(1)=O(u)$, as in each step we can parallelize across all $j$.

  Therefore $\bar{\cR}'_{\mathrm{start}}$ swaps the input dits originally in $K_1,K_2,K_3$ into the respective blocks $K_{1,1},K_{1,2},K_{1,3}\subseteq K'_1$ defined by $K_{1,1}=[k]^u$, $K_{1,2}=\{k+1,\dots,2k\}\times[k]^{u-1}$, $K_{1,3}=[k]\times\{k+1,\dots,2k\}\times[k]^{u-2}$. We let $K_{1,[3]}=K_{1,1}\sqcup K_{1,2}\sqcup K_{1,3}\subseteq K'_1$. The circuit $\bar{\cR}'_{\mathrm{end}}$ will invert this operation, by performing the same swaps described above to swap the dits in $K_{1,[3]}$ back to their original positions $K_{[3]}$. Then for each $b\in [3]\setminus B_{\mathrm{out}}$, $\bar{\cR}'_{\mathrm{end}}$ applies $\gTerm$ to all dits in $K'_b$, while for each $b\in B_{\mathrm{out}}$, each $i\in[u]$, and each $j\in[k']\setminus[k]$, $\bar{\cR}'_{\mathrm{end}}$ applies $\gTerm$ followed by $\gInit$ to all dits in $K'_b|_{i\rightarrow j}$. This procedure for $\bar{\cR}'_{\mathrm{end}}$ takes time $O(u)$, similarly as for $\bar{\cR}'_{\mathrm{start}}$ above.

  Therefore $\bar{\cR}'$ as defined in \Cref{eq:bRpstructure} with $\bar{\cR}'_{\mathrm{start}},\bar{\cR}'_{\mathrm{end}}$ defined above has the desired set of input and output dits, and outputs value $0$ on all dits in $K'_{B_{\mathrm{out}}}\setminus K_{B_{\mathrm{out}}}$. For $\alpha\in\{\mathrm{in},\mathrm{out}\}$, let $K_{1,B_\alpha}=\bigsqcup_{b\in B_\alpha}K_{1,b}\cong K_{B_\alpha}$. It only remains to construct $\bar{\cR}'_{\mathrm{mid}}$ such that for every $x\in\bF_q^{K_{1,B_{\mathrm{in}}}}$, then
  \begin{equation}
    \label{eq:Rpmid}
    \bar{\cR}'_{\mathrm{mid}}(x,0^{K'_{[3]}\setminus K_{1,B_{\mathrm{in}}}})|_{K_{1,B_{\mathrm{out}}}}=\bar{\cR}(x)
  \end{equation}
  under the isomorphisms $\bF_q^{K_{B_\alpha}}\cong\bF_q^{K_{1,B_\alpha}}$.


  Let $s=\lceil\log_2 k\rceil+2$, and fix an isomorphism $[2^s]\cong[2]^s$, so that $[2^s]^u\cong[2]^{su}$. Note that $4k\leq 2^s\leq 8k=k'$. We let $[2^s]^u_1\subseteq K'_1$ denote the dits in $K'_1$ labeled by $[2^s]^u\subseteq[k']^u$
  
  We begin by applying \Cref{lem:route} to show \Cref{claim:comperm} below.
  In the remainder of this proof below, we say a circuit acting on dits $K'_{[3]}$ (or on a subset thereof) has \emph{transversal structure} if each timestep of the circuit satisfies \Cref{it:comX} or \Cref{it:comnotX} in the statement of \Cref{lem:compile}.

  \begin{claim}
    \label{claim:comperm}
    For every permutation $\pi:[2^s]^u\times[2^s]^u$, there exists a circuit using space $K'_{[3]}$ and gate set $\cG$ with transversal structure that permutes the dits in $[2^s]^u_1\subseteq K'_1$ by $\pi$ in time $O(su)^2$. Dits in $K'_{[3]}\setminus[2^s]^u_1$ may be acted upon arbitrarily.
  \end{claim}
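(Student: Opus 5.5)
Identify $[2^s]^u\cong[2]^{su}$ via the fixed isomorphism $[2^s]\cong[2]^s$, so a permutation $\pi$ of $[2^s]^u$ is a permutation of $[2]^{su}$. Apply \Cref{lem:route} with $s$ replaced by $su$. This gives a circuit of depth $O((su)^2)$ made of $\gSwap$ gates. In each timestep $t$ there is a single bit-coordinate $i(t)\in[su]$ such that every swap acts on two labels differing only in bit $i(t)$. That bit belongs to a unique hypercube direction $i\in[u]$, namely the block of $s$ bits it lies in. So at time $t$ all swaps pair $\kappa,\kappa'\in[2^s]^u$ that agree on every coordinate in $[u]\setminus\{i\}$ and whose $i$-th coordinates $j,j'\in[2^s]$ differ in one bit.

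The difficulty is that a swap layer need not be transversal in the sense of \Cref{it:comnotX}. Whether $\kappa$ is swapped depends on all its coordinates, not just on $\kappa_i$. I will therefore realize each layer as a \emph{controlled} swap. I use the ancilla blocks $K_2',K_3'$, or the unused part $K_1'\setminus[2^s]^u_1$, to hold control dits. Take one control dit $c_\kappa\in\{0,1\}$ per label $\kappa$, placed at the same position $\kappa$ of $K_2'$, with $c_\kappa=1$ iff the swap at $(\kappa,\kappa')$ is performed.

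The controls are written in constant time. Apply $\gInit$ transversally to $K_2'$, then a single $\gX^*$ layer (\Cref{it:comX}) to set the pattern. The $\gX$ layer may be arbitrary and non-transversal.

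The controlled swap is then done coordinatewise over $\bF_q$. Set $\delta=x_{\kappa'}-x_\kappa$ and perform $x_\kappa\mathrel{+}=c\delta$ and $x_{\kappa'}\mathrel{-}=c\delta$. Concretely:
\begin{enumerate}
\item Compute $\delta$ into an ancilla slab using transversal $\gCX^{\pm1}$ between $K_1'|_{i\to j}$, $K_1'|_{i\to j'}$ and an ancilla slab $K_3'|_{i\to j}$.
\item Apply transversal $\gCCX^{\pm1}$ with controls $K_2'|_{i\to j}$ and $K_3'|_{i\to j}$ and targets $K_1'|_{i\to j}$ and $K_1'|_{i\to j'}$.
\item Uncompute $\delta$ with transversal $\gCX$.
\item Reset the controls with another $\gX^*$ layer.
\end{enumerate}
Here $c_\kappa$ sits at position $\kappa$ of $K_2'$, aligned with $\kappa$, and $\kappa,\kappa'$ differ only in coordinate $i$, so every gate is a transversal application between slabs $K'_b|_{i\to j}$ in direction $i$.

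Each pair $(j,j')$ touches disjoint slabs. So all pairs in a given layer can run in parallel within the same $O(1)$ timesteps, with each timestep using a single gate type $G$. Control values outside the swapped pairs are $0$, so those dits are fixed, and dits in $[2^s]^u_1$ are permuted exactly as the layer prescribes. Every ancilla returns to $0$, or is at least allowed to be arbitrary, as the claim permits.

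Summing over the $O((su)^2)$ layers gives total time $O(su)^2$. The main point to verify is that the enlarged alphabet $k'=8k\ge 2^s$ leaves enough room: the controls and the $\delta$ slabs fit in $K_2',K_3'$ at aligned positions, and a $\gX^*$ layer together with transversal $\gCX/\gCCX$ layers satisfies the transversal-structure definition.
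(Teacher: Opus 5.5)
Your proposal is correct and follows essentially the same route as the paper. You apply \Cref{lem:route} on $[2]^{su}$, observe that each layer's swaps lie along a single direction $i$ but are not uniform across slabs, and then realize each layer as a transversal controlled swap whose indicator controls sit in aligned slabs of $K'_2$ and are written by a $\gX^*$ layer.

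One small fix is needed. Your linear uncompute of $\delta$ does not return the ancilla slab to $0$ at positions where $c=1$ in odd characteristic, because the difference there has become $-\delta$. That garbage would corrupt step~1 of a later layer. So either reset the $\delta$ slab with transversal $\gTerm$ followed by $\gInit$ at the start of each layer (do the same for the control slab, which is already active), or use the paper's controlled swap built from six $\gCCX^{\pm1}$ gates, which restores its ancilla to $0$.
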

  \begin{proof}
    \Cref{lem:route} gives a circuit on dits $[2^s]^u_1\cong[2]^{su}_1$ that implements the permutation $\pi$ using time $O(su)^2$ and gate set $\{\gSwap\}$, where each timestep $t$ applies a subset of the $\gSwap$ gates performed in transversal applications to sets $K'_1|_{i\rightarrow j}$ for some $i=i(t)\in[u]$. That is, for various disjoint pairs of sets $K'_1|_{i\rightarrow j_1},K'_1|_{i\rightarrow j_2}\subseteq K'_1$, this circuit from \Cref{lem:route} applies $\gSwap$ to some (but maybe not all) pairs of dits $\kappa_1\in K'_1|_{i\rightarrow j_1},\; \kappa_2\in K'_1|_{i\rightarrow j_2}$ whose labels in $[k']^u$ agree on coordinates $[u]\setminus\{i\}$. For each such $j_1,j_2$, we may instead initialize a block $K'_2|_{i\rightarrow j_3}\subseteq K'_2$ to contain the indicator vector for the $\gSwap$ gates we want to perform on $K'_1|_{i\rightarrow j_1},K'_2|_{i\rightarrow j_2}$; we can in particular set $j_3=j_1$. This initialization can be done with $\gTerm$ followed by $\gInit$ gates to set dits in $K'_2|_{i\rightarrow j_3}$ to value $0$, and then by a layer of $\gX^*$ gates to give the desired values in $\{0,1\}$. Then the desired $\gSwap$ gates can be implemented by permforming transversal controlled-$\gSwap$ across the three blocks $K'_2|_{i\rightarrow j_3},K'_1|_{i\rightarrow j_1},K'_1|_{i\rightarrow j_2}$. Just as $\gSwap$ can be implemented using an ancilla dit initialized to $0$ along with $6$ $\gCX^*$ gates, controlled-$\gSwap$ can be implemented using an ancilla dit initializd to $0$ along with $6$ $\gCCX^*$ gates. Therefore we initialize an additional block $K'_2|_{i\rightarrow j_4}\subseteq K'_2$ for these additional ancilla dits, and then we implement one layer of the desired $\gSwap$ gates using gates $\{\gInit,\gTerm,\gX^*,\gCCX^*\}\subseteq\cG$ in a circuit with transversal structure; we can in particular set $j_4=j_2$. Furthermore, by construction this circuit acts on dits $K'_1\sqcup K'_2$ and runs in time $O(1)$. It follows that the entire permutation circuit, which consists of $O(su)^2$ layers of $\gSwap$ gates, can be implemented in a circuit with transversal structure using gate set $\cG$ in time $O(su)^2$, as desired.
  \end{proof}



  We thus construct $\bar{\cR}'_{\mathrm{mid}}$ from $\bar{\cR}$ as follows. First, by increasing the running time by at most a factor of~$5$, we can assume that every timestep in $\bar{\cR}$ consists entirely of gates from one of the sets $\{\gInit\}$, $\{\gTerm\}$, $\{\gX^*\}$, $\{\gCX^*\}$, or $\{\gCCX^*\}$. We now compile each such timestep into a sub-circuit of $\bar{\cR}'_{\mathrm{mid}}$ of the desired transversal structure, using time $O(su)^2$. Note that instead of terminating and then initializing a dit in $K_{[3]}$ that experiences a $\gTerm$ and then later a $\gInit$ gate in $\bar{\cR}$, we will simply set this dit's value to $0$, but leave it active in $\bar{\cR}'_{\mathrm{mid}}$. Specifically, we consider timesteps of $\bar{\cR}$ with each of the five types of gates separately:
  \begin{enumerate}
  \item $\gX^*$: Simply apply the desired $\gX^*$ gates to dits in $K_{1,[3]}\cong K_{[3]}$.
  \item $\gTerm$ on dits $A\subseteq K_{1,[3]}\cong K_{[3]}$: Run $\gTerm$ followed by $\gInit$ on $K'_1|_{3\rightarrow j+k}$ for every $j\in[k]$. (Recall here that $K'_1|_{3\rightarrow j+k}$ exists because $u\geq 4\geq 3$.) Use \Cref{claim:comperm} to swap some $|A|$ of the $\geq k(2^s)^{u-1}$ dits in $[2^s]^u_1\cap\bigsqcup_{j\in[k]}K'_1|_{3\rightarrow j+k}$ with the dits $A\subseteq K_{1,[3]}\cong K_{[3]}$, while preserving the positions of the other dits in $K_{1,[3]}$. (Recall here that $k(2^s)^{u-1}\geq 3k^u=|K_{1,[3]}|$ because $2^s\geq 4k$ and $u\geq 4$.) Then again run $\gTerm$ followed by $\gInit$ on $K'_1|_{3\rightarrow j+k}$ for every $j\in[k]$.
  \item $\gInit$ on dits $A\subseteq K_{[3]}$: apply the same procedure as for $\gTerm$ above.
  \item $\gCX^*$ on various pairs of dits in $K_{1,[3]}\cong K_{[3]}$: First, apply \Cref{claim:comperm} to permute the dits in $[2^s]^u_1\cap\bigsqcup_{j\in[2k]}K'_1|_{3\rightarrow j}$ such that every $\gCX^*$ gate now has control dit in some $K'_1|_{3\rightarrow j}$ for $j\in[k]$, and target dit in the respective position in $K'_1|_{3\rightarrow j+k}$. For dits acted on the by identity, we simply assign their associated gate to be $\gCX^0=I_2$. We then implement the desired $\gCX^*$ gates using transversal gates in $\cG$ on blocks $K'_b|_{3\rightarrow j}$. Specifically, we implement each gate $\gCX^a$ on a dit in $K'_1|_{3\rightarrow j}$ and a dit in $K'_1|_{3\rightarrow j+k}$ using the procedure in \Cref{fact:CCXnocoeff}, where $K'_2|_{3\rightarrow j}$ provides the ancilla dits, so that each timestep either performs $\gX^*$ gates, or else performs transversal $\gCCX^{+1}$, $\gInit$, or $\gTerm$ gates across different blocks $K'_b|_{3\rightarrow j'}$. Finally, we apply \Cref{claim:comperm} again to permute the dits originally in $K_{1,[3]}$ back to their starting positions.
  \item $\gCCX^*$ on various triples of dits in $K_{[3]}$: First, apply \Cref{claim:comperm} to permute the dits in $[2^s]^u_1\cap\bigsqcup_{j\in[3k]}K'_1|_{3\rightarrow j}$ such that every $\gCCX^*$ gate now has first control dit in some $K'_1|_{3\rightarrow j}$ for $j\in[k]$, second control dit in the respective position in $K'_1|_{3\rightarrow j+k}$, and target dit in the respective position in $K'_1|_{3\rightarrow j+2k}$. For dits acted on the by identity, we simply assign their associated gate to be $\gCCX^0=I_3$. We then implement the desired $\gCCX^*$ gates using transversal gates in $\cG$ on blocks $K'_b|_{3\rightarrow j}$. Specifically, we implement each gate $\gCCX^a$ using the procedure in \Cref{fact:CCXnocoeff}, where $K'_2|_{3\rightarrow j},K'_2|_{3\rightarrow j+k}$ provide the ancilla dits, so that each timestep either performs $\gX^*$ gates, or else performs transversal $\gCCX^{+1}$, $\gInit$, or $\gTerm$ gates across different blocks $K'_b|_{3\rightarrow j'}$. Finally, we apply \Cref{claim:comperm} again to permute the dits originally in $K_{1,[3]}$ back to their starting positions.
  \end{enumerate}

  By construction, in each cases above, the resulting circuit acting on dits $K'_{[3]}$ has the desired transversal structure, and by \Cref{claim:comperm} uses time $O(su)^2$. For each timestep of the original circuit consisting of either $\gX^*$, $\gCX^*$, or $\gCCX^*$ gates, the above construction for $\bar{\cR}'_{\mathrm{mid}}$ by definition induces the same function as the original circuit on dits $K_{1,[3]}\cong K_{[3]}$. Meanwhile, for each timestep of the original circuit consisting of either $\gInit$ or $\gTerm$ gates on some set $A\subseteq K_{1,[3]}\cong K_{[3]}$ of dits, the above construction simply sets the values of all dits in $A$ to $0$ while preserving the values of all other dits; this change does not affect the overall function implemented by the entire circuit, as a dit that is terminated cannot be used until it is reinitialized to $0$. Therefore when we combine all timesteps, $\bar{\cR}'_{\mathrm{mid}}$ satisfies the desired condition \Cref{eq:Rpmid}.

  It follows from the construction above that $\bar{\cR}'_{\mathrm{mid}}$ uses time $O(\bar{T}\cdot(su)^2)$. We showed above that $\bar{\cR}'_{\mathrm{start}},\bar{\cR}'_{\mathrm{end}}$ both use time $O(u)$, and all three circuits use space $K'_{[3]}$. Thus $\bar{\cR}'$ uses time $O(\bar{T}\cdot(su)^2)=O(\bar{T}\cdot(u\log k)^2)$ and space $K'_{[3]}$, as desired.
\end{proof}

\subsection{Code Family}
\label{sec:fdcodes}
In this section, we describe the codes $C(q,u,n,k,K)$ and associated encoding maps $\Enc_{C(q,u,n,k,K)}$ that we use to prove \Cref{thm:fdft}. Specifically, defining $k'=8k$ as in \Cref{lem:compile} above, will construct $C(q,u,n,k,K)$ as a subcode of the $u$th tensor power $C_0^{\otimes u}$ of a $[n,k',d=n-k'+1]_q$ code $C_0$ that exhibits a \emph{multiplication property}, meaning that $C_0^{*2}=C_0*C_0=\spn\{c*c':c,c'\in C_0\}$ has good distance. Such codes $C_0$ are in turn given by Reed-Solomon codes, as described below (see \Cref{def:RS}). Note that our techniques do not rely on the distance being exactly $n-k'+1$ (i.e.~on the Singleton bound), though we assume such a distance for simplicity, as it is achieved by the Reed-Solomon codes we use.

\begin{fact}
  \label{fact:basecode}
  For every $k'\leq n\in\bN$ and every prime power $q\geq n$, there exists a $[n,k',d=n-k'+1]_q$ code $C_0=C_0(q,n,k')$ such that $C_0^{*2}$ has distance $n-2k'+2$. Furthermore, there exists a radius-$(n-k'+1)/2$ decoder for $C_0$, and a radius-$(n-2k'+2)/2$-decoder for $C_0^{*2}$, where both decoders use space $|N'|=O(n^2\log q)$, time $T=O(n^2\log q)$, and get set $\cG=\{\gInit,\gTerm,\gX^*,\gCX^*,\gCCX^*\}$ over $q$-ary dits.
\end{fact}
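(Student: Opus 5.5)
We take $C_0=\RS(q,k',E)$ for an arbitrary subset $E\subseteq\bF_q$ of size $|E|=n$, which exists because $q\geq n$. By \Cref{def:RS}, $C_0$ is a $[n,k',n-k'+1]_q$ code.

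\textbf{Distance of $C_0^{*2}$.} First, \Cref{fact:RSmult} gives $C_0^{*2}\subseteq\RS(q,2k'-1,E)$. For the reverse inclusion, note that the monomials $X^a,X^b$ with $a,b<k'$ have products $X^{a+b}$ covering every degree $<2k'-1$. Evaluations of products lie in $C_0^{*2}$, so $C_0^{*2}=\RS(q,2k'-1,E)$ exactly. Hence its distance is $n-(2k'-1)+1=n-2k'+2$. This is positive provided $2k'-1\leq n$, which we assume in the regime of use, since $k'=8k\leq n/4$ in \Cref{thm:fdft}. If $2k'-1>n$, the code is all of $\bF_q^n$ and the stated distance is vacuous or nonpositive, so we would simply note the convention.

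\textbf{Decoders.} Both $C_0$ and $C_0^{*2}$ are Reed--Solomon codes over the same evaluation set, of dimensions $k'$ and $2k'-1$. We apply \Cref{lem:RSdec} to each. This gives radius-$d/2$ decoders with $d=n-k'+1$ and $d=n-2k'+2$ respectively. Each uses space and time $O(n^2\log q)$ and the gate set $\{\gInit,\gTerm,\gX^*,\gCX^*,\gCCX^*\}$, exactly as \Cref{lem:RSdec} states.

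\textbf{Main obstacle.} The only real content is the identification $C_0^{*2}=\RS(q,2k'-1,E)$. The inclusion from \Cref{fact:RSmult} already suffices for a lower bound on the distance. Equality is needed to assert that the distance is exactly $n-2k'+2$ and to apply \Cref{lem:RSdec} to $C_0^{*2}$ itself, and the monomial spanning argument handles it. Everything else is direct invocation of earlier results.
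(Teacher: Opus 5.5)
Your proposal is correct and follows the paper's proof. Both take $C_0=\RS(q,k',E)$, identify $C_0^{*2}$ with $\RS(q,2k'-1,E)$, and invoke the Welch--Berlekamp decoder of \Cref{lem:RSdec} for both codes.

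Your version also fills in details the paper's one-line proof glosses over. \Cref{fact:RSmult} alone only gives the inclusion $C_0^{*2}\subseteq\RS(q,2k'-1,E)$, so your monomial argument supplies the reverse inclusion. Your caveat that $2k'-1\leq n$ is needed is also right, and it holds in the regime of use since $k'=8k\leq n/4$.
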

\begin{proof}
  For some subset $E\subseteq\bF_q$ of size $|E|=n$, simply take $C_0$ to be the Reed-Solomon code $\RS(q,k',E)$, so that by \Cref{fact:RSmult} the code $C_0^{*2}=\RS(q,2k'-1,E)$ has distance $n-2k'+1$. The desired decoders exist by \Cref{lem:RSdec}.
\end{proof}

For the code $C_0$ in \Cref{fact:basecode}, fix some set $K'_0\subseteq[n]$ of size $|K'_0|=k'$ giving a systematic encoding for $C_0$, as shown to exist in \Cref{fact:systematic}. That is, under a fixed isomorphism $[k']\cong K'_0$, which induces an isomorphism $\bF_q^{k'}\cong\bF_q^{K'_0}$, we have a well-defined systematic encoding map $\Enc_0:\bF_q^{k'}\cong\bF_q^{K'_0}\xrightarrow{\sim}C_0\subseteq\bF_q^n$, meaning that each $x\in\bF_q^{k'}\cong\bF_q^{K'_0}$ is mapped to the unique codeword $\Enc_0(x)\in C$ such that $\Enc_0(x)|_{K'_0}=x$.

Furthermore, because $C_0^{*2}|_{K'_0}=(C_0|_{K'_0})^{*2}=(\bF_q^{K'_0})^{*2}=\bF_q^{K'_0}$, \Cref{fact:systematic} implies that there exists a subset $K_0''\subseteq[n]^u\setminus K'_0$ such that $K'_0\sqcup K_0''$ gives a systematic encoding map $\Enc_0':\bF_q^{K'_0\sqcup K_0''}\xrightarrow{\sim}C_0^{*2}\subseteq\bF_q^n$ for $C_0^{*2}$. That is, for every $(x,x')\in\bF_q^{K'_0}\oplus\bF_q^{K_0''}$, we have $\Enc_0'(x,x')|_{K'_0\sqcup K_0''}=(x,x')$.

\begin{claim}
  \label{claim:tensormultprop}
  For every $I\subseteq[u]$ and every $x_1,x_2\in\bF_q^{{K'_0}^I}$, there exists some $x'\in\bF_q^{(K'_0\sqcup K_0'')^I\setminus {K'_0}^I}$ such that
  \begin{equation}
    \label{eq:tensormultprop}
    \Enc_0^{\otimes I}(x_1)*\Enc_0^{\otimes I}(x_2) = {\Enc_0'}^{\otimes I}(x_1*x_2,x').
  \end{equation}
  In other words, the codes $(C_0^{\otimes I},\Enc_0^{\otimes I}),\; ((C_0^{*2})^{\otimes I},{\Enc_0'}^{\otimes I})$ satisfy \Cref{eq:tCCXxp} in \Cref{it:tCCX} in \Cref{lem:transversal}.
\end{claim}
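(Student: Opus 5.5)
The plan is to reduce to rank-one tensors and then extend by linearity, using the systematic structure of $\Enc_0$ and $\Enc_0'$. First I will handle the case $|I|=1$. For $x_1,x_2\in\bF_q^{K'_0}$, the product $\Enc_0(x_1)*\Enc_0(x_2)$ lies in $C_0*C_0\subseteq C_0^{*2}$. Since $\Enc_0'$ is an isomorphism onto $C_0^{*2}$, this product equals $\Enc_0'(z,x')$ for a unique $(z,x')\in\bF_q^{K'_0\sqcup K_0''}$. Systematicity of $\Enc_0'$ forces $z=(\Enc_0(x_1)*\Enc_0(x_2))|_{K'_0}$. Because the component-wise product commutes with restriction, this equals $\Enc_0(x_1)|_{K'_0}*\Enc_0(x_2)|_{K'_0}=x_1*x_2$, using systematicity of $\Enc_0$.

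For general $I$, I will avoid trying to factor $x_1,x_2$ into rank-one pieces, since the product is bilinear but not multiplicative over sums. Instead I will argue coordinatewise on the tensor code directly. The left side $w:=\Enc_0^{\otimes I}(x_1)*\Enc_0^{\otimes I}(x_2)$ lies in $(C_0^{*2})^{\otimes I}$. To see this, take two rank-one codewords $\bigotimes_i c_i$ and $\bigotimes_i c_i'$; their product is $\bigotimes_i (c_i*c_i')$, which lies in $\bigotimes_i C_0^{*2}$. Bilinearity of $*$ then extends this to all of $C_0^{\otimes I}$.

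Since ${\Enc_0'}^{\otimes I}$ is an isomorphism onto $(C_0^{*2})^{\otimes I}$, we get $w={\Enc_0'}^{\otimes I}(m)$ for a unique message $m$ indexed by $(K'_0\sqcup K_0'')^I$. Setting $x':=m|_{(K'_0\sqcup K_0'')^I\setminus {K'_0}^I}$, it remains to show $m|_{{K'_0}^I}=x_1*x_2$. The key observation is that a tensor product of systematic encodings is systematic on the product index set. On a rank-one message $\bigotimes_i y_i$, the restriction of $\bigotimes_i\Enc_0'(y_i)$ to $(K'_0\sqcup K_0'')^I$ is $\bigotimes_i y_i$, and linearity extends this to all messages. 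Hence $m=w|_{(K'_0\sqcup K_0'')^I}$, so $m|_{{K'_0}^I}=w|_{{K'_0}^I}$. The same argument for $\Enc_0^{\otimes I}$ gives $\Enc_0^{\otimes I}(x_b)|_{{K'_0}^I}=x_b$. Because $*$ commutes with restriction, $w|_{{K'_0}^I}=x_1*x_2$.

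I do not expect a real obstacle. The only point needing care is the identification of index sets: the systematic set of ${\Enc_0'}^{\otimes I}$ must be exactly $(K'_0\sqcup K_0'')^I$, with ${K'_0}^I$ sitting inside it. This uses that $K_0''$ is disjoint from $K'_0$; the paper writes $[n]^u$ there, which I read as $[n]$. Note that the $|I|=1$ computation is in fact subsumed by the general argument.
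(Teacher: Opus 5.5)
Your proof is correct and follows the same route as the paper: first show the product lies in $(C_0^{*2})^{\otimes I}$, then use that a tensor product of systematic encodings is systematic, so the ${K'_0}^I$-part of the message is the restriction $x_1*x_2$. The paper states these two facts in one line each, and you prove them explicitly; you are also right to read the paper's $[n]^u\setminus K'_0$ as a typo for $[n]\setminus K'_0$.
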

\begin{proof}
  By definition $\Enc_0^{\otimes I}(x_1)*\Enc_0^{\otimes I}(x_2)\in(C_0^{*2})^{\otimes I}$. Because tensor products of systematic encodings are in turn systematic encodings, we have $\Enc_0^{\otimes I}(x_1)*\Enc_0^{\otimes I}(x_2)|_{{K'_0}^I}=x_1*x_2\in\bF_q^{{K'_0}^I}$, so there exists some $x'\in\bF_q^{(K'_0\sqcup K_0'')^I\setminus {K'_0}^I}$ satisfying \Cref{eq:tensormultprop}.
\end{proof}


Recall that $k'=8k$. Let $K_0\subseteq K_0'$ be the subset of size $|K_0|=k$ corresponding to the set $[k]\subseteq[k']$ under the isomorphism $[k']\cong K_0'$, that is, $K_0\cong[k]\subseteq[k']\cong K_0'$.

For $K\subseteq[k]^u\cong K_0^u$, we then define $\Enc_{C(q,u,n,k,K)}:\bF_q^K\rightarrow\bF_q^{[n]^u}$ and $C(q,u,n,k,K)\subseteq\bF_q^{[n]^u}$ by
\begin{align*}
  \Enc_{C(q,u,n,k,K)} &:= \Enc_0^{\otimes u}|_K \\
  C(q,u,n,k,K) &:= \Enc_{C(q,u,n,k,K)}(\bF_q^K) = \{c\in C_0^{\otimes u}:c|_{{K'_0}^u\setminus K}=0\}.
\end{align*}
By definition, $\Enc_0^{\otimes u}$ and hence $\Enc_{C(q,u,n,k,K)}$ is also a systematic encoding map. That is, for $x\in\bF_q^K$ we have $\Enc_{C(q,u,n,k,K)}(x)|_K=x$.


\subsection{Fault-Detecting and Fault-Tolerant Gadgets}
In this section, we prove \Cref{thm:fdft} using the codes $C(q,u,n,k,K)$ defined in \Cref{sec:fdcodes}. The main idea is to compile the logical circuit $\bar{\cR}$ to a circuit $\bar{\cR}'$ with the transversal structure given by \Cref{lem:compile}, which we in turn render fault-tolerant using the gadgets in \Cref{sec:gadgets}.

Throughout this section, we define $q,u,n,k$ as in \Cref{thm:fdft}, we define $k'=8k$, $C_0=C_0(q,n,k')$, $\Enc_0$, $\Enc_0'$ as in \Cref{sec:fdcodes}, and for $K\subseteq[k]^u$ we define $C(q,u,n,k,K)$ as in \Cref{sec:fdcodes}. We also define $K'_{[3]}=K'_1\sqcup K'_2\sqcup K'_3$ with each $K'_b=[k']^u\supseteq[k]^u=K_b$ as in \Cref{lem:compile}. Furthermore, we let $N_{[3]}=N_1\sqcup N_2\sqcup N_3$ with each $N_b=[n]^u$. For a subset $B\subseteq[3]$, we define $K_B=\bigsqcup_{b\in B}K_b$, $K'_B=\bigsqcup_{b\in B}K'_b$, and $N_B=\bigsqcup_{b\in B}N_b$.

\begin{proof}[Proof of \Cref{thm:fdft}]
  By its definition in \Cref{sec:fdcodes}, the code $C(q,u,n,k,K)$ has $n^u$ codeword dits labeled by $[n]^u$, has $|K|=\dim(C(q,u,n,k,K))$ message dits labeled by $K$, and has distance $\geq(n-k'+1)^u=(n-8k+1)^u$. It therefore remains to prove the mending fault-detecting and fault-tolerant claims in the statement of \Cref{thm:fdft}.

  For this purpose, as in \Cref{thm:fdft}, fix a circuit $\bar{\cR}$ using space $K_{[3]}$, time $\bar{T}$, and gate set $\cG=\{\gInit,\gTerm,\gX^*,\gCX^*,\gCCX^*\}$, with input and output dits $K_{\mathrm{in}},K_{\mathrm{out}}\subseteq K_{[3]}$. For $\alpha\in\{\mathrm{in},\mathrm{out}\}$ and $b\in[3]$, define $C_{\alpha,b}$ as in \Cref{eq:Cab}, and let $B_\alpha=\{b\in[3]:K_\alpha\cap K_b\neq\emptyset\}$ be the set of $b\in[3]$ for which $C_{\alpha,b}\neq\emptyset$. 

  We first define a circuit $\bar{\cR}^0$ that is a slight modification of $\bar{\cR}$ in order to fit the structure needed to apply \Cref{lem:compile}. Specfically, $\bar{\cR}^0$ uses space $K_{[3]}$ and time $\bar{T}+2$, with input dits $K_{B_{\mathrm{in}}}\supseteq K_{\mathrm{in}}$ and output dits $K_{B_{\mathrm{out}}}\supseteq K_{\mathrm{out}}$. We let the first timestep of $\bar{\cR}^0$ apply $\gTerm$ to all dits in $K_{B_{\mathrm{in}}}\setminus K_{\mathrm{in}}$, and we let the last timestep of $\bar{\cR}^0$ apply $\gInit$ to all dits in $K_{B_{\mathrm{out}}}\setminus K_{\mathrm{out}}$; the remaining $\bar{T}$ timesteps of $\bar{\cR}^0$ simply apply $\bar{\cR}$. Therefore $\bar{\cR}^0(\cdot)$ implements the same function as $\bar{\cR}(\cdot)$, except with extra input dits in $K_{B_{\mathrm{in}}}\supseteq K_{\mathrm{in}}$ that are ignored, and extra output dits in $K_{B_{\mathrm{out}}}\setminus K_{\mathrm{out}}$ that will always be set to $0$.
  
  We can then apply \Cref{lem:compile} to $\bar{\cR}^0$ to obtain a circuit $\bar{\cR}'=(\bar{R}'_1,\dots,\bar{R}'_{\bar{T}'})$ using space $K'_{[3]}$ and time $\bar{T}'\leq O(\bar{T}\cdot(u\log k)^2)$, with input and output dits $K'_{B_{\mathrm{in}}},K'_{B_{\mathrm{out}}}$ respectively, such that $\bar{\cR}'$ has the transversal structure described in \Cref{lem:compile}, and such that for every $x\in\bF_q^{K'_{B_{\mathrm{in}}}}$ we have
  \begin{equation}
    \label{eq:Rpfun}
    \bar{\cR}'(x) = (\bar{\cR}^0(x|_{K_{B_{\mathrm{in}}}}),\; 0^{K'_{B_{\mathrm{out}}}\setminus K_{B_{\mathrm{out}}}}) = (\bar{\cR}(x|_{K_{\mathrm{in}}}),\; 0^{K'_{B_{\mathrm{out}}}\setminus K_{\mathrm{out}}}) \in \bF_q^{K'_{B_{\mathrm{out}}}}.
  \end{equation}

  We show \Cref{it:fd,it:ft} in \Cref{thm:fdft} separately below. In both cases, we construct the desired circuit $\cR$ by making each timestep of $\bar{\cR}'$ fault-detecting/fault-tolerant using \Cref{lem:transversal}, for which purpose we first switch to an associated code using \Cref{lem:switchone,lem:switchall}. The main difference between the two cases is in how we prevent error propagation: to prove \Cref{it:fd} we use \Cref{lem:errdet} to detect errors, whereas to prove \Cref{it:ft} we use \Cref{lem:errcorr} to correct errors.

  Recall from \Cref{sec:fdcodes} that for $b\in B_\alpha$, then $C_{\alpha,b}=C(q,u,n,k,K_\alpha\cap K_b)$ is simply the subcode of $C_0^{\otimes u}$ given by setting message dits $K'_b\setminus K_\alpha$ to $0$. As we will see below, this structure is perfectly aligned with the definition of the function $\bar{\cR}'(\cdot)$ (\Cref{eq:Rpfun} above), which simply applies $\bar{\cR}$ while setting dits in $K'_{B_\alpha}\setminus K_\alpha$ to $0$.

  \begin{enumerate}
  \item (Mending fault-detecting gadget) We define the desired circuit $\cR$ as follows. We let $\cR$ act on dits $N'=N_{[3]}\sqcup N_{\mathrm{anc}}$, where $N_{[3]}=([n]^u)^{\sqcup 3}$ as defined above, and $N_{\mathrm{anc}}$ is a set of ancilla dits that can be used by gadgets that we will invoke in the definition of $\cR$. By definition, $\cR$ has input dits $N_{B_{\mathrm{in}}}\subseteq N_{[3]}$ and output dits $N_{B_{\mathrm{out}}}\subseteq N_{[3]}$. We let the first timestep of $\cR$ apply $\gInit$ to all dits in $N_{[3]\setminus B_{\mathrm{in}}}$, and we let the last timestep of $\cR$ apply $\gTerm$ to all dits in $N_{[3]\setminus B_{\mathrm{out}}}$. To define the remainder of $\cR$, for each timestep $\bar{t}\in[\bar{T}']$ of $\bar{\cR}'$, we define an associated subcircuit $\cR_{\bar{t}}$ of $\cR$, acting on all dits $N'$ across $O(u^2n^2)$ timesteps, as follows. We then let $\cR$ be the sequential composition
    \begin{equation}
      \label{eq:fdR}
      \cR = \gTerm^{\sqcup N_{[3]\setminus B_{\mathrm{out}}}}\circ\cR_{\bar{T}'}\circ\cdots\circ\cR_1\circ\gInit^{\sqcup N_{[3]\setminus B_{\mathrm{in}}}}.
    \end{equation}

    By definition, $\bar{R}'_{\bar{t}}$ has the transversal structure described in \Cref{lem:compile}, meaning that there exists some $i\in[u]$ such that either $\bar{R}'_{\bar{t}}$ consists entirely of $\gX^*$ gates, or else entirely of transversal applications of some gate $G\in\{\gInit,\gTerm,\gCX^*,\gCCX^*\}$ to sets $K'_b|_{i\rightarrow j}$ for arbitrary values of $b\in[3]$ and $j\in[k]$. Therefore we let $\cR_{\bar{t}}$ perform the following:
    \begin{enumerate}[label=(\arabic*)]
    \item For every $b\in[3]$, run the error-detecting gadget in \Cref{lem:errdet} for the code $C_0^{\otimes u}$ on dits $N_b$ with
      \begin{align*}
        \lambda_{\mathrm{in}}^{(1)} &= \frac{(n-k'+1)^u}{2} \\
        \lambda_{\mathrm{out}}^{(1)} &= \frac{8u^2n^4}{\rho(u,n,n-k'+1)}\cdot\lambda_{\mathrm{run}} \\
        \lambda_{\mathrm{det}}^{(1)} &= \frac{\rho(u,n,n-k'+1)}{8}\cdot\lambda_{\mathrm{out}} = u^2n^4\cdot\lambda_{\mathrm{run}}.
      \end{align*}
    \item For every $b\in[3]$, run the downwards switching gadget in \Cref{it:sodown} in \Cref{lem:switchone} to unencode from $(C_0,\Enc_0)$ in direction~$i$ on dits $N_b$, with
      \begin{align*}
        \lambda_{\mathrm{in}}^{(2)} &= \lambda_{\mathrm{out}}^{(1)} \\
        \lambda_{\mathrm{out}}^{(2)} &= n\cdot\lambda_{\mathrm{in}}^{(2)}+3n^3\cdot\lambda_{\mathrm{run}} \leq \frac{11u^2n^5}{\rho(u,n,n-k'+1)}\cdot\lambda_{\mathrm{run}}.
      \end{align*}
      Specifically, each application of this gadget takes as input a codeblock of $C_0^{\otimes[u]}$ in dits $N_b$, and outputs a codeblock of $(C_0^{\otimes[u]\setminus\{i\}})^{\sqcup k'}$ in dits $\bigsqcup_{j\in[k']}N_b|_{i\rightarrow j}$.
    \item By construction $\bar{R}'_{\bar{t}}$ consists entirely of gates from one of the five sets $\{\gInit\}$, $\{\gTerm\}$, $\{\gX^*\}$, $\{\gCX^*\}$, or $\{\gCCX^*\}$, and these gates act transversally on blocks $K'_b|_{i\rightarrow j}$ in all cases except for $\{\gX^*\}$. We treat each case separately:
      \begin{description}
      \item[$\gInit$:] For every $b\in[3],j\in[k']$ for which $\bar{R}'_{\bar{t}}$ applies transversal $\gInit$ to dits $K'_b|_{i\rightarrow j}$, we apply transversal $\gTerm$ followed by $\gInit$ to dits $N_b|_{i\rightarrow j}$.
      \item[$\gTerm$:] Perform the same procedure as for the $\gInit$ case above.
      \item[$\gX^*$:] For every $b\in[3],j\in[k']$ for which $\bar{R}'_{\bar{t}}$ applies $\gX^a$ to $K'_b|_{i\rightarrow j}$ for some $a\in\bF_q^{[k']^{[u]\setminus\{i\}}}$, we apply the gadget $\cR_{\gX^a}$ in \Cref{it:tX} in \Cref{lem:transversal} to $(C_0^{\otimes[u]\setminus\{i\}},\Enc_0^{\otimes[u]\setminus\{i\}})$ on dits $N_b|_{i\rightarrow j}$.
      \item[$\gCX^*$:] For every $b_1,b_2\in[3],j_1,j_2\in[k']$ for which $\bar{R}'_{\bar{t}}$ applies $\gCX^a$ to $(C_0^{\otimes[u]\setminus\{i\}},\Enc_0^{\otimes[u]\setminus\{i\}})$ on dits $K'_{b_1}|_{i\rightarrow j_1},K'_{b_2}|_{i\rightarrow j_2}$ for some $a\in\bF_q$, we apply the gadget $\cR_{\gCX^a}$ in \Cref{it:tCX} in \Cref{lem:transversal} to $N_{b_1}|_{i\rightarrow j_1},N_{b_2}|_{i\rightarrow j_2}$.

        For each of the four cases $\gInit,\gTerm,\gX^*,\gCX^*$ above, we have run a gadget consisting of $\leq 2$ timesteps, which is by definition fault-tolerant with input and output error thresholds on each block of the code $C_0^{\otimes[u]\setminus\{i\}}$ given by
        \begin{align*}
          \lambda_{\mathrm{in}}^{(3)} &= \lambda_{\mathrm{out}}^{(2)} \\
          \lambda_{\mathrm{out}}^{(3)} &= 2\lambda_{\mathrm{in}}^{(3)}+2\lambda_{\mathrm{run}} \leq \frac{25u^2n^5}{\rho(u,n,n-k'+1)}\cdot\lambda_{\mathrm{run}}.
        \end{align*}
        
      \item[$\gCCX^*$] For every $b_1,b_2,b_3\in[3],j_1,j_2,j_3\in[k']$ for which $\bar{R}'_{\bar{t}}$ applies $\gCCX^a$ to $K'_{b_1}|_{i\rightarrow j_1}$, $K'_{b_2}|_{i\rightarrow j_2}$, $K'_{b_3}|_{i\rightarrow j_3}$ for some $a\in\bF_q$, we apply:
        \begin{itemize}
        \item The gadget in \Cref{it:sadet} in \Cref{lem:switchall} with $(C_{\mathrm{in}},\Enc_{\mathrm{in}})=(C_0^{\otimes[u]\setminus\{i\}},\Enc_0^{\otimes[u]\setminus\{i\}})$ and $(C_{\mathrm{out}},\Enc_{\mathrm{out}})=((C_0^{*2})^{\otimes[u]\setminus\{i\}},{\Enc_0'}^{\otimes[u]\setminus\{i\}})$ on dits $N_{b_3}|_{i\rightarrow j_3}$. (Note that here we replace $u$ in the statement of \Cref{lem:switchall} with $u-1$.) For symmetry across codeblocks, in parallel we also apply the gadget in \Cref{lem:switchall} with $C_{\mathrm{in}}=C_{\mathrm{out}}=C_0^{\otimes[u]\setminus\{i\}}$ on dits $N_{b_1}|_{i\rightarrow j_1}$ and $N_{b_2}|_{i\rightarrow j_2}$, though this step could be removed.
        \item The gadget $\cR_{\gCCX^a}$ in \Cref{it:tCCX} in \Cref{lem:transversal} to $(C_0^{\otimes[u]\setminus\{i\}},\Enc_0^{\otimes[u]\setminus\{i\}})$ on dits $N_{b_1}|_{i\rightarrow j_1},N_{b_2}|_{i\rightarrow j_2}$ and $((C_0^{*2})^{\otimes[u]\setminus\{i\}},{\Enc_0'}^{\otimes[u]\setminus\{i\}})$ on dits $N_{b_3}|_{i\rightarrow j_3}$. These codes satisfy the necessary condition in \Cref{eq:tCCXxp} by \Cref{claim:tensormultprop}.
        \item The gadget in \Cref{it:sadet} in \Cref{lem:switchall} with
          \begin{align*}
            (C_{\mathrm{in}},\Enc_{\mathrm{in}}) &= ((C_0^{*2})^{\otimes[u]\setminus\{i\}},{\Enc_0'}^{\otimes[u]\setminus\{i\}}) \\
            (C_{\mathrm{out}},\Enc_{\mathrm{out}}) &= (C_0^{\otimes[u]\setminus\{i\}},\Enc_0^{\otimes[u]\setminus\{i\}})
          \end{align*}
          on dits $N_{b_3}|_{i\rightarrow j_3}$. (Again we replace $u$ in the statement of \Cref{lem:switchall} with $u-1$.) Again for symmetry, in parallel we also apply the gadget in \Cref{lem:switchall} with $C_{\mathrm{in}}=C_{\mathrm{out}}=C_0^{\otimes[u]\setminus\{i\}}$ on dits $N_{b_1}|_{i\rightarrow j_1}$ and $N_{b_2}|_{i\rightarrow j_2}$.
        \end{itemize}
        By \Cref{lem:switchall,lem:transversal} along with \Cref{lem:seqcomp,lem:parcomp}, the composition of the gadgets above is a fault-detecting gadget for transversal $\gCCX^a$ on blocks $K_{b_1}|_{i\rightarrow j_1},K_{b_2}|_{i\rightarrow j_2},K_{b_3}|_{i\rightarrow j_3}$, where within each block of the code $C_0^{\otimes[u]\setminus\{i\}}$ we can take the input, output, and error detection thresholds to be
        \begin{align*}
          \lambda_{\mathrm{in}}^{(3)} &= \lambda_{\mathrm{out}}^{(2)} \\
          \lambda_{\mathrm{out}}^{(3)} &= \frac{8(u-1)^2n^4}{\rho(u-1,n,n-2k'+2)}\cdot\lambda_{\mathrm{run}} \\
          \lambda_{\mathrm{det}}^{(3)} &= (u-1)^2n^4\cdot\lambda_{\mathrm{run}}
        \end{align*}
        The upper bounds on the parameters $\lambda_{\mathrm{in}},\lambda_{\mathrm{run}}$ in \Cref{eq:sadlams} needed to apply \Cref{lem:switchall} hold by the definition of $\lambda_{\mathrm{run}}\leq\bar{\lambda}_{\mathrm{run}}$ in \Cref{eq:fdbarlamrun}.
      \end{description}
    \item For every $b\in[3]$, run the upwards switching gadget in \Cref{it:soup} in \Cref{lem:switchone} to encode into $(C_0,\Enc_0)$ in direction~$i$ on dits $N_b$, with
      \begin{align*}
        \lambda_{\mathrm{in}}^{(4)} &= \lambda_{\mathrm{out}}^{(3)} \leq \frac{25u^2n^5}{\rho(u,n,n-2k'+2)}\cdot\lambda_{\mathrm{run}} \\
        \lambda_{\mathrm{out}}^{(4)} &= n^2\cdot\lambda_{\mathrm{in}}^{(4)}+3n^3\cdot\lambda_{\mathrm{run}} \leq \frac{28u^2n^7}{\rho(u,n,n-2k'+2)}\cdot\lambda_{\mathrm{run}}.
      \end{align*}
      Specifically, each application of this gadget takes as input a codeblock of $(C_0^{\otimes[u]\setminus\{i\}})^{\sqcup k'}$ in dits $\bigsqcup_{j\in[k]}N_b|_{i\rightarrow j}$, and outputs a codeblock of $C_0^{\otimes[u]}$ in dits $N_b$.
    \item For every $b\in[3]$, run the error-detecting gadget in \Cref{lem:errdet} for the code $C_0^{\otimes u}$ on dits $N_b$ with
      \begin{align*}
        \lambda_{\mathrm{in}}^{(5)} &= \lambda_{\mathrm{out}}^{(4)} \\
        \lambda_{\mathrm{out}}^{(5)} &= \frac{8u^2n^4}{\rho(u,n,n-k'+1)}\cdot\lambda_{\mathrm{run}} \\
        \lambda_{\mathrm{det}}^{(5)} &= \frac{\rho(u,n,n-k'+1)}{8}\cdot\lambda_{\mathrm{out}} = u^2n^4\cdot\lambda_{\mathrm{run}}.
      \end{align*}
      The necessary condition $\lambda_{\mathrm{in}}^{(5)}\leq(n-k'+1)^u/2$ in \Cref{eq:edlams} needed to apply \Cref{lem:errdet} holds by the definition of $\lambda_{\mathrm{run}}\leq\bar{\lambda}_{\mathrm{run}}$ in \Cref{eq:fdbarlamrun}.
    \end{enumerate}

    By \Cref{eq:fdbarlamrun,eq:fdlams}, it holds that
    \begin{align*}
      \lambda_{\mathrm{in}}^{(1)} &\geq \lambda_{\mathrm{in}}+\lambda_{\mathrm{run}} \\
      \lambda_{\mathrm{in}}^{(1)} &\geq \lambda_{\mathrm{out}}^{(5)} \\
      \lambda_{\mathrm{out}} &\leq \lambda_{\mathrm{out}}^{(5)}+\lambda_{\mathrm{run}} \\
      \lambda_{\mathrm{det}} &\leq \min\{\lambda_{\mathrm{det}}^{(1)},\lambda_{\mathrm{det}}^{(3)},\lambda_{\mathrm{det}}^{(5)}\}.
    \end{align*}
    Therefore we may apply \Cref{lem:seqcomp} to each sequential composition in the definition of the circuit $\cR$ in \Cref{eq:fdR}, where the above bounds imply that the output error threshold from each gadget in the sequence does not exceed the input error threshold of the subsequent gadget. Thus for $\alpha\in\{\mathrm{in},\mathrm{out}\}$ we define the decorated code
    \begin{align*}
      D_\alpha' &= (C_\alpha',\; \Enc_\alpha',\; \cE_\alpha'=\cE_\alpha) = \bigsqcup_{b\in B_\alpha}(C_0^{\otimes u},\; \Enc_0^{\otimes u},\; 2^{[n]^u}|_{\geq\lambda_\alpha}),
    \end{align*}
    so that
    \begin{equation}
      \label{eq:fdgadnores}
      \left(\cR,\; \cE_{\mathrm{run}}=2^{N'}|_{\geq\lambda_{\mathrm{run}}}^{\sqcup T},\; D_{\mathrm{in}}',\; D_{\mathrm{out}}',\; \cE_{\mathrm{det}}=\bigsqcup_{t\in[T]}(2^{E_{\mathrm{det},t}}|_{\geq\lambda_{\mathrm{det}}})\right)
    \end{equation}
    provides a fault-detecting gadget for the function $\bar{O}'$ given by the composition of the functions implemented by the sequence of gadgets in \Cref{eq:fdR}, where the sets $E_{\mathrm{det},t}\subseteq N'$ for $t\in[T]$ are simply given by taking the union over $b\in[3]$ of the corresponding sets given by the applications of \Cref{lem:errdet,lem:switchall} above. Furthermore, because $\cR_1$ begins by applying the mending gadget in \Cref{lem:errdet}, \Cref{lem:seqcomp} implies that the gadget in \Cref{eq:fdgadnores} is mending. Here note that the single layer of $\gInit$ gates in \Cref{eq:fdR} preceding the application of $\cR_1$ simply adds an additional $<\lambda_{\mathrm{run}}$ errors to the input (accounted for by the bound $\lambda_{\mathrm{in}}^{(1)} \geq \lambda_{\mathrm{in}}+\lambda_{\mathrm{run}}$ above), but otherwise does not affect the mending property.

    Meanwhile, by construction the function $\bar{O}':\bF_q^{K'_{B_{\mathrm{in}}}}\rightarrow\bF_q^{K'_{B_{\mathrm{out}}}}$ simply applies the circuit $\bar{\cR}'$ defined previously to its input, except that for every dit that is terminated at some point (and possibly later initialized again), we instead simply set that dit's value to $0$ but leave it as an active dit. This change has no effect on the output, so it follows that $\bar{O}'(\cdot)=\bar{\cR}'(\cdot)$. Therefore by \Cref{eq:Rpfun},
    \begin{equation}
      \label{eq:fdOp}
      \bar{O}'(x) = (\bar{\cR}(x|_{K_{\mathrm{in}}}),\; 0^{K'_{B_{\mathrm{out}}}\setminus K_{\mathrm{out}}}) \in \bF_q^{K'_{B_{\mathrm{out}}}}.
    \end{equation}

    We will now complete the proof of \Cref{it:fd} in \Cref{thm:fdft} by showing that the gadget in \Cref{eq:fdgad} is mending fault-detecting for $\bar{\cR}(\cdot)$, with the desired space and time usage; by definition $\cR$ uses gate set $\cG$. First, we have shown that the gadget in \Cref{eq:fdgadnores} is fault-detecting for $\bar{O}'$. Meanwhile, by definition the decorated code $D_\alpha$ is given by restricting $D_\alpha'$ to codewords that encode messages whose dits in $K'_{B_\alpha}\setminus K_\alpha$ are set to $0$, that is for $x\in\bF_q^{K_\alpha}$ then
    \begin{equation}
      \label{eq:fdEncs}
      \Enc_\alpha(x) = \Enc_\alpha'(x,0^{K'_{B_\alpha}\setminus K_\alpha}).
    \end{equation}
    Then because \Cref{it:fdnoerr} in \Cref{def:faultdet} holds for the gadget in \Cref{eq:fdgadnores}, it also holds for the gadget in \Cref{eq:fdgad}. That is, for every $x\in\bF_q^{K_{\mathrm{in}}}$ and every $E\in\cE_{\mathrm{det}}$, we have
    \begin{equation*}
      \cR\circ\Enc_{\mathrm{in}}(x) = \cR\circ\Enc_{\mathrm{in}}'(x,0^{K'_{B_{\mathrm{in}}}\setminus K_{\mathrm{in}}}) = \Enc_{\mathrm{out}}'(\bar{\cR}(x),0^{K'_{B_{\mathrm{out}}}\setminus K_{\mathrm{out}}}) = \Enc_{\mathrm{out}}(\bar{\cR}(x)),
    \end{equation*}
    where the second equality above holds by \Cref{eq:fdOp}, and also
    \begin{equation*}
      E\cap\supp(\tran(\cR;\Enc_{\mathrm{in}}(x))) = E\cap\supp(\tran(\cR;\Enc_{\mathrm{in}}'(x,0^{K'_{B_{\mathrm{in}}}\setminus K_{\mathrm{in}}}))) = \emptyset.
    \end{equation*}
    Similarly, because \Cref{it:fderr} in \Cref{def:faultdet} holds for the gadget in \Cref{eq:fdgadnores}, it also holds for the gadget in \Cref{eq:fdgad}. That is, for every $x\in\bF_q^{K_{\mathrm{in}}}$, every $\cE_{\mathrm{in}}=\cE_{\mathrm{in}}'$-deviation $y$ of $\Enc_{\mathrm{in}}(x)=\Enc_{\mathrm{in}}'(x,0^{K'_{B_{\mathrm{in}}}\setminus K_{\mathrm{in}}})$, and every $\cE_{\mathrm{run}}$-avoiding fault $\cF$ for $\cR$, then either $\cR[\cF](y)$ is a $\cE_{\mathrm{out}}=\cE_{\mathrm{out}}'$-deviation of $\Enc_{\mathrm{out}}'\circ\bar{O}'(x,0^{K'_{B_{\mathrm{in}}}\setminus K_{\mathrm{in}}})=\Enc_{\mathrm{out}}\circ\bar{\cR}(x)$ (where the equality holds by \Cref{eq:fdOp,eq:fdEncs}), or else there exists a set $E\in\cE_{\mathrm{det}}$ such that $E\subseteq\supp(\tran(\cR[\cF];y))$.

    Therefore the gadget in \Cref{eq:fdgad} is fault-detecting for $\bar{\cR}(\cdot)$. It similarly follows that because the gadget in \Cref{eq:fdgadnores} is mending, the gadget in \Cref{eq:fdgad} is also mending. That is, for every $y\in\bF_q^{N_{B_{\mathrm{in}}}}$ and every $\cE_{\mathrm{run}}$-avoiding fault $\cF$ for $\cR$, then either there exists $x'\in\bF_q^{K'_{B_{\mathrm{in}}}}$ such that $\cR[\cF](y)$ is a $\cE_{\mathrm{out}}=\cE_{\mathrm{out}}'$-deviation of $\Enc_{\mathrm{out}}'\circ\bar{O}'(x')=\Enc_{\mathrm{out}}\circ\bar{\cR}(x'|_{K_{\mathrm{in}}})$ (where the equality holds by \Cref{eq:fdOp,eq:fdEncs}), or else there exists a set $E\in\cE_{\mathrm{det}}$ such that $E\subseteq\supp(\tran(\cR[\cF];y))$.

    It remains to analyze the space and time usage of $\cR$. By by the definition of $\cR$ in \Cref{eq:fdR} along with \Cref{lem:errdet,lem:switchone,lem:switchall,lem:transversal} and \Cref{lem:seqcomp,lem:parcomp}, $\cR$ uses space
    \begin{align*}
      |N'|
      &\leq 3\cdot\max\{(u+1)n^u,\; 2n^u\} = 3(u+1)n^u
    \end{align*}
    and time
    \begin{align*}
      T
      &\leq 2 + \bar{T}'\cdot(2(un^2+2) + 2(n^2+2) + 2\cdot 16u^2n^2 + 2) \\
      &\leq 2 + O(\bar{T}\cdot(u\log k)^2)\cdot(2(un^2+2) + 2(n^2+2) + 2\cdot 16u^2n^2 + 2) \\
      &\leq O(\bar{T}\cdot u^4n^2(\log n)^2),
    \end{align*}
    as desired.

  \item (Fault-tolerant gadget) The construction of the circuit $\cR$ and the analysis will be similar as for the mending fault-detecting case above: the main difference is that we use the error-correction gadget \Cref{lem:errcorr} in place of the error-detection gadget \Cref{lem:errdet}, which in turn chanes the error thresholds throughout. For this purpose, we will apply the decoders for $C_0$ and $C_0^{*2}$ in \Cref{fact:basecode}. We let $\eta_{\Dec}\geq 1$ be a sufficiently large constant such that these decoders have space and time usage
    \begin{equation}
      \label{eq:spacetimedec}
      |N_{\Dec}|,T_{\Dec} \leq \eta_{\Dec}\cdot n^2\log q,
    \end{equation}
    i.e.~$\eta_{\Dec}$ is an upper bound on the hidden constants in the big-$O$s in \Cref{fact:basecode}.

    Specifically, we define the desired circuit $\cR$ as follows. We again let $\cR$ act on dits $N'=N_{[3]}\sqcup N_{\mathrm{anc}}$, where $N_{[3]}=([n]^u)^{\sqcup 3}$ as defined above, and $N_{\mathrm{anc}}$ is a set of ancilla dits that can be used by gadgets that we will invoke in the definition of $\cR$. By definition, $\cR$ has input dits $N_{B_{\mathrm{in}}}\subseteq N_{[3]}$ and output dits $N_{B_{\mathrm{out}}}\subseteq N_{[3]}$. We again let the first timestep of $\cR$ apply $\gInit$ to all dits in $N_{[3]\setminus B_{\mathrm{in}}}$, and we let the last timestep of $\cR$ apply $\gTerm$ to all dits in $N_{[3]\setminus B_{\mathrm{out}}}$. To define the remainder of $\cR$, for each timestep $\bar{t}\in[\bar{T}']$ of $\bar{\cR}'$, we define an associated subcircuit $\cR_{\bar{t}}$ of $\cR$, acting on all dits $N'$ across $O(u^2n^2)$ timesteps, as follows. We then let $\cR$ be the sequential composition
    \begin{equation}
      \label{eq:ftR}
      \cR = \gTerm^{\sqcup N_{[3]\setminus B_{\mathrm{out}}}}\circ\cR_{\bar{T}'}\circ\cdots\circ\cR_1\circ\gInit^{\sqcup N_{[3]\setminus B_{\mathrm{in}}}}.
    \end{equation}

    By definition, $\bar{R}'_{\bar{t}}$ has the transversal structure described in \Cref{lem:compile}, meaning that there exists some $i\in[u]$ such that either $\bar{R}'_{\bar{t}}$ consists entirely of $\gX^*$ gates, or else entirely of transversal applications of some gate $G\in\{\gInit,\gTerm,\gCX^*,\gCCX^*\}$ to sets $K'_b|_{i\rightarrow j}$ for arbitrary values of $b\in[3]$ and $j\in[k]$. Therefore we let $\cR_{\bar{t}}$ perform the following:
    \begin{enumerate}[label=(\arabic*)]
    \item For every $b\in[3]$, run the error-correcting gadget in \Cref{lem:errcorr} for the code $C_0^{\otimes u}$ on dits $N_b$ with
      \begin{align*}
        \lambda_{\mathrm{in}}^{(1)} &= \left(\frac{n-k'+1}{4}\right)^u \\
        \lambda_{\mathrm{out}}^{(1)} &= \left(\frac{2^u+1}{(n-k'+1)/4n}\right)^u\cdot u^2n\cdot (\eta_{\Dec}\cdot n^2\log q)\cdot\lambda_{\mathrm{run}} \\
                                    &\leq \eta_{\Dec}\cdot 2^{u(u+6)}\cdot n^3\log(q)\cdot\lambda_{\mathrm{run}},
      \end{align*}
      where we use the radius-$(n-k'+1)/2$ decoder for $C_0$ in \Cref{fact:basecode}. The inequality above holds because $k\leq n/32$ so $(n-k'+1)/4n\geq 1/8$, and because $u^2\leq 2^{2u}$.
    \item For every $b\in[3]$, run the downwards switching gadget in \Cref{it:sodown} in \Cref{lem:switchone} to unencode from $(C_0,\Enc_0)$ in direction~$i$ on dits $N_b$, with
      \begin{align*}
        \lambda_{\mathrm{in}}^{(2)} &= \lambda_{\mathrm{out}}^{(1)} \\
        \lambda_{\mathrm{out}}^{(2)} &= n\cdot\lambda_{\mathrm{in}}^{(2)}+3n^3\cdot\lambda_{\mathrm{run}} \leq \eta_{\Dec}\cdot 2^{u(u+6)+1}\cdot n^3\log(q)\cdot\lambda_{\mathrm{run}}.
      \end{align*}
      Specifically, each application of this gadget takes as input a codeblock of $C_0^{\otimes[u]}$ in dits $N_b$, and outputs a codeblock of $(C_0^{\otimes[u]\setminus\{i\}})^{\sqcup k'}$ in dits $\bigsqcup_{j\in[k']}N_b|_{i\rightarrow j}$.
    \item By construction $\bar{R}'_{\bar{t}}$ consists entirely of gates from one of the five sets $\{\gInit\}$, $\{\gTerm\}$, $\{\gX^*\}$, $\{\gCX^*\}$, or $\{\gCCX^*\}$, and these gates act transversally on blocks $K'_b|_{i\rightarrow j}$ in all cases except for $\{\gX^*\}$. We treat each case separately:
      \begin{description}
      \item[$\gInit$:] For every $b\in[3],j\in[k']$ for which $\bar{R}'_{\bar{t}}$ applies transversal $\gInit$ to dits $K'_b|_{i\rightarrow j}$, we apply transversal $\gTerm$ followed by $\gInit$ to dits $N_b|_{i\rightarrow j}$.
      \item[$\gTerm$:] Perform the same procedure as for the $\gInit$ case above.
      \item[$\gX^*$:] For every $b\in[3],j\in[k']$ for which $\bar{R}'_{\bar{t}}$ applies $\gX^a$ to $K'_b|_{i\rightarrow j}$ for some $a\in\bF_q^{[k']^{[u]\setminus\{i\}}}$, we apply the gadget $\cR_{\gX^a}$ in \Cref{it:tX} in \Cref{lem:transversal} to $(C_0^{\otimes[u]\setminus\{i\}},\Enc_0^{\otimes[u]\setminus\{i\}})$ on dits $N_b|_{i\rightarrow j}$.
      \item[$\gCX^*$:] For every $b_1,b_2\in[3],j_1,j_2\in[k']$ for which $\bar{R}'_{\bar{t}}$ applies $\gCX^a$ to $(C_0^{\otimes[u]\setminus\{i\}},\Enc_0^{\otimes[u]\setminus\{i\}})$ on dits $K'_{b_1}|_{i\rightarrow j_1},K'_{b_2}|_{i\rightarrow j_2}$ for some $a\in\bF_q$, we apply the gadget $\cR_{\gCX^a}$ in \Cref{it:tCX} in \Cref{lem:transversal} to $N_{b_1}|_{i\rightarrow j_1},N_{b_2}|_{i\rightarrow j_2}$.

        For each of the four cases $\gInit,\gTerm,\gX^*,\gCX^*$ above, we have run a gadget consisting of $\leq 2$ timesteps, which is by definition fault-tolerant with input and output error thresholds on each block of the code $C_0^{\otimes[u]\setminus\{i\}}$ given by
        \begin{align*}
          \lambda_{\mathrm{in}}^{(3)} &= \lambda_{\mathrm{out}}^{(2)} \\
          \lambda_{\mathrm{out}}^{(3)} &= 2\lambda_{\mathrm{in}}^{(3)}+2\lambda_{\mathrm{run}} \leq \eta_{\Dec}\cdot 2^{u(u+6)+3}\cdot n^3\log(q)\cdot\lambda_{\mathrm{run}}.
        \end{align*}
        
      \item[$\gCCX^*$] For every $b_1,b_2,b_3\in[3],j_1,j_2,j_3\in[k']$ for which $\bar{R}'_{\bar{t}}$ applies $\gCCX^a$ to $K'_{b_1}|_{i\rightarrow j_1}$, $K'_{b_2}|_{i\rightarrow j_2}$, $K'_{b_3}|_{i\rightarrow j_3}$ for some $a\in\bF_q$, we apply:
        \begin{itemize}
        \item The gadget in \Cref{it:sacorr} in \Cref{lem:switchall} with $(C_{\mathrm{in}},\Enc_{\mathrm{in}})=(C_0^{\otimes[u]\setminus\{i\}},\Enc_0^{\otimes[u]\setminus\{i\}})$ and $(C_{\mathrm{out}},\Enc_{\mathrm{out}})=((C_0^{*2})^{\otimes[u]\setminus\{i\}},{\Enc_0'}^{\otimes[u]\setminus\{i\}})$ on dits $N_{b_3}|_{i\rightarrow j_3}$. (Note that here we replace $u$ in the statement of \Cref{lem:switchall} with $u-1$.) For symmetry across codeblocks, in parallel we also apply the gadget in \Cref{lem:switchall} with $C_{\mathrm{in}}=C_{\mathrm{out}}=C_0^{\otimes[u]\setminus\{i\}}$ on dits $N_{b_1}|_{i\rightarrow j_1}$ and $N_{b_2}|_{i\rightarrow j_2}$, though this step could be removed.
        \item The gadget $\cR_{\gCCX^a}$ in \Cref{it:tCCX} in \Cref{lem:transversal} to $(C_0^{\otimes[u]\setminus\{i\}},\Enc_0^{\otimes[u]\setminus\{i\}})$ on dits $N_{b_1}|_{i\rightarrow j_1},N_{b_2}|_{i\rightarrow j_2}$ and $((C_0^{*2})^{\otimes[u]\setminus\{i\}},{\Enc_0'}^{\otimes[u]\setminus\{i\}})$ on dits $N_{b_3}|_{i\rightarrow j_3}$. These codes satisfy the necessary condition in \Cref{eq:tCCXxp} by \Cref{claim:tensormultprop}.
        \item The gadget in \Cref{it:sacorr} in \Cref{lem:switchall} with $(C_{\mathrm{in}},\Enc_{\mathrm{in}})=((C_0^{*2})^{\otimes[u]\setminus\{i\}},{\Enc_0'}^{\otimes[u]\setminus\{i\}})$ and $(C_{\mathrm{out}},\Enc_{\mathrm{out}})=(C_0^{\otimes[u]\setminus\{i\}},\Enc_0^{\otimes[u]\setminus\{i\}})$ on dits $N_{b_3}|_{i\rightarrow j_3}$. (Again we replace $u$ in the statement of \Cref{lem:switchall} with $u-1$.) Again for symmetry, in parallel we also apply the gadget in \Cref{lem:switchall} with $C_{\mathrm{in}}=C_{\mathrm{out}}=C_0^{\otimes[u]\setminus\{i\}}$ on dits $N_{b_1}|_{i\rightarrow j_1}$ and $N_{b_2}|_{i\rightarrow j_2}$.
        \end{itemize}
        In the applications of \Cref{lem:switchall} above, we use the radius $\geq(n-2k'+2)/2$ decoders for $C_0$ and $C_0^{*2}$ in \Cref{fact:basecode}, which have space and time usage given by \Cref{eq:spacetimedec}. By \Cref{lem:switchall,lem:transversal} along with \Cref{lem:seqcomp,lem:parcomp}, the composition of the gadgets above is a fault-tolerant gadget for transversal $\gCCX^a$ on blocks $K_{b_1}|_{i\rightarrow j_1},K_{b_2}|_{i\rightarrow j_2},K_{b_3}|_{i\rightarrow j_3}$, where within each block of the code $C_0^{\otimes[u]\setminus\{i\}}$ we can take the input, output, and error detection thresholds to be
        \begin{align*}
          \lambda_{\mathrm{in}}^{(3)} &= \lambda_{\mathrm{out}}^{(2)} \\
          \lambda_{\mathrm{out}}^{(3)} &= \eta_{\Dec}\cdot 2^{u(u+5)}\cdot n^3\log(q)\cdot\lambda_{\mathrm{run}}.
        \end{align*}
        The upper bounds on the parameters $\lambda_{\mathrm{in}},\lambda_{\mathrm{run}}$ in \Cref{eq:saclams} needed to apply \Cref{lem:switchall} hold by the definition of $\lambda_{\mathrm{run}}\leq\bar{\lambda}_{\mathrm{run}}$ in \Cref{eq:ftbarlamrun}.
      \end{description}
    \item For every $b\in[3]$, run the upwards switching gadget in \Cref{it:soup} in \Cref{lem:switchone} to encode into $(C_0,\Enc_0)$ in direction~$i$ on dits $N_b$, with
      \begin{align*}
        \lambda_{\mathrm{in}}^{(4)} &= \lambda_{\mathrm{out}}^{(3)} \leq \eta_{\Dec}\cdot 2^{u(u+6)+3}\cdot n^3\log(q)\cdot\lambda_{\mathrm{run}} \\
        \lambda_{\mathrm{out}}^{(4)} &= n^2\cdot\lambda_{\mathrm{in}}^{(4)}+3n^3\cdot\lambda_{\mathrm{run}} \leq \eta_{\Dec}\cdot 2^{u(u+6)+4}\cdot n^5\log(q)\cdot\lambda_{\mathrm{run}}.
      \end{align*}
      Specifically, each application of this gadget takes as input a codeblock of $(C_0^{\otimes[u]\setminus\{i\}})^{\sqcup k'}$ in dits $\bigsqcup_{j\in[k]}N_b|_{i\rightarrow j}$, and outputs a codeblock of $C_0^{\otimes[u]}$ in dits $N_b$.
    \item For every $b\in[3]$, run the error-correcting gadget in \Cref{lem:errcorr} for the code $C_0^{\otimes u}$ on dits $N_b$ with
      \begin{align*}
        \lambda_{\mathrm{in}}^{(5)} &= \lambda_{\mathrm{out}}^{(4)} \\
        \lambda_{\mathrm{out}}^{(5)} &= \left(\frac{2^u+1}{(n-k'+1)/4n}\right)^u\cdot u^2n\cdot (\eta_{\Dec}\cdot n^2\log q)\cdot\lambda_{\mathrm{run}} \\
                                    &\leq \eta_{\Dec}\cdot 2^{u(u+6)}\cdot n^3\log(q)\cdot\lambda_{\mathrm{run}},
      \end{align*}
      where we use the radius-$(n-k'+1)/2$ decoder for $C_0$ in \Cref{fact:basecode}. The necessary condition $\lambda_{\mathrm{in}}^{(5)}\leq((n-k'+1)/4)^u$ in \Cref{eq:eclams} needed to apply \Cref{lem:errcorr} holds by the definition of $\lambda_{\mathrm{run}}\leq\bar{\lambda}_{\mathrm{run}}$ in \Cref{eq:ftbarlamrun}.
    \end{enumerate}

    By \Cref{eq:ftbarlamrun,eq:ftlams}, it holds that
    \begin{align*}
      \lambda_{\mathrm{in}}^{(1)} &\geq \lambda_{\mathrm{in}}+\lambda_{\mathrm{run}} \\
      \lambda_{\mathrm{in}}^{(1)} &\geq \lambda_{\mathrm{out}}^{(5)} \\
      \lambda_{\mathrm{out}} &\leq \lambda_{\mathrm{out}}^{(5)}+\lambda_{\mathrm{run}}.
    \end{align*}
    Therefore we may apply \Cref{lem:seqcomp} to each sequential composition in the definition of the circuit $\cR$ in \Cref{eq:ftR}, where the above bounds imply that the output error threshold from each gadget in the sequence does not exceed the input error threshold of the subsequent gadget. Thus for $\alpha\in\{\mathrm{in},\mathrm{out}\}$ we define the decorated code
    \begin{align*}
      D_\alpha' &= (C_\alpha',\; \Enc_\alpha',\; \cE_\alpha'=\cE_\alpha) = \bigsqcup_{b\in B_\alpha}(C_0^{\otimes u},\; \Enc_0^{\otimes u},\; 2^{[n]^u}|_{\geq\lambda_\alpha}),
    \end{align*}
    so that
    \begin{equation}
      \label{eq:ftgadnores}
      \left(\cR,\; \cE_{\mathrm{run}}=2^{N'}|_{\geq\lambda_{\mathrm{run}}}^{\sqcup T},\; D_{\mathrm{in}}',\; D_{\mathrm{out}}',\; \cE_{\mathrm{det}}=\bigsqcup_{t\in[T]}(2^{E_{\mathrm{det},t}}|_{\geq\lambda_{\mathrm{det}}})\right)
    \end{equation}
    provides a fault-tolerant gadget for the function $\bar{O}'$ given by the composition of the functions implemented by the sequence of gadgets in \Cref{eq:ftR}. Note that the single layer of $\gInit$ gates in \Cref{eq:ftR} preceding the application of $\cR_1$ simply adds an additional $<\lambda_{\mathrm{run}}$ errors to the input, which is accounted for by the bound $\lambda_{\mathrm{in}}^{(1)} \geq \lambda_{\mathrm{in}}+\lambda_{\mathrm{run}}$ above.

    Meanwhile, by construction the function $\bar{O}':\bF_q^{K'_{B_{\mathrm{in}}}}\rightarrow\bF_q^{K'_{B_{\mathrm{out}}}}$ simply applies the circuit $\bar{\cR}'$ defined previously to its input, except that for every dit that is terminated at some point (and possibly later initialized again), we instead simply set that dit's value to $0$ but leave it as an active dit. This change has no effect on the output, so it follows that $\bar{O}'(\cdot)=\bar{\cR}'(\cdot)$. Therefore by \Cref{eq:Rpfun},
    \begin{equation}
      \label{eq:ftOp}
      \bar{O}'(x) = (\bar{\cR}(x|_{K_{\mathrm{in}}}),\; 0^{K'_{B_{\mathrm{out}}}\setminus K_{\mathrm{out}}}) \in \bF_q^{K'_{B_{\mathrm{out}}}}.
    \end{equation}

    We will now complete the proof of \Cref{it:ft} in \Cref{thm:fdft} by showing that the gadget in \Cref{eq:ftgad} is fault-tolerant for $\bar{\cR}(\cdot)$, with the desired space and time usage; by definition $\cR$ uses gate set $\cG$. First, we have shown that the gadget in \Cref{eq:ftgadnores} is fault-tolerant for $\bar{O}'$. Meanwhile, by definition the decorated code $D_\alpha$ is given by restricting $D_\alpha'$ to codewords that encode messages whose dits in $K'_{B_\alpha}\setminus K_\alpha$ are set to $0$, that is for $x\in\bF_q^{K_\alpha}$ then
    \begin{equation}
      \label{eq:ftEncs}
      \Enc_\alpha(x) = \Enc_\alpha'(x,0^{K'_{B_\alpha}\setminus K_\alpha}).
    \end{equation}
    Then because \Cref{it:ftnoerr} in \Cref{def:faulttol} holds for the gadget in \Cref{eq:ftgadnores}, it also holds for the gadget in \Cref{eq:ftgad}. That is, for every $x\in\bF_q^{K_{\mathrm{in}}}$, we have
    \begin{equation*}
      \cR\circ\Enc_{\mathrm{in}}(x) = \cR\circ\Enc_{\mathrm{in}}'(x,0^{K'_{B_{\mathrm{in}}}\setminus K_{\mathrm{in}}}) = \Enc_{\mathrm{out}}'(\bar{\cR}(x),0^{K'_{B_{\mathrm{out}}}\setminus K_{\mathrm{out}}}) = \Enc_{\mathrm{out}}(\bar{\cR}(x)),
    \end{equation*}
    where the second equality above holds by \Cref{eq:ftOp}.
    Similarly, because \Cref{it:fterr} in \Cref{def:faulttol} holds for the gadget in \Cref{eq:ftgadnores}, it also holds for the gadget in \Cref{eq:ftgad}. That is, for every $x\in\bF_q^{K_{\mathrm{in}}}$, every $\cE_{\mathrm{in}}=\cE_{\mathrm{in}}'$-deviation $y$ of $\Enc_{\mathrm{in}}(x)=\Enc_{\mathrm{in}}'(x,0^{K'_{B_{\mathrm{in}}}\setminus K_{\mathrm{in}}}),$ and every $\cE_{\mathrm{run}}$-avoiding fault $\cF$ for $\cR$, then $\cR[\cF](y)$ is a $\cE_{\mathrm{out}}=\cE_{\mathrm{out}}'$-deviation of
    \begin{equation*}
      \Enc_{\mathrm{out}}'\circ\bar{O}'(x,0^{K'_{B_{\mathrm{in}}}\setminus K_{\mathrm{in}}}) = \Enc_{\mathrm{out}}\circ\bar{\cR}(x),
    \end{equation*}
    where the equality holds by \Cref{eq:ftOp,eq:ftEncs}.

    Therefore the gadget in \Cref{eq:fdgad} is fault-tolerant for $\bar{\cR}(\cdot)$.

    It remains to analyze the space and time usage of $\cR$. By by the definition of $\cR$ in \Cref{eq:ftR} along with \Cref{lem:errcorr,eq:spacetimedec,lem:switchone,lem:switchall,lem:transversal} and \Cref{lem:seqcomp,lem:parcomp}, $\cR$ uses space
    \begin{align*}
      |N'|
      &\leq 3\cdot\max\{n^{u-1}\cdot\eta_{\Dec}\cdot n^2\log(q),\; 2n^u\} = 3\eta_{\Dec}\cdot\log(q)\cdot n^{u+1}
    \end{align*}
    and time
    \begin{align*}
      T
      &\leq 2 + \bar{T}'\cdot(2\cdot u\cdot\eta_{\Dec}\cdot n^2\log(q) + 2(n^2+2) + 2(2u^2\cdot\eta_{\Dec}\cdot n^2\log(q)+8un^2) + 2) \\
      &\leq 2 + O(\bar{T}\cdot(u\log k)^2) \cdot O(u^2n^2\log(q)) \\
      &\leq O(\bar{T}\cdot u^4n^2(\log n)^2(\log q)).
    \end{align*}
    as desired.
  \end{enumerate}
\end{proof}

\section{Instantiation and Alphabet Reduction}
\label{sec:instan}
In this section, we instantiate our schemes in \Cref{thm:fdft} with specific parameters, and show how to reduce the alphabet size down to a constant. By doing so, we obtain schemes for compiling an arbitrary logical circuit with constant alphabet size into a mending fault-detecting or a fault-tolerant physical circuit. If the logical circuit uses space $|\bar{N}|$, the physical circuit uses space $|N|\leq|\bar{N}|^{1+\epsilon}$ for an arbitrarily small constant $\epsilon>0$. The mending fault-detecting circuit can withstand faults with $|N|/\poly\log|N|$ errors per timestep, while the fault-tolerant circuit can withstand $|N|^{1-o(1)}$ errors per timestep. The formal statements are given below.

\begin{theorem}[Main result on fault-detection and fault-tolerance]
  \label{thm:main}
  The following two statements hold:
  \begin{enumerate}
  \item\label{it:mainfd} (Mending fault-detecting gadget) For every prime power $r$ and every $0<\epsilon<1/8$, there exists an $\bar{N}_0(r,\epsilon)\geq 0$ such that the following holds.
  For every set $\bar{N}$ of size $|\bar{N}|\geq\bar{N}_0(r,\epsilon)$ and every $K\subseteq\bar{N}$, there is an integer $M=M(\epsilon,\bar{N})\in[|\bar{N}|^{1+\epsilon},|\bar{N}|^{1+4\epsilon}]$ and an $[n(r,\epsilon,\bar{N}),\; |K|,\; d(r,\epsilon,\bar{N})]_r$ code $C(r,\epsilon,K,\bar{N})$ with associated encoding map $\Enc_{C(r,\epsilon,K,\bar{N})}$, where the $|K|$ message dits are labeled by the set $K$ and
    \begin{align*}
      n(r,\epsilon,\bar{N}) &\leq \frac{\log\log|\bar{N}|}{\epsilon}\cdot M \\
      d(r,\epsilon,\bar{N}) &\geq \frac{M}{2^{32}},
    \end{align*}
    such that the following statement holds.
    
    Let $\eta>1$ be the absolute constant in \Cref{lem:loctest}. For every circuit $\bar{\cR}$ acting on a set of $r$-ary dits labeled by $\bar{N}$ using time $\bar{T}$ and gate set $\cG=\{\gInit,\gTerm,\gX^*,\gCX^*,\gCCX^*\}$ with input and output dits $K_{\mathrm{in}},K_{\mathrm{out}}\subseteq\bar{N}$ respectively, letting
    \begin{align*}
      D_{\mathrm{in}} &= (C_{\mathrm{in}}=C(r,\epsilon,K_{\mathrm{in}},\bar{N}),\; \Enc_{\mathrm{in}}=\Enc_{C_{\mathrm{in}}},\; \cE_{\mathrm{in}}=2^{[n(r,\epsilon,\bar{N})]}|_{\geq\lambda_{\mathrm{in}}}) \\
      D_{\mathrm{out}} &= (C_{\mathrm{out}}=C(r,\epsilon,K_{\mathrm{out}},\bar{N}),\; \Enc_{\mathrm{out}}=\Enc_{C_{\mathrm{out}}},\; \cE_{\mathrm{out}}=2^{[n(r,\epsilon,\bar{N})]}|_{\geq\lambda_{\mathrm{out}}}),
    \end{align*}
    then for every $\lambda_{\mathrm{run}}\in[0,\bar{\lambda}_{\mathrm{run}}]$ with
    \begin{equation}
      \label{eq:mdbarlamrun}
      \bar{\lambda}_{\mathrm{run}} = \bar{\lambda}_{\mathrm{run}}(r,\epsilon,\bar{N}) = \frac{M}{(\log|\bar{N}|)^{16\eta/\epsilon}},
    \end{equation}
    there exists $T\in\bN$ and a set $N$ with subsets $E_{\mathrm{det},t}\subseteq N$ for $t\in[T]$ such that there is a mending fault-detecting gadget
    \begin{equation*}
      \left(\cR,\; \cE_{\mathrm{run}}=2^N|_{\geq\lambda_{\mathrm{run}}},\; D_{\mathrm{in}},\; D_{\mathrm{out}},\; \cE_{\mathrm{det}}=\bigsqcup_{t\in[T]}(2^{E_{\mathrm{det},t}}|_{\geq\lambda_{\mathrm{det}}})\right)
    \end{equation*}
    for $\bar{\cR}(\cdot)$, where $\cR$ is a circuit using space $|N|\leq (\log|\bar{N}|)^2 \cdot M$, time $T\leq (\log|\bar{N}|)^{8/\epsilon}\cdot\bar{T}$, and gate set $\cG$, and where
    \begin{align}
      \label{eq:mdlams}
      \begin{split}
        \lambda_{\mathrm{in}} &= \frac{M}{2^{32}} \\
        \lambda_{\mathrm{out}} &= (\log|\bar{N}|)^{16\eta/\epsilon}\cdot\lambda_{\mathrm{run}} \\
        \lambda_{\mathrm{det}} &= \lambda_{\mathrm{run}}.
      \end{split}
    \end{align}
    
  \item\label{it:mainft} (Fault-tolerant gadget) For every prime power $r$, there exists $\bar{N}_0(r)\geq 0$ such that the following holds. For every set $\bar{N}$ of size $|\bar{N}|\geq\bar{N}_0(r)$ and every $K\subseteq\bar{N}$, there is an integer $M=M(\bar{N})\in[|\bar{N}|,\;2^{12(\log|\bar{N}|)^{2/3}}\cdot|\bar{N}|]$ and a $[n(r,\bar{N}),\; |K|,\; d(r,\bar{N})]_r$ code $C(r,K,\bar{N})$ with associated encoding map $\Enc_{C(r,K,\bar{N})}$, where the $|K|$ message dits are labeled by the set $K$ and
    \begin{align*}
      n(r,\bar{N}) &\leq (\log|\bar{N}|)\cdot M \\
      d(r,\bar{N}) &\geq \frac{M}{2^{2(\log|\bar{N}|)^{1/3}}},
    \end{align*}
    such that the following statement holds:
    
    For every circuit $\bar{\cR}$ acting on a set of $r$-ary dits labeled by $\bar{N}$ using time $\bar{T}$ and gate set $\cG=\{\gInit,\gTerm,\gX^*,\gCX^*,\gCCX^*\}$ with input and output dits $K_{\mathrm{in}},K_{\mathrm{out}}\subseteq\bar{N}$ respectively, letting
    \begin{align*}
      D_{\mathrm{in}} &= (C_{\mathrm{in}}=C(r,K_{\mathrm{in}},\bar{N}),\; \Enc_{\mathrm{in}}=\Enc_{C_{\mathrm{in}}},\; \cE_{\mathrm{in}}=2^{[n(r,\bar{N})]}|_{\geq\lambda_{\mathrm{in}}}) \\
      D_{\mathrm{out}} &= (C_{\mathrm{out}}=C(r,K_{\mathrm{out}},\bar{N}),\; \Enc_{\mathrm{out}}=\Enc_{C_{\mathrm{out}}},\; \cE_{\mathrm{out}}=2^{[n(r,\bar{N})]}|_{\geq\lambda_{\mathrm{out}}}),
    \end{align*}
    then for every $\lambda_{\mathrm{run}}\in[0,\bar{\lambda}_{\mathrm{run}}]$ with
    \begin{equation}
      \label{eq:mtbarlamrun}
      \bar{\lambda}_{\mathrm{run}} = \bar{\lambda}_{\mathrm{run}}(r,\epsilon,\bar{N}) = \frac{M}{2^{32(\log|\bar{N}|)^{2/3}}},
    \end{equation}
    there exists a fault-tolerant gadget
    \begin{equation*}
      \left(\cR,\; \cE_{\mathrm{run}}=2^N|_{\geq\lambda_{\mathrm{run}}},\; D_{\mathrm{in}},\; D_{\mathrm{out}}\right)
    \end{equation*}
    for $\bar{\cR}(\cdot)$, where $\cR$ is a circuit using space $|N|\leq 2^{2(\log|\bar{N}|)^{2/3}}\cdot M$, time $T\leq 2^{4(\log|\bar{N}|)^{2/3}}\cdot\bar{T}$, and gate set $\cG$, and where
    \begin{align}
      \label{eq:mtlams}
      \begin{split}
        \lambda_{\mathrm{in}} &= \frac{M}{2^{8(\log|\bar{N}|)^{1/3}}} \\
        \lambda_{\mathrm{out}} &= 2^{8(\log|\bar{N}|)^{2/3}}\cdot\lambda_{\mathrm{run}}.
      \end{split}
    \end{align}
  \end{enumerate}
  Furthermore, in both statements above, if $r,\epsilon$ are fixed constants, then $\cR$ can be constructed in $\poly(|\bar{N}|)$ time from $\bar{\cR}$.
\end{theorem}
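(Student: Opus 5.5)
The plan is to instantiate \Cref{thm:fdft} with concrete $u,n,k,q$ and then reduce the alphabet from $\bF_q$ to $\bF_r$ by a concatenation-style simulation.

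\textbf{Fault-detecting case.} Take $u=\lceil 1/\epsilon\rceil$-ish times $\log|\bar{N}|/\log\log|\bar{N}|$, precisely $u=\Theta(\log|\bar{N}|/(\epsilon\log\log|\bar{N}|))$. Take $k$ to be the smallest integer with $k^u\geq|\bar{N}|$, so that $k$ is roughly $(\log|\bar{N}|)^{\epsilon}$. Take $n=32k$, and let $q=r^\kappa$ be the smallest power of $r$ with $q\geq n$, so $\kappa=O(\log\log|\bar{N}|)$. Then set $M=n^u$. The target range $M\in[|\bar{N}|^{1+\epsilon},|\bar{N}|^{1+4\epsilon}]$ follows from $n^u=32^u k^u$ and $32^u=|\bar{N}|^{\Theta(\epsilon)}$, after tuning the constant in $u$. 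Identify $\bar{N}$ with a subset of $K_1=[k]^u$ and apply \Cref{thm:fdft}\Cref{it:fd} to the circuit over $\bF_q$ obtained from $\bar{\cR}$.

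This requires simulating $r$-ary gates by $q$-ary gates. Embed $\bF_r\subseteq\bF_q$ as the subfield: addition and multiplication are preserved, and the gates $\gX^a,\gCX^a,\gCCX^a$ with $a\in\bF_r$ are themselves $q$-ary gates, so $\bar{\cR}$ runs verbatim. The distance is $(n-8k+1)^u\geq(3n/4)^u$. The soundness is $\rho(u,n,n-16k+2)=(un(n/d)^u)^{-\eta}$ with $n/d\leq 2$, giving $\rho^{-1}\le (un2^u)^\eta=(\log|\bar{N}|)^{O(\eta/\epsilon)}$. From this, $\bar{\lambda}_{\mathrm{run}}\geq M/(\log|\bar{N}|)^{16\eta/\epsilon}$ once one checks that $2^{u}$, $(3/4)^{-u}$, $n^8$ and $u^2$ are all $(\log|\bar{N}|)^{O(1/\epsilon)}$. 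The time bound $T\le O(\bar{T}u^4n^2\log^2 n)$ is similarly $(\log|\bar{N}|)^{8/\epsilon}\bar{T}$ for small $\epsilon$.

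\textbf{Alphabet reduction.} Replace each $q$-ary dit by a block of $\kappa$ $r$-ary dits via a fixed $\bF_r$-basis of $\bF_q$. Each $q$-ary gate becomes a constant-depth-in-$\kappa$ circuit over $\bF_r$: multiplication by a constant $a$ is $\bF_r$-linear, so $\gCX^a$ becomes a $\kappa\times\kappa$ matrix circuit $\cR_A$ using time $O(\kappa^2)$. $\gCCX$ is bilinear, so it needs $O(\kappa^2)$ $\gCCX$ gates plus ancillas. The code $C(r,\epsilon,K,\bar{N})$ is the image of the $q$-ary code under this expansion. Its length is $\kappa M\le(\log\log|\bar{N}|/\epsilon)M$, and a nonzero $q$-ary symbol gives a nonzero block, so $d\geq(3/4)^uM\geq M/2^{32}$ after adjusting constants.

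For the error thresholds, an $r$-ary fault of weight $<\lambda$ touches $<\lambda$ blocks, so it induces a $q$-ary fault of weight $<\lambda$. Conversely, a $q$-ary output error of weight $<\lambda$ gives an $r$-ary error of weight $<\kappa\lambda$. Detector dits are taken to be a single chosen $r$-coordinate of each $q$-ary detector dit. This fails, since a nonzero $q$-symbol can have a given coordinate zero. Instead I would append a timestep that computes the $\bF_r$-trace of each detector block against $\kappa$ basis elements, and declare all block coordinates as detectors. Then $\geq\lambda_{\mathrm{det}}$ nonzero blocks give $\geq\lambda_{\mathrm{det}}$ nonzero $r$-dits, which covers $\lambda_{\mathrm{det}}=\lambda_{\mathrm{run}}$ given the slack $(u-1)^2n^4$ in \Cref{eq:fdlams}. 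A fault inside the simulation of one gate changes only that gate's $O(1)$ output blocks, so the $q$-ary faults per original timestep number at most $O(\kappa^2)\lambda_{\mathrm{run}}$. This spreading over the $O(\kappa^2)$ sub-steps is absorbed in the polylog slack, and mending transfers directly.

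\textbf{Fault-tolerant case.} Take $u=(\log|\bar{N}|)^{1/3}$, $k=\lceil|\bar{N}|^{1/u}\rceil=2^{\Theta((\log|\bar{N}|)^{2/3})}$, $n=32k$, and $q\approx n$. Then $M=n^u$ lies in the stated range since $32^u=2^{5(\log|\bar{N}|)^{1/3}}$ and rounding $k$ costs a factor $(1+1/k)^u$. Apply \Cref{thm:fdft}\Cref{it:ft}. Here $2^{u(u+16)}$ and $n^{16}\log q$ are $2^{O((\log|\bar{N}|)^{2/3})}$, giving \Cref{eq:mtbarlamrun}, with the same alphabet reduction and no detectors.

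The main obstacle I expect is the bookkeeping in alphabet reduction, namely the correct $\lambda_{\mathrm{det}}$ transfer and the per-gate fault spreading. The exponent tuning to land in exactly $[|\bar{N}|^{1+\epsilon},|\bar{N}|^{1+4\epsilon}]$ is routine but fiddly. Polynomial-time constructibility follows because every component (Reed--Solomon encoders, parity checks, systematic sets via Gaussian elimination, Batcher networks, Welch--Berlekamp) is explicit.
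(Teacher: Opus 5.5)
Your fault-tolerant half matches the paper's instantiation: constant rate $n=32k$ with $u\approx(\log|\bar N|)^{1/3}$, so that the $2^{O(u^2)}$ losses fit the $2^{O((\log|\bar N|)^{2/3})}$ budget, followed by the same $\kappa$-block alphabet reduction. The fault-detecting half, however, has two genuine gaps.

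\textbf{Parameters for fault-detection.} With constant rate per direction ($n=32k$), each factor has $n/(n-16k+2)\approx 2$ and $n/(n-8k+1)\approx 4/3$. The losses are therefore exponential in $u$: $\rho(u,n,n-16k+2)^{-1}\geq 2^{\eta u}$, and the guaranteed distance is only $(n-8k+1)^u\approx(3/4)^uM$.

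With your $u=\Theta(\log|\bar N|/(\epsilon\log\log|\bar N|))$, the quantity $2^u$ is super-polylogarithmic. Its exponent is $\Theta(\log|\bar N|/\log\log|\bar N|)$, versus $O(\log\log|\bar N|/\epsilon)$ for $(\log|\bar N|)^{O(1/\epsilon)}$. Hence $\bar\lambda_{\mathrm{run}}(u,n)$ from \Cref{thm:fdft} is far below $M/(\log|\bar N|)^{16\eta/\epsilon}$. Since $(3/4)^u\to 0$, adjusting constants cannot give $d\geq M/2^{32}$ or $\lambda_{\mathrm{in}}=M/2^{32}$.

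Your range check for $M$ is also off. We have $32^u=|\bar N|^{O(1/(\epsilon\log\log|\bar N|))}=|\bar N|^{o(1)}$, not $|\bar N|^{\Theta(\epsilon)}$, so in fact $M<|\bar N|^{1+\epsilon}$. No constant-rate choice repairs this: keeping both $2^u$ and $n$ polylogarithmic forces $n^u\leq 2^{O((\log\log|\bar N|)^2)}<|\bar N|$.

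The missing idea is to make each factor code's relative distance $1-O(1/u)$. The paper takes $n\approx(\log|\bar N|/\log\log|\bar N|)^{1/\epsilon}$, $u=\lfloor n^\epsilon\rfloor$, and $k=\lfloor n^{1-\epsilon}\rfloor$, so that $(1-16k/n)^{-u}\leq(1-16n^{-\epsilon})^{-n^\epsilon}=O(1)$. This yields $d\geq M/2^{32}$ and a polylogarithmic $\rho^{-1}=(un\cdot O(1))^\eta$. The $|\bar N|^{\Theta(\epsilon)}$ overhead in $M$ then comes from the rate loss $(n/k)^u=n^{\epsilon u}\approx u^u$.

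\textbf{Mending over $\bF_r$.} The mending property does not transfer directly. The $q$-ary gadget's mending property only supplies some $x'\in\bF_q^{K_{\mathrm{in}}}$. The theorem instead needs $x'\in\bF_r^{K_{\mathrm{in}}}$, because the code must have $\bF_r$-dimension $|K|$ and the gadget must implement the $r$-ary function $\bar{\cR}$. (If you expand the whole $q$-ary code, its $\bF_r$-dimension becomes $\kappa|K|$.)

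Concretely, take $\bar{\cR}$ to be the identity and feed the $r$-ary expansion of $\Enc_q(x')$ with $x'\notin\bF_r^{K_{\mathrm{in}}}$. This is a valid $q$-ary tensor codeword, so even with no faults no detector fires. The output is the expansion of $\Enc_q(x')$, which is at distance at least the code distance from every valid $r$-ary output. In the PCP application this is fatal, since $\bar{\cR}_q$ may vanish on non-Boolean inputs.

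The paper fixes this by first running a fault-detecting gadget (from \Cref{thm:fdft}) for a circuit that computes $x_j^r-x_j$ into a second code block. All dits of that block are declared detectors at the next timestep, after which the block is terminated. If $x'\notin\bF_r^{K_{\mathrm{in}}}$, that block is close to a nonzero codeword, so at least $(n-8k+1)^u/2\geq\tilde\lambda_{\mathrm{det}}$ detectors fire.

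Finally, your eventual detector choice, declaring all $\kappa$ coordinates of each detector block, is exactly the paper's $E_{\mathrm{det},t}=[\kappa]\times\tilde E_{\mathrm{det},t/\kappa^3}$. The trace computation is unnecessary, since a nonzero $q$-symbol always has a nonzero coordinate.
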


The remainder of this section is dedicated to proving \Cref{thm:main}. Our construction of $\cR$ from $\bar{\cR}$ in \Cref{thm:main} described below is inherently explicit, as it consists of performing a sequence of explicit operations on tensor products of explicit Reed-Solomon codes. Therefore the claim in \Cref{thm:main} that $\cR$ can be constructed in polynomial time is immediate.

In \Cref{sec:alphchange} below, we show how to simulate circuits over a given alphabet size using dits of a different alphabet size. We will use such changes in alphabet size to prove \Cref{thm:main} using \Cref{thm:fdft}, as the latter provides fault-detecting and fault-tolerance schemes only over alphabets that grow with the circuit's space usage. Specifically, for both \Cref{it:mainfd} and \Cref{sec:mainft} of \Cref{thm:main}, given a logical circuit $\bar{\cR}$ over $r$-ary dits, we first simulate $\bar{\cR}$ using a circuit over larger $q$-ary dits, where $\bF_q\supseteq\bF_r$ is an extension field. We are able to apply \Cref{thm:fdft} to this logical $q$-ary circuit to obtain a physical fault-detecting/fault-tolerant $q$-ary circuit. We then simulate this physical $q$-ary circuit using $r$-ary dits to obtain our desired gadget over $r$-ary dits.

There is one additional complication in proving \Cref{it:mainfd} in \Cref{thm:main}: when simulating the fault-detecting $q$-ary circuit using an $r$-ary circuit, to ensure the mending property holds, we must check that the circuit's input dits lie in the subfield $\bF_r\subseteq\bF_q$. For this purpose, for each logical input $x\in\bF_q$, we compute $x^r-x$, and check that the result is $0$; if this check is violated, then we have detected that $x\in\bF_q\setminus\bF_r$.

We formalize these ideas to prove \Cref{it:mainfd} and \Cref{sec:mainft} of \Cref{thm:main} in \Cref{sec:mainfd} and \Cref{sec:mainft} below, respectively.

\subsection{Increasing and Reducing Alphabet Size}
\label{sec:alphchange}
In this section, we show how to simulate circuits over a small alphabet using a larger alphabet, and vice versa. Throughout this section, for an $r$-ary gate $G=G_r\in\{\gInit,\gTerm,\gX^*,\gCX^*,\gCCX^*\}$ and a field extension $\bF_q=\bF_{r^\kappa}\supseteq\bF_r$, we let $G_q$ denote the $q$-ary version of $G$. For instance, if $G_r=\gCX^a:\bF_r^2\rightarrow\bF_r^2$ for $a\in\bF_r$, then $G_q=\gCX^a:\bF_q^2\rightarrow\bF_q^2$ with the same $a\in\bF_r\subseteq\bF_q$.

\Cref{lem:alphup} below shows that we can replace every $r$-ary gate with the analogous $q$-ary gate, while preserving the gate's functionality on inputs in the subfield $\bF_r\subseteq\bF_q$.

\begin{lemma}[Increase alphabet size]
  \label{lem:alphup}
  Let $\iota:\bF_r\hookrightarrow\bF_q$ be a field inclusion. For every $a\in\bF_r$ and every gate $G\in\{\gInit,\gTerm,\gX^a,\gCX^a,\gCCX^a\}$ with $m_{\mathrm{in}}$ input dits and $m_{\mathrm{out}}$ output dits, then
  \begin{equation*}
    G_q\circ\iota^{\sqcup m_{\mathrm{in}}} = \iota^{\sqcup m_{\mathrm{out}}}\circ G_r.
  \end{equation*}
\end{lemma}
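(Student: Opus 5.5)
The plan is to check the identity gate by gate, using only that $\iota$ is an injective ring homomorphism, i.e.\ $\iota(x+y)=\iota(x)+\iota(y)$ and $\iota(xy)=\iota(x)\iota(y)$. Since $\iota^{\sqcup m}$ acts componentwise, it suffices to evaluate both sides of $G_q\circ\iota^{\sqcup m_{\mathrm{in}}}=\iota^{\sqcup m_{\mathrm{out}}}\circ G_r$ on an arbitrary input and compare them coordinate by coordinate. In each gate the constant $a\in\bF_r$ is interpreted in $\bF_q$ as $\iota(a)$; this is the convention fixed at the start of \Cref{sec:alphchange}, where $a\in\bF_r\subseteq\bF_q$.

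I will treat the five gate types in order.
\begin{itemize}
\item For $\gInit$, both sides map the unique input $0\in\{0\}$ to $0\in\bF_q$, because $\iota(0)=0$.
\item For $\gTerm$, both sides map every input to the unique element of $\{0\}$. Here $\iota^{\sqcup 0}$ on the trivial set $\{0\}$ is just the identity.
\item For $\gX^a$, we have $G_q(\iota(x))=\iota(x)+\iota(a)=\iota(x+a)$ by additivity.
\item For $\gCX^a$, the left side gives $(\iota(x_1),\iota(a)\iota(x_1)+\iota(x_2))$, and this equals $(\iota(x_1),\iota(ax_1+x_2))$ by the homomorphism property.
\item For $\gCCX^a$, the third coordinate is $\iota(a)\iota(x_1)\iota(x_2)+\iota(x_3)=\iota(ax_1x_2+x_3)$, and the first two coordinates are passed through unchanged.
\end{itemize}

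No step is a genuine obstacle. The only point needing care is bookkeeping for the degenerate domains $\{0\}$ of $\gInit$ and $\gTerm$, where $m_{\mathrm{in}}=0$ or $m_{\mathrm{out}}=0$ and the map $\iota^{\sqcup 0}$ should be read as the identity on the one-point set.
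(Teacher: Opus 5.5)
Your proposal is correct and matches the paper's approach: the paper simply notes that the result is immediate from the gate definitions, since restricting each $q$-ary gate to the subfield $\bF_r\subseteq\bF_q$ yields the corresponding $r$-ary gate. Your gate-by-gate check using that $\iota$ is a ring homomorphism is exactly that argument written out in full.
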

\begin{proof}
  The result follows immediately by the definition of the gates $\gInit,\gTerm,\gX^a,\gCX^a,\gCCX^a$, as restricting the $q$-ary version of each of these gates to the subfield $\bF_r\subseteq\bF_q$ simply yields the $r$-ary version of the gate.
\end{proof}

\Cref{lem:alphdown} below shows that we can every $q$-ary gate with a small $r$-ary circuit, while preserving the circuit's functionality up to an isomorphism $\bF_r^\kappa\cong\bF_q$.

\begin{lemma}[Reduce alphabet size]
  \label{lem:alphdown}
  Let $r$ be a prime power, and $\kappa\in\bN$. Let $q=r^\kappa$, and let $\phi:\bF_r^\kappa\xrightarrow{\sim}\bF_q$ be an arbitrary fixed $\bF_r$-linear isomorphism. Then for every gate $G\in\cG:=\{\gInit,\gTerm,\gX^*,\gCX^*,\gCCX^*\}$ over $q$-ary dits, there exists a circuit $\cR=\cR_G$ using gate set $\cG$ over $r$-ary dits, using space $|N|=\kappa\cdot\max\{m_{\mathrm{in}},m_{\mathrm{out}}\}$ and time $T=\kappa^3$, such that
  \begin{equation}
    \label{eq:Gsim}
    \phi^{\sqcup m_{\mathrm{out}}}\circ\cR = G\circ\phi^{\sqcup m_{\mathrm{in}}},
  \end{equation}
  where $m_{\mathrm{in}}$ and $m_{\mathrm{out}}$ denote the number of input and output dits in $G$, respectively, so that $\cR$ has $\kappa\cdot m_{\mathrm{in}}$ input dits and $\kappa\cdot m_{\mathrm{out}}$ output dits.
\end{lemma}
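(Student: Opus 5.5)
The plan is to handle each of the five gate types in $\cG$ separately. In every case the $q$-ary gate will be realized as a small $r$-ary circuit that acts blockwise on the $\kappa$-dit blocks $\phi^{-1}(\cdot)$. Everything rests on one observation: for every $a\in\bF_q$, the map $x\mapsto ax$ is $\bF_r$-linear on $\bF_q$. Through $\phi$ it therefore corresponds to a matrix $A_a\in\bF_r^{\kappa\times\kappa}$ with $\phi(A_a v)=a\phi(v)$. Likewise, multiplication $\bF_q\times\bF_q\to\bF_q$ is $\bF_r$-bilinear. Under $\phi$ it is given by structure constants $\gamma_{jl}^{m}\in\bF_r$ satisfying $\phi(v)\phi(w)=\phi\bigl((\sum_{j,l}\gamma_{jl}^m v_jw_l)_{m\in[\kappa]}\bigr)$.

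The four easy gates go as follows.
\begin{itemize}
\item $\gInit_q$ and $\gTerm_q$ become $\gInit_r^{\sqcup\kappa}$ and $\gTerm_r^{\sqcup\kappa}$, since $\phi(0^\kappa)=0$.
\item $\gX^a_q$ becomes $\bigsqcup_{m\in[\kappa]}\gX_r^{(\phi^{-1}(a))_m}$, using additivity of $\phi$.
\item $\gCX^a_q$ on blocks $(v,w)$ should map $w\mapsto w+A_a v$. Here I will use the circuit $\cR_{A_a}$ of \Cref{def:matcirc}, which is built from $\gCX_r^*$ gates with controls in the $v$-block and targets in the $w$-block. It takes time $\kappa^2$, and we pad with identities up to $\kappa^3$.
\end{itemize}
In each of these cases the input and output dits coincide, as \Cref{def:circuit} requires. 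The space used is $\kappa\max\{m_{\mathrm{in}},m_{\mathrm{out}}\}$.

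The only gate with real content is $\gCCX^a_q$ on blocks $(v,w,z)$, which must map $z\mapsto z+a\,\phi^{-1}(\phi(v)\phi(w))$. Composing the bilinear form with $A_a$ gives coefficients $\beta^m_{jl}\in\bF_r$ such that the update reads $z_m\mapsto z_m+\sum_{j,l}\beta^m_{jl}v_jw_l$. I would then sequentially loop over all triples $(j,l,m)\in[\kappa]^3$ and apply $\gCCX_r^{\beta^m_{jl}}$ with controls $v_j,w_l$ and target $z_m$. One such gate per timestep gives exactly $\kappa^3$ timesteps. This uses no ancillas, so the space is $3\kappa$, and controls are never modified, so each summand is computed from the original $v,w$. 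Finally, \Cref{eq:Gsim} follows by linearity of $\phi$ in each case.

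The main obstacle is only bookkeeping. The key points are ensuring that the space bound $\kappa\max\{m_{\mathrm{in}},m_{\mathrm{out}}\}$ holds with no ancillas (it does, since $\gCCX^*$ and $\gCX^*$ gates suffice) and padding every construction to exactly $T=\kappa^3$ timesteps with identity gates $\gX^0$.
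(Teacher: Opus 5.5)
Your proposal is correct and follows essentially the same route as the paper: you treat $\gInit,\gTerm,\gX^a$ blockwise, you realize $\gCX^a$ through the $\kappa\times\kappa$ matrix of the $\bF_r$-linear map $x\mapsto ax$ using $\kappa^2$ sequential $\gCX^*$ gates, and you realize $\gCCX^a$ through the coefficients of the $\bF_r$-bilinear map $(v,w)\mapsto\phi^{-1}(a\phi(v)\phi(w))$ using $\kappa^3$ sequential $\gCCX^*$ gates with no ancillas, padding everything to exactly $\kappa^3$ timesteps. One minor notational slip: the update you write as $z+a\,\phi^{-1}(\phi(v)\phi(w))$ should read $z+\phi^{-1}(a\phi(v)\phi(w))$, which is what your coefficients $\beta^m_{jl}$ (obtained by composing the structure constants with $A_a$) actually encode.
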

\begin{proof}
  If $G=\gInit$ or $G=\gTerm$, we simply let $\cR$ apply $G^{\sqcup\kappa}$. If $G=\gX^a$ for $a\in\bF_q$, we simply let $\cR$ apply $\gX^{\phi^{-1}(a)}$.

  If $G=\gCX^a$ for $a\in\bF_q$, then
  \begin{equation*}
    (\phi^{-1})^{\sqcup 2}\circ\gCX^a\circ\phi^{\sqcup 2}(x_1,x_2) = (x_1,x_2+\phi^{-1}(a\phi(x_1)))
  \end{equation*}
  where $x_1\mapsto\phi^{-1}(a\phi(x_1))$ is a linear map on $\bF_r^\kappa$, so that there exists a matrix $A\in\bF_r^{[\kappa]\times[\kappa]}$ such that $Ax_1=\phi^{-1}(a\phi(x_1))$. Then for every $(i,j)\in[\kappa]^2$, we let $\cR$ apply a $\gCX^{A_{i,j}}$ gate from dit $j$ of $x_1$ to dit $i$ of $x_2$; these $\kappa^2$ gates together implement the function $(x_1,x_2)\mapsto(x_1,x_2+Ax_1)$.

  If $G=\gCCX^a$ for $a\in\bF_q$, then
  \begin{equation*}
    (\phi^{-1})^{\sqcup 3}\circ\gCCX^a\circ\phi^{\sqcup 3}(x_1,x_2,x_3) = (x_1,x_2,x_3+\phi^{-1}(a\phi(x_1)\phi(x_2)))
  \end{equation*}
  where $(x_1,x_2)\mapsto\phi^{-1}(a\phi(x_1)\phi(x_2))$ is a bilinear map on $\bF_r^\kappa\times\bF_r^\kappa$. Because every such bilinear map factors through the tensor product $\bF_r^\kappa\otimes\bF_r^\kappa=\bF_q^{[\kappa]^2}$, there exists a matrix $B\in\bF_r^{[\kappa]\times[\kappa]^2}$ such that $B(x_1\otimes x_2)=\phi^{-1}(a\phi(x_1)\phi(x_2))$. Then for every $(i,j,k)\in[\kappa]^3$, we let $\cR$ apply $\gCCX^{B_{i,(j,k)}}$ to dit $j$ of $x_1$, dit $k$ of $x_2$, and dit $i$ of $x_3$; these $\kappa^3$ gates together implement the function $(x_1,x_2,x_3)\mapsto(a_1,x_2,x_3+B(x_1\otimes x_2))$.

  For every gate $G\in\cG$ acting on $m=\max\{m_{\mathrm{in}},m_{\mathrm{out}}\}$ $q$-ary dits, the associated circuit $\cR=\cR_G$ defined above by construction satisfies \Cref{eq:Gsim}, acts on $|N|=\kappa\cdot m$ $r$-ary dits, and uses time $\leq\kappa^3$. We may add additional idling timesteps to use time exactly $T=\kappa^3$ for all $G$, as desired.
\end{proof}

\subsection{Mending Fault-Detecting Gadget}
\label{sec:mainfd}
In this section, we prove \Cref{it:mainfd} in \Cref{thm:main}. Here, all logarithms should be assumed to be base $2$ unless explicitly stated otherwise. In this section, we will also need the following extended version of \Cref{def:faultdet}, which describes gadgets over an alphabet $\bF_q$ that implement functions $\bar{O}$ acting on a smaller alphabet $\bF_r\subseteq\bF_q$:

\begin{definition}
  \label{def:faultdetalph}
  Let $\bF_r\subseteq\bF_q$ be a subfield. Let $\cR$ be a circuit over $q$-ary dits with input dits $N_{\mathrm{in}}$ and output dits $N_{\mathrm{out}}$. For $\alpha\in\{\mathrm{in},\mathrm{out}\}$, let $D_\alpha$ be a $[|N_\alpha|,k_\alpha]_q$ decorated code. We say $(\cR,\cE_{\mathrm{run}},D_{\mathrm{in}},D_{\mathrm{out}},\cE_{\mathrm{det}})$ forms a \emph{fault-detecting gadget} for a set $\bar{\cO}$ of functions $\bar{O}:\bF_r^{k_{\mathrm{in}}}\rightarrow\bF_r^{k_{\mathrm{out}}}$ if it satisfies \Cref{it:fdnoerr} and \Cref{it:fderr} in \Cref{def:faultdet}, but where we only consider every $x\in\bF_r^{k_{\mathrm{in}}}$ instead of $x\in\bF_q^{k_{\mathrm{in}}}$. Similarly, we say the gadget is furthermore \emph{mending} if either \Cref{it:fdecorrmend} or \Cref{it:fdedetmend} in \Cref{def:faultdet} holds, but where in \Cref{it:fdecorrmend} we require $x'\in\bF_r^{k_{\mathrm{in}}}\subseteq\bF_q^{k_{\mathrm{in}}}$ instead of $x\in\bF_q^{k_{\mathrm{in}}}$.
\end{definition}

\begin{proof}[Proof of \Cref{it:mainfd} in \Cref{thm:main}]
  At various points throughout the proof below, we will require that $\bar{N}_0(r,\epsilon)$ be sufficiently large compared to $r,\; 1/\epsilon$ so that certain bounds on $|\bar{N}|$ that hold asymptotically in fact hold for every $|\bar{N}|\geq\bar{N}_0(r,\epsilon)$. The final value of $\bar{N}_0(r,\epsilon)$ will then implicitly be given by the maximum of the values required in these various instances.
  
  Given the set $\bar{N}$, we first let $\bar{N}'=\bar{N}\times[4]\cong\bar{N}^{\sqcup 4}$.
  We then define\footnote{Note that the variable $n$ here is distinct from $n(r,\epsilon,\bar{N})$ in the statement of \Cref{thm:main}.}
  \begin{align}
    \label{eq:mdnvars}
    \begin{split}
      n &= \left\lfloor\left(\frac{\epsilon}{1-2\epsilon}\cdot\frac{\log|\bar{N}'|}{\log\log|\bar{N}'|}\right)^{1/\epsilon}\right\rfloor \\
      k &= \lfloor n^{1-\epsilon}\rfloor \\
      u &= \lfloor n^\epsilon\rfloor \\
      \kappa &= \lceil\log(n)/\log(r)\rceil \\
      q &= r^\kappa.
    \end{split}
  \end{align}
  We let
  \begin{align*}
    M &= M(\epsilon,\bar{N}) = n^u,
  \end{align*}
  so that if $|\bar{N}|\geq\bar{N}_0(r,\epsilon)$ is sufficiently large, then
  \begin{align*}
    M
    &\geq 2^{(n^\epsilon-1)\log n} \\
    &\geq 2^{\frac{\epsilon}{1-\epsilon}\cdot\frac{\log|\bar{N}'|}{\log\log|\bar{N}'|}\cdot\frac{1}{\epsilon}\log\log|\bar{N}'|} \\
    &\geq |\bar{N}'|^{1+\epsilon} \\
    &\geq |\bar{N}|^{1+\epsilon}
  \end{align*}
  and
  \begin{align*}
    M
    &\leq 2^{n^\epsilon\log n} \\
    &\leq 2^{\frac{\epsilon}{1-2\epsilon}\cdot\frac{\log|\bar{N}'|}{\log\log|\bar{N}'|}\cdot\frac{1}\epsilon\log\log|\bar{N}'|} \\
    &\leq 2^{\frac{\log|\bar{N}|}{1-1.9\epsilon}} \\
    &\leq |\bar{N}|^{1+4\epsilon},
  \end{align*}
  where the final inequality above uses the assumption that $0<\epsilon<1/8$. Thus $M\in[|\bar{N}|^{1+\epsilon},|\bar{N}|^{1+4\epsilon}]$, as desired.
  Furthermore, assuming $|\bar{N}|\geq\bar{N}_0(r,\epsilon)$ is sufficiently large, then
  \begin{align*}
    k^u
    &\geq 2^{(n^\epsilon-1)\log(n^{1-\epsilon}-1)} \\
    &\geq 2^{\frac{\epsilon}{1-1.9\epsilon}\cdot\frac{\log|\bar{N}'|}{\log\log|\bar{N}'|}\cdot(1-\epsilon)\cdot\frac{1}{\epsilon}\log\log|\bar{N}'|} \\
    &\geq |\bar{N}'|\\
  \end{align*}
  Therefore we may fix an arbitrary set inclusion $\bar{N}'\hookrightarrow[k]^u$, so that we may view $\bar{N}'$ as a subset of $[k]^u$. Also if $|\bar{N}|\geq\bar{N}_0(r,\epsilon)$ is sufficiently large, then we have $u\geq 4$, $k\leq n/32$, and $q\geq n$. Thus for $K\subseteq\bar{N}'\subseteq[k]^u$, the inequalities above ensure that the code $C(q,u,n,k,K)$ with encoding map $\Enc_{C(q,u,n,k,K)}$ in \Cref{thm:fdft} is well-defined.
  By \Cref{thm:fdft}, $C(q,u,n,k,K)$ has parameters $[n^u,|K|,\geq(n-8k+1)^u]_q$.

  To prove \Cref{it:mainfd} in \Cref{thm:main}, we will first construct fault-detecting gadgets over the alphabet $\bF_q$, before reducing the alphabet to $\bF_r$ using \Cref{lem:alphdown}.

  For this purpose, we first construct a circuit $\bar{\cR}^1$ acting on $q$-ary dits labeled by $\bar{N}'=\bar{N}\times[4]\cong\bar{N}^{\sqcup 4}$ as follows. $\bar{\cR}^1$ has input dits $K_{\mathrm{in}}\times\{1\}$ and output dits $K_{\mathrm{in}}\times\{1,2\}$. The first timestep of $\bar{\cR}^1$ applies $\gInit$ gates to all dits in $K_{\mathrm{in}}\times[4]$, and the last timestep of $\bar{\cR}^1$ applies $\gTerm$ gates to all dits in $K_{\mathrm{in}}\times\{3,4\}$. On input $x\in\bF_q^{K_{\mathrm{in}}}$, for each $j\in K_{\mathrm{in}}$, the remaining timesteps of $\bar{\cR}^1$ use dits $\{j\}\times[4]$ to perform repeated squaring in order to compute $x_j^r-x_j$, and output this value in dit $(j,2)$; the output in dit $(j,1)$ is simply the input value $x_j$. Note that this value $x_j^r-x_j$ equals $0$ iff $x_j$ lies in the subfield $\bF_r\subseteq\bF_q$; below we will apply \Cref{thm:fdft} to $\bar{\cR}^1$, and use these values $x_j^r-x_j$ to check that the logical inputs $x_j$ lie in the subfield $\bF_r$, in order to ultimately obtain a mending gadget over $r$-ary dits.

  Specifically, for each $j\in K_{\mathrm{in}}$, $\bar{\cR}^1$ initializes dits $\{j\}\times\{2,3,4\}$ to value $0$; the value $x_j$ of dit $(j,1)$ is never changed. $\bar{\cR}^1$ then performs $\gCX^1$ to copy value $x_j$ into dit $(j,2)$. $\bar{\cR}^1$ then loops through the bits in the binary representation of $r$ from least to most significant; for each bit equal to $0$, then $\bar{\cR}^1$ squares the value in dit $(j,2)$, while for each bit equal to $1$, then $\bar{\cR}^1$ squares the value in dit $(j,2)$ and subsequently multiplies $x_j$ (from dit $(j,1)$) into the value in dit $(j,2)$. Upon completion of this procedure, by definition dit $(j,2)$ will contain $x_j^r$. $\bar{\cR}^1$ then performs $\gCX^{-1}$ on dits $(j,1),(j,2)$ to compute $x_j^r-x_j$ in dit $(j,2)$. Each squaring and multiplication operation can be performed using a sequence of $\leq 10$ gates in $\cG$ acting on the dits $\{j\}\times[4]$, where dits $\{j\}\times\{3,4\}$ are used as ancillas into which values can be copied using $\gCX^*$ gates (after resetting the ancillas to $0$ using $\gTerm$ and $\gInit$), and multiplication (and therefore also squaring) is performed using $\gCCX^*$ gates.
  Thus $\bar{\cR}^1$ uses space $\bar{N}'$ and time $\bar{T}^1\leq 16\log r$.

  Let
  \begin{align}
    \label{eq:mdtlams}
    \begin{split}
      \tilde{\lambda}_{\mathrm{run}} &= 3\kappa^3\cdot\lambda_{\mathrm{run}} \\
      \tilde{\lambda}_{\mathrm{in}} &= \frac{(n-8k+1)^u}{4} \\ 
      \tilde{\lambda}_{\mathrm{out}} &= \frac{16u^2n^4}{\rho(u,n,n-8k+1)}\cdot\tilde{\lambda}_{\mathrm{run}} \\ 
      \tilde{\lambda}_{\mathrm{det}} &= (u-1)^2n^4\cdot\tilde{\lambda}_{\mathrm{run}} 
    \end{split}
  \end{align}
  and for $\alpha\in\{\mathrm{in},\mathrm{out}\}$ let
  \begin{align*}
    \tilde{D}_\alpha &= (C(q,u,n,k,K_\alpha),\; \Enc_{C(q,u,n,k,K_\alpha)},\; 2^{[n]^u}|_{\geq\tilde{\lambda}_\alpha}) \\
    \tilde{D}_\alpha' &= (C(q,u,n,k,K_\alpha),\; \Enc_{C(q,u,n,k,K_\alpha)},\; 2^{[n]^u}|_{\geq\tilde{\lambda}_\alpha-2\tilde{\lambda}_{\mathrm{run}}}).
  \end{align*}
  
  Assuming $|\bar{N}|\geq\bar{N}_0(r,\epsilon)$ is sufficiently large, we have
  \begin{align*}
    (n-16k+2)^u
    &\geq M\cdot\left(1-\frac{16}{n^\epsilon}\right)^{n^\epsilon} \geq \frac{M}{2^{64}},
  \end{align*}
  so the expression $\bar{\lambda}_{\mathrm{run}}(u,n)$ in \Cref{eq:fdbarlamrun} satisfies
  \begin{align}
    \label{eq:mdblams}
    \begin{split}
      \bar{\lambda}_{\mathrm{run}}(u,n)
      &\geq \left(\frac{(n-16k+2)^u}{un\cdot n^u}\right)^\eta\cdot\frac{(n-16k+2)^{u-1}}{2^{10}\cdot u^2n^8} \\
      &\geq \frac{M}{(un\cdot 2^{64})^\eta\cdot 2^{74}\cdot u^2n^9} \\
      &\geq \frac{M}{(\log|\bar{N}|)^{15\eta/\epsilon}} \\
      &\geq 3\kappa^3 \cdot \bar{\lambda}_{\mathrm{run}}(r,\epsilon,\bar{N}),
    \end{split}
  \end{align}
  where the final equality above holds by \Cref{eq:mdbarlamrun,eq:mdnvars}. Then because $\tilde{\lambda}_{\mathrm{run}}=3\kappa^3\cdot\lambda_{\mathrm{run}}\leq 3\kappa^3\cdot\bar{\lambda}_{\mathrm{run}}(r,\epsilon,\bar{N})$, we have $\tilde{\lambda}_{\mathrm{run}}\leq\bar{\lambda}_{\mathrm{run}}(u,n)$. By \Cref{eq:mdtlams,eq:mdblams,eq:fdbarlamrun}, we also have
  \begin{align*}
    \tilde{\lambda}_{\mathrm{out}}+\tilde{\lambda}_{\mathrm{run}}
    &\leq \frac{32u^2n^4}{\rho(u,n,n-8k+1)} \cdot \bar{\lambda}_{\mathrm{run}}(u,n) \\
    &= \frac{32u^2n^4}{\rho(u,n,n-8k+1)} \cdot \frac{\rho(u,n,n-16k+2)\cdot(n-16k+2)^{u-1}}{2^{10}\cdot u^2n^8} \\
    &\leq \frac{(n-16k+2)^{u-1}}{2^5n^4} \\
    &\leq \tilde{\lambda}_{\mathrm{in}}.
  \end{align*}

  Therefore by \Cref{it:fd} in \Cref{thm:fdft}, there exists a circuit $\tilde{\cR}^1$ using space $|\tilde{N}^1|\leq 3(u+1)n^u$, time $\tilde{T}^1\leq O(\bar{T}^1\cdot u^4n^2(\log n)^2)\leq O((\log r)\cdot u^4n^2(\log n)^2)$, and gate set $\cG$, and there exist subsets $\tilde{E}^1_{\mathrm{det},\tilde{t}}\subseteq\tilde{N}^1$ for $\tilde{t}\in\tilde{T}^1$, such that
  \begin{equation}
    \label{eq:tR1gad}
    \left(\tilde{\cR}^1,\; \tilde{\cE}^1_{\mathrm{run}}=2^{\tilde{N}^1}|_{\geq\tilde{\lambda}_{\mathrm{run}}}^{\sqcup\tilde{T}^1},\; \tilde{D}_{\mathrm{in}},\; (\tilde{D}_{\mathrm{in}}')^{\sqcup 2},\; \tilde{\cE}^1_{\mathrm{det}}=\bigsqcup_{\tilde{t}\in[\tilde{T}^1]}(2^{\tilde{E}^1_{\mathrm{det},\tilde{t}}}|_{\geq\tilde{\lambda}_{\mathrm{det}}})\right)
  \end{equation}
  is a mending fault-detecting gadget for $\tilde{\bar{O}}^1(\cdot)=\bar{\cR}^1(\cdot)$.
  Specifically, the two output blocks of $\tilde{D}_{\mathrm{in}}'$ above encode output dits $K_{\mathrm{in}}\times\{1\}$ and $K_{\mathrm{in}}\times\{2\}$ of $\bar{\cR}^1$, respectively. Here we may use $\tilde{D}_{\mathrm{in}}$ for the output blocks because $\tilde{\lambda}_{\mathrm{out}}\leq\tilde{\lambda}_{\mathrm{in}}-2\tilde{\lambda}_{\mathrm{run}}$ as shown above.

  Let $(\tilde{\cR}^1)'$ be the circuit that first runs $\tilde{\cR}^1$, and then has two additional timesteps: timestep $\tilde{T}^1+1$ applies identity gates, and timestep $\tilde{T}^1+2$ applies $\gTerm$ gates to all dits in $K_{\mathrm{in}}\times\{2\}$. Let $\tilde{E}^1_{\mathrm{det},\tilde{T}^1+1}=[n]^u\times\{2\}$ and $\tilde{E}^1_{\mathrm{det},\tilde{T}^1+2}=\emptyset$, where $[n]^u\times\{2\}\subseteq\tilde{N}^1$ denotes the second of the two output blocks of $\tilde{\cR}^1$.

  \begin{claim}
    The data
    \begin{equation}
      \label{eq:tR1pgad}
      \left((\tilde{\cR}^1)',\; (\tilde{\cE}^1)_{\mathrm{run}}=2^{\tilde{N}^1}|_{\geq\tilde{\lambda}_{\mathrm{run}}}^{\sqcup\tilde{T}^1+2},\; \tilde{D}_{\mathrm{in}},\; \tilde{D}_{\mathrm{in}},\; (\tilde{\cE}^1)'_{\mathrm{det}}=\bigsqcup_{\tilde{t}\in[\tilde{T}^1+2]}(2^{\tilde{E}^1_{\mathrm{det},\tilde{t}}}|_{\geq\tilde{\lambda}_{\mathrm{det}}})\right)
    \end{equation}
    forms a mending fault-detecting gadget (in the sense of \Cref{def:faultdetalph}) for the identity map $(\bar{O}^1)'=I_r:\bF_r^{K_{\mathrm{in}}}\rightarrow\bF_r^{K_{\mathrm{in}}}$ on $r$-ary dits labeled by $K_{\mathrm{in}}$.
  \end{claim}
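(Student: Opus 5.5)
We will verify the three requirements of \Cref{def:faultdetalph} for the gadget in \Cref{eq:tR1pgad} directly from the fact that \Cref{eq:tR1gad} is a mending fault-detecting gadget for $\bar{\cR}^1(\cdot)$. By \Cref{lem:fdadd} (whose proof goes through verbatim in the setting of \Cref{def:faultdetalph}), we restrict attention to additive faults. Given an $(\tilde{\cE}^1)_{\mathrm{run}}$-avoiding fault $\cF$ for $(\tilde{\cR}^1)'$, we split it as $\cF=(\cF^{\le\tilde T^1},f_{\tilde T^1+1},f_{\tilde T^1+2})$. Here the first part is $\tilde{\cE}^1_{\mathrm{run}}$-avoiding for $\tilde{\cR}^1$, and each of the last two layers has weight $<\tilde\lambda_{\mathrm{run}}$.

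\textbf{Noiseless condition.} For $x\in\bF_r^{K_{\mathrm{in}}}$ we have $x_j^r=x_j$, so $\bar{\cR}^1(x)=(x,0)$. Hence by \Cref{it:fdnoerr} for \Cref{eq:tR1gad}, the output of $\tilde{\cR}^1$ on $\Enc(x)$ is $(\Enc(x),\Enc(0))=(\Enc(x),0)$, and no earlier detector is triggered. The new detector set $[n]^u\times\{2\}$ at time $\tilde T^1+1$ is therefore zero, and terminating block $2$ leaves $\Enc(x)=\Enc\circ I_r(x)$.

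\textbf{Mending and fault-detection.} Take an arbitrary input $y$, and apply the mending property of \Cref{eq:tR1gad}. If some $E\in\tilde{\cE}^1_{\mathrm{det}}$ is contained in the support of the transcript, we are done, since these sets belong to $(\tilde{\cE}^1)'_{\mathrm{det}}$. Otherwise there is some $x'\in\bF_q^{K_{\mathrm{in}}}$ for which the output is $(\Enc(x')+e_1,\Enc(x'^r-x')+e_2)$ with $|e_b|<\tilde\lambda_{\mathrm{in}}-2\tilde\lambda_{\mathrm{run}}$. Here $x'^r-x'$ denotes the componentwise map. We then split into two cases.
\begin{itemize}
\item[(i)] Suppose $x'\in\bF_r^{K_{\mathrm{in}}}$. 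Then block $1$ after the two extra timesteps is $\Enc(x')+e_1+f$ with $|f|<2\tilde\lambda_{\mathrm{run}}$, so it is a $2^{[n]^u}|_{\ge\tilde\lambda_{\mathrm{in}}}$-deviation of $\Enc(x')$. This is \Cref{it:fdecorrmend} with output code $\tilde D_{\mathrm{in}}$.
\item[(ii)] Suppose $x'\notin\bF_r^{K_{\mathrm{in}}}$. Then $z:=x'^r-x'\neq0$, so $\Enc(z)$ is a nonzero codeword of weight $\ge(n-8k+1)^u=4\tilde\lambda_{\mathrm{in}}$. At time $\tilde T^1+1$ block $2$ equals $\Enc(z)+e_2+f_{\tilde T^1+1}$, whose weight is at least $4\tilde\lambda_{\mathrm{in}}-\tilde\lambda_{\mathrm{in}}\ge\tilde\lambda_{\mathrm{det}}$, using $\tilde\lambda_{\mathrm{det}}\le\tilde\lambda_{\mathrm{out}}\le\tilde\lambda_{\mathrm{in}}$ from the bounds established above. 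So a set in $2^{\tilde E^1_{\mathrm{det},\tilde T^1+1}}|_{\ge\tilde\lambda_{\mathrm{det}}}$ lies in the support, giving \Cref{it:fdedetmend}.
\end{itemize}

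For the non-mending condition \Cref{it:fderr} with $y$ an $\tilde{\cE}_{\mathrm{in}}$-deviation of $\Enc(x)$, $x\in\bF_r^{K_{\mathrm{in}}}$, the same argument applies using \Cref{it:fderr} of \Cref{eq:tR1gad} with $x'=x$, so case (i) holds automatically.

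\textbf{Expected obstacle.} The only delicate point is bookkeeping. We must check that the accumulated errors from the two extra timesteps keep block $1$ within threshold $\tilde\lambda_{\mathrm{in}}$; this is why the output code was $\tilde D'_{\mathrm{in}}$ with slack $2\tilde\lambda_{\mathrm{run}}$. We must also confirm that the distance of $C(q,u,n,k,K_{\mathrm{in}})$ exceeds $\tilde\lambda_{\mathrm{in}}+\tilde\lambda_{\mathrm{run}}+\tilde\lambda_{\mathrm{det}}$, which follows from $(n-8k+1)^u=4\tilde\lambda_{\mathrm{in}}$.
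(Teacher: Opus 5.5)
Your proposal is correct and follows the paper's proof. You invoke the (mending) fault-detection of \Cref{eq:tR1gad} and split on whether the recovered $x'$ lies in $\bF_r^{K_{\mathrm{in}}}$: if it does, the $2\tilde\lambda_{\mathrm{run}}$ slack in $\tilde D'_{\mathrm{in}}$ absorbs the two extra timesteps; if not, the nonzero codeword encoding $x'^r-x'$ trips the new detector at time $\tilde T^1+1$. Your bookkeeping is only marginally different — you check the noiseless condition explicitly and use $\tilde\lambda_{\mathrm{det}}\le\tilde\lambda_{\mathrm{out}}\le\tilde\lambda_{\mathrm{in}}$ instead of the paper's bound via $\bar\lambda_{\mathrm{run}}(u,n)$.
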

  \begin{proof}
    In the absence of errors, the second output block of $\tilde{\cR}^1(x)$ on input $x\in\bF_q^{K_{\mathrm{in}}}$ contains an encoding of the values $x_j^r-x_j$ for every $j\in K_{\mathrm{in}}$. These values equal $0$ iff $x_j\in\bF_r$. Thus the gadget in \Cref{eq:tR1pgad} is fault-detecting because the gadget in \Cref{eq:tR1gad} is fault-detecting. Note that the output error threshold of $\tilde{D}_{\mathrm{in}}'$ is $\tilde{\lambda}_{\mathrm{in}}-2\tilde{\lambda}_{\mathrm{run}}$, and hence the two additional timesteps at the end of $(\tilde{\cR}^1)'$ bring the error threshold up to the error threshold $\tilde{\lambda}_{\mathrm{in}}$ of $D_{\mathrm{in}}$.

    To see that the gadget in \Cref{eq:tR1pgad} is mending, consider an arbitrary input $y\in\bF_q^{n^u}$ and a $\tilde{\cE}^1_{\mathrm{run}}$-avoiding fault $\cF$ for $(\tilde{\cR}^1)'$. By the mending property of the gadget in \Cref{eq:tR1gad}, either there exists $x'\in\bF_q^{K_{\mathrm{in}}}$ such that $(\tilde{\cR}^1)'[\cF](y)$ is a $\tilde{\cE}_{\mathrm{in}}$-deviation of $\Enc_{C(q,u,n,k,K)}(x')$, or else there exists a set $E\in\tilde{\cE}^1_{\mathrm{det}}$ such that $E\subseteq\supp(\tran(\tilde{\cR}^1[\cF];y))$. If the first case holds with $x'\in\bF_r^{K_{\mathrm{in}}}\subseteq\bF_q^{K_{\mathrm{in}}}$, or if the second case holds, then the desired mending property in the sense of \Cref{def:faultdetalph} holds.
    The only remaining possibility is if the first case holds, but for some $x'\in\bF_q^{K_{\mathrm{in}}}\setminus\bF_r^{K_{\mathrm{in}}}$.

    To analyze this possiblity, define $x''\in\bF_q^{K_{\mathrm{in}}}$ by $x''_j=(x_j')^r-x_j'$. Then $x''\neq 0$, and the second output block of $\tilde{\cR}^1(x')$ differs from $\Enc_{C(q,u,n,k,K)}(x'')$ by an error of weight $<\tilde{\lambda}_{\mathrm{in}}-2\tilde{\lambda}_{\mathrm{run}}$. As $\Enc_{C(q,u,n,k,K)}(x'')$ is a nonzero codeword of $C(q,u,n,k,K)$, its weight is at least the code's distance $(n-8k+1)^u$. Hence the support $E$ of the restriction of $\tran((\tilde{\cR}^1)'[\cF];y)$ to dits $\tilde{E}^1_{\mathrm{det},\tilde{T}^1+1}$ at timestep $\tilde{T}^1+1$ has weight at least
    \begin{equation*}
      |E|\geq(n-8k+1)^u-\tilde{\lambda}_{\mathrm{in}} \geq \frac{(n-8k+1)^u}{2} \geq \tilde{\lambda}_{\mathrm{det}},
    \end{equation*}
    where the first inequality above holds by \Cref{eq:mdtlams}, and the second inequality holds because by \Cref{eq:mdblams,eq:fdbarlamrun},
    \begin{align*}
      \tilde{\lambda}_{\mathrm{det}}
      &\leq (u-1)^2n^4\cdot\bar{\lambda}_{\mathrm{run}}(u,n) \leq \frac{(n-8k+1)^u}{2}.
    \end{align*}
    Thus $E\in(\tilde{\cE}^1)'_{\mathrm{det}}$, completing the proof of the mending property (in the sense of \Cref{def:faultdetalph}).
  \end{proof}
  
  Now we let $\bar{\cR}^2$ denote the circuit obtained from $\bar{\cR}$ by replacing every $r$-ary gate with the analogous $q$-ary gate. Therefore $\bar{\cR}^2$ acts on $q$-ary dits, but otherwise has the same space usage $\bar{N}\cong\bar{N}\times\{1\}\subseteq\bar{N}'$ and time usage $\bar{T}$ as $\bar{\cR}$. 
  Therefore by \Cref{it:fd} in \Cref{thm:fdft}, there exists a circuit $\tilde{\cR}^2$ using space $|\tilde{N}^2|\leq 3(u+1)n^u$, time $\tilde{T}^2\leq O(\bar{T}\cdot u^4n^2(\log n)^2)$, and gate set $\cG$, and there exist subsets $\tilde{E}^2_{\mathrm{det},\tilde{t}}\subseteq\tilde{N}^2$ for $\tilde{t}\in\tilde{T}^2$, such that
  \begin{equation}
    \label{eq:tR2gad}
    \left(\tilde{\cR}^2,\; \tilde{\cE}^2_{\mathrm{run}}=2^{\tilde{N}^2}|_{\geq\tilde{\lambda}_{\mathrm{run}}}^{\sqcup\tilde{T}^2},\; \tilde{D}_{\mathrm{in}},\; \tilde{D}_{\mathrm{out}},\; \tilde{\cE}^2_{\mathrm{det}}=\bigsqcup_{\tilde{t}\in[\tilde{T}^2]}(2^{\tilde{E}^2_{\mathrm{det},\tilde{t}}}|_{\geq\tilde{\lambda}_{\mathrm{det}}})\right)
  \end{equation}
  is a mending fault-detecting gadget for $\tilde{\bar{O}}^2(\cdot)=\bar{\cR}^2(\cdot):\bF_q^{K_{\mathrm{in}}}\rightarrow\bF_q^{K_{\mathrm{out}}}$. Letting $\iota:\bF_r\hookrightarrow\bF_q$ denote the inclusion, then by \Cref{lem:alphup} we have $\bar{\cR}^2\circ\iota^{\sqcup K_{\mathrm{in}}}=\iota^{\sqcup K_{\mathrm{out}}}\circ\bar{\cR}$, so the gadget above is therefore also a (non-mending) fault-detecting gadget for $\bar{\cR}(\cdot):\bF_r^{K_{\mathrm{in}}}\rightarrow\bF_r^{K_{\mathrm{out}}}$ in the sense of \Cref{def:faultdetalph}.

  \begin{claim}
    \label{claim:mdla}
    The sequential composition
    \begin{equation}
      \label{eq:mdlagad}
      \left(\tilde{\cR}=\tilde{\cR}^2\circ(\tilde{\cR}^1)',\; \tilde{\cE}_{\mathrm{run}}=\tilde{\cE}^2_{\mathrm{run}}\sqcup(\tilde{\cE}^1)_{\mathrm{run}},\; \tilde{D}_{\mathrm{in}},\; \tilde{D}_{\mathrm{out}},\; \tilde{\cE}_{\mathrm{det}}=\tilde{\cE}^2_{\mathrm{det}}\sqcup(\tilde{\cE}^1)'_{\mathrm{det}}\right)
    \end{equation}
    of the gadgets in \Cref{eq:tR1pgad,eq:tR2gad} forms a mending fault-detecting gadget for $\tilde{\bar{O}}(\cdot)=\bar{\cR}(\cdot):\bF_r^{K_{\mathrm{in}}}\rightarrow\bF_r^{K_{\mathrm{out}}}$, using space $|\tilde{N}|=\max\{|\tilde{N}^1|,|\tilde{N}^2|\}\leq 3(u+1)n^u$, time $\tilde{T}=\tilde{T}^1+2+\tilde{T}^2\leq O((\bar{T}+\log r)\cdot u^4n^2(\log n)^2)$, and gate set $\cG$ over $q$-ary dits.
  \end{claim}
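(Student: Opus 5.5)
The plan is to derive this from the sequential composition lemma (\Cref{lem:seqcomp}), adapted to the extended notion of \Cref{def:faultdetalph} where logical messages range over $\bF_r$. First I check that the interface condition of \Cref{lem:seqcomp} holds. The output decorated code of the gadget in \Cref{eq:tR1pgad} is $\tilde{D}_{\mathrm{in}}$, and this is exactly the input decorated code of the gadget in \Cref{eq:tR2gad}. The output dits of $(\tilde{\cR}^1)'$ are the first output block, and these are the input dits of $\tilde{\cR}^2$, since the second block is terminated in the last timestep of $(\tilde{\cR}^1)'$.

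The proof of \Cref{lem:seqcomp} carries over verbatim when $x$ is restricted to $\bF_r^{K_{\mathrm{in}}}$, because it only chains the two fault-detection implications. The argument goes as follows.
\begin{enumerate}
\item Take $x\in\bF_r^{K_{\mathrm{in}}}$ and a $\tilde{\cE}_{\mathrm{in}}$-deviation $y$ of $\Enc(x)$. Apply the first gadget, which implements the identity on $\bF_r$.
\item Either detection occurs, or the intermediate state is a $\tilde{\cE}_{\mathrm{in}}$-deviation of $\Enc(x)$ with $x$ still in $\bF_r^{K_{\mathrm{in}}}$.
\item In the latter case, apply the second gadget, which is fault-detecting for $\bar{\cR}$ in the sense of \Cref{def:faultdetalph} (via \Cref{lem:alphup}). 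It follows that either detection occurs or the output is a $\tilde{\cE}_{\mathrm{out}}$-deviation of $\Enc_{\mathrm{out}}(\bar{\cR}(x))$.
\end{enumerate}
The noiseless condition (\Cref{it:fdnoerr}) also composes directly. For $x\in\bF_r$, all $x_j^r-x_j=0$, so the detector block at time $\tilde{T}^1+1$ is zero. The remaining detectors vanish by the two component gadgets.

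Mending is the one point needing care, and I expect it to be the main obstacle. The second gadget is mending only over $\bF_q$, so it cannot be used to supply mending; the first gadget must supply it. Given arbitrary $y$, the mending property of \Cref{eq:tR1pgad} in the sense of \Cref{def:faultdetalph} yields one of two outcomes. The first is detection. The second is that the intermediate state is a $\tilde{\cE}_{\mathrm{in}}$-deviation of $\Enc(x')$ for some $x'\in\bF_r^{K_{\mathrm{in}}}$. In the second case, the non-mending fault-detection of the second gadget applies to $x'\in\bF_r$. This gives either detection or an output that is a deviation of $\Enc_{\mathrm{out}}(\bar{\cR}(x'))$, which is exactly \Cref{it:fdecorrmend} or \Cref{it:fdedetmend}.

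The fault family $\tilde{\cE}_{\mathrm{run}}$ is the disjoint union over the two time intervals, so an avoiding fault restricts to avoiding faults on each piece, as in \Cref{lem:seqcomp}. Detector sets of either piece lying in the transcript of the corresponding subcircuit also lie in the full transcript.

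Finally, the resource bounds are bookkeeping. Space is the maximum of $|\tilde{N}^1|$ and $|\tilde{N}^2|$, each at most $3(u+1)n^u$. Time is $\tilde{T}^1+2+\tilde{T}^2$, and inserting the bounds $O((\log r)u^4n^2(\log n)^2)$ and $O(\bar{T}u^4n^2(\log n)^2)$ gives the claimed total. The gate set is $\cG$ over $q$-ary dits, since both pieces use only $\cG$.
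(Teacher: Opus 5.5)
Your proposal is correct and matches the paper's proof. The paper simply invokes the reasoning of \Cref{lem:seqcomp} in the setting of \Cref{def:faultdetalph}, and your more detailed version is that same argument: mending is supplied by the first gadget, and is then chained with the non-mending fault-detection of the second gadget on $\bF_r$-inputs, which comes from \Cref{lem:alphup}.
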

  \begin{proof}
    The result holds by the same reasoning used to prove \Cref{lem:seqcomp}, but applied to gadgets in the form of \Cref{def:faultdetalph} in which the physical circuit over $q$-ary dits implements a function over smaller $r$-ary dits. The proof in this case is essentially the same as that of \Cref{lem:seqcomp}, so we omit the details to avoid redundancy.
  \end{proof}

  To complete the proof of \Cref{it:mainfd} in \Cref{thm:main}, we will apply \Cref{lem:alphdown} to reduce the alphabet size of the physical dits of $\tilde{\cR}$ in the gadget in \Cref{eq:mdlagad} from $q$ to $r$; the desired mending fault-detecting property over $r$-ary dits will follow from \Cref{claim:mdla}. Specifically, fix an $\bF_r$-linear isomorphism $\phi:\bF_r^\kappa\rightarrow\bF_q$ such that that first input component maps to the subfield $\bF_r\subseteq\bF_q$, that is, $\phi|_{\bF_r\times\{0\}^{\kappa-1}}=\iota$. For $K\subseteq\bar{N}=\bar{N}\times\{1\}\subseteq\bar{N}'\subseteq[k]^u$, we obtain the code $C(r,\epsilon,K,\bar{N})$ with encoding map $\Enc_{C(r,\epsilon,K,\bar{N})}$ in \Cref{it:mainfd} in \Cref{thm:main} by restricting $C(q,u,n,k,K)$ to messages in $\bF_r^K\subseteq\bF_q^K$ and applying $\phi^{-1}$ to all codeword dits, that is,
  \begin{align*}
    \Enc_{C(r,\epsilon,K,\bar{N})} &= (\phi^{-1})^{\sqcup[n]^u}\circ\Enc_{C(q,u,n,k,K)}\circ\iota^{\sqcup K} \\
    C(r,\epsilon,K,\bar{N}) &= \Enc_{C(r,\epsilon,K,\bar{N})}(\bF_r^K).
  \end{align*}
  Then $C(r,\epsilon,K,\bar{N})$ is a code over $r$-ary dits with length
  \begin{align*}
    n(r,\epsilon,\bar{N})
    &= \kappa\cdot n^u \leq \frac{\log\log|\bar{N}|}{\epsilon}\cdot M
  \end{align*}
  dimension $|K|$, and distance $d(r,\epsilon,\bar{N})$ at least the distance of $C(q,u,n,k,K)$, so that
  \begin{align*}
    d(r,\epsilon,\bar{N})
    &\geq (n-8k+1)^u \geq M\cdot\left(1-\frac{8}{n^\epsilon}\right)^{n^\epsilon} \geq \frac{M}{2^{32}}.
  \end{align*}

  Now we define a circuit $\cR$ on $r$-ary dits using space $N=[\kappa]\times\tilde{N}$ and time $T=\kappa^3\cdot\tilde{T}$ as follows: for each $q$-ary gate $G\in\cG$ acting on some set of dits $B\subseteq\tilde{N}$ in some timestep $\tilde{t}\in[\tilde{T}]$ of $\tilde{\cR}$, we let $\cR$ apply the corresponding $r$-ary circuit $\cR_G$ from \Cref{lem:alphdown} to dits $[\kappa]\times B$ in timesteps $\{\kappa^3(\tilde{t}-1)+1,\dots,\kappa^3\tilde{t}\}\subseteq[T]$.

  By construction, $\tilde{E}_{\mathrm{det}}$ is of the form $\tilde{E}_{\mathrm{det}}=\bigsqcup_{\tilde{t}\in[t]}(2^{\tilde{E}_{\mathrm{det},\tilde{t}}}|_{\geq\tilde{\lambda}_{\mathrm{det}}})$ for some subsets $\tilde{E}_{\mathrm{det},\tilde{t}}\subseteq\tilde{N}$ for $\tilde{t}\in[\tilde{T}]$. Then for $t\in[T]$, we define
  \begin{equation}
    \label{eq:mdEdet}
    E_{\mathrm{det},t} = \begin{cases}
      [\kappa]\times\tilde{E}_{\mathrm{det},t/\kappa^3},&t\in\kappa^3\cdot[\tilde{T}]\\
      \emptyset,&t\notin\kappa^3\cdot[\tilde{T}].
    \end{cases}
  \end{equation}

  Assuming $|\bar{N}|\geq\bar{N}_0(r,\epsilon)$ is sufficiently large, then by the bounds on $|\tilde{N}|,\tilde{T}$ in \Cref{claim:mdla}, the space and time usages of $\cR$ satisfy
  \begin{align*}
    |N|
    &= \kappa\cdot|\tilde{N}| \\
    &\leq \kappa\cdot 3(u+1)n^u \\
    &\leq \frac{1}{\epsilon}(\log\log|\bar{N}|) \cdot \log|\bar{N}| \cdot M \\
    &\leq (\log|\bar{N}|)^2 \cdot M \\
    T 
    &= \kappa^3\cdot\tilde{T} \\
    &\leq O(\kappa^3\cdot(\bar{T}+\log r)\cdot u^4n^2(\log n)^2) \\
    &\leq \bar{T}\cdot(\log|\bar{N}|)^{8/\epsilon},
  \end{align*}
  where the above inequalities apply the bounds in \Cref{eq:mdnvars}.

  We will now complete the proof of \Cref{it:mainfd} in \Cref{thm:main} by showing the following.
  
  \begin{claim}
    \label{claim:mdfinal}
    The data
    \begin{equation}
      \label{eq:mdgad}
      \left(\cR,\; \cE_{\mathrm{run}}=2^N|_{\geq\lambda_{\mathrm{run}}},\; D_{\mathrm{in}},\; D_{\mathrm{out}},\; \cE_{\mathrm{det}}=\bigsqcup_{t\in[T]}(2^{E_{\mathrm{det},t}}|_{\geq\lambda_{\mathrm{det}}})\right)
    \end{equation}
    forms a mending fault-detecting gadget for $\bar{\cR}(\cdot)$, where $D_{\mathrm{in}},D_{\mathrm{out}}$ and $\lambda_{\mathrm{run}},\lambda_{\mathrm{in}},\lambda_{\mathrm{out}},\lambda_{\mathrm{det}}$ are defined as in \Cref{it:mainfd} in \Cref{thm:main}.
  \end{claim}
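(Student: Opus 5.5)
The plan is to reduce \Cref{claim:mdfinal} to \Cref{claim:mdla} by pulling every $r$-ary object back to a $q$-ary one through $\phi$. By \Cref{lem:fdadd} (whose proof only uses transcripts, so it carries over verbatim), it suffices to handle additive faults. Given an $r$-ary additive fault $\cF$ for $\cR$ with each layer of weight $<\lambda_{\mathrm{run}}$, we define a $q$-ary fault $\tilde{\cF}$ for $\tilde{\cR}$ as follows. For each $\tilde{t}\in[\tilde{T}]$, take $\tilde F_{\tilde t}$ to be the additive fault that, at the end of block $\{\kappa^3(\tilde t-1)+1,\dots,\kappa^3\tilde t\}$, adds on each $q$-ary dit $b\in\tilde N$ the difference between $\phi$ of the actual $r$-ary state of $[\kappa]\times\{b\}$ and the value the noiseless $q$-ary gate would produce from the (faulty) state at the start of the block.

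By \Cref{eq:Gsim}, $\phi$ of the $r$-ary transcript at times $\kappa^3\tilde t$ then equals the $q$-ary transcript of $\tilde{\cR}[\tilde{\cF}]$. The first key step is to bound the support of $\tilde F_{\tilde t}$. A $q$-ary dit $b$ can deviate only if some $r$-ary fault hits $[\kappa]\times B$ during the block, where $B\ni b$ is the support of the $q$-ary gate acting on $b$ (of size $\le3$). Hence
\[
|\supp\tilde F_{\tilde t}|\le 3\sum_{t\text{ in block}}|f_t|<3\kappa^3\lambda_{\mathrm{run}}=\tilde\lambda_{\mathrm{run}},
\]
which is exactly why $\tilde\lambda_{\mathrm{run}}$ was chosen as in \Cref{eq:mdtlams}. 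So $\tilde{\cF}$ is $\tilde{\cE}_{\mathrm{run}}$-avoiding.

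The second step handles the input. If $y$ is an $\cE_{\mathrm{in}}$-deviation of $\Enc_{C(r,\epsilon,K_{\mathrm{in}},\bar N)}(x)$, then $\phi^{\sqcup[n]^u}(y)$ differs from $\Enc_{C(q,u,n,k,K_{\mathrm{in}})}(\iota(x))$ on at most $\lambda_{\mathrm{in}}=M/2^{32}$ $q$-ary dits. This is $\le\tilde\lambda_{\mathrm{in}}=(n-8k+1)^u/4$ once we use $(n-8k+1)^u\ge M/2^{32}$ together with a slightly sharper version $(1-8/n^\epsilon)^{n^\epsilon}\ge e^{-8}\cdot$const. If this margin turns out to be off by a constant, we will instead adjust the comparison constant. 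For the mending case $y$ is arbitrary, and we just pass $\phi(y)$ to the mending statement of \Cref{claim:mdla}.

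Next we apply \Cref{claim:mdla}, which gives two cases. In the first, the $q$-ary output is a $\tilde{\cE}_{\mathrm{out}}$-deviation of $\Enc(\bar{\cR}(x'))$ with $x'\in\bF_r$ (with $x'=x$ in the non-mending case). Applying $\phi^{-1}$, the $r$-ary output differs on at most $\kappa\tilde\lambda_{\mathrm{out}}$ $r$-ary dits. We need $\kappa\tilde\lambda_{\mathrm{out}}=3\kappa^4\cdot16u^2n^4\rho^{-1}\lambda_{\mathrm{run}}\le(\log|\bar N|)^{16\eta/\epsilon}\lambda_{\mathrm{run}}$. This follows from $\rho^{-1}\le(un\cdot 2^{64})^{\eta}$ and $u,n,\kappa\le\poly((\log|\bar N|)^{1/\epsilon})$, as in \Cref{eq:mdblams}. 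In the second case, some $q$-ary time $\tilde t$ has at least $\tilde\lambda_{\mathrm{det}}$ nonzero dits of $\tilde E_{\mathrm{det},\tilde t}$. Each nonzero $q$-ary dit has a nonzero $r$-ary component in $[\kappa]\times\{b\}$ at time $\kappa^3\tilde t$, since $\phi$ is an isomorphism. This yields at least $\tilde\lambda_{\mathrm{det}}\ge\lambda_{\mathrm{run}}=\lambda_{\mathrm{det}}$ nonzero dits in $E_{\mathrm{det},\kappa^3\tilde t}$ from \Cref{eq:mdEdet}.

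Finally, \Cref{it:fdnoerr} is immediate from noiseless simulation via \Cref{eq:Gsim} and the $q$-ary statement, since zero $q$-ary dits map to zero $r$-ary blocks. I expect the main obstacle to be the bookkeeping in the first step: we must check that intra-block $r$-ary faults cannot spread beyond the $\le3$ $q$-ary dits of one gate. Because $\cR_G$ acts only on $[\kappa]\times B$, this holds, but it has to be stated carefully with the transcript identification at the block boundaries.
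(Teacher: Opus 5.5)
Your proposal is correct and follows essentially the same route as the paper: transport the input through $\phi$, simulate the $r$-ary fault by a $q$-ary fault of per-step weight $<3\kappa^3\lambda_{\mathrm{run}}=\tilde\lambda_{\mathrm{run}}$, invoke \Cref{claim:mdla}, and convert back using $\lambda_{\mathrm{in}}\le\tilde\lambda_{\mathrm{in}}$, $\kappa\tilde\lambda_{\mathrm{out}}\le\lambda_{\mathrm{out}}$ and $\tilde\lambda_{\mathrm{det}}\ge\lambda_{\mathrm{det}}$ (the paper's \Cref{eq:mdlamcomp}). Your hedge about adjusting the input constant is unnecessary, and in any case not allowed, since $\lambda_{\mathrm{in}}$ is fixed by the theorem statement: $(n-8k+1)^u/4\ge M(1-8/n^\epsilon)^{n^\epsilon}/4\ge Me^{-9}/4$ once $n^\epsilon\ge 64$, which comfortably exceeds $M/2^{32}$.
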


  To prove \Cref{claim:mdfinal}, we will use the following bounds relating the variables defined in \Cref{eq:mdtlams} to those defined in \Cref{eq:mdlams}, where below we assume $|\bar{N}|\geq\bar{N}_0(r,\epsilon)$ is sufficiently large:
  \begin{align}
    \label{eq:mdlamcomp}
    \begin{split}
      \tilde{\lambda}_{\mathrm{in}}
      &\geq \frac{M}{4}\cdot\left(1-\frac{8}{n^\epsilon}\right)^{n^\epsilon} \geq \frac{M}{2^{32}} = \lambda_{\mathrm{in}} \\
      \tilde{\lambda}_{\mathrm{out}}
      &= 16u^2n^4\cdot(un)^\eta\left(\frac{n}{n-8k+1}\right)^{u\cdot\eta} \cdot \tilde{\lambda}_{\mathrm{run}} \\
      &\leq 16n^{10\eta}\cdot\left(1+\frac{10}{n^\epsilon}\right)^{n^\epsilon\cdot\eta} \cdot \tilde{\lambda}_{\mathrm{run}} \\
      &\leq n^{12\eta} \cdot (3\kappa^3\cdot\lambda_{\mathrm{run}}) \\
      &\leq \frac{(\log|\bar{N}|)^{16\eta/\epsilon}\cdot\lambda_{\mathrm{run}}}{\kappa} \\
      &= \frac{\lambda_{\mathrm{out}}}{\kappa} \\
      \tilde{\lambda}_{\mathrm{det}}
      &= (u-1)^un^4\cdot(3\kappa^3\cdot\lambda_{\mathrm{run}}) \geq \lambda_{\mathrm{run}} = \lambda_{\mathrm{det}}.
    \end{split}
  \end{align}
  
  \begin{proof}[Proof of \Cref{claim:mdfinal}]
    We begin with the proof of the (non-mending) fault-detecting property. First, \Cref{it:fdnoerr} in \Cref{def:faultdet} holds for the gadget in \Cref{eq:mdgad} by \Cref{claim:mdla,lem:alphdown}. Specifically, in the absence of errors, then $\cR$ simply runs $\tilde{\cR}$, but expresses each $q$-ary dit using $\kappa$ $r$-ary dits according to the isomorphism $\phi$, and spends $\kappa^3$ timesteps simulating each $q$-ary gate $G$ using a circuit $\cR_G$ of $r$-ary gates. At every timestep $t$ that is a multiple of $\kappa^3$, so that a layer of $q$-ary gates has just finished being simulated, the detector set $E_{\mathrm{det},t}$ simply contains all $\kappa$ of the $r$-ary dits associated to every $q$-ary dit in the corresponding detector set $\tilde{E}_{\mathrm{det},t/\kappa^3}$. Therefore when we run $\cR$ on input $y=\Enc_{C(r,\epsilon,K_{\mathrm{in}},\bar{N})}(x)$ for $x\in\bF_r^{K_{\mathrm{in}}}$, every dit in every detector set $E_{\mathrm{det},t}$ has value~$0$ (because when running $\tilde{\cR}$ on input $\Enc_{C(q,u,n,k,K_{\mathrm{in}})}(x)$, every dit in every detector set $\tilde{E}_{\mathrm{det},t/\kappa^3}$ must have value~$0$), and we must have
    \begin{align*}
      \cR(y)
      &= (\phi^{-1})^{\sqcup[n]^u}\circ\tilde{\cR}\circ\Enc_{C(q,u,n,k,K_{\mathrm{in}})}(x) \\
      &= (\phi^{-1})^{\sqcup[n]^u}\circ\Enc_{C(q,u,n,k,K_{\mathrm{out}})}\circ\bar{\cR}_q(x) \\
      &= \Enc_{C(r,\epsilon,K_{\mathrm{out}},\bar{N})}\circ\bar{\cR}_r(x).
    \end{align*}

    We now similarly show that \Cref{it:fderr} in \Cref{def:faultdet} holds for the gadget in \Cref{eq:mdgad} by using the fact that this property holds for the gadget in \Cref{eq:mdlagad}. Let $x\in\bF_r^{K_{\mathrm{in}}}$, let $y$ be an input to $\cR$ that differs from $\Enc_{C(r,\epsilon,K_{\mathrm{in}},\bar{N})}(x)$ by an error of weight $<\lambda_{\mathrm{in}}$, and let $\cF$ be a fault with weight $<\lambda_{\mathrm{run}}$ in each timestep. We want to show that either $\cR[\cF](y)$ differs from $\Enc_{C(r,\epsilon,K_{\mathrm{out}},\bar{N})}\circ\bar{\cR}(x)$ by an error of weight $<\lambda_{\mathrm{out}}$, or else that there exists some $t\in[T]$ such that $|\tran(\cR[\cF];y)|_{E_{\mathrm{det},t}\times\{t\}}|\geq\lambda_{\mathrm{det}}$.

    By definition, $\tilde{y}:=\phi^{\sqcup[n]^u}(y)$ differs from $\Enc_{C(q,u,n,k,K_{\mathrm{in}})}(x)=\phi^{\sqcup [n]^u}\circ\Enc_{C(r,\epsilon,K_{\mathrm{in}},\bar{N})}(x)$ at $<\lambda_{\mathrm{in}}$ dits, and $\lambda_{\mathrm{in}}\leq\tilde{\lambda}_{\mathrm{in}}$ by \Cref{eq:mdlamcomp}. Furthermore, $\cR[\cF](y)$ simulates the execution of $\tilde{\cR}[\tilde{\cF}](\tilde{y})$ for an appropriate fault $\tilde{\cF}$. Specifically, for $\tilde{t}\in[\tilde{T}]$, we have
    \begin{equation}
      \label{eq:mdtranrestricts}
      \tran(\cR[\cF];y)|_{N\times\{\kappa^3\cdot\tilde{t}\}} = \phi^{\sqcup\tilde{N}}(\tran(\tilde{\cR}[\tilde{\cF}];\tilde{y})|_{\tilde{N}\times\{\tilde{t}\}}),
    \end{equation}
    where $\tilde{\cF}$ is an appropriate fault on $\tilde{\cR}$ that simulates the effect of the fault $\cF$ on $\cR$. For every gate $G$ acting on some dits $B\subseteq\tilde{N}$ in some timestep $\tilde{t}\in[\tilde{T}]$ of $\tilde{\cR}$, if $\cF$ has nontrivial support on the restriction to the associated subcircuit $\cR_G$ of $\cR$, then at timestep $\tilde{t}$ we let $\tilde{\cF}$ set the value of dits $B$ to equal $(\phi^{-1})^{\sqcup 3}(\tran(\cR[\cF];y)|_{[\kappa]\times B\times\{\kappa^3\cdot\tilde{t}\}})$.
    If instead $\cF$ has no support on the restriction to this subcircuit $\cR_G$ of $\cR$, then $\tilde{\cF}$ has no support on dits $B$ in timestep $\tilde{t}$. Because every gate $G\in\cG$ in our gate set has $|B|\leq 3$, for each $\tilde{t}\in[\tilde{T}]$, each of the $\kappa^3$ timesteps $\{\kappa^3(\tilde{t}-1)+1,\dots,\kappa^3\tilde{t}\}$ of~$\cR[\cF]$ can induce errors on up to $3\lambda_{\mathrm{run}}$ dits in $\tilde{\cR}[\tilde{\cF}]$. Therefore $\tilde{\cF}$ has weight $<\tilde{\lambda}_{\mathrm{run}}=3\kappa^3\cdot\lambda_{\mathrm{run}}$ in timestep $\tilde{t}$ (see \Cref{eq:mdtlams}).

    Thus by \Cref{claim:mdla}, either $\tilde{\cR}[\tilde{\cF}](\tilde{y})$ differs from
    \begin{equation*}
      \Enc_{C(q,u,n,k,K_{\mathrm{out}})}\circ\bar{\cR}(x) = \phi^{\sqcup [n]^u}\circ\Enc_{C(r,\epsilon,K_{\mathrm{out}},\bar{N})}\circ\bar{\cR}(x)
    \end{equation*}
    at $<\tilde{\lambda}_{\mathrm{out}}$ dits, or else there exists $\tilde{t}\in[\tilde{T}]$ such that $|\tran(\tilde{\cR}[\tilde{\cF}];\tilde{y})|_{\tilde{E}_{\mathrm{det},\tilde{t}}\times\{\tilde{t}\}}|\geq\tilde{\lambda}_{\mathrm{det}}$. In the first case, by \Cref{eq:mdtranrestricts,eq:mdlamcomp}, we conclude that $\cR[\cF](y)$ differs from $\Enc_{C(r,\epsilon,K_{\mathrm{out}},\bar{N})}\circ\bar{\cR}(x)$ at $<\kappa\cdot\tilde{\lambda}_{\mathrm{out}}\leq\lambda_{\mathrm{out}}$ dits. In the latter case, by \Cref{eq:mdtranrestricts,eq:mdlamcomp}, then there exists $t=\kappa^3\cdot\tilde{t}\in[T]$ such that $|\tran(\cR[\cF];y)|_{K_{\mathrm{det},t}\times\{t\}}|\geq\tilde{\lambda}_{\mathrm{det}}\geq\lambda_{\mathrm{det}}$. Thus \Cref{it:fderr} in \Cref{def:faultdet} holds for the gadget in \Cref{eq:mdgad}.

    We have now shown that the gadget in \Cref{eq:mdgad} is fault-detecting; we will next show that it is mending. The proof is similar as the proof of fault-detecting above. For an arbitrary input $y\in\bF_r^{[\kappa]\times[n]^u}$ and a fault $\cF$ with weight $<\lambda_{\mathrm{run}}$ in each timestep, we want to show that either there exists $x'\in\bF_r^{K_{\mathrm{in}}}$ such that $\cR[\cF](y)$ differs from $\Enc_{C(r,\epsilon,K_{\mathrm{out}},\bar{N})}\circ\bar{\cR}(x')$ at $<\lambda_{\mathrm{out}}$ dits, or else that there exists some $t\in[T]$ such that $|\tran(\cR[\cF];y)|_{E_{\mathrm{det},t}\times\{t\}}|\geq\lambda_{\mathrm{det}}$. As in the non-mending case above, we again let $\tilde{y}:=\phi^{\sqcup[n]^u}(y)$, and we construct a fault $\tilde{\cF}$ with weight $<\tilde{\lambda}_{\mathrm{run}}$ in each timestep, such that \Cref{eq:mdtranrestricts} holds. By \Cref{claim:mdla}, the gadget in \Cref{eq:mdlagad} is mending (in the sense of \Cref{def:faultdetalph}), so either there exists $x'\in\bF_r^{K_{\mathrm{in}}}$ such that $\tilde{\cR}[\tilde{\cF}](\tilde{y})$ differs from $\Enc_{C(q,u,n,k,K_{\mathrm{out}})}\circ\bar{\cR}(x')=\phi^{\sqcup [n]^u}\circ\Enc_{C(r,\epsilon,K_{\mathrm{out}},\bar{N})}\circ\bar{\cR}(x')$ at $<\tilde{\lambda}_{\mathrm{out}}$ dits, or else there exists $\tilde{t}\in[\tilde{T}]$ such that $|\tran(\tilde{\cR}[\tilde{\cF}];\tilde{y})|_{\tilde{E}_{\mathrm{det},\tilde{t}}\times\{\tilde{t}\}}|\geq\tilde{\lambda}_{\mathrm{det}}$. In the first case, by \Cref{eq:mdtranrestricts,eq:mdlamcomp}, we conclude that $\cR[\cF](y)$ differs from $\Enc_{C(r,\epsilon,K_{\mathrm{out}},\bar{N})}\circ\bar{\cR}(x')$ at $<\kappa\cdot\tilde{\lambda}_{\mathrm{out}}\leq\lambda_{\mathrm{out}}$ dits. In the latter case, by \Cref{eq:mdtranrestricts,eq:mdlamcomp}, then there exists $t=\kappa^3\cdot\tilde{t}\in[T]$ such that $|\tran(\cR[\cF];y)|_{K_{\mathrm{det},t}\times\{t\}}|\geq\tilde{\lambda}_{\mathrm{det}}\geq\lambda_{\mathrm{det}}$. Thus the gadget in \Cref{eq:mdgad} is mending, as desired.
  \end{proof}
\end{proof}

\subsection{Fault-Tolerant Gadget}
\label{sec:mainft}
In this section, we prove \Cref{it:mainft} of \Cref{thm:main}. The proof is similar as the proof of \Cref{it:mainfd} of \Cref{thm:main} in \Cref{sec:mainfd} above, except we now will apply \Cref{it:ft} (instead of \Cref{it:fd}) of \Cref{thm:fdft}. The main differences in the proof are therefore that the values of parameters change (due to the differences in parameters between \Cref{it:fd} and \Cref{it:ft} of \Cref{thm:fdft}), and that we do not prove the mending property here (so some steps are simplified).

In the proof below, we will use the following basic fact:

\begin{fact}
  \label{fact:pow32}
  For every $a\geq 0$ and every $0\leq\Delta a\leq a$, then
  \begin{equation*}
    a^{3/2}+\frac32a^{1/2}\Delta a \leq (a+\Delta a)^{3/2} \leq a^{3/2}+3a^{1/2}\Delta a.
  \end{equation*}
\end{fact}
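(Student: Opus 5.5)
The statement to prove is \Cref{fact:pow32}: for $a\geq 0$ and $0\leq\Delta a\leq a$,
\begin{equation*}
a^{3/2}+\tfrac32a^{1/2}\Delta a\leq(a+\Delta a)^{3/2}\leq a^{3/2}+3a^{1/2}\Delta a .
\end{equation*}
I would argue by the mean value theorem applied to $f(s)=s^{3/2}$ on $[a,a+\Delta a]$.

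First dispose of the degenerate cases. If $a=0$, the constraint $0\leq\Delta a\leq a$ forces $\Delta a=0$, and all three expressions equal $0$. If $\Delta a=0$, all three sides equal $a^{3/2}$.

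Now assume $a>0$ and $\Delta a>0$. The function $f$ is differentiable on $(0,\infty)$ with $f'(s)=\tfrac32 s^{1/2}$, so there is some $\xi\in(a,a+\Delta a)$ with
\begin{equation*}
(a+\Delta a)^{3/2}-a^{3/2}=\tfrac32\xi^{1/2}\Delta a .
\end{equation*}

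Since $f'$ is increasing and $a\leq\xi\leq a+\Delta a\leq 2a$, we have
\begin{equation*}
\tfrac32 a^{1/2}\leq\tfrac32\xi^{1/2}\leq\tfrac32(2a)^{1/2}=\tfrac{3\sqrt2}{2}\,a^{1/2}\leq 3a^{1/2}.
\end{equation*}
Multiplying through by $\Delta a\geq 0$ and adding $a^{3/2}$ gives both inequalities.

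The only step needing any care is the upper bound. There we use the hypothesis $\Delta a\leq a$ to get $\xi\leq 2a$, and then the numerical fact $\tfrac{3\sqrt2}{2}\approx 2.12\leq 3$. The lower bound is just the convexity of $f$.
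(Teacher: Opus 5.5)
Your proof is correct and is essentially the paper's argument. The paper uses the two tangent-line bounds of the convex function $f(s)=s^{3/2}$, at $a$ and at $a+\Delta a$, together with $f'(a+\Delta a)\le f'(2a)$; your mean value theorem step with $\xi\in[a,2a]$ and the monotonicity of $f'$ is the same idea in slightly different packaging.
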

\begin{proof}
  Letting $f(a)=a^{3/2}$, then $f'(a)=(3/2)a^{1/2}\geq 0$ and $f''(a)=(3/4)a^{-1/2}\geq 0$. Therefore $f(a+\Delta a)\geq f(a)+f'(a)\Delta a$ and $f(a+\Delta a)\leq f(a)+f'(a+\Delta a)\Delta a\leq f(a)+f'(2a)\Delta a$, from which the desired inequalities follow.
\end{proof}

\begin{proof}[Proof of \Cref{it:mainft} in \Cref{thm:main}]
  At various points throughout the proof below, we will require that $\bar{N}_0(r)$ be sufficiently large compared to $r$ so that certain bounds on $|\bar{N}|$ that hold asymptotically in fact hold for every $|\bar{N}|\geq\bar{N}_0(r)$. The final value of $\bar{N}_0(r)$ will then implicitly be given by the maximum of the values required in these various instances.
  
  Given the set $\bar{N}$, we define\footnote{Note that the variable $n$ here is distinct from $n(r,\bar{N})$ in the statement of \Cref{thm:main}.}
  \begin{align}
    \label{eq:mtnvars}
    \begin{split}
      n &= 2^{\lfloor(\log|\bar{N}|)^{2/3}+4(\log|\bar{N}|)^{1/3}\rfloor} \\
      k &= \lfloor n/32\rfloor \\
      u &= \lfloor\sqrt{\log n}\rfloor \\
      \kappa &= \lceil\log(n)/\log(r)\rceil \\
      q &= r^\kappa.
    \end{split}
  \end{align}
  We let
  \begin{align*}
    M &= M(\bar{N}) = n^u,
  \end{align*}
  so that if $|\bar{N}|\geq\bar{N}_0(r)$ is sufficiently large, then
  \begin{align}
    \label{eq:Mkulower}
    \begin{split}
      M \geq k^u
      &\geq 2^{((\log|\bar{N}|)^{2/3}+2(\log|\bar{N}|)^{1/3})\cdot(\sqrt{(\log|\bar{N}|)^{2/3}+2(\log|\bar{N}|)^{1/3}}-1)} \\
      &= 2^{((\log|\bar{N}|)^{2/3}+2(\log|\bar{N}|)^{1/3})^{3/2}-((\log|\bar{N}|)^{2/3}+2(\log|\bar{N}|)^{1/3})} \\
      &\geq 2^{(\log|\bar{N}|+3(\log|\bar{N}|)^{2/3})-3(\log|\bar{N}|)^{2/3}} \\
      &= |\bar{N}|,
    \end{split}
  \end{align}
  where the second inequality above holds by \Cref{fact:pow32}.
  Similarly, again assuming $|\bar{N}|\geq\bar{N}_0(r)$ is sufficiently large,
  \begin{align*}
    M
    &\leq 2^{((\log|\bar{N}|)^{2/3}+4(\log|\bar{N}|)^{1/3})\cdot\sqrt{(\log|\bar{N}|)^{2/3}+4(\log|\bar{N}|)^{1/3}}} \\
    &= 2^{((\log|\bar{N}|)^{2/3}+4(\log|\bar{N}|)^{1/3})^{3/2}} \\
    &\leq 2^{\log|\bar{N}|+12(\log|\bar{N}|)^{2/3}} \\
    &= |\bar{N}|\cdot 2^{12(\log|\bar{N}|)^{2/3}},
  \end{align*}
  where the third inequality above holds by \Cref{fact:pow32}. Thus $M\in[|\bar{N}|,\;2^{12(\log|\bar{N}|)^{2/3}}\cdot|\bar{N}|]$, as desired.

  Meanwhile because $k^u\geq|\bar{N}|$ by \Cref{eq:Mkulower}, we may fix an arbitrary set inclusion $\bar{N}\hookrightarrow[k]^u$, so that we may view $\bar{N}$ as a subset of $[k]^u$. Also if $|\bar{N}|\geq\bar{N}_0(r)$ is sufficiently large, then we have $u\geq 4$, $k\leq n/32$, and $q\geq n$. Thus for $K\subseteq\bar{N}'\subseteq[k]^u$, the inequalities above ensure that the code $C(q,u,n,k,K)$ with encoding map $\Enc_{C(q,u,n,k,K)}$ in \Cref{thm:fdft} is well-defined.
  By \Cref{thm:fdft}, $C(q,u,n,k,K)$ has parameters $[n^u,|K|,\geq(n-8k+1)^u]_q$.

  To prove \Cref{it:mainft} in \Cref{thm:main}, we will first construct fault-tolerant gadgets over the alphabet~$\bF_q$, before reducing the alphabet to $\bF_r$ using \Cref{lem:alphdown}. Specifically, let $\eta_{\Dec}$ be the absolute constant defined in \Cref{it:ft} in \Cref{thm:fdft}, and let
  \begin{align}
    \label{eq:mttlams}
    \begin{split}
      \tilde{\lambda}_{\mathrm{run}} &= 3\kappa^3\cdot\lambda_{\mathrm{run}} \\
      \tilde{\lambda}_{\mathrm{in}} &= \left(\frac{n-8k+1}{8}\right)^u \\
      \tilde{\lambda}_{\mathrm{out}} &= \eta_{\Dec}\cdot 2^{u(u+8)}\cdot n^3\log(q)\cdot\tilde{\lambda}_{\mathrm{run}}.
    \end{split}
  \end{align}
  For $\alpha\in\{\mathrm{in},\mathrm{out}\}$ let
  \begin{align*}
    \tilde{D}_\alpha &= (C(q,u,n,k,K_\alpha),\; \Enc_{C(q,u,n,k,K_\alpha)},\; 2^{[n]^u}|_{\geq\tilde{\lambda}_\alpha}).
  \end{align*}

  Assuming $|\bar{N}|\geq\bar{N}_0(r)$ is sufficiently large, the expression $\bar{\lambda}_{\mathrm{run}}(u,n)$ in \Cref{eq:ftbarlamrun} satisfies
  \begin{align*}
    \label{eq:mtblams}
    \begin{split}
      \bar{\lambda}_{\mathrm{run}}(u,n)
      &= 3\kappa^3\cdot\frac{M}{\eta_{\Dec}\cdot 3\kappa^3\cdot\log(q)\cdot 2^{u(u+16)}\cdot n^{16}} \\
      &\geq 3\kappa^3\cdot\frac{M}{n^{20}} \\
      &\geq 3\kappa^3\cdot\frac{M}{2^{32(\log|\bar{N}|)^{2/3}}} \\
      &= 3\kappa^3\cdot\bar{\lambda}_{\mathrm{run}}(r,\bar{N}),
    \end{split}
  \end{align*}
  where the inequalities above hold by \Cref{eq:mtnvars}, and the final equality holds by \Cref{eq:mtbarlamrun}. Then because $\tilde{\lambda}_{\mathrm{run}}=3\kappa^3\cdot\lambda_{\mathrm{run}}\leq 3\kappa^3\cdot\bar{\lambda}_{\mathrm{run}}(r,\bar{N})$, we have $\tilde{\lambda}_{\mathrm{run}}\leq\bar{\lambda}_{\mathrm{run}}(u,n)$.

  Let $\bar{\cR}_r=\bar{\cR}$, and let $\bar{\cR}_q$ denote the circuit obtained from $\bar{\cR}_r$ by replacing every $r$-ary gate in $\cG$ with the analogous $q$-ary gate. Therefore $\bar{\cR}_q$ acts on $q$-ary dits, but otherwise has the same space usage $\bar{N}$ and time usage $\bar{T}$ as $\bar{\cR}_r$. Letting $\iota:\bF_r\hookrightarrow\bF_q$ denote the inclusion, then by \Cref{lem:alphup} we have $\bar{\cR}_q\circ\iota^{\sqcup K_{\mathrm{in}}}=\iota^{\sqcup K_{\mathrm{out}}}\circ\bar{\cR}_r$. That is, when restricting inputs to the subfield $\bF_r\subseteq\bF_q$, then $\bar{\cR}_q$ has the same output as $\bar{\cR}_r$.

  Now by \Cref{it:ft} in \Cref{thm:fdft}, there exists a circuit $\tilde{\cR}$ using space $|\tilde{N}|\leq 3\eta_{\Dec}\cdot\log(q)\cdot n^{u+1}$, time $\tilde{T}\leq O(\bar{T}\cdot u^4n^2(\log n)^2(\log q))$, and gate set $\cG$ such that
  \begin{equation}
    \label{eq:mtlagad}
    \left(\tilde{\cR},\; \tilde{\cE}_{\mathrm{run}}=2^{\tilde{N}}|_{\geq\tilde{\lambda}_{\mathrm{run}}}^{\sqcup\tilde{T}},\; \tilde{D}_{\mathrm{in}},\; \tilde{D}_{\mathrm{out}}\right)
  \end{equation}
  is a fault-tolerant gadget for $\tilde{\bar{O}}(\cdot)=\bar{\cR}_q(\cdot)$.

  We now apply \Cref{lem:alphdown} to reduce the alphabet size of the physical dits of $\tilde{\cR}$ in the gadget in \Cref{eq:mdlagad} from $q$ to $r$. Specifically, fix an $\bF_r$-linear isomorphism $\phi:\bF_r^\kappa\rightarrow\bF_q$ such that that first input component maps to the subfield $\bF_r\subseteq\bF_q$, that is, $\phi|_{\bF_r\times\{0\}^{\kappa-1}}=\iota$. For $K\subseteq\bar{N}\subseteq[k]^u$, we obtain the code $C(r,K,\bar{N})$ with encoding map $\Enc_{C(r,K,\bar{N})}$ in \Cref{it:mainft} in \Cref{thm:main} by restricting $C(q,u,n,k,K)$ to messages in $\bF_r^K\subseteq\bF_q^K$ and applying $\phi^{-1}$ to all codeword dits, that is,
  \begin{align*}
    \Enc_{C(r,K,\bar{N})} &= (\phi^{-1})^{\sqcup[n]^u}\circ\Enc_{C(q,u,n,k,K)}\circ\iota^{\sqcup K} \\
    C(r,K,\bar{N}) &= \Enc_{C(r,K,\bar{N})}(\bF_r^K).
  \end{align*}
  Then $C(r,K,\bar{N})$ is a code over $r$-ary dits with length
  \begin{align*}
    n(r,\bar{N})
    &= \kappa\cdot n^u \leq (\log|\bar{N}|)\cdot M
  \end{align*}
  dimension $|K|$, and distance $d(r,\bar{N})$ at least the distance of $C(q,u,n,k,K)$, so that
  \begin{align*}
    d(r,\bar{N})
    &\geq (n-8k+1)^u \\
    &\geq M\cdot(1/2)^u \\
    &\geq \frac{M}{2^{2(\log|\bar{N}|)^{1/3}}}.
  \end{align*}
  The above inequalities hold by \Cref{eq:mtnvars} assuming that $|\bar{N}|\geq\bar{N}_0(r)$ is sufficiently large.

  Now we define a circuit $\cR$ on $r$-ary dits using space $N=[\kappa]\times\tilde{N}$ and time $T=\kappa^3\cdot\tilde{T}$ as follows: for each $q$-ary gate $G\in\cG$ acting on some set of dits $B\subseteq\tilde{N}$ in some timestep $\tilde{t}\in[\tilde{T}]$ of $\tilde{\cR}$, we let $\cR$ apply the corresponding $r$-ary circuit $\cR_G$ from \Cref{lem:alphdown} to dits $[\kappa]\times B$ in timesteps $\{\kappa^3(\tilde{t}-1)+1,\dots,\kappa^3\tilde{t}\}\subseteq[T]$.

  Assuming $|\bar{N}|\geq\bar{N}_0(r)$ is sufficiently large, the space and time usages of $\cR$ satisfy
  \begin{align*}
    |N|
    &= \kappa\cdot|\tilde{N}| \\
    &\leq 3\eta_{\Dec}\cdot\log(q)\cdot n^{u+1} \\
    &\leq 2^{2(\log|\bar{N}|)^{2/3}}\cdot M \\
    T
    &= \kappa^3\cdot\tilde{T} \\
    &\leq O(\kappa^3\cdot\bar{T}\cdot u^4n^2(\log n)^2(\log q)) \\
    &\leq \bar{T}\cdot 2^{4(\log|\bar{N}|)^{2/3}},
  \end{align*}
  where above we apply the bounds in \Cref{eq:mtnvars}.

  We will now complete the proof of \Cref{it:mainft} in \Cref{thm:main} by showing the following.
  
  \begin{claim}
    \label{claim:mtfinal}
    The data
    \begin{equation}
      \label{eq:mtgad}
      \left(\cR,\; \cE_{\mathrm{run}}=2^N|_{\geq\lambda_{\mathrm{run}}},\; D_{\mathrm{in}},\; D_{\mathrm{out}}\right)
    \end{equation}
    forms a fault-tolerant gadget for $\bar{\cR}(\cdot)=\bar{\cR}_r(\cdot)$, where $D_{\mathrm{in}},D_{\mathrm{out}}$ and $\lambda_{\mathrm{run}},\lambda_{\mathrm{in}},\lambda_{\mathrm{out}}$ are defined as in \Cref{it:mainft} in \Cref{thm:main}.
  \end{claim}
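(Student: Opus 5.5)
The plan follows the non-mending half of the proof of \Cref{claim:mdfinal}, with \Cref{eq:mtlagad} in place of \Cref{eq:mdlagad}. Since there is no mending property and no detector sets, the only parts needed are \Cref{it:ftnoerr} and \Cref{it:fterr} of \Cref{def:faulttol}.

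The first step is to compare the parameters in \Cref{eq:mttlams} with those in \Cref{eq:mtlams}. We need
\begin{equation*}
  \lambda_{\mathrm{in}}\leq\tilde{\lambda}_{\mathrm{in}}
  \qquad\text{and}\qquad
  \kappa\tilde{\lambda}_{\mathrm{out}}\leq\lambda_{\mathrm{out}}.
\end{equation*}
\begin{itemize}
\item For the input threshold: $n-8k+1\geq n/4$, so $\tilde{\lambda}_{\mathrm{in}}\geq M/32^u$. Since $u\leq\sqrt{\log n}\leq 1.1(\log|\bar{N}|)^{1/3}$, we get $32^u\leq 2^{8(\log|\bar{N}|)^{1/3}}$ for large $|\bar{N}|$.
\item For the output threshold: $\kappa\tilde{\lambda}_{\mathrm{out}}=3\kappa^4\eta_{\Dec}2^{u(u+8)}n^3\log(q)\cdot\lambda_{\mathrm{run}}$. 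Here $2^{u^2}\leq n$, $2^{8u}\leq n$, and $\kappa,\log q=O(\log n)$, so this is at most $n^6\cdot\lambda_{\mathrm{run}}$. Since $\log n\leq(\log|\bar{N}|)^{2/3}+4(\log|\bar{N}|)^{1/3}$, for large $|\bar{N}|$ we have $n^6\leq 2^{8(\log|\bar{N}|)^{2/3}}$, which gives the bound.
\end{itemize}

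For \Cref{it:ftnoerr}: the simulation $\cR$ acts as $\tilde{\cR}$ under $\phi$, by \Cref{lem:alphdown}. Take $x\in\bF_r^{K_{\mathrm{in}}}$. The encoding $\Enc_{C(r,K_{\mathrm{in}},\bar{N})}(x)$ corresponds under $\phi$ to $\Enc_{C(q,u,n,k,K_{\mathrm{in}})}(x)$. The gadget \Cref{eq:mtlagad} maps this to $\Enc_{C(q,u,n,k,K_{\mathrm{out}})}\circ\bar{\cR}_q(x)$. By \Cref{lem:alphup}, $\bar{\cR}_q(x)=\bar{\cR}_r(x)\in\bF_r^{K_{\mathrm{out}}}$, and pulling back by $\phi^{-1}$ gives $\Enc_{C(r,K_{\mathrm{out}},\bar{N})}\circ\bar{\cR}(x)$.

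For \Cref{it:fterr}: take $x\in\bF_r^{K_{\mathrm{in}}}$, an input $y$ within distance $<\lambda_{\mathrm{in}}$ of $\Enc_{C(r,K_{\mathrm{in}},\bar{N})}(x)$, and a fault $\cF$ with $<\lambda_{\mathrm{run}}$ errors per timestep. By \Cref{lem:fdadd} together with \Cref{fact:dettol}, it suffices to take $\cF$ additive.
\begin{enumerate}
\item Set $\tilde{y}=\phi^{\sqcup[n]^u}(y)$. The input corruption counts in the $\tilde{\cR}$ picture at most once per $q$-ary dit, so $\tilde{y}$ is within distance $<\lambda_{\mathrm{in}}\leq\tilde{\lambda}_{\mathrm{in}}$ of $\Enc_{C(q,u,n,k,K_{\mathrm{in}})}(x)$.
\item Build a simulated fault $\tilde{\cF}$ on $\tilde{\cR}$ exactly as in the proof of \Cref{claim:mdfinal}. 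Whenever $\cF$ touches the subcircuit $\cR_G$ simulating a gate $G$ on dits $B$ during block $\tilde{t}$, let $\tilde{\cF}$ at step $\tilde{t}$ overwrite $B$ with $\phi^{-1}$ of the actual $r$-ary values at time $\kappa^3\tilde{t}$. This makes the transcript identity \Cref{eq:mdtranrestricts} hold.
\item Since $|B|\leq 3$ and each block has $\kappa^3$ timesteps, $\tilde{\cF}$ has weight $<3\kappa^3\lambda_{\mathrm{run}}=\tilde{\lambda}_{\mathrm{run}}\leq\bar{\lambda}_{\mathrm{run}}(u,n)$ per step. The last inequality was checked just before \Cref{eq:mtlagad}.
\item The fault-tolerance of \Cref{eq:mtlagad} (restricted to $\bF_r$-inputs via \Cref{lem:alphup}) then shows that $\tilde{\cR}[\tilde{\cF}](\tilde{y})$ differs from $\Enc_{C(q,u,n,k,K_{\mathrm{out}})}\circ\bar{\cR}(x)$ in $<\tilde{\lambda}_{\mathrm{out}}$ $q$-ary dits.
\item Each corrupted $q$-ary dit corresponds to at most $\kappa$ corrupted $r$-ary dits. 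So $\cR[\cF](y)$ differs from $\Enc_{C(r,K_{\mathrm{out}},\bar{N})}\circ\bar{\cR}(x)$ in $<\kappa\tilde{\lambda}_{\mathrm{out}}\leq\lambda_{\mathrm{out}}$ dits.
\end{enumerate}

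The main obstacle is the construction of $\tilde{\cF}$ and the per-step weight accounting. One error inside a length-$\kappa^3$ simulation block can corrupt every output of that $q$-ary gate, which is where the factor $3\kappa^3$ comes from. That factor is why we defined $\tilde{\lambda}_{\mathrm{run}}=3\kappa^3\lambda_{\mathrm{run}}$. The remaining risk is the asymptotic inequality $\kappa\tilde{\lambda}_{\mathrm{out}}\leq\lambda_{\mathrm{out}}$, which requires taking $\bar{N}_0(r)$ large enough to absorb $\kappa^4$ and $\eta_{\Dec}$.
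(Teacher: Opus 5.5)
Your proposal is correct and follows the paper's proof essentially step for step. You check $\lambda_{\mathrm{in}}\leq\tilde{\lambda}_{\mathrm{in}}$ and $\kappa\tilde{\lambda}_{\mathrm{out}}\leq\lambda_{\mathrm{out}}$ (the paper's \Cref{eq:mtlamcomp}), get \Cref{it:ftnoerr} from the noiseless simulation through $\phi$, and get \Cref{it:fterr} by converting the $r$-ary fault into a $q$-ary fault $\tilde{\cF}$ of per-step weight $<3\kappa^3\lambda_{\mathrm{run}}$ and losing a factor of $\kappa$ at the output; the only differences are cosmetic (your cruder intermediate bounds $(n/32)^u$ and $n^6$ in place of $(n/16)^u$ and $n^5$, plus a reduction to additive faults that is harmless but never actually used).
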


  To prove \Cref{claim:mtfinal}, we will use the following bounds relating the variables defined in \Cref{eq:mttlams} to those defined in \Cref{eq:mtlams}, where below we assume $|\bar{N}|\geq\bar{N}_0(r)$ is sufficiently large:
  \begin{align}
    \label{eq:mtlamcomp}
    \begin{split}
      \tilde{\lambda}_{\mathrm{in}}
      &\geq \left(\frac{n}{16}\right)^u = \frac{M}{2^{4u}} \geq \frac{M}{2^{8(\log|\bar{N}|)^{1/3}}} = \lambda_{\mathrm{in}} \\
      \tilde{\lambda}_{\mathrm{out}}
      &= \frac{\eta_{\Dec}\cdot 2^{u(u+8)}\cdot n^3\log(q)\cdot 3\kappa^4\cdot\lambda_{\mathrm{run}}}{\kappa} \\
      &\leq \frac{n^5\cdot\lambda_{\mathrm{run}}}{\kappa} \\
      &\leq \frac{2^{8(\log|\bar{N}|)^{2/3}}\cdot\lambda_{\mathrm{run}}}{\kappa} \\
      &= \frac{\lambda_{\mathrm{out}}}{\kappa}.
    \end{split}
  \end{align}
  
  \begin{proof}[Proof of \Cref{claim:mtfinal}]
    The proof is similar as the proof of \Cref{claim:mdfinal} above, except here we show fault-tolerance instead of fault-detection.

    First, \Cref{it:ftnoerr} in \Cref{def:faulttol} holds for the gadget in \Cref{eq:mtgad} by the fault-tolerance of the gadget in \Cref{eq:mtlagad} along with \Cref{lem:alphdown}. Specifically, in the absence of errors, then $\cR$ simply runs $\tilde{\cR}$, but expresses each $q$-ary dit using $\kappa$ $r$-ary dits according to the isomorphism $\phi$, and spends $\kappa^3$ timesteps simulating each $q$-ary gate $G$ using a circuit $\cR_G$ of $r$-ary gates. Therefore when we run $\cR$ on input $y=\Enc_{C(r,K_{\mathrm{in}},\bar{N})}(x)$ for $x\in\bF_r^{K_{\mathrm{in}}}$, we must have
    \begin{align*}
      \cR(y)
      &= (\phi^{-1})^{\sqcup[n]^u}\circ\tilde{\cR}\circ\Enc_{C(q,u,n,k,K_{\mathrm{in}})}(x) \\
      &= (\phi^{-1})^{\sqcup[n]^u}\circ\Enc_{C(q,u,n,k,K_{\mathrm{out}})}\circ\bar{\cR}_q(x) \\
      &= \Enc_{C(r,K_{\mathrm{out}},\bar{N})}\circ\bar{\cR}_r(x).
    \end{align*}

    We now similarly show that \Cref{it:fterr} in \Cref{def:faulttol} holds for the gadget in \Cref{eq:mtgad} by using the fact that this property holds for the gadget in \Cref{eq:mtlagad}. Let $x\in\bF_r^{K_{\mathrm{in}}}$, let $y$ be an input to $\cR$ that differs from $\Enc_{C(r,K_{\mathrm{in}},\bar{N})}(x)$ by an error of weight $<\lambda_{\mathrm{in}}$, and let $\cF$ be a fault with weight $<\lambda_{\mathrm{run}}$ in each timestep. We want to show that $\cR[\cF](y)$ differs from $\Enc_{C(r,K_{\mathrm{out}},\bar{N})}\circ\bar{\cR}(x)$ by an error of weight $<\lambda_{\mathrm{out}}$.

    By definition, $\tilde{y}:=\phi^{\sqcup[n]^u}(y)$ differs from $\Enc_{C(q,u,n,k,K_{\mathrm{in}})}(x)=\phi^{\sqcup [n]^u}\circ\Enc_{C(r,K_{\mathrm{in}},\bar{N})}(x)$ at $<\lambda_{\mathrm{in}}$ dits, and $\lambda_{\mathrm{in}}\leq\tilde{\lambda}_{\mathrm{in}}$ by \Cref{eq:mtlamcomp}. Furthermore, $\cR[\cF](y)$ simulates the execution of $\tilde{\cR}[\tilde{\cF}](\tilde{y})$ for an appropriate fault $\tilde{\cF}$. Specifically, for $\tilde{t}\in[\tilde{T}]$, we have
    \begin{equation}
      \label{eq:mttranrestricts}
      \tran(\cR[\cF];y)|_{N\times\{\kappa^3\cdot\tilde{t}\}} = \phi^{\sqcup\tilde{N}}(\tran(\tilde{\cR}[\tilde{\cF}];\tilde{y})|_{\tilde{N}\times\{\tilde{t}\}}),
    \end{equation}
    where $\tilde{\cF}$ is an appropriate fault on $\tilde{\cR}$ that simulates the effect of the fault $\cF$ on $\cR$. For every gate $G$ acting on some dits $B\subseteq\tilde{N}$ in some timestep $\tilde{t}\in[\tilde{T}]$ of $\tilde{\cR}$, if $\cF$ has nontrivial support on the restriction to the associated subcircuit $\cR_G$ of $\cR$, then at timestep $\tilde{t}$ we let $\tilde{\cF}$ set the value of dits $B$ to equal $(\phi^{-1})^{\sqcup 3}(\tran(\cR[\cF];y)|_{[\kappa]\times B\times\{\kappa^3\cdot\tilde{t}\}})$. If instead $\cF$ has no support on the restriction to this subcircuit $\cR_G$ of $\cR$, then $\tilde{\cF}$ has no support on dits $B$ in timestep $\tilde{t}$. Because every gate $G\in\cG$ in our gate set has $|B|\leq 3$, for each $\tilde{t}\in[\tilde{T}]$, each of the $\kappa^3$ timesteps $\{\kappa^3(\tilde{t}-1)+1,\dots,\kappa^3\tilde{t}\}$ of $\cR[\cF]$ can induce errors on up to $3\lambda_{\mathrm{run}}$ dits in $\tilde{\cR}[\tilde{\cF}]$. Therefore $\tilde{\cF}$ has weight $<\tilde{\lambda}_{\mathrm{run}}=3\kappa^3\cdot\lambda_{\mathrm{run}}$ in timestep $\tilde{t}$ (see \Cref{eq:mttlams}).

    Thus by the fault-tolerance of the gadget in \Cref{eq:mtlagad}, $\tilde{\cR}[\tilde{\cF}](\tilde{y})$ differs from
    \begin{equation*}
      \Enc_{C(q,u,n,k,K_{\mathrm{out}})}\circ\bar{\cR}(x) = \phi^{\sqcup [n]^u}\circ\Enc_{C(r,K_{\mathrm{out}},\bar{N})}\circ\bar{\cR}(x)
    \end{equation*}
    at $<\tilde{\lambda}_{\mathrm{out}}$ dits. Then by \Cref{eq:mttranrestricts,eq:mtlamcomp}, we conclude that $\cR[\cF](y)$ differs from $\Enc_{C(r,K_{\mathrm{out}},\bar{N})}\circ\bar{\cR}(x)$ at $<\kappa\cdot\tilde{\lambda}_{\mathrm{out}}\leq\lambda_{\mathrm{out}}$ dits. Thus \Cref{it:fterr} in \Cref{def:faulttol} holds for the gadget in \Cref{eq:mtgad}.
  \end{proof}
\end{proof}

\section{Probabilistically Checkable Proofs}
\label{sec:pcp}
In \Cref{thm:pcp} below, we apply \Cref{thm:main} to obtain probabilistically checkable proofs (PCPs) for circuit-satisfiability (and therefore for $\NP$) with almost-linear size and inverse-polylogarithmic soundness. In this section, we work over the binary alphabet $\bF_2$ for simplicity in notation, but our techniques and results also apply to larger alphabets. We do still use the term ``dit'' instead of ``bit'' for consistency with the rest of the paper. We begin by defining CSPs\footnote{Our definition of CSPs is sufficient for our purposes, though more general definitions are useful in other settings.}.

\begin{definition}
  A \emph{constraint satisfaction problem (CSP) instance $\cP$} with $n$ \emph{variables} and $m$ \emph{constraints} consists of a (multi)set $\cP=\{\cP_1,\dots,\cP_m\}$ of $m$ functions called \emph{constraints}, where for each $j\in[m]$, the constraint $\cP_j:\bF_2^{B_j}\rightarrow\bF_2$ is a function for some $B_j\subseteq[n]$. The instance $\cP$ has \emph{locality $\ell$} if every $|B_j|\leq\ell$, and if every $i\in[n]$ is contained in at most $\ell$ of the sets $B_j$.

  For an assignment $x\in\bF_q^{[n]}$ of elements of $[n]$ to values in $\bF_2$, a constraint $\cP_j$ is \emph{satisfied} if $\cP_j(x|_{B_j})=0$. For $\delta>0$, if $\geq(1-\delta)m$ of the constraints are satisfied, we say $x$ is a \emph{$\delta$-approximate solution} to $\cP$. A $0$-approximate solution is simply called a \emph{solution} or a \emph{satisfying assignment}; a CSP instance $\cP$ with a satisfying assignment is called \emph{satisfiable}.
\end{definition}

Below, recall that the \emph{size} of a circuit $\cR$ using a gate set $\cG$ is the number of input dits plus the total number of non-identity gates in the circuit.

\begin{theorem}[PCPs for circuit-satisfiability with inverse-polylogarithmic soundness]
  \label{thm:pcp}
  Fix an arbitrary $\epsilon>0$, and fix an arbitrary set $\cG$ of gates, each of which acts on $O(1)$ bits (i.e.~2-ary dits). Then for some $s_0(\epsilon)>0$, there exists a polynomial-time algorithm that takes as input a circuit $\cR$ using gate set $\cG$ of size $s\geq s_0(\epsilon)$ that has a single output dit, and outputs a CSP instance $\cP$ with the following properties. $\cP$ has $n\leq s^{1+\epsilon}$ variables, $m=\Theta(n)$ constraints, and locality $O(1)$. Furthermore, we have:
  \begin{enumerate}
  \item (Completeness) If there exists an input $x$ such that $\cR(x)=0$, then $\cP$ is satisfiable.
  \item (Soundness) If every input $x$ has $\cR(x)=1$, then $\cP$ has no $1/(\log m)^{O(1/\epsilon)}$-approximate solution.
  \end{enumerate}
\end{theorem}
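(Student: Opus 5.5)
We follow the outline of \Cref{sec:pcpconstructinf}. First we normalize $\cR$: using the swap construction and the universality of $\{\gInit,\gTerm,\gX^*,\gCX^*,\gCCX^*\}$, each gate in $\cG$ becomes a constant-size subcircuit over $\bF_2$, so WLOG $\cR$ uses this gate set with size $\Theta(s)$. Let $S$ be the set of non-idle space-time locations of $\cR$, with $|S|=\Theta(s)$.

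Next we build the logical checking circuit $\bar{\cR}$ of constant depth. Its input is $z\in\bF_2^S$, read as a purported transcript of $\cR$. Its outputs are one dit per gate, equal to $0$ iff that gate's output values in $z$ match the gate applied to its input values in $z$, together with one dit copying the final output of the transcript. A gate check over $\bF_2$ is a constant-degree polynomial in $O(1)$ dits. To evaluate these in constant depth, we first lay out $O(1)$ copies of each relevant $z$-dit via $\gCX$ fanout, and then evaluate the polynomials with $\gCCX$ and $\gX$. This gives $|\bar N|=O(s)$ and $\bar T=O(1)$, and $\bar{\cR}(z)=0$ iff $z=\tran(\cR;x)$ for some $x$ with $\cR(x)=0$. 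We then apply \Cref{thm:main}, \Cref{it:mainfd}, with $r=2$ and $\epsilon'=\epsilon/8$. This yields a mending fault-detecting $\cR'$ with space $|N|\le s^{1+\epsilon}$ and time $T\le(\log s)^{O(1/\epsilon)}$. The thresholds are $\lambda_{\mathrm{det}}=\lambda_{\mathrm{run}}=\bar\lambda_{\mathrm{run}}=M/(\log s)^{O(1/\epsilon)}$ and $\lambda_{\mathrm{out}}=(\log s)^{O(1/\epsilon)}\lambda_{\mathrm{run}}$. We choose $\lambda_{\mathrm{run}}$ small enough that $\lambda_{\mathrm{out}}<d/2$, where $d\ge M/2^{32}$. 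The CSP $\cP$ has one variable per space-time dit of $\cR'$ (non-idle locations, or else we simply accept the factor $T$ blowup), and three kinds of constraints. Fault constraints check each gate locally. Detector constraints require each detector dit in $E_{\mathrm{det},t}$ to be $0$. Output constraints require each output dit to be $0$. Locality is $O(1)$ because gates act on at most $3$ dits and each dit participates in $O(1)$ gates per step. To keep $m=\Theta(n)$ we take the constraint count proportional to the variable count, replicating constraints if needed to balance the types. Completeness holds by \Cref{it:fdnoerr} applied to $y=\tran(\cR';\Enc_{\mathrm{in}}(\tran(\cR;x)))$.

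For soundness, suppose an assignment $y$ violates fewer than $\delta m$ constraints, with $\delta=1/(\log m)^{C/\epsilon}$. By \Cref{lem:fdadd}, $y$ is exactly the transcript of $\cR'$ on input $y_0$ under an additive fault $\cF$ whose support at each step is contained in the violated fault constraints. Hence $|\supp(F_t)|<\delta m\cdot O(1)\le\lambda_{\mathrm{run}}$, using $m\le |N|T$ and $\lambda_{\mathrm{run}}\ge |N|/(\log)^{O(1/\epsilon)}$ together with a large enough constant $C$. Fewer than $\delta m<\lambda_{\mathrm{det}}$ detector dits are nonzero, so \Cref{it:fdedetmend} fails. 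The mending property then gives $x'$ with $\cR'[\cF](y_0)$ within distance $<\lambda_{\mathrm{out}}$ of $\Enc_{\mathrm{out}}(\bar{\cR}(x'))$. The output is also within $\delta m<d/2-\lambda_{\mathrm{out}}$ of $0$. Unique decoding therefore forces $\Enc_{\mathrm{out}}(\bar{\cR}(x'))=0$, so $\bar{\cR}(x')=0$, which yields an $x$ with $\cR(x)=0$.

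The main obstacle is the parameter bookkeeping that makes $\delta m$ simultaneously below $\lambda_{\mathrm{run}}$, $\lambda_{\mathrm{det}}$, and $d/2-\lambda_{\mathrm{out}}$. Since $m\approx|N|\cdot T$ with $T=(\log)^{O(1/\epsilon)}$ and all thresholds are $M/(\log)^{O(1/\epsilon)}$, this reduces to choosing $C$ large. A secondary concern is weighting the constraint types so that the counts are comparable and $m=\Theta(n)$ holds.
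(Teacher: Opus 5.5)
Your proposal is correct and follows essentially the same route as the paper. You build the constant-depth transcript-checking circuit $\bar{\cR}$, compile it with the mending fault-detecting scheme of \Cref{thm:main} (choosing $\lambda_{\mathrm{run}}$ so that $\lambda_{\mathrm{out}}$ is a small constant fraction of $d$), impose fault, detector, and output constraints on the transcript of $\cR'$, and get soundness from the mending property together with the distance of $C_{\mathrm{out}}$; the remaining differences (fanout copies instead of the paper's graph coloring for constant depth, an unneeded preliminary normalization of $\cR$, a contrapositive phrasing of soundness, and citing \Cref{lem:fdadd} where the paper directly constructs the fault) are cosmetic, and no constraint replication is needed since $m=\Theta(|S'|)$ holds automatically.
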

\begin{proof} 
  We will first construct a constant-depth circuit $\bar{\cR}$ that takes as input a transcript of the execution of $\cR$, and outputs a vector of all-$0$s iff the transcript describes a fault-free execution of $\cR$ on some input $x$ with output $\cR(x)=0$. We will then apply \Cref{it:mainfd} in \Cref{thm:main} to construct a mending fault-detecting gadget $\tilde{\cR}$ for $\bar{\cR}(\cdot)$.
  Our final CSP instance $\cP$ will have variables corresponding to dits in the transcript of $\tilde{\cR}$ on some input, and constraints verifying that this transcript describes a fault-free execution of $\tilde{\cR}$ with all-$0$s output.
  The mending fault-detecting property implies that if few constraints of this CSP instance are violated, then the transcript indeed describes an execution of $\tilde{\cR}$ with a low-weight fault, and with some input $x$ that has all-$0$s output.
  Thus soundness follows; completeness holds by construction.

  We now present the details for how we construct the CSP instance $\cP$.
  Each stage of our construction will be explicit, and hence we will obtain a polynomial-time algorithm to construct~$\cP$.
  We begin by showing how to construct the circuit~$\bar{\cR}$. 

  \begin{claim}
    \label{claim:pcpbar}
    There exists a circuit $\bar{\cR}$ using space $|\bar{N}|=\Theta(s)$, time $\bar{T}=O(1)$, and gate set $\cG'=\{\gInit,\gTerm,\gX^*,\gCX^*,\gCCX^*\}$ such that $\bar{\cR}$ has an input yielding all-$0$s output iff $\cR$ has an input yielding output $0$. Furthermore, $\bar{\cR}$ can be constructed from $\cR$ in $\poly(s)$ time.
  \end{claim}
  \begin{proof}
    Let $\cR$ use space $N$ and time $T$. Let $S\subseteq N\times\{0,\dots,T\}$ be the set of all pairs $(j,t)\in N\times\{0,\dots,T\}$ such that either $t=0$ and dit $j$ is an input dit of $\cR$, or $t=T$ and $j=j_{\mathrm{out}}$ is the unique output dit of~$\cR$, or else $t\in\{1,\dots,T-1\}$ and dit $j$ is an active dit that is acted upon by a non-identity gate in timestep $t$.
    In words, $S$ contains the active dits across all timesteps of~$\cR$, excluding locations in which a dit idles, i.e.~experiences an identity gate. Furthermore, each non-identity gate $G\in\cG$ acts on $O(1)$ dits, and therefore contributes $O(1)$ elements to $S$ for every appearance of $G$ in $\cR$. Thus $|S|=\Theta(s)$.

    We define $\bar{\cR}$ to take as input a set of dits labeled by $S$. Let there be $g$ non-identity gates in $\cR$, labeled by the set $[g]$ (in an arbitrary order); we also include the gate acting on the output dit $j_{\mathrm{out}}$ in timestep $T$ as a ``non-identity gate,'' even if it acts as the identity. For $i\in[g]$, let the $i$th gate $G_i\in\cG$ act on dits $B_i\subseteq N$ in timestep $t_i\in[T]$. Let $S_i^1\subseteq B_i\times\{t_i\}$ be the set of output dits of $G_i$. Let $S_i^0\subseteq N\times[T]$ contain the input dits of $G_i$ in their most recent non-idling timesteps; that is, for each input dit $j\in B_i$, then $S_i^0$ contains the pair $(j,t)$ that lies in $S$ for the greatest possible value of $t<t_i$. We then let $S_i=S_i^0\sqcup S_i^1$. In words, $S_i\subseteq S$ contains the input and output dits of the $i$th gate in the transcript of $\cR$, where we exclude idling locations.

    We then let $\bar{\cR}$ have $g+1$ output dits labeled by $\{0,\dots,g\}$. On input $z\in\bF_2^S$, the $0$th output $\bar{\cR}(z)_0$ equals the input dit $z_{(j_{\mathrm{out}},T)}$, which corresponds to $\cR$'s output dit when viewing $z$ as a (partial) transcript for $\cR$. For $i\in[g]$, the $i$th output $\bar{\cR}(z)_i$ equals $1$ if $z|_{S_i^1}\neq G_i(z|_{S_i^0})$, and $0$ otherwise. That is, $\bar{\cR}(z)_i$ indicates if there was an error in the execution of $G_i$, again viewing $z$ as a (partial) transcript for $\cR$. Because each output dit of $\bar{\cR}$ only depends on $\max_i|S_i|\leq\max_i2|B_i|=O(1)$ input dits, and each input dit affects at most $2$ output dits, we can implement $\bar{\cR}$ with gate set $\cG'$ (or any universal gate set) using space $O(|S|)=O(s)$ and time $O(1)$. Specifically, we color the $g+1$ output dits of $\bar{\cR}$ with a constant number of colors, such that no two output dits that both depend on some shared input dit have the same color. We then loop through the colors sequentially, and for each color compute all outputs with that color in parallel.

    There is a polynomial-time algorithm to color the vertices of a graph with maximum degree $\Delta$ using $\Delta+1$ colors such that no two adjacent vertices share a color; simply sequentially assign each vertex a distinct color from its previously assigned neighbors. Thus $\bar{\cR}$ can be constructed from $\cR$ in $\poly(s)$ time.

    By construction, if $\cR$ has some input $x$ with $\cR(x)=0$, then $\bar{\cR}(\tran(\cR;x)|_S)=0^{g+1}$. Conversely, if every input $x$ for $\cR$ has output $\cR(x)=1$, then every $z\in\bF_2^S$ has $\bar{\cR}(z)\neq 0^{g+1}$. For if instead some $z\in\bF_2^S$ had $\bar{\cR}(z)=0^{g+1}$, then $z$ would provide a transcript for the execution for $\cR$ on input $x=z|_{N\times\{0\}}$ with output $\cR(x)=z_{(j_{\mathrm{out}},T)}=\bar{\cR}(z)_0=0$, a contradiction. Here the fact that $\cR(x)=z_{(j_{\mathrm{out}},T)}$ follows from the assumption that $\bar{\cR}(z)=0^{g+1}$, which implies that $z$ provides a fault-free transcript for the exeuction of $\cR$, where idling locations have been ``shortcutted'' (i.e.~removed from the transcript and from the checks that gates executed without fault). Thus $\bar{\cR}$ exhibits the desired properties, completing the proof of the claim.
  \end{proof}

  We now apply \Cref{it:mainfd} in \Cref{thm:main} to the circuit $\bar{\cR}$ using space $\bar{N}$ and time $\bar{T}$ in \Cref{claim:pcpbar} with alphabet size $r=2$. We set the $\epsilon$ parameter in \Cref{thm:main} to equal the value $\epsilon'=\epsilon/16$ for $\epsilon$ defined in \Cref{thm:pcp}, so that in \Cref{thm:main} we have
  \begin{equation}
    \label{eq:pcpM}
    M=M(\epsilon',\bar{N})\in[|\bar{N}|^{1+\epsilon'},|\bar{N}|^{1+4\epsilon'}].
  \end{equation}
  Then assuming $|\bar{N}|=\Theta(s)$ is sufficiently large (i.e.~larger than $\bar{N}_0(2,\epsilon')$ in \Cref{thm:main}, which we can ensure by setting $s_0(\epsilon)$ sufficiently large), \Cref{thm:main} provides a polynomial-time constructable mending fault-detecting gadget
  \begin{equation*}
    \left(\cR',\; \cE_{\mathrm{run}}=2^{N'}|_{\geq\lambda_{\mathrm{run}}},\; D_{\mathrm{in}},\; D_{\mathrm{out}},\; \cE_{\mathrm{det}}=\bigsqcup_{t'\in[T']}(2^{E_{\mathrm{det},t'}}|_{\geq\lambda_{\mathrm{det}}})\right)
  \end{equation*}
  for $\bar{\cR}(\cdot)$, using space $|N'|\leq(\log|\bar{N}|)^2\cdot M$, time $T'\leq\bar{T}\cdot(\log|\bar{N}|)^{8/\epsilon'}$, and gate set $\cG'$, where
  \begin{align*}
    D_{\mathrm{in}} &= (C_{\mathrm{in}},\; \Enc_{\mathrm{in}},\; \cE_{\mathrm{in}}=2^{[n(2,\epsilon',\bar{N})]}|_{\geq\lambda_{\mathrm{in}}}) \\
    D_{\mathrm{out}} &= (C_{\mathrm{out}},\; \Enc_{\mathrm{out}},\; \cE_{\mathrm{out}}=2^{[n(2,\epsilon',\bar{N})]}|_{\geq\lambda_{\mathrm{out}}})
  \end{align*}
  are decorated codes of length $n(2,\epsilon',\bar{N})$ and distance $d(2,\epsilon',\bar{N})$, such that
  \begin{align*}
    d(2,\epsilon',\bar{N}) &\geq \frac{M}{2^{32}} \\
    \lambda_{\mathrm{run}} &= \frac{M}{2^{64}\cdot(\log|\bar{N}|)^{16\eta/\epsilon'}} \\
    \lambda_{\mathrm{out}} &= (\log|\bar{N}|)^{16\eta/\epsilon'}\cdot\lambda_{\mathrm{run}} = \frac{M}{2^{64}} \\
    \lambda_{\mathrm{det}} &= \lambda_{\mathrm{run}} = \frac{M}{2^{64}\cdot(\log|\bar{N}|)^{16\eta/\epsilon'}}.
  \end{align*}
  Recall above that $\eta>1$ is the absolute constant in \Cref{lem:loctest}.

  Let $S'\subseteq N'\times\{0,\dots,T'\}$ denote the set of all active dits of $\cR'$ across all timesteps, so that transcripts of $\cR'$ lie in $\bF_2^{S'}$. We define the CSP instance $\cP$ to act on a collection $y\in\bF_2^{S'}$ of $|S'|$ variables labeled by elements of $S'$, with the following three collections of constraints:
  \begin{enumerate}
  \item (Fault constraints) For every gate $G\in\cG'$ that $\cR'$ applies to dits $B\subseteq N'$ in some timestep $t'\in[T']$ with input dits $S^0\subseteq B\times\{t'-1\}$ and output dits $S^1\subseteq B\times\{t'\}$, we impose a constraint that has value $0$ if $y|_{S^1}=G(y|_{S^0})$, and value $1$ if $y|_{S^1}\neq G(y|_{S^0})$.
  \item (Detector constraints) For every $t'\in[T']$ and every detector dit $j'\in E_{\mathrm{det},t'}$, we impose a constraint that has value $y_{(j',t')}$.
  \item (Output constraints) For every output dit $(j',T')$ of $\cR'$, we impose a constraint that has value $y_{(j',T')}$.
  \end{enumerate}

  Assuming $s_0(\epsilon)$ and therefore $s$ and $|\bar{N}|$ are sufficiently large, the number $|S'|$ of variables in $\cP$ satisfies
  \begin{align*}
    |S'|
    &\leq |N'|\cdot T' \leq M\cdot(\log|\bar{N}|)^{O(1/\epsilon)} \\
    &\leq |\bar{N}|^{1+\epsilon/4}\cdot(\log|\bar{N}|)^{O(1/\epsilon)} \\
    &\leq s^{1+\epsilon} \\
    |S'|
    &\geq n(2,\epsilon',\bar{N}) \geq d(2,\epsilon',\bar{N}) \geq \frac{M}{32}.
  \end{align*}
  By definition, each variable participates in at most two fault constraints, one detector constraint, and one output constraint. Similarly, as each gate in $\cG'$ acts on $\leq 3$ dits, each constraint depends on $\leq 6$ variables. Thus $\cP$ has
  \begin{equation}
    \label{eq:numconst}
    \Omega(M) \leq m = \Theta(|S'|) \leq M\cdot(\log|\bar{N}|)^{O(1/\epsilon)}
  \end{equation}
  constraints, and locality $O(1)$, as desired. Also by definition, we can construct $\cP$ in polynomial time given the circuit $\cR'$ and the detector sets $E_{\mathrm{det},t'}$.

  In words, $\cP$ takes as input a transcript $y$ for $\cR'$, and checks that each gate was executed without fault, that all detector dits have value $0$, and that all output dits have value $0$. This definition immediately implies completeness:

  \begin{claim}
    The completeness claim in \Cref{thm:pcp} holds.
  \end{claim}
  \begin{proof}
    If there exists an input $x$ for $\cR$ with output $\cR(x)=0$, then by \Cref{claim:pcpbar}, there exists an input $z$ to $\bar{\cR}$ with output $\bar{\cR}(z)=0^{g+1}$. Then by the definition of a fault-detecting gadget, letting $z'=\Enc_{\mathrm{in}}(z)$, we have
    \begin{equation*}
      \cR'(z') = \Enc_{\mathrm{out}}\circ\bar{\cR}(z) = \Enc_{\mathrm{out}}(0^{g+1}) = 0^{n(2,\epsilon',\bar{N})},
    \end{equation*}
    where the final equality above holds because every linear code maps the all-$0$s input to the all-$0$s output. The fault-detecting property also implies that the transcript $\tran(\cR';z')$ has value $0$ on the restriction to every detector set $E_{\mathrm{det},t'}\times\{t'\}$. Thus all constraints have value $0$ for the assignment $z'=\Enc_{\mathrm{in}}(z)$, so $\cP$ is satisfiable.
  \end{proof}

  We now conclude the proof of \Cref{thm:pcp} by proving soundness:
  \begin{claim}
    The soundness claim in \Cref{thm:pcp} holds.
  \end{claim}
  \begin{proof}
    Assume that every input $x$ for $\cR$ has output $\cR(x)=1$. Let $y\in\bF_2^{S'}$ be an arbitrary assignment of the variables of $\cP$. We will show that $\cP$ has at least $m/(\log m)^{O(1/\epsilon)}$ unsatisfied constraints on assignment $y$.
    
    Let $z'\in\bF_2^{n(2,\epsilon',\bar{N})}$ equal the restriction of $y$ to the input dits of $\cR'$ in $N'\times\{0\}$. Then we can express $y$ as the transcript $y=\tran(\cR'[\cF'];z')$ of a noisy execution of $\cR'$, where we define the fault~$\cF'$ as follows. For every gate $G\in\cG'$ that $\cR'$ applies to dits $B\subseteq N'$ in some timestep $t'\in[T']$ with input dits $S^0\subseteq B\times\{t'-1\}$ and output dits $S^1\subseteq B\times\{t'\}$, if $y|_{S^1}\neq G(y|_{S^0})$, we let $\cF'$ set the value of dits $S^1$ to be $G(y|_{S^0})$ in timestep $t'$. If instead $y|_{S^1}=G(y|_{S^0})$, then $\cF'$ has no support inside (i.e.~does not corrupt) $S^1$ in timestep $t'$.

    Then $y=\tran(\cR'[\cF'];z')$ by induction on $t'=0,\dots,T'$. Indeed, for the base case, by definition $y|_{N'\times\{0\}}=z'=\tran(\cR'[\cF'];z')|_{N'\times\{0\}}$. For the inductive step, for $t'\in[T']$, assuming $y|_{N'\times\{t'-1\}}=\tran(\cR'[\cF'];z')|_{N'\times\{t'-1\}}$, then for every dit $j'$ in which the gates at time $t'$ do not set the dit's value to $y_{(j',t')}$, the fault $\cF'$ sets the value of dit $j'$ to be $y_{(j',t')}$.

    Furthermore, by definition for every set $S^1\subseteq B\times\{t'\}\subseteq N'\times[T']$ described above on which~$\cF'$ has nontrivial support, then by definition the fault constraint associated to this set of dits is not satisfied on assignment $y$. Because $|S^1\leq|B|\leq 3$ as every gate $G\in\cG'$ acts on $\leq 3$ dits, it follows that at least $|\cF'|/3$ fault constraints are not satisfied. Thus if $|\cF'|\geq\lambda_{\mathrm{run}}$, then the number of unsatisfied constraints is at least
    \begin{align*}
      \frac{\lambda_{\mathrm{run}}}{3}
      &\geq \frac{M}{(\log|\bar{N}|)^{O(1/\epsilon)}} \geq \frac{m}{(\log m)^{O(1/\epsilon)}},
    \end{align*}
    as desired, where the second inequality above holds by \Cref{eq:pcpM,eq:numconst}.

    Therefore it only remains to consider the case where $|\cF'|<\lambda_{\mathrm{run}}$. In this case, the mending fault-detecting property ensures that either there exists $z\in\bF_2^S$ such that $\cR'[\cF'](z')$ is a $\cE_{\mathrm{out}}$-deviation of $\Enc_{\mathrm{out}}\circ\bar{\cR}(z)$, or else there exists $t'\in[T']$ such that $|\tran(\cR'[\cF'];z')|_{E_{\mathrm{det},t'}\times\{t'\}}|\geq\lambda_{\mathrm{det}}$. In the first case, by the assumption that every input $x$ to $\cR$ has $\cR(x)=1$, then \Cref{claim:pcpbar} implies that $\bar{\cR}(z)\neq 0^{g+1}$, and hence $\Enc_{\mathrm{out}}\circ\bar{\cR}(z)$ is a nonzero codeword of $C_{\mathrm{out}}$, which must have weight $\geq d(2,\epsilon',\bar{N})$. Therefore
    \begin{align*}
      |\cR'[\cF'](z')|
      &\geq d(2,\epsilon',\bar{N})-\lambda_{\mathrm{out}} \geq \frac{M}{2^{64}} \geq \frac{m}{(\log|\bar{N}|)^{O(1/\epsilon)}}.
    \end{align*}
    Because $\cR'[\cF'](z')$ equals the restriction of $y=\tran(\cR'[\cF'];z')$ to the output dits in $N'\times\{T'\}$, all of which have output constraints checking that their values are $0$, we must have at least $m/(\log|\bar{N}|)^{O(1/\epsilon)}$ unsatisfied constraints, as desired.

    The only remaining case is if there exists $t'\in[T']$ such that $|\tran(\cR'[\cF'];z')|_{E_{\mathrm{det},t'}\times\{t'\}}|\geq\lambda_{\mathrm{det}}$. Because $y=\tran(\cR'[\cF'];z')$, in this case at least $\lambda_{\mathrm{det}}=\lambda_{\mathrm{run}}\geq m/(\log|\bar{N}|)^{O(1/\epsilon)}$ detector constraints are unsatisfied.

    Thus we have shown that if every input $x$ for $\cR$ has output $\cR(x)=1$, then every assignment $y\in\bF_2^{S'}$ for $\cP$ has $\geq m/(\log m)^{O(1/\epsilon)}$ unsatisfied constraints, as desired.
  \end{proof}    
\end{proof}

\section*{Acknowledgments}
We thank Venkatesan Guruswami for helpful discussions.

A.A.~acknowledges support through the NSF Award Nos.~2238836, 2430375 and QCIS-FF: Quantum Computing \& Information Science Faculty Fellow at Harvard University (NSF 2013303).
N.P.B.~acknowledges support through the Quantum Research Pod, CIQC and the Simons Institute for the Theory of Computing.
L.G.~acknowledges support from ONR grant N00014-24-1-2491, a UC Noyce initiative award, a Google PhD Fellowship, and a National Science Foundation Graduate Research Fellowship under Grant No.~DGE 2146752.
This work was done in part while the authors were attending the Simons Institute for the Theory of Computing, with support from NSF QLCI Grant No.~2016245.

\textbf{AI use disclosure:} All proofs and writing are the work of the human authors. ChatGPT was used to check the final draft for typos and minor presentation issues.

\bibliographystyle{alpha}
\bibliography{papers,library}

\end{document}